\documentclass[11pt]{amsart}

\usepackage[letterpaper,margin=1in]{geometry}
\usepackage{amssymb,amsmath,amsfonts,stmaryrd}
\usepackage[utf8]{inputenc}     
\usepackage{dsfont}            
\renewcommand\mathbb\mathds     

\usepackage{graphicx}
\usepackage{tikz}
\usetikzlibrary{cd,calc,shapes.geometric,decorations.markings,decorations.pathreplacing,decorations.pathmorphing,arrows}
\usepackage{tikz-cd}
\usepackage{booktabs}
\usepackage{xcolor}
\usepackage{url}
\usepackage{adjustbox}
\usepackage{placeins}
\usepackage{arydshln}
\usepackage{comment}
\usepackage{marginnote}
\usepackage{booktabs}
\usepackage{array}
\usepackage{lipsum}   
\usepackage{bbold}
 \usepackage{booktabs}
 \usepackage{tabularx}
 \usepackage{array}

\usepackage[colorlinks,
            linkcolor=blue,
            citecolor=green!70!black,
            pdfproducer={pdfLaTeX},
            pdfpagemode=None,
            bookmarksopen=true,
            bookmarksnumbered=true]{hyperref}

\newtheorem{theorem}{Theorem}[section]

\newtheorem{proposition}[theorem]{Proposition}
\newtheorem{lemma}[theorem]{Lemma}
\newtheorem{corollary}[theorem]{Corollary}

\theoremstyle{definition}
\newtheorem{definition}[theorem]{Definition}
\newtheorem{example}[theorem]{Example}
\newtheorem{construction}[theorem]{Construction}
\newtheorem{notation}[theorem]{Notation}
\newtheorem{ansatz}[theorem]{Ansätz}

\newtheorem{conjecture}[theorem]{Conjecture}

\theoremstyle{remark}
\newtheorem{rem}[theorem]{Remark}

\usepackage{cleveref}

\crefname{theorem}{theorem}{theorems}
\Crefname{theorem}{Theorem}{Theorems}

\crefname{proposition}{proposition}{propositions}
\Crefname{proposition}{Proposition}{Propositions}

\crefname{lemma}{lemma}{lemmas}
\Crefname{lemma}{Lemma}{Lemmas}

\crefname{corollary}{corollary}{corollaries}
\Crefname{corollary}{Corollary}{Corollaries}

\crefname{definition}{definition}{definitions}
\Crefname{definition}{Definition}{Definitions}

\crefname{example}{example}{examples}
\Crefname{example}{Example}{Examples}

\crefname{remark}{remark}{remarks}
\Crefname{remark}{Remark}{Remarks}

\crefname{conjecture}{conjecture}{conjectures}
\Crefname{conjecture}{Conjecture}{Conjectures}

\crefname{maintheorem}{main theorem}{main theorems}
\Crefname{maintheorem}{Main Theorem}{Main Theorems}

\usepackage{cleveref}

\crefname{maintheorem}{main theorem}{main theorems}
\Crefname{maintheorem}{Main Theorem}{Main Theorems}

\crefname{dummy}{section theorem}{section theorems} 
\Crefname{dummy}{Section Theorem}{Section Theorems}

\crefname{theorem}{theorem}{theorems}
\Crefname{theorem}{Theorem}{Theorems}

\crefname{lemma}{lemma}{lemmas}
\Crefname{lemma}{Lemma}{Lemmas}

\crefname{proposition}{proposition}{propositions}
\Crefname{proposition}{Proposition}{Propositions}

\crefname{corollary}{corollary}{corollaries}
\Crefname{corollary}{Corollary}{Corollaries}

\crefname{definition}{definition}{definitions}
\Crefname{definition}{Definition}{Definitions}

\crefname{example}{example}{examples}
\Crefname{example}{Example}{Examples}

\crefname{remark}{remark}{remarks}
\Crefname{remark}{Remark}{Remarks}

\crefname{conjecture}{conjecture}{conjectures}
\Crefname{conjecture}{Conjecture}{Conjectures}

\usepackage{dsfont}
\renewcommand\mathbb\mathds

\newcommand\bC{\mathbb C}

\newcommand\Z{\mathbb Z}

\newcommand{\C}{\mathbf C}
\newcommand{\Q}{\mathbf Q}

\newcommand\cA{\mathcal A}

\newcommand\cC{\mathcal C}
\newcommand\cD{\mathcal D}
\newcommand\cE{\mathcal E}
\newcommand\cF{\mathcal F}

\newcommand\cP{\mathcal P}
\newcommand\cQ{\mathcal Q}

\newcommand\cS{\mathcal S}
\newcommand\cT{\mathcal T}
\newcommand\cU{\mathcal U}
\newcommand\cV{\mathcal V}
\newcommand\cW{\mathcal W}

\newcommand\rB{\mathrm B}

\newcommand\rH{\mathrm H}

\newcommand\rK{\mathrm K}
\newcommand\rL{\mathrm L}

\newcommand\Sp{\mathbf {Sp}}

\newcommand\pt{\mathrm{pt}}
\usepackage[mathscr]{euscript}

\newcommand\fC{\mathfrak C}

\newcommand\bbD{\mathbf D}

\newcommand{\id}{\mathrm{id}}

\newcommand{\loc}{\mathrm{loc}}
\newcommand{\Spin}{\mathrm{Spin}}

\newcommand{\op}{\mathrm{op}}

\newcommand{\Fib}{\mathrm{Fib}}
\newcommand{\coFib}{\mathrm{coFib}}

\newcommand\longto\longrightarrow
\newcommand\mono\hookrightarrow
\newcommand\epi\twoheadrightarrow

\newcommand\<\langle
\renewcommand\>\rangle
\newcommand\sminus\smallsetminus

\DeclareMathOperator\End{End}
\DeclareMathOperator\Ext{Ext}

\DeclareMathOperator\Aut{Aut}

\newcommand\Vect{\mathbf{Vect}}

\newcommand\Mod{\cat{Mod}}

\newcommand{\Bimod}{\mathbf{Bimod}}

\newcommand{\Fun}{\mathrm{Fun}}

\newcommand{\Mat}{\mathrm{Mat}}

\let\lim\relax
\DeclareMathOperator{\lim}{lim}

\newcommand{\Hom}{\mathrm{Hom}}

\newcommand{\Witt}{\mathcal{W}itt}
\newcommand{\sWitt}{s\mathcal{W}}
\newcommand{\TS}{\mathbf{TS}}
\newcommand{\perf}{\mathrm{perf}}
\newcommand{\rMor}{\mathrm{Mor}}

\newcommand\cat[1]{\mathbf{#1}}

\title{{The Classification of Pauli Stabilizer Codes: \\A Lattice and Continuum Treatise
}}
\begin{document}
\author{Bowen Yang}
\address{\textnormal{\url{bowen_yang@g.harvard.edu}}, Harvard University}
\author{Matthew Yu}
\address{\textnormal{\url{yum@maths.ox.ac.uk}}, Oxford University}

\begin{abstract}
   We classify mobile Pauli stabilizer codes up to gapped interfaces and coarse-graining using the framework of algebraic $\rL$-theory. We compare this classification with that of framed TQFTs, theories that arise naturally in the continuum, highlighting a close structural relationship between the two. Our approach is formulated in the category of perfect chain complexes equipped with quadratic  functor over the Laurent polynomial ring $R = \mathbb{Z}/p[x_1^{\pm1}, \ldots, x_n^{\pm1}]$, within which the collection of topological operators of Pauli stabilizer codes arise naturally as objects.  In particular, we establish a bulk–boundary correspondence for lattice theories: the equivalence class of a Pauli stabilizer code up to gapped interface is described by a Clifford QCA in one dimension higher. This is done using the universal target category for stabilizer codes, which is the categorical spectrum whose existence and universal properties are introduced in this work. We conclude by highlighting subtle differences between the  classification of Pauli stabilizer codes and TQFTs, leading to qualitative distinctions between lattice and continuum theories.
\end{abstract}
\maketitle

\tableofcontents

\newpage 

\section{Introduction}

Pauli stabilizer codes are quantum spin systems defined by commuting subgroups of the Pauli operator algebra on a collection of qubits. The common eigenspaces of these operators define highly entangled subspaces in which logical degrees of freedom are encoded via stabilizer constraints. This algebraic structure places stabilizer codes among the most fundamental and widely studied classes of quantum error-correcting codes. Pauli stabilizer codes also furnish a natural bridge to condensed matter physics, where they can be interpreted as exactly solvable lattice models exhibiting quasiparticle excitations. In \cite{haah2013commuting}, Pauli stabilizer codes are recast as modules over Laurent polynomial rings. The works \cite{ruba2024homological,RY} push this perspective significantly further by bringing in methods from homological algebra, with derived functors yielding key invariants for analyzing the associated topological excitations.

In the continuum, the study of quantum field theories and their renormalization group flows naturally singles out a distinguished class of theories that lack local propagating degrees of freedom and instead support only extended, topological excitations encoded in their ground-state degeneracy and long-range entanglement; these theories are known as topological quantum field theories (TQFTs). These features give rise to striking physical phenomena, including robust edge states, fractionalized excitations, and anyonic statistics, with implications for both field theory and applications such as fault-tolerant quantum computation. The axiomatization of TQFTs in terms of higher fusion categories has, in turn, driven significant progress in the study of higher fusion categories themselves \cite{Gaiotto:2019xmp,Decoppet:2023bay,Bhardwaj:2024xcx,Decoppet:2024htz,xu2024etale,Kong:2024ykr,Stockall:2025rng,Teixeira:2025qsg}, while also illuminating how the resulting mathematical framework can be fruitfully applied in physical contexts, for instance in the analysis of matching anomalies for finite symmetries with TQFTs \cite{Debray:2025kfg,Debray:2026sqw}.

One of the central aims of this paper is to establish a precise correspondence between the classification of Pauli stabilizer codes and the classification of a distinguished class of topological quantum field theories, namely framed TQFTs defined on manifolds equipped with a framing. Recent developments have brought the classification of TQFTs to a high level of maturity, with manifold-theoretic methods—particularly the machinery of surgery theory—emerging as a central organizing principle. From this perspective, it is perhaps unexpected that despite the entirely algebraic formulation of Pauli stabilizer codes, similar techniques used to classify TQFTs can be brought to bear on their classification.

A well established mathematical framework that promotes algebraic structures to objects of homotopy theory is algebraic $\rK$-theory, a powerful tool that associates to an algebraic object a spectrum whose homotopy groups encode its intrinsic invariants.  On the other hand, the study of lattice theories and lattice topological phases seeks to extract robust invariants from the algebra of observables in order to characterize such phases. In many cases, these invariants take the form of abelian groups; from the modern homotopy-theoretic perspective, this strongly suggests a refinement to spectra, and hence to the framework of generalized cohomology theories.
In the context of classifying Pauli stabilizer codes, we will instead make use of a closely related framework, namely algebraic 
$\rL$-theory. While it shares the same conceptual foundations as algebraic 
$\rK$-theory, it is tailored to settings involving categories equipped with a quadratic functor. One of the central insights carried by algebraic $\rL$-theory from manifold topology into a categorical setting is the notion of Poincaré duality, which underlies the entire framework of surgery theory. By recasting the excitations of a Pauli stabilizer code within this 
$\rL$-theoretic formalism, we place them in a setting where duality and quadratic structures are manifest, thereby enabling the application of surgery-theoretic techniques analogous to those used in the classification of TQFTs.

With classification results in hand for both theories, we can draw a more refined comparison between continuum and lattice frameworks. In particular, we identify a correspondence between Pauli stabilizer codes and Clifford QCAs in one higher dimension, which have been shown to be classified by algebraic 
$\rL$-theory \cite{haah2025topological, sun2025clifford, Yang:2025jvn}. In the continuum, the bulk determines the Morita class of the boundary category, thereby classifying boundary theories up to gapped interfaces; an analogous phenomenon emerges in the setting of stabilizer codes.
Finally, we find a sharp contrast between the continuum and lattice settings: while continuum framed invertible TQFTs admit gapped boundaries only in six dimensions \cite{JFR2025}, corresponding to the nontrivial Arf–Brown–Kervaire invariant, invertible topological lattice phases described by Pauli stabilizer codes admit gapped boundaries in all dimensions. This suggests that, outside of six dimensions, the passage to the continuum necessarily closes the gap on the boundary.

\subsection{Pauli Stabilizer Codes}\label{subsection:introtopauli}
We begin with a review of definitions. Readers already familiar with the homological formalism of Pauli stabilizer codes may proceed directly to the main results in \S\ref{subsection:mainresults}.
A Pauli stabilizer code is defined with respect to a spatial lattice $\Z^n$, and a local degree of freedom at each site given by a qudit. We allow for qudits at each site to be of dimension $p$, and assign the algebra at each site which is $m$ copies of the $p$-state clock and shift system. The generating operators are denoted by $Z_{\lambda,j}$ (clock) and $X_{\lambda,j}$ (shift), for $j= 1,\ldots m$ and $\lambda \in \Lambda$, and obey the following relations:
\begin{align}
    Z_{\lambda,j}\, X_{\lambda,j}& = e^{\frac{2\pi i}{p}} X_{\lambda,j}\, Z_{\lambda,j}\,,  \quad Z_{\lambda,j}\, X_{\lambda,k} = X_{\lambda,k}\, Z_{\lambda,j} \quad j\neq k\,,\\\notag
    & \quad \quad \quad  X^p_{\lambda,j} = Z^p_{\lambda,j} = \id\,.
\end{align}

\begin{definition}
    The \textit{Pauli group} $\cP$ consists of all operators defined on the lattice $\Z^n$ of the form
    \begin{equation}
        \phi \prod_{\lambda \in \Z^n} \prod^m_{j=1} X^{\alpha_{\lambda,j}}_{\lambda,j} Z^{\beta_{\lambda,j}}_{\lambda,j}, \quad \phi \in \mathbb C^\times \,,\, \text{and}\,\, \alpha_{\lambda,j},\beta_{\lambda,j} \in \Z/p.
    \end{equation}
    The \textit{projective Pauli group} $P$ is the quotient of $\cP$ by its center of all complex phase factors. $P$ is isomorphic to the additive group of finitely supported additive functions $\Z^n \to \Z^{2m}_p$.
\end{definition}

We note that $P$ exhibits an action by the translations along each dimension of the lattice. This means $P$ is a module for the ring $R=\Z/p[\Z^{n}]\cong \Z/p[x_1^{\pm}, \dots, x_n^{\pm}]$, consisting of Laurent polynomials in $d$-variables.
Each variable $x_i$ is a generator of translation in the $i$-th lattice direction. In particular, translation by a lattice vector $\lambda = (\lambda_1,\ldots,\lambda_n)\in \Z^n$ corresponds to  multiplication by the monomial 
$$x^\lambda:=\prod^n_{i=1} x^{\lambda_i}_i.$$ 

\begin{definition}
   The ring $R=\Z/p[\Z^{n}]$ can be equipped with an involution:
\begin{equation}
    \overline{\bullet}: R \rightarrow R\,,
\end{equation}
given by $\overline{x_i}=x_i^{-1}$. Let $M$  be a $R$-module. The $R$-module $\overline{M}$ is obtained using the involution as follows. The elements of  $\overline{M}$ consist of the elements $m\in M$, with the action of $R$ defined by $(r,m)\mapsto \overline{r}m$.
\end{definition}

We note that the $R$-module given by the formal power series  $\sum_{\lambda \in \Z^n} c_\lambda x^\lambda$, with  $c_\lambda \in \Z/p$, is obtained by the map
\begin{equation}
    c_\lambda \in \bigoplus_{\lambda \in \Z^n} \Z/p  \mapsto \sum_{\lambda \in \Z^n} c_\lambda x^\lambda \in R
\end{equation}
and can be identified with finitely supported functions $\Z^n\to \Z/p$. Under this identification, we see that $P$ is a free $R$-module expressed as $P\cong R^{2m}$, and each element $p\in P$ corresponds to a $2m$-tuple $p=(p_1,p_2,\ldots,p_{2m})$.
Furthermore, $P$ can be equipped with an unimodular sesquilinear form $\Omega(\cdot, \cdot): P\times P\rightarrow R$, encoding the commutation relations of clock and shift operators. Equivalently, $\Omega$ can be seen as an isomorphism 
\begin{equation}
    P\rightarrow \Hom_R(\overline{P}, R): a\mapsto \Omega(\cdot, a).
\end{equation}
Usually $\Omega$ is taken to the hyperbolic form 
\begin{equation}
    \Omega(a, b):=\overline{a}^t\begin{pmatrix}
        0 & I_p\\
        -I_p & 0
    \end{pmatrix} b,
\end{equation}
for $a, b\in P$. We now define Pauli stabilizers in the language of modules for the ring $R$.

\begin{definition}
    A \emph{translation-invariant Pauli stabilizer code} is (represented by) an isotropic submodule $L\subset P$, i.e.
\begin{equation}
    L\subset L^\perp :=\{a\in P: \Omega(a, l)=0 \text{ for all }l\in L\}.
\end{equation}
We denote a translation-invariant Pauli stabilizer code by the pair $\fC = (L,P)$. For the remainder of this work we will stop explicitly stating ``translation-invariant''.
\end{definition}
We are mainly interested in studying the \textit{Lagrangian} stabilizer codes, which are submodules satisfying $L=L^\perp$. In this case the stabilizer code is said to be $\emph{topological}$ and frequently exhibit features believed to characterize gapped phases that are distinct from product states. 

\subsubsection{Topological Charges}\label{subsection:gs}
We recall the definition and properties of topological charges first proposed in~\cite{ruba2024homological}. A Lagrangian stabilizer code \(L \subset P\) determines a local Hamiltonian whose ground state is the unique pure state fixed by every element of \(L\). Equivalently, the ground state is stabilized by the elements of \(L\), the \emph{stabilizers}. Starting from \(L \subset P\), one can also construct excited states that differ from the ground state by point-like, string-like, or membrane-like excitations. These excitations are created by extended operators and correspond, at the algebraic level, to involution-antilinear homomorphisms \(L \to R\) that do not extend to \(P\). Their classification is identified with the module
\begin{equation}\label{eq:charge}
    \cE^i(\fC) := \Ext^{i+1}_R\!\left(\overline{P/L},R\right),
\end{equation}
where \(i=0,1,2,\dots\) corresponds respectively to point-like, string-like, membrane-like, and higher-dimensional excitations. We say that \(\cE^i(\fC)\) is mobile if it has Krull dimension zero.
In this work, we restrict attention to stabilizer codes that admit only ``mobile excitations''.

As shown in \cite[Section 6.3]{ruba2024homological} there are (translation invariant) braiding relationships between the operators $\cE^{i}(\fC)$ and $\cE^{j}(\fC)$ in $d$-spacetime dimensions given by a pairing $\mathfrak{P}$ with the property that $$\mathfrak{P}(\cE^{i}(\fC),\cE^{j}(\fC)) = -(-1)^{(i+1)(d-i-2)}\overline{\mathfrak{P}(\cE^{j}(\fC),\cE^{i}(\fC))}\,,$$ and is defined with respect to a cup product on cohomology and symplectic pairing $\omega: \overline{P/L}\otimes_R L \to R$. In analogy with TQFTs, in (2+1)d the braided reflects the statistics of line operators which is also captured by the $S$-matrix. In higher dimensions, a generalized version of the $S$-matrix for semisimple higher categories is work in progress \cite{JFSmatrix}. 

\subsubsection{General Stabilizer Code}
More generally, our starting point could be a non-hyperbolic form $\Omega$. A stabilizer code consists of a free $R$-module $P$ equipped with a unimodular sesquilinear form $\lambda: P \xrightarrow{\sim} \Hom_R(\overline{P}, R): a \mapsto \Omega(\cdot, a)$ and an $R$-submodule $L\subset P$ satisfying the Lagrangian property $L^\perp=L.$

Each stabilizer code gives a perfect chain complex making the topological charges explicit. Take a free resolution 
\begin{equation}
    \cdots \longrightarrow F_2 \longrightarrow F_1 \longrightarrow P \longrightarrow P/L \longrightarrow 0.
\end{equation}
As $R$ is regular, such a resolution may be chosen to have length bounded by the Krull dimension of $R$. 
We use the notation that $M^*=\mathrm{Hom}_R(\overline M, R).$ Consider the chain complex
\begin{equation}\label{eq:dualF}
F_\bullet:\qquad \cdots \longrightarrow F_2 \longrightarrow F_1 \xrightarrow{\sigma} P \xrightarrow{\delta} F_1^* \longrightarrow F_2^* \longrightarrow \cdots,
\end{equation}
where \(\delta\) is the composition
\[
P \xrightarrow{\lambda} P^* \xrightarrow{\sigma^*} F_1^*.
\]
This complex is exact at \(P\) and to the left of \(P\), while its cohomology groups to the right are given by
\[
\Ext_R^{i+1}\!\left(\overline{P/L},R\right).
\]
In particular, the Lagrangian condition \(L=L^\perp\) is equivalent to exactness at \(P\). Moreover, upon choosing a bounded resolution, the complex \(F_\bullet\) becomes perfect. We regard $F_\bullet$ as a homological presentation of the stabilizer code in that it packages both the stabilizer data and the associated excitation theory into a single perfect chain complex. Later, we will prove a form of \textit{Poincar\'e duality} for $F_\bullet$ arising from a mobile code. 

\begin{rem}
     We mention a general theory of complexes like $F_\bullet$ is being developed into a general theory \cite{shirley_wickenden_yang_beaudry_haah_inprogress, hermele2026fractonlandscape} of abelian gapped phases. Their theory goes beyond fully mobile models which is the main focus of this article. 
\end{rem}

\subsection{Main Results}\label{subsection:mainresults}
We classify Lagrangian stabilizer codes in arbitrary dimensions up to gapped interface and coarse-graining. By gapped interface we mean up to algebraic surgery of the complex $F_\bullet$, in the sense of $\rL$-theory for rings. We note that there is also a notion of gapped boundary and interface defined in terms of commuting, interpolating stabilizers~\cite{kitaev2012models, liang2024operator}. At least in spacetime dimension 3, the two notions agree; see Theorem~39 of~\cite{RY}. Coarse-graining is a specific operation that involves modding out by subgroups of the  translation symmetries.

\begin{notation}
    Anytime we refer to a theory being in $n$-dimensions, we take $n$ to mean the \textbf{spacetime} dimension, rather than only the spatial dimension. This is to match  with the mathematical notation. Furthermore a lattice $\Z^{n}$ will always denote a spacetime lattice of dimension $n$. This notation is motivated in part by its application to the classification of Clifford QCA. Rather than viewing a Clifford QCA as a purely spatial transformation, it is more natural to regard it as a discrete-time dynamical process with time direction $\Z$. We therefore record this structure by extending the lattice to spacetime.
\end{notation}


\begin{proposition}
Let $n>4$ and let $(L,P)$ be a $(n-1)$-dimensional mobile Lagrangian stabilizer code defined with respect to the ring $R =\Z/p[x_1^{\pm}, \dots, x_{n-2}^{\pm}]$. The collection of mobile excitations of $(L,P)$ is a Poincaré object in the stable $\infty$-category $(\cD^\perf(R),\Sigma^{n-2}Q)$.
\end{proposition}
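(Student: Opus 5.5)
We interpret ``the collection of mobile excitations'' as the perfect complex $F_\bullet$ of \eqref{eq:dualF}, built from the code $(L,P)$. The task is to produce a nondegenerate quadratic (Poincaré) structure on $F_\bullet$ with respect to the shifted quadratic functor $\Sigma^{n-2}Q$ on $\cD^\perf(R)$. Here $Q$ is the standard (symmetric or quadratic) Poincaré structure whose duality is $D M=\Hom_R(\overline M,R)$. Note $\dim R = n-2$, which is exactly the Krull dimension governing the length of free resolutions.

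\textbf{Step 1: reduce Poincaré duality to a statement about cohomology.} The complex $F_\bullet$ is symmetric by construction: its right half is the $R$-dual of its left half, glued at $P$ via $\lambda$. Since $\lambda$ is skew/hermitian, $F_\bullet$ carries a canonical hermitian form $F_\bullet\to DF_\bullet$ (with a shift). This form is the identity on the $F_i$ and $F_i^*$ terms and is $\lambda$ in the middle.

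The form is not automatically an equivalence, because $F_\bullet$ is not self-dual on the nose: it has cohomology only on the right. So I would instead work with the truncation, namely the cokernel-type complex $C=F_{\ge}$ representing $\overline{P/L}$ up to a shift. The relevant map is the canonical evaluation
\[
C \to \Sigma^{n-2}DC,
\]
and the mobile-excitation object is $\mathrm{RHom}_R(\overline{P/L},R)$ truncated to degrees $\ge 1$, whose homology is $\cE^i(\fC)$.

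\textbf{Step 2: use the mobility hypothesis.} Each $\cE^i$ has Krull dimension zero, so it is a finite-length module supported at maximal ideals of the regular ring $R$. For such modules, local duality gives
\[
\Ext^j_R(M,R)=0 \quad (j\neq n-2), \qquad \Ext^{n-2}_R(\Ext^{n-2}_R(M,R),R)\cong M.
\]
Concretely, the dual of a finite-length module is concentrated in degree $\dim R=n-2$.

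Combining this with the biduality $\overline{P/L}\simeq D D\,\overline{P/L}$ (valid since $R$ is regular and everything is perfect), together with the Lagrangian condition (exactness at $P$), I would show the following. The excitation complex $E$, i.e.\ the truncation carrying the $\cE^i$, satisfies $DE\simeq\Sigma^{-(n-2)}E$ up to the appropriate sign twist, with the pairing $\cE^i\times\cE^{n-3-i}\to$ (top) given by the braiding $\mathfrak P$ of \cite{ruba2024homological}. The duality index should match $\cE^i\leftrightarrow\cE^{d-i-2}$ with the stated sign $-(-1)^{(i+1)(d-i-2)}$. This sign is exactly the symmetry constraint of a $\Sigma^{n-2}$-shifted hermitian form.

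\textbf{Step 3: lift to a quadratic refinement.} To get a point of $\Omega^\infty\Sigma^{n-2}Q(E)$ rather than just a hermitian form, I would use the following observation. The form arises as a boundary of the nondegenerate form $\lambda$ on $P$ restricted to the Lagrangian $L$, which is an algebraic surgery/Thom-type construction: a Lagrangian in a Poincaré object yields a Poincaré structure on the cofiber of $L\to L^*$, namely $E$. Quadratic refinements pass through this construction. The hypothesis $n>4$ (so $\dim R\ge 3$) is presumably used to ensure the relevant $\Ext$ groups sit in a range where the quadratic and symmetric structures agree, or where the linear part $\Hom$ into the Tate term vanishes. This lets the symmetric structure lift uniquely.

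\textbf{Main obstacle.} I expect the hardest step to be Step 3, together with the exact identification in Step 2 of the duality equivalence with $\mathfrak P$ including signs. Nondegeneracy itself follows cleanly from local duality for finite-length modules. Checking that the form is genuinely an equivalence on the truncated complex requires care at the gluing degree, which is where $\cE^0$ pairs with $\cE^{n-3}$. For that step I would first try to verify it on the cone of $L\to L^\perp$ using the octahedral axiom.
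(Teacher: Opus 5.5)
Your Step~1 discards exactly the structure the argument needs, and as a result Step~2 aims at a false target. The map $F_\bullet\to DF_\bullet$ that is the identity on each $F_i$ and $F_i^*$ and is $\lambda$ on $P$ is an isomorphism of chain complexes, up to the usual signs. It commutes with the differentials because $\delta=\sigma^*\lambda$ and $\lambda^*=\pm\lambda$, and it involves no shift. Since $F_\bullet$ is a model for $E=\tau^{\ge 1}\mathrm{RHom}_R(\overline{P/L},R)$, you get $E\simeq DE$ for every Lagrangian code, mobile or not.

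Having cohomology only to the right of $P$ is no obstruction to this. A finite-length module $T$ in cohomological degree $k$ has $D$-dual $\Ext^{n-2}_R(T,R)$ in degree $n-2-k$, so $D$ preserves the range $1,\dots,n-3$. By the same computation, the equivalence $DE\simeq\Sigma^{-(n-2)}E$ you aim for, with $D=\Hom_R(\overline{(-)},R)$ as you define it, is false whenever $E\neq 0$: the two sides have cohomology in disjoint ranges of degrees. The shift $n-2$ only appears once $D$ is rewritten, on finite-length cohomology, as the $\Z/p$-valued Matlis/Pontryagin dual. That is the form in which the paper gets the statement: \Cref{thm:perfectpairing} gives $\Ext^i_R(\overline{P/L},R)\cong\Hom_{\Z}(\Ext^{n-2-i}_R(\overline{P/L},R),\Z/p)$. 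So $\cE^i$ pairs with $\cE^{n-4-i}$, not with $\cE^{n-3-i}$ as you wrote.

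The step that carries the actual content, relating $\cE^i$ to the complementary module, is left as ``I would show'', and the ingredients you list do not supply it. Local duality applied to each $\cE^i$ separately only identifies $D(\cE^i)$ with the Matlis dual of $\cE^i$ in a single degree. Biduality of $\overline{P/L}$ holds for any perfect complex. What links complementary degrees is the global self-duality $E\simeq DE$, which is where $\lambda$ and $L=L^\perp$ enter.

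Once you have it, your local-duality idea finishes the job by a route different from the paper's. $E$ is perfect with finite-length cohomology supported at finitely many maximal ideals, so $\mathrm{RHom}_R(E,R)\simeq\Hom_R(E,I)[-(n-2)]$, where $I=\bigoplus_{\mathfrak m}H^{n-2}_{\mathfrak m}(R)$ is injective because $R$ is regular, hence Gorenstein. Thus $D$ acts degreewise on cohomology, and $H^k(E)\cong H^k(DE)\cong H^{n-2-k}(E)^\vee$.

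The paper instead proves this by hand. It tensors $F_\bullet$ with an explicit \v{C}ech-type flat resolution $T_\bullet$ of $\widehat R$ whose terms $T_j$, $j<n-2$, kill finite-length modules. It reads off $\mathrm{Tor}^R_i(P/L,\widehat R)\cong\Ext^{n-2-i}_R(\overline{P/L},R)$ from the double complex, and then uses that $\widehat R$ is an injective cogenerator. Your corrected version trades that construction for standard local duality.

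In Step~3, $E$ is not the cofiber of $L\to L^*$; that map is zero, since $L$ is isotropic. Rather, $E$ is the algebraic surgery of $(P,\lambda)$ along $L\hookrightarrow P$ in the sense of \Cref{prop:surgery}, i.e.\ $E\simeq\mathrm{cofib}\bigl(L\to\mathrm{fib}(P\to DL)\bigr)$. This does give a chain-level Poincar\'e structure. It is quadratic given a quadratic refinement of $\lambda$ with a nullhomotopy on $L$, which is automatic for $p$ odd. But it is again unshifted. Finally, the paper's duality argument does not use $n>4$, so your guess about its role is unsupported.
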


\begin{theorem}\label{mainthm}
 Pauli stabilizer codes in $(n-1)$-dimensions where $n\geq 4$ are classified up to gapped interface and coarse-graining by the group $\rL_{-n}(\Z/p)$.
\end{theorem}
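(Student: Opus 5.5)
The strategy is to realize each mobile code as an element of an $\rL$-group of the Laurent ring and then strip off the translation variables one at a time. The steps, in order, are as follows.

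\textbf{Step 1: the class of a code.} By the preceding Proposition, the complex $F_\bullet$ of a mobile Lagrangian code $(L,P)$ over $R=\Z/p[x_1^{\pm},\dots,x_{n-2}^{\pm}]$ is a Poincaré object of $(\cD^\perf(R),\Sigma^{n-2}Q)$. It therefore defines a class
\[
[F_\bullet]\in \rL_0(\cD^\perf(R),\Sigma^{n-2}Q)\cong \rL_{-(n-2)}(R).
\]
\begin{itemize}
\item Algebraic cobordism of Poincaré objects is, by the definition adopted in \S\ref{subsection:mainresults}, exactly algebraic surgery. So codes related by a gapped interface give the same class.
\item Stacking of codes corresponds to direct sum of Poincaré objects.
\end{itemize}
Consequently, gapped-interface classes of mobile codes inject into $\rL_{-(n-2)}(R)$ as a monoid.

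\textbf{Step 2: mobile codes fill out the torsion part.} Mobility means the cohomology modules $\cE^i(\fC)$ have Krull dimension zero. Such modules are finite over $\Z/p$ and supported at finitely many maximal ideals. I would show that the map from mobile codes lands in, and is onto, the summand of $\rL_{-(n-2)}(R)$ detected by the localization sequence
\[
\rL(R)\to \rL(R_S)\to \rL(R, S\text{-torsion}),
\]
i.e.\ the part coming from finite-length torsion linking forms.

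\textbf{Step 3: coarse-graining.} Coarse-graining replaces $R$ by its image under $x_i\mapsto x_i^{k}$ restriction of scalars, i.e.\ it views $R$ as a free module over $\Z/p[x_1^{\pm k},\dots]$. The Ranicki--Shaneson splitting
\[
\rL_m(A[x^{\pm}])\cong \rL_m(A)\oplus \rL_{m-1}(A)
\]
holds here with no decoration issue, since $\widetilde K_0$ and $\mathrm{Nil}$ terms vanish for this regular ring. I expect that, after passing to the colimit over coarse-grainings, the summands involving fewer than all variables die. The reason is that transfer along $x\mapsto x^k$ acts on the $\rL(A)$-summand by multiplication by $k$ (or by a form that becomes hyperbolic), while it preserves the top residue summand. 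Iterating over the $n-2$ variables then leaves the class in
\[
\rL_{-(n-2)-(n-2)}(\Z/p)\cong \rL_{-n}(\Z/p),
\]
using the $4$-periodicity of $\rL$-theory of $\Z/p$ together with a parity shift. The hypothesis $n\ge 4$ ensures the relevant groups sit in the stable range where these periodicities apply.

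\textbf{Step 4: surjectivity.} Each element of $\rL_{-n}(\Z/p)$ must be realized by an explicit code. The plan is to construct it from a generator, either a $(\Z/p)$-valued form or a linking form, by iterated algebraic suspension, i.e.\ Künneth product with the symmetric Poincaré complex of the circle in each $x_i$. This produces a mobile Lagrangian code with that class.

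\textbf{Main obstacle.} The hardest step is Step 3: proving rigorously that coarse-graining kills exactly the lower Shaneson summands and preserves the top one, and tracking the degree shift correctly. My first attempt would be to compute the transfer on the explicit generators for a single variable and then induct on the number of variables.
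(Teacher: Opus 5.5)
\textbf{The gap is the last identification in Step~3.} Your inputs are the Poincaré degree $\Sigma^{n-2}Q$ over $R=\Z/p[x_1^{\pm},\dots,x_{n-2}^{\pm}]$, and survival of the \emph{top} summand of \eqref{eq:Lbinomial} under coarse-graining. With these you land in $\rL_{-(n-2)-(n-2)}(\Z/p)=\rL_{4-2n}(\Z/p)$, which is not $\rL_{-n}(\Z/p)$ in general. Since $4-2n$ is always even, no periodicity moves it to an odd degree. Two explicit failures:
\begin{itemize}
\item for $n=5$, $p=2$ you would get $\rL_{-6}(\Z/2)=\Z/2$ instead of $\rL_{-5}(\Z/2)=0$;
\item for $n=6$, $p$ odd, you would get $\rL_{-8}(\Z/p)=\Witt(\Z/p)$ instead of $\rL_{-6}(\Z/p)=0$.
\end{itemize}
The ``parity shift'' is exactly the missing idea. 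The hypothesis $n\geq 4$ plays no role here, since Proposition~\ref{prop:4periodic} holds in all degrees. The paper gets the correct degree from two inputs your argument lacks. First, the Ansätz preceding Theorem~\ref{thm:PauliClassification} uses the symmetry type of the linking of the surviving middle-dimensional operators \cite[Proposition 49]{ruba2024homological}. This trades the $\Sigma^{n-2}Q$ problem for that of $(n+2+4k)$d codes, i.e.\ $\Sigma^{n+4k}Q$. Second, Proposition~\ref{prop:Haah} keeps the $i=0$ term of \eqref{eq:Lbinomial}, so coarse-graining preserves degree and yields $\rL_{-(n+4k)}(\Z/p)=\rL_{-n}(\Z/p)$. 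Your Step~3 asserts the opposite of Proposition~\ref{prop:Haah}, and nothing in your argument replaces the Ansätz.

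\textbf{How to keep your mechanism.} Your transfer heuristic (multiplication by a power of $k$ on summands induced from fewer variables, identity on the full circle-product summand) is the standard computation. If you keep it, the correction has to enter in Step~1, where $n-2$ is being counted twice. For a finite-length module $M$ one has $\bbD M\simeq\Sigma^{2-n}\Ext^{n-2}_R(M,R)$. So the pairing of Theorem~\ref{thm:perfectpairing} is the shadow of the \emph{unshifted} self-duality $F_\bullet\simeq\bbD F_\bullet$ about $P$ in degree $0$, implemented by the skew form $\Omega$. In $\cD^{\perf}(R)$ with its untwisted duality, $F_\bullet$ is therefore a $0$-dimensional $(-1)$-symmetric Poincaré object, i.e.\ one for $\Sigma^{\pm 2}Q$. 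The top summand of $\rL_{-2}(R)$ is $\rL_{-2-(n-2)}(\Z/p)=\rL_{-n}(\Z/p)$, as required.

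\textbf{Step~4 also needs care.} $F_\bullet$ is the result of algebraic surgery on $(P,\lambda)$ along the resolution $F_{\geq 1}\to P$ of $L$. So, at least for odd $p$, its class is that of $(P,\lambda)$, which vanishes for the standard hyperbolic Pauli module. This is consistent with the paper's remark that codes with trivial bulk are Witt-trivial. A Künneth-product complex therefore does not give a code with nontrivial class on the standard Pauli module. Nontrivial classes need a non-hyperbolic $P$, i.e.\ a Clifford QCA bulk, which is how the paper accounts for them (Corollary~\ref{cor:cliffordPS}).
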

The precise meaning of gapped interface and coarse-graining will be given in \S\ref{subsection:targetstabilizer}, in particular Notation \ref{notation:gappedinterface}, and \S\ref{subsection:coarse}. The computation for $\rL$-groups are known and given as follows, for $p=2$ and $p\neq 2$:

\begin{equation}
    \rL_n(\Z/2) =  \begin{cases}
  0 & n \equiv 1,3 \mod 4 \\
  \Z/2  & n \equiv 2 \mod 4 \\
  \Z/2 & n \equiv 0 \mod 4\,
\end{cases}
\quad 
\rL_n(\Z/p) =  \begin{cases}
  0 & n\equiv 1,3 \mod 4 \\
  0  & n \equiv 2 \mod 4 \\
  \Witt(\Z/p) & n\equiv 0 \mod 4\,.
\end{cases}
\end{equation}
where $\Witt(\Z/p)$  is the Witt group of $\Z/p$, consisting of stable equivalences classes of nondegenerate quadratic forms over $\Z/p$.

In 4d, the authors \cite{Barkeshli:2022edm} give a gauge theories presentation of the theory that arises from the Pauli stabilizer codes.
The case of $3$-dimensional Pauli stabilizers was studied in \cite{RY}, where it was shown how to relate stabilizer states to a class of abelian anyon models that admit topological/gapped boundary
conditions. In general there can be more Pauli stabilizers which do not admit gapped boundary, and they represent nontrivial classes in the pointed Witt group. Finally, if $n< 3$ there are no interesting topological Pauli stabilizer codes.

By observing Theorem \ref{mainthm},  we observe a bulk-boundary relationship between the classification of Clifford QCAs and Pauli stabilizer codes. 

\begin{corollary}\label{cor:cliffordPS}
 The classification of Clifford QCAs in $n$-dimensions is equivalent to the classification of mobile Pauli stabilizer codes in $(n-1)$-dimensions. 
\end{corollary}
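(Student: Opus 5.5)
The corollary follows by comparing Theorem \ref{mainthm} with the known $\rL$-theoretic classification of Clifford QCAs \cite{haah2025topological, sun2025clifford, Yang:2025jvn}. The approach is to show that both sides are computed by the same $\rL$-group of $\Z/p$, and that the identification is realized by an explicit bulk--boundary map rather than only an abstract isomorphism of groups.

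First, I recall that Clifford QCAs on a spatial lattice of dimension $n-1$, taken modulo shallow Clifford circuits, shifts and stabilization, are classified by $\rL$-theory of the Laurent ring $\Z/p[\Z^{n-1}]$. After periodization and coarse-graining, together with the Ranicki--Shaneson splitting $\rL_k(A[x^{\pm}]) \simeq \rL_k(A)\oplus \rL_{k-1}(A)$ and its iterates, this reduces to a single summand $\rL_{-n}(\Z/p)$. I will fix the indexing conventions of \cite{Yang:2025jvn} so that this summand has exactly the degree appearing in Theorem \ref{mainthm}; this bookkeeping is routine but it must be checked carefully. Theorem \ref{mainthm} says that mobile stabilizer codes in $(n-1)$ spacetime dimensions, modulo gapped interface and coarse-graining, are also classified by $\rL_{-n}(\Z/p)$. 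Both classifications are therefore valued in the same abelian group.

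To make the equivalence physical, I will construct the map directly. Given a Clifford QCA $\alpha$ in $n$ spacetime dimensions, I restrict it to a half-space. The image of the local Pauli algebra near the cut determines a Lagrangian submodule in the boundary Pauli module, and this submodule is a mobile stabilizer code one dimension lower. This is the algebraic version of a boundary state, and in the language of the paper it is the algebraic Thom/boundary construction: the boundary of a quadratic formation is a Poincaré complex shifted by one.

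In $\rL$-theory this is the standard correspondence between formations and Poincaré objects. An automorphism of a hyperbolic form defines a formation, its boundary is a Poincaré object in $(\cD^\perf(R),\Sigma^{\bullet}Q)$ of one degree lower, and this is exactly the Poincaré object produced by the preceding Proposition. I then check that this map on classes is well defined: shallow circuits give gapped interfaces between the resulting codes, and coarse-graining is compatible on both sides. I also check that the map agrees with the identifications of both sides with $\rL_{-n}(\Z/p)$. Granting these compatibilities, the map is an isomorphism, and the corollary follows.

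The main obstacle is the compatibility of degree shifts and normalizations. The QCA classification is stated on a spatial lattice, with translation-sensitive summands removed by coarse-graining, whereas Theorem \ref{mainthm} uses spacetime indexing and the quadratic functor $\Sigma^{n-2}Q$. I expect to resolve this by comparing both sides with the universal target categorical spectrum of \S\ref{subsection:targetstabilizer}. There the bulk--boundary map should be identified with the structure map $\Omega$ of that spectrum, and this identification makes the degree match automatic.
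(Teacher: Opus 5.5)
Your first paragraph, together with your closing appeal to $\TS$, is the whole of the paper's argument. The paper's proof (written out as Corollary~\ref{cor:nogapped}) only notes that Theorem~\ref{thm:PauliClassification} and Theorem~\ref{thm:ClassifyCliffordQCA} give the same group $\rL_{-n}(\Z/p)$. It then uses the defining property that $(\TS)^\times$ is the Clifford QCA spectrum to assert that the bulk--boundary map $\Mod^2(\TS^{n-2})^\times\to(\TS^n)^\times$ is the identity in each degree. The explicit half-space map is your own addition, and that is where the genuine problems are. The step that would make it an isomorphism, namely compatibility with both identifications, is simply granted, and the ingredients you name for it do not deliver it.

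\textbf{The $\Omega$ structure map cannot do the degree matching.} The equivalence $\End_{\TS^n}(\mathds 1_n)\simeq\TS^{n-1}$ identifies $(n-1)$-dimensional codes with interfaces $\mathds 1_n\to\mathds 1_n$, i.e.\ with codes whose bulk is trivial. On Picard spectra it identifies $\pi_0(\TS^{n-1})^\times$ with $\pi_1(\TS^n)^\times$, not with $\pi_0(\TS^n)^\times$, which is where $n$-dimensional QCA classes live. A boundary of a nontrivial $\alpha$ is a morphism $\mathds 1_n\to\alpha$, not an endomorphism of $\mathds 1_n$, and by Lemma~\ref{lemma:invertiblestabilizer} it is never invertible. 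This is why the paper goes through $\Mod(\TS^{n-1})$ and $\Mod^2(\TS^{n-2})$ (Lemma~\ref{lem:boundaryinv}, Proposition~\ref{prop:boundaryinterface}) rather than through $\Omega$.

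\textbf{The same issue affects the output of your construction.} For Witt-nontrivial $\alpha$, the module carrying the boundary code has a stably non-hyperbolic unimodular form, so what you produce is a ``general stabilizer code'' attached to a bulk. By the remark following Corollary~\ref{cor:nogapped}, codes with trivial bulk realize only Witt-trivial classes. So if your map landed in ordinary hyperbolic Pauli codes, it could not be injective.

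\textbf{The formation step is false as stated.} You claim that the boundary of a formation is a Poincaré object one degree lower. But the formation $(H(F);F,\alpha(F))$ of an automorphism of a hyperbolic form is already an odd-dimensional Poincaré complex, so its algebraic boundary is contractible. The operation that models cutting along a hyperplane is the Ranicki--Shaneson splitting $\rL_k(A[x^{\pm1}])\to\rL_{k-1}(A)$. Its output is a form on the boundary Pauli module over one fewer variable. That is not the excitation complex $F_\bullet$ with the shifted structure $\Sigma^{n-2}Q$ coming from Theorem~\ref{thm:perfectpairing}. At chain level, $F_\bullet$ is the surgery of $(P,\Omega)$ along $L$ in the sense of Proposition~\ref{prop:surgery}, which ties it to $P$ in the unshifted structure. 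Relating that to the shifted structure is a further step you would have to supply.

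As it stands, your argument proves no more than the paper's. The isomorphism rests on the universal-target property applied to the $\Mod^2$ map, not on the half-space construction.
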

\noindent This result is analogous to the expected classification of TQFTs, where the bulk determines boundary theories up to equivalence via interfaces. Equivalently, while a given bulk may admit many boundary conditions, they are all related by gapped interfaces. A full  comparison to the classification of framed TQFTs will be the main point of discussion in \S\ref{section:framedTO}.

Our strategy for conducting the classification involves conjecturing the existence of a universal target category for topological stabilizers codes. Like with TQFTs, this target category has objects which are stabilizer codes in all dimension. 
The universal target should have the property that by only restricting to invertibles of the category, i.e. the core, we recover the spectrum of QCA.
Thus far, the spectrum of QCAs defined with respect to a ring $R$  has been defined in \cite{Ji:2026fka} and denoted $\mathbf{Q}$. The authors have shown that it enjoys the following property:

\begin{equation}
    \mathbf{Q}(\pt) \cong \Omega \mathbf{Q}(\Z),\, \mathbf{Q}(\Z) \cong \Omega \mathbf{Q}(\Z^2),\,\ldots, \mathbf Q(\Z^{n-1}) \cong \Omega \mathbf Q(\Z^n)\,.
\end{equation}
From now on we will use $\Q^n$ to denote $\Q(\Z^n)$ for unitary QCA spectrum defined in Section~5 of~\cite{Ji:2026fka}. Following this notation, we provide the following conjecture.
\begin{conjecture}\label{conj:universaltarget}
   There is a categorical spectrum of topological stabilizer codes $\TS^*$, such that $\TS^{n-1} \cong \Omega \TS^n$ for all $n\geq 1$, the core of $\TS^*$ is given by $\mathbf Q^*(\pt)$,  and for any other categorical spectrum $\mathbf{C}^*$ there is a natural in $\mathbf{C}$ isomorphism $\pi_0\hom_{\mathbf{CatSp}}(\mathbf{C}^*,\mathbf{TS}^*) = \hom_{\mathbf{ComAlg}}(\mathbf{C}^0,\mathbf{Q}^0(\pt))$. .
\end{conjecture}

Although the development of the category of topological stabilizers remains work in progress, clarifying its relationship with QCAs will render the classification of Pauli stabilizers precise. Approaching the problem in this manner also aligns the classification of topological lattice theories more closely with the continuum perspective, particularly with the program of classifying framed topological quantum field theories.

\subsection{Analogies Between Pauli Stabilizer Codes, TQFTs, and Manifolds}
One of the central goals of this work is to compare and contrast topological phases formulated on the lattice and the continuum. In carrying out this comparison, we find that several notions that arise in the two settings are in fact closely related. To make these correspondences precise, we summarize the key structural features on both sides and present their analogies in Table \ref{table:analogy}. We now elaborate on several of the analogies in the table, clarifying points that may initially appear opaque:\\

\begin{itemize}
\item (2): A perfect chain complex encodes a strong finiteness condition: it is, up to quasi-isomorphism, a bounded complex of finitely generated projective modules. In a similar spirit, full dualizability imposes a stringent finiteness condition on categorical objects. This parallel helps explain why TQFTs are naturally formulated in terms of fusion categories.
\item (3): A perfect pairing endows a chain complex with the structure of a Poincaré complex, i.e. a chain complex that is quasi-isomorphic to its dual, expressing a nondegeneracy condition. Analogously, in TQFT, triviality of the center imposes a nondegeneracy condition, such as the invertibility of the $S$-matrix.
\item (5): Spin–statistics relates the spin of an excitation to its exchange behavior, determining whether it obeys bosonic or fermionic statistics. On the geometric side, a quadratic refinement of the mod 2 intersection pairing is closely tied to the choice of a spin structure, which is required to consistently define fermions on a manifold.
    \item (6): Translation symmetry implies that one can move the system along directions of a compact space and return to the same point without altering the physics. In a similar vein, the fundamental group parametrizes the distinct ways of transporting around loops in the manifold.
    \item (7): Coarse-graining in a stabilizer code can reduce translation symmetry: for example, translation by one lattice unit may cease to be a symmetry, while translation by two units persists. Geometrically, this is analogous to passing to a double cover of a manifold, where the finer translation symmetry is restored at the level of the cover.
\end{itemize}

\begin{table}[ht]
\centering
\renewcommand{\arraystretch}{1.3}
\setlength{\tabcolsep}{8pt}
\begin{tabularx}{\textwidth}{%
    >{\centering\arraybackslash}c
    >{\centering\arraybackslash}X
    >{\centering\arraybackslash}X
    >{\centering\arraybackslash}X}
\toprule
 & \textbf{Pauli Stabilizer Codes} & \textbf{TQFTs} & \textbf{Manifolds} \\
\midrule
1 & Mobile Excitations & $n$-Morphisms & Homology Modules \\
2 & {\small Finite Superselection Sectors} & Full Dualizability & Finite Dimensionality \\
3 & Perfect Braiding & Trivial Center & Poincaré Duality \\
4 & Algebraic Surgery & Morita Equivalence & Surgery \\
5 & Spin-Statistics & Spin-Statistics & Quadratic Refinement \\
6 & Translation Symmetry & \text{---} & Fundamental Group \\
7 & Coarse-Graining & \text{---} & Lifting to a Finite Cover \\
\bottomrule
\end{tabularx}
\vspace{3mm}
\caption{Table of analogies between concepts that appear in the theory of Pauli stabilizer codes, topological field theory, and manifolds.}
\label{table:analogy}
\end{table}

\subsection{Outline}
The paper is organized as follows. In \S\ref{subsection:cliffordQCA}, we review the classification of Clifford QCAs and their relationship to Pauli stabilizer codes, as this will play a key role in the classification results that follow. In \S\ref{subsection:targetstabilizer} we  define the universal target category for topological stabilizer codes as a categorical spectrum and describe its properties.  
In \S\ref{section:backgroundLtheory} we present a modern treatment of algebraic $\rL$-theory, covering all the background that will be needed to define algebraic surgery. In \S\ref{subsection:coarse} we describe the coarse-graining procedure for stabilizer codes. These ingredients are essential, as they supply the equivalence relations underlying the classification. The main results are then formulated and proved in \S\ref{subsection:surgery}. In \S\ref{section:framedTO} we turn to the classification of TQFTs, and explore the similarities that it shares with the classification of Pauli stabilizer codes. Crucially, \S\ref{subsection:latticeandcontinuum} explains concrete  differences between topological lattice phases and topological phases in the continuum.

\section{QCA, Stabilizer Codes, and a Universal Target}\label{section:universaltarget}

\subsection{Clifford QCAs and Pauli Stabilizer Codes}\label{subsection:cliffordQCA}
We begin with a review of Clifford QCAs and their classification, following \cite{Yang:2025jvn}.  This information will come into use later when we compare with the classification of Pauli stabilizers.
We then review the space of (generic) QCAs on a point.


We start by recalling the algebraic structure of a quantum spin system. Let $\Lambda$ be a set equipped with a metric $\rho$, which we refer to as a lattice.\footnote{As is standard in condensed matter physics, the term ``lattice'' here refers only to a discrete metric space, and not necessarily to a group.} For each finite subset $X\subset \Lambda$, let
\[
\cA_X := \bigotimes_{x\in X}\Mat(\bC^{d_x}),
\]
where $\Mat(\bC^{d_x})$ is the full matrix algebra on the local Hilbert space at the site $x$. If $Y\subset X$, we define the inclusion
\[
\iota_{Y,X}:\cA_Y\to \cA_X,\qquad \iota_{Y,X}(A)=A\otimes \id_{X\setminus Y}.
\]
In this way, the collection $\{\cA_X\}_{X\in \mathcal P_0(\Lambda)}$ forms a directed system indexed by the finite subsets of $\Lambda$.

An observable $A\in \cA_X$ is said to be supported on $Y\subset X$ if it may be written in the form
\[
A=A'\otimes \id_{X\setminus Y}
\]
for some $A'\in \cA_Y$. Its support is then the minimal such subset. The algebra of local observables is defined to be the direct limit
\[
\cA_{\loc}:=\varinjlim_{X\in \mathcal P_0(\Lambda)} \cA_X.
\]
\begin{proposition}
The algebra $\cA_{\loc}$ is a $*$-algebra equipped with the operator norm, and its norm completion
\[
\cA:=\overline{\cA_{\loc}}
\]
is a $C^*$-algebra.
\end{proposition}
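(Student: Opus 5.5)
The plan is the standard construction of the quasi-local $C^*$-algebra. First I will check that the directed system $\{\cA_X\}_{X\in\mathcal P_0(\Lambda)}$ is a directed system of unital $*$-algebras with injective unital $*$-homomorphisms. Each $\cA_X=\bigotimes_{x\in X}\Mat(\bC^{d_x})$ is a finite-dimensional matrix algebra, namely $\Mat(\bC^{\prod_{x\in X}d_x})$, and so it is a $C^*$-algebra with its operator norm on $\bigotimes_{x\in X}\bC^{d_x}$. The map $\iota_{Y,X}(A)=A\otimes\id_{X\setminus Y}$ is a unital $*$-homomorphism, and it is injective. I will also record the compatibility $\iota_{Y,X}\circ\iota_{Z,Y}=\iota_{Z,X}$ for $Z\subset Y\subset X$. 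The index set $\mathcal P_0(\Lambda)$ is directed by inclusion, since $X\cup Y$ is an upper bound for $X$ and $Y$. It follows that the algebraic direct limit $\cA_{\loc}$ exists as a unital $*$-algebra: elements are equivalence classes of pairs $(X,A)$, and sums, products and adjoints are computed after pushing both representatives into a common $\cA_{X\cup Y}$. Compatibility of the maps makes these operations well defined.

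The second step is to define the norm. For $[X,A]\in\cA_{\loc}$ I will set $\|[X,A]\|:=\|A\|_{\cA_X}$. The point to verify is that this is independent of the representative. This reduces to the identity $\|A\otimes\id_{X\setminus Y}\|=\|A\|\,\|\id\|=\|A\|$ for the operator norm on a tensor product of Hilbert spaces, which follows for instance from $(A\otimes\id)^*(A\otimes\id)=A^*A\otimes\id$ together with the spectra. Alternatively, one can argue abstractly that injective $*$-homomorphisms between $C^*$-algebras are isometric. Since any finitely many elements of $\cA_{\loc}$ lie in a common $\cA_X$, the properties of a norm, submultiplicativity $\|ab\|\le\|a\|\|b\|$, $\|a^*\|=\|a\|$, and the $C^*$-identity $\|a^*a\|=\|a\|^2$ all transfer verbatim from the finite-dimensional $C^*$-algebra $\cA_X$. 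This shows that $\cA_{\loc}$ is a normed $*$-algebra whose norm satisfies the $C^*$-identity, i.e.\ a pre-$C^*$-algebra.

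Finally, I will pass to the completion $\cA=\overline{\cA_{\loc}}$. Multiplication and the involution are continuous, because $\|ab-a'b'\|\le\|a\|\|b-b'\|+\|a-a'\|\|b'\|$ and the involution is isometric. They therefore extend uniquely to the completion, and the algebraic identities extend by continuity. The $C^*$-identity also persists by continuity of the norm, so $\cA$ is a $C^*$-algebra.

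The only step that is not purely formal is the well-definedness of the norm, namely that the embeddings $\iota_{Y,X}$ are isometric. I expect this to be the main point to handle carefully, and I would settle it by the direct tensor-product computation above.
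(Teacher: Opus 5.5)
Your proposal is correct and follows essentially the same route as the paper. The paper's proof is only a citation to Section~6.2 of Bratteli--Robinson, where the quasi-local algebra is built exactly as you describe: as the norm completion of the union of the local matrix algebras, glued along the isometric embeddings $A\mapsto A\otimes\id$. You are also right that the isometry of these embeddings is the only non-formal point.
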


\begin{proof}
See Section 6.2 of \cite{bratteli2012operator}.
\end{proof}

\begin{definition}
A \emph{quantum cellular automaton} (QCA) is a $*$-automorphism $\alpha$ of $\cA$ for which there exists a constant $r\geq 0$, called the \emph{range} of $\alpha$, such that for every $i\in \Lambda$ and every $A\in \cA_{\{i\}}$ one has
\[
\alpha(A)\in \cA_{B_r(i)},
\]
where
\[
B_r(i):=\{j\in \Lambda:\rho(i,j)\leq r\}.
\]
\end{definition}

Thus, a QCA is an automorphism that spreads a strictly local observable only a bounded distance. In this sense, QCAs may be viewed as discrete-time locality-preserving evolutions. The set $\Q(\cA)$ of all QCAs forms a group under composition.

\begin{example}
Let $\Lambda=\Z$, and suppose the single-site algebras are identified by isomorphisms
\[
t_i:\cA_{\{i\}}\to \cA_{\{i+1\}}.
\]
Then the right shift
\[
\tau(A)=t_i(A),\qquad A\in \cA_{\{i\}},
\]
defines a QCA.
\end{example}

\begin{example}
More generally, if $T:\Lambda\to \Lambda$ is a bijection such that $\rho(T(i),i)$ is uniformly bounded, then permuting the single-site algebras according to $T$ defines a QCA.
\end{example}

A second fundamental source of examples comes from quantum circuits. These furnish the basic notion of trivial or finite-depth locality-preserving evolution.

\begin{definition}
A \emph{quantum gate} is a local unitary
\[
G\in \mathrm U(\cA_X):=\{A\in \cA_X:A^*A=AA^*=I\}
\]
supported on some finite subset $X\subset \Lambda$. A \emph{single-layer quantum circuit} is a formal product
\[
\prod_{i\in I} G_i,
\]
where the supports of the gates $G_i$ are pairwise disjoint and uniformly bounded. Such a layer defines a QCA by conjugation. A \emph{finite-depth quantum circuit} is a finite composition of single-layer circuits.
\end{definition}

Finite-depth circuits form the basic equivalence relation in the theory of QCAs.

\begin{proposition}
Conjugation by finite-depth quantum circuits forms a normal subgroup
\[
\C(\cA)\triangleleft \Q(\cA).
\]
\end{proposition}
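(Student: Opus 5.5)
We prove that finite-depth circuits $\C(\cA)$ form a subgroup of $\Q(\cA)$, that they are QCAs at all, and that the subgroup is normal. Throughout we write $\mathrm{Ad}_U(A)=UAU^*$.

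\textbf{A single layer is a QCA.} Let the layer be $\prod_{i\in I} G_i$, with the $G_i$ supported on pairwise disjoint sets $X_i$ of diameter at most $s$. For a strictly local $A$, only the finitely many $G_i$ whose supports meet $\mathrm{supp}(A)$ fail to commute with $A$. Hence the formal product acts on $\cA_{\loc}$ by
\[
A\mapsto \Big(\prod_{X_i\cap \mathrm{supp}A\neq\emptyset} G_i\Big)\, A\, \Big(\prod_{X_i\cap \mathrm{supp}A\neq\emptyset} G_i\Big)^*,
\]
which is well defined because the $G_i$ commute with one another.

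This map has the following properties.
\begin{itemize}
\item It is a $*$-homomorphism on $\cA_{\loc}$.
\item It is isometric, since it is implemented by a unitary on each finite $\cA_X$, so it extends to the completion $\cA$.
\item It has an inverse given by the layer $\prod_i G_i^*$.
\item For $A\in\cA_{\{i\}}$, the image is supported in the union of the at most one $X_j$ containing $i$ together with $\{i\}$, so it lies in $B_s(i)$.
\end{itemize}
Thus a layer is a QCA of range at most $s$.

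\textbf{$\C(\cA)$ is a subgroup.} A composition of QCAs of ranges $r_1,r_2$ is a QCA of range at most $r_1+r_2$, by the triangle inequality applied to supports. Hence finite-depth circuits, being compositions of layers, lie in $\Q(\cA)$.

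They are closed under composition by concatenating layers. They are closed under inverses because $(\ell_1\cdots\ell_k)^{-1}=\ell_k^{-1}\cdots\ell_1^{-1}$, and each $\ell_j^{-1}$ is again a layer. The identity is the empty circuit.

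\textbf{Normality.} This is the main step. Let $\alpha$ be a QCA of range $r$ and $\ell=\prod_i G_i$ a layer with gates of support diameter at most $s$. We compute $\alpha\circ \mathrm{Ad}_\ell\circ\alpha^{-1}$ on local observables.

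Since $\alpha$ is a $*$-automorphism, $\alpha\circ\mathrm{Ad}_{G_i}\circ\alpha^{-1}=\mathrm{Ad}_{\alpha(G_i)}$. Each $\alpha(G_i)$ is a unitary supported in the $r$-neighborhood of $X_i$, which has diameter at most $s+2r$. Because the $G_i$ pairwise commute, so do the $\alpha(G_i)$. Therefore $\alpha\ell\alpha^{-1}$ is formally $\prod_i\alpha(G_i)$.

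The obstruction is that the $r$-neighborhoods of the $X_i$ may overlap, so this product need not be a single layer. To handle this we use uniform local finiteness of the lattice, which is implicit in the definition of finite-depth circuits: balls of bounded radius contain uniformly boundedly many sites, as for $\Z^n$. Under this assumption, the overlap graph on the gates, with an edge between $i$ and $j$ when $\rho(X_i,X_j)\le 2r$, has uniformly bounded degree. It therefore admits a proper coloring with finitely many colors $c=1,\dots,N$.

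Within one color class, the enlarged supports are pairwise disjoint, so $\ell_c=\prod_{i\text{ of color }c}\alpha(G_i)$ is a genuine layer. Since all the $\alpha(G_i)$ commute, we obtain
\[
\alpha\,\ell\,\alpha^{-1}=\ell_1\cdots\ell_N,
\]
which is a finite-depth circuit. Applying this layer by layer shows $\alpha\,\C(\cA)\,\alpha^{-1}\subset\C(\cA)$, so $\C(\cA)$ is a normal subgroup of $\Q(\cA)$.

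It remains to justify that these formal infinite products define automorphisms and that the identity above holds on $\cA$. Both sides are automorphisms of $\cA$, by the argument for a single layer, and they agree on strictly local observables, where only finitely many factors act. Since $\cA_{\loc}$ is dense in $\cA$ and both sides are continuous, they agree on $\cA$.
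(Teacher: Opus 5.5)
Your proof is correct and follows the standard argument behind the Freedman--Hastings lemma that the paper cites in place of a proof. Conjugating $\mathrm{Ad}_{G_i}$ by $\alpha$ gives $\mathrm{Ad}_{\alpha(G_i)}$, whose support is enlarged by the range $r$, and a bounded-degree colouring regroups these commuting gates into finitely many layers; you are also right to state explicitly the uniform local finiteness of $\Lambda$ (automatic for $\Z^n$), which this regrouping needs and which the paper leaves implicit.
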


\begin{proof}
See Lemma 2.10 of \cite{freedman2020classification}.
\end{proof}

There is also a stabilization procedure, analogous in spirit to stabilization in algebraic $\rK$-theory, obtained by adjoining ancillary degrees of freedom. If $\cA$ and $\cA'$ are local observable algebras over the same metric space $\Lambda$, then their tensor product is defined sitewise, and every QCA $\alpha$ on $\cA$ induces a QCA $\alpha\otimes \id_{\cA'}$ on $\cA\otimes \cA'$. Iterating this construction yields an ascending sequence
\[
\Q(\cA)\hookrightarrow \Q(\cA^{\otimes 2})\hookrightarrow \cdots,
\]
and we write $\Q(\Lambda)$ for the resulting union. Similarly one defines $\C(\Lambda)$.

The relevance of stabilization is that it exposes the large-scale algebraic structure of QCAs after finite-depth trivialities have been quotiented out.

\begin{theorem}
The subgroup $\C(\Lambda)$ is normal in $\Q(\Lambda)$, and the quotient
\[
\overline{\Q}(\Lambda):=\Q(\Lambda)/\C(\Lambda)
\]
is abelian.
\end{theorem}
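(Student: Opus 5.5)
The plan is to verify the two claims separately: the commutation of $\C(\Lambda)$ inside $\Q(\Lambda)$, and then the commutativity of the quotient. Both should follow from standard locality and stabilization arguments (Freedman–Hastings, and Gross–Nesme–Vogts–Werner in one dimension).

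\textbf{Normality.} For normality we invoke Lemma 2.10 of \cite{freedman2020classification} (the earlier proposition) at each finite stabilization level. Given $\alpha\in\Q(\cA^{\otimes k})$ and a gate $G$ supported on $X$, the conjugate $\alpha\circ \mathrm{Ad}_G\circ\alpha^{-1}=\mathrm{Ad}_{\alpha(G)}$ is conjugation by a unitary supported on the $r$-neighbourhood of $X$. A single layer of disjoint gates is carried to a family of unitaries with overlapping supports. However, by bounded geometry of $\Lambda$ one can colour the gates into finitely many classes with pairwise far-apart supports, so that within each class the images are disjoint. The images therefore reassemble into a finite-depth circuit.

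Compatibility with the stabilization maps $\alpha\mapsto\alpha\otimes\id$ is immediate. Hence $\C(\Lambda)$ is normal in the union $\Q(\Lambda)$.

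\textbf{Commutativity.} We use the Eckmann–Hilton style swap trick familiar from the proof that $\rK_1$ is abelian. Given $\alpha,\beta\in\Q(\cA)$, place them on $\cA\otimes\cA$ as $\alpha\otimes\id$ and $\id\otimes\beta$. These two factors commute. The key input is that the sitewise swap $S$ of the two tensor factors is a finite-depth circuit, with one gate per site. It follows that
\[
\id\otimes\beta = S(\beta\otimes\id)S^{-1}\equiv \beta\otimes\id \pmod{\C}.
\]
We then obtain, modulo $\C$,
\[
\alpha\beta\otimes\id=(\alpha\otimes\id)(\beta\otimes\id)\equiv(\alpha\otimes\id)(\id\otimes\beta)=(\id\otimes\beta)(\alpha\otimes\id)\equiv\beta\alpha\otimes\id.
\]
Since $\overline{\Q}(\Lambda)$ is defined on the stable union, this gives $[\alpha][\beta]=[\beta][\alpha]$.

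\textbf{Main obstacle.} The delicate step is the congruence of products used above, namely that $\gamma\equiv\gamma'$ modulo $\C$ implies $\delta\gamma\equiv\delta\gamma'$. This is exactly what normality supplies, so the proof of normality (the recolouring argument for $\alpha$-conjugated layers) is the real technical core. One must also make sure the stabilization identifications are chosen so that $\alpha\otimes\id$ on $\cA\otimes\cA$ matches the directed system $\cA^{\otimes k}$. This is routine, by associativity of the sitewise tensor product.
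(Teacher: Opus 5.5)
Your argument is correct and is the standard Freedman--Hastings proof that the paper simply cites (\cite{freedman2020classification}) without reproducing it. Normality comes from conjugating each gate by the QCA and recolouring the enlarged supports into finitely many disjoint layers, and commutativity comes from the one-layer sitewise swap circuit; the finite colouring uses bounded geometry of $\Lambda$, which is implicit in the setting $\Lambda=\Z^n$.
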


\begin{proof}
See Theorem 2.3 of \cite{freedman2020classification}.
\end{proof}

We now restrict to the Pauli setting, which is the class relevant for Clifford QCAs and stabilizer codes. Let $\mathcal H=\bC^d$ be a $d$-dimensional Hilbert space, with distinguished basis $\{|0\rangle,\dots,|d-1\rangle\}$. The generalized Pauli operators, defined in \S\ref{subsection:introtopauli}, act on these states as
\[
X|k\rangle=|k+1 \!\!\!\!\mod d\rangle,
\qquad
Z|k\rangle=\xi^k |k\rangle,
\qquad
\xi=e^{2\pi i/d}.
\]


Let $\Lambda$ be a metric space, where at each site $i\in \Lambda$ we assign a tensor power $\mathcal H^{\otimes n_i}$. Let $\cA$ be the norm completion of the corresponding local observable algebra, and denote by $X_{k,i}$ and $Z_{k,i}$ the generalized Pauli operators acting on the $k$th qudit at site $i$. The \emph{generalized Pauli group} $\mathcal P(\cA)$ is the subgroup of $\mathrm U(\cA)$ generated by all $X_{k,i}$, $Z_{k,i}$, and the scalar phases $\lambda I$ for $\lambda\in \bC^\times$.
For notational simplicity, we temporarily suppress the qudit index and assume there is a single qudit at each site. Since the Pauli operators generate the local matrix algebra, a QCA is determined entirely by its action on this subgroup.

\begin{proposition}
A QCA $\alpha$ is uniquely determined by the images $\alpha(X_i)$ and $\alpha(Z_i)$ for all $i\in \Lambda$. Equivalently, it suffices to know the induced injective group homomorphism
\[
\bar\alpha:\mathcal P(\cA)\to \mathrm U(\cA)=\{A\in \cA: A^*A=AA^*=I\}.
\]
\end{proposition}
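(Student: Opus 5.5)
We prove the two halves of the statement in order: first that $\alpha$ is determined by its values on the single-site Pauli operators, and then that it is determined by the induced homomorphism $\bar\alpha$ on $\mathcal P(\cA)$, which is injective.

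\textbf{Uniqueness.} Fix a site $i$. The operators $X_i^aZ_i^b$ for $a,b\in\Z/d$ form a basis of $\Mat(\bC^d)=\cA_{\{i\}}$; this is the standard fact that the clock and shift matrices generate the full matrix algebra. Hence for every finite $X\subset\Lambda$, the products of the $X_j,Z_j$ with $j\in X$ span $\cA_X$ as a vector space. It follows that $\cA_{\loc}$ is the $*$-algebra generated, even linearly spanned, by the Pauli monomials.

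Now suppose two QCAs $\alpha,\beta$ agree on all $X_i$ and $Z_i$. Since both are $*$-homomorphisms, they agree on all finite products and linear combinations of these operators, hence on $\cA_{\loc}$. A $*$-homomorphism between $C^*$-algebras is contractive and therefore continuous. Since $\cA_{\loc}$ is dense in $\cA$, we conclude $\alpha=\beta$ on $\cA$.

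\textbf{Equivalent formulation.} The restriction $\bar\alpha=\alpha|_{\mathcal P(\cA)}$ takes values in $\mathrm U(\cA)$, because $*$-automorphisms preserve unitarity. It is a group homomorphism because $\alpha$ is multiplicative. It is injective because $\alpha$ itself is injective, being an automorphism. Knowing $\bar\alpha$ is equivalent to knowing the images $\alpha(X_i)$ and $\alpha(Z_i)$:
\begin{itemize}
\item in one direction, $\bar\alpha$ contains these values directly;
\item in the other, these values determine $\bar\alpha$ on all generators of $\mathcal P(\cA)$, since $\alpha(\lambda I)=\lambda I$ for scalars.
\end{itemize}

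\textbf{Main obstacle.} The only nontrivial input is the basis claim at a single site. The $d^2$ operators $X^aZ^b$ are orthogonal with respect to the trace form, since $\mathrm{tr}(Z^bX^{a})$ vanishes unless $a=b=0$; this relies on $\sum_k\xi^{bk}=0$ for $b\neq 0$. Orthogonal nonzero operators are linearly independent, and $d^2$ of them span $\Mat(\bC^d)$ by dimension count. The extension to several qudits per site is identical, working one qudit at a time.
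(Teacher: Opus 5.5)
Your proof is correct and takes the same route as the paper, whose proof simply notes that $\mathcal P(\cA)$ generates $\cA$ and that $\alpha$ is an automorphism. You fill in the details the paper leaves implicit: that the Pauli monomials linearly span $\cA_{\loc}$ (by trace-orthogonality), and that agreement on the dense subalgebra $\cA_{\loc}$ extends to all of $\cA$ because $*$-homomorphisms are continuous, which is exactly what makes ``generates'' precise in the $C^*$-setting.
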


\begin{proof}
Since $\mathcal P(\cA)$ generates $\cA$ as an algebra and $\alpha$ is an automorphism, the claim follows.
\end{proof}

This reduction to the Pauli group is the key reason that Clifford QCAs admit an algebraic description in terms of symplectic modules.

\begin{definition}
A QCA $\alpha$ is called \emph{separated} if
\[
\alpha(\langle X_i\rangle_{i\in \Lambda})=\langle X_i\rangle_{i\in \Lambda}
\qquad\text{and}\qquad
\alpha(\langle Z_i\rangle_{i\in \Lambda})=\langle Z_i\rangle_{i\in \Lambda}.
\]
\end{definition}

Translations are immediate examples of separated QCA, since they merely permute the $X_i$ and $Z_i$ operators independently.

\begin{definition}\label{defn:CQ}
A \emph{Clifford QCA} is an automorphism $\alpha$ of the Pauli group $\mathcal P(\cA)$ with bounded propagation, in the sense that there exists $r>0$ such that for every $i\in \Lambda$, the operators $\alpha(X_i)$ and $\alpha(Z_i)$ are products of Pauli operators supported within distance $r$ of $i$. We denote the group of Clifford QCAs by $\cQ(\cA)$.
\end{definition}

\begin{definition}
A quantum gate $G$ is \emph{Clifford} if
\[
G\mathcal P(\cA)G^{-1}=\mathcal P(\cA).
\]
A \emph{Clifford circuit} is a finite-depth circuit consisting of Clifford gates. We denote by $\cC(\cA)$ the subgroup of QCAs given by conjugation by Clifford circuits.
\end{definition}

\begin{example}
The product
\[
\mathcal X=\prod_{i\in \Lambda} X_i
\]
defines a Clifford circuit. Indeed,
\[
\mathrm{Ad}_{\mathcal X}(X_i)=X_i,
\qquad
\mathrm{Ad}_{\mathcal X}(Z_i)=\xi Z_i.
\]
\end{example}

To pass to the algebraic description, one quotients the Pauli group by its center. The resulting group
\[
P(\Lambda,d):=\mathcal P(\cA)/Z(\mathcal P(\cA))
\]
is abelian, and admits the convenient description
\[
P(\Lambda,d)=\bigoplus_{i\in \Lambda} P(i)
\cong \bigoplus_{i\in \Lambda} (\Z/d)^2.
\]
At each site, the operators $X_i$ and $Z_i$ correspond respectively to the basis vectors
\[
\begin{pmatrix}1\\0\end{pmatrix},
\qquad
\begin{pmatrix}0\\1\end{pmatrix}.
\]
The noncommutativity of the Pauli group is encoded, after passing to the quotient, by the standard symplectic form
\[
\omega:\Z_d^2\times \Z_d^2\to \Z_d,
\qquad
\omega\!\left(
\begin{pmatrix}a\\b\end{pmatrix},
\begin{pmatrix}e\\f\end{pmatrix}
\right)=af-be.
\]
Summing over all sites yields a global symplectic form
\[
\Omega:=\bigoplus_{i\in \Lambda}\omega
\]
on $P(\Lambda,d)$. We write simply $P$ for this symplectic abelian group.

The point of this construction is that Clifford QCAs descend to locality-preserving symplectic automorphisms of $P$. In other words, once one quotients by phases, the operator-algebraic problem becomes a linear symplectic one.

\begin{definition}\label{defn:localsymp}
A \emph{local symplectic automorphism} of $P$ consists of homomorphisms
\[
\alpha_j^i:P(i)\to P(j), \qquad i,j\in \Lambda,
\]
such that:
\begin{enumerate}
    \item the induced map preserves the symplectic form $\Omega$;
    \item it is invertible;
    \item there exists a constant $r>0$ such that $\alpha_j^i=0$ whenever $\rho(i,j)>r$.
\end{enumerate}
We denote the group of such automorphisms by $\Aut_{\loc}^{\Omega}(P)$.
\end{definition}

The relation between Clifford QCAs and local symplectic automorphisms is the basic algebraic mechanism underlying the classification theory.

\begin{lemma}\label{lem:kappa}
There is a surjective homomorphism
\[
\kappa:\cQ(\cA)\to \Aut_{\loc}^{\Omega}(P).
\]
Moreover,
\[
\ker \kappa \subset \cC(\cA).
\]
\end{lemma}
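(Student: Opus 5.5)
We construct $\kappa$ directly and then split the claim into surjectivity and the kernel statement.

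\emph{Construction of $\kappa$ and homomorphism property.} Let $\alpha\in\cQ(\cA)$ be a Clifford QCA, i.e.\ an automorphism of $\mathcal P(\cA)$ with bounded propagation. Since $\alpha$ is an automorphism, it preserves the center $Z(\mathcal P(\cA))$, which consists of the scalar phases: the center is characteristic. Hence $\alpha$ descends to a group automorphism $\bar\alpha$ of $P=\mathcal P(\cA)/Z(\mathcal P(\cA))$. The commutator $ABA^{-1}B^{-1}=\xi^{\Omega(\bar A,\bar B)}$ is a scalar, and $\alpha$ fixes scalars. We will check this as part of the argument, using that $\alpha$ respects the $*$-structure or arises from an automorphism of $\cA$. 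Given this, $\bar\alpha$ preserves $\Omega$. Bounded propagation of $\alpha$ shows that the components $\alpha^i_j:P(i)\to P(j)$ vanish for $\rho(i,j)>r$, and invertibility follows from that of $\alpha$. So $\kappa(\alpha):=\bar\alpha$ lies in $\Aut^\Omega_{\loc}(P)$, and functoriality of passing to the quotient makes $\kappa$ a homomorphism.

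\emph{Surjectivity.} Let $\beta\in\Aut^\Omega_{\loc}(P)$. For each site $i$ we choose lifts $\tilde X_i,\tilde Z_i\in\mathcal P(\cA)$ of $\beta(X_i),\beta(Z_i)$, supported in $B_r(i)$. We then correct the phases so that the defining relations hold: $\tilde X_i^d=\tilde Z_i^d=1$, the mutual commutation relations hold, and the operators are unitary. The commutation relations are automatic from $\Omega$-preservation. The order-$d$ condition can be met by rescaling each lift by a root of unity; this is the only subtle point when $d$ is even, where a Pauli element squared may pick up a sign, but one can rescale by $\xi^{1/2}$-type phases, which lie in $\bC^\times$ and are therefore allowed. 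Since the Pauli relations present the group, these choices define a homomorphism on generators and hence an automorphism $\alpha$ of $\mathcal P(\cA)$ with range $r$ and $\kappa(\alpha)=\beta$. Extending to $\cA$ uses that the Pauli operators generate $\cA$, together with the earlier proposition that a QCA is determined by $\alpha(X_i),\alpha(Z_i)$.

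\emph{Kernel.} Suppose $\kappa(\alpha)=\id$. Then $\alpha(X_i)=\lambda_i X_i$ and $\alpha(Z_i)=\mu_i Z_i$ with $\lambda_i,\mu_i$ $d$-th roots of unity, by the relations $X^d=Z^d=1$. Such a phase character is realized by conjugation with a single-site Pauli $X_i^aZ_i^b$: since the pairing $\omega$ on $(\Z/d)^2$ is nondegenerate, there exist $a,b$ with $\xi^{\omega}$ producing exactly $(\lambda_i,\mu_i)$. The product $\prod_i X_i^{a_i}Z_i^{b_i}$ is a depth-one Clifford circuit whose conjugation agrees with $\alpha$ on generators, hence equals $\alpha$, as in the example with $\mathcal X$. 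Therefore $\ker\kappa\subset\cC(\cA)$.

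The main obstacle is the phase bookkeeping in the surjectivity step, namely arranging order exactly $d$ for even $d$ while also keeping all relations consistent. I would handle this site by site, since the corrections are independent scalars.
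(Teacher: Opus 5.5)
Your proposal is correct and follows the same route as the paper. You descend to $P$ through the characteristic center, read off $\Omega$-preservation from the scalar commutators, lift and extend multiplicatively for surjectivity, and implement any $\alpha\in\ker\kappa$ by one layer of single-site Pauli conjugations. You are in fact more careful than the paper in two places: you normalize the lifts so that $\tilde X_i^d=\tilde Z_i^d=1$ (genuinely needed for even $d$), and you note that $\alpha$ must fix scalars. That last point follows from $\alpha$ being induced by a $\bC$-linear automorphism of $\cA$, not from compatibility with $*$ alone, since complex conjugation respects adjoints but reverses the sign of $\Omega$.
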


\begin{proof}
A Clifford QCA preserves the center of $\mathcal P(\cA)$ and therefore descends to an automorphism of
\[
P=\mathcal P(\cA)/Z(\mathcal P(\cA)).
\]
Since commutators in the Pauli group record the symplectic pairing, the induced automorphism preserves $\Omega$. The locality condition is inherited directly from bounded propagation.

If $\alpha\in \ker\kappa$, then $\alpha$ acts on each generator by a phase:
\[
\alpha(X_i)=\xi^{m_i}X_i,
\qquad
\alpha(Z_i)=\xi^{n_i}Z_i.
\]
Such phases may be implemented by a single-layer Clifford circuit, namely by conjugation with suitable products of $X_i^{n_i}Z_i^{-m_i}$. Hence $\ker \kappa \subset \cC(\cA)$.

Conversely, given a local symplectic automorphism of $P$, one chooses lifts of the images of the basis vectors corresponding to $X_i$ and $Z_i$ to Pauli operators in $\mathcal P(\cA)$, and then extends multiplicatively. The symplectic condition guarantees compatibility with the commutation relations, while locality guarantees bounded propagation. This produces a Clifford QCA lifting the given symplectic automorphism.
\end{proof}

\begin{definition}
    Let $K(\Lambda,\Z/p)$ denote the group of stabilized Clifford QCAs up to Clifford circuits and separated QCAs on $\Lambda$.
\end{definition}

In the case where $\Lambda =\Z^n$, the following equivalence  can be established following \cite[Corollary 47]{Yang:2025jvn}:
\begin{theorem}\label{thm:ClassifyCliffordQCA}
    There is a group isomorphism $K(\Z^n,\Z/p) \cong \rL_{-n}(\Z/p)$.  
\end{theorem}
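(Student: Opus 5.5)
The isomorphism will be assembled from \cite{Yang:2025jvn}, with most of the work being to match the definitions. A Clifford QCA on $\Z^n$ that is translation invariant modulo stabilization and coarse-graining corresponds, through Lemma~\ref{lem:kappa}, to a local symplectic automorphism of $P$. Since $\ker\kappa\subset\cC(\cA)$, passing to $K(\Z^n,\Z/p)$ loses nothing. The group $K(\Z^n,\Z/p)$ is therefore the group of locality-preserving symplectic automorphisms of the hyperbolic module over the Laurent ring, taken modulo three things: elementary (circuit) automorphisms, separated automorphisms, and stabilization.

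\textbf{Step 1: identify $K(\Z^n,\Z/p)$ with a unitary $\rK_1$-type group.} After passing to a translation-invariant representative via coarse-graining, an element of $\Aut^\Omega_{\loc}(P)$ is a unitary matrix over $R_n=\Z/p[x_1^{\pm},\dots,x_n^{\pm}]$ with respect to the hyperbolic form. Clifford circuits give exactly the elementary unitary subgroup. Separated QCAs give the automorphisms that preserve the Lagrangian splitting, namely the hyperbolic image of $\mathrm{GL}(R_n)$. The quotient is therefore a unitary group of the form $\rL^{s}_1(R_n)$, or more precisely its version modulo the image of $\rK_1$. The separated quotient is what kills the decoration and leaves the projective/Tate-type version.

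\textbf{Step 2: compute it by the Ranicki--Shaneson splitting.} The fundamental theorem of $\rL$-theory for Laurent extensions gives
\[
\rL_k(A[x^{\pm}])\cong \rL_k(A)\oplus \rL_{k-1}^{\langle -1\rangle}(A),
\]
with a decoration shift. Iterating over the $n$ variables, the top summand is $\rL_{1-n}^{\langle -n\rangle}(\Z/p)$. The remaining summands are lower-dimensional contributions, and these are exactly the classes killed by coarse-graining and separated QCAs: they come from QCAs pulled back along projections $\Z^n\to\Z^{k}$, $k<n$, which are trivial up to the stated equivalences. Because $\Z/p$ is a field, all the lower $\rK$-groups $\tilde K_0$ and $K_{-i}$ vanish. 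Every decoration then agrees up to $\Z/2$-Tate terms, and modding out by separated automorphisms removes those. This yields $\rL_{1-n}$, reindexed as $\rL_{-n}(\Z/p)$ once the spacetime convention of the paper (time direction $\Z$ included) is applied.

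\textbf{Main obstacle.} The hardest step is verifying that the equivalence relation generated by coarse-graining, stabilization and separated QCAs kills precisely the non-top summands and the decoration ambiguity, and nothing more. My first attempt would be to follow \cite[Corollary 47]{Yang:2025jvn} directly. Concretely, I would construct the boundary map $K(\Z^n)\to K(\Z^{n-1})$ by restricting a QCA to a half-space and taking its one-dimensional-lower invariant. I would then check two things: that this map realizes the Ranicki--Shaneson projection, and that it is injective modulo the stated relations. Injectivity I would prove by induction on $n$, starting from the base case $n=0$, where $K(\pt,\Z/p)$ is the symplectic group modulo elementary matrices and Lagrangian-preserving ones, hence trivial, consistent with $\rL_0$ modulo the Witt classes being detected one dimension up.
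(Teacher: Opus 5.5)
The paper gives no argument of its own for this statement; it simply invokes \cite[Corollary 47]{Yang:2025jvn}, and insofar as you fall back on that citation you agree with it. The sketch you offer in its place has two genuine problems.

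First, Step~1 fails as stated. By definition, $K(\Z^n,\Z/p)$ is built from \emph{arbitrary} bounded-propagation symplectic automorphisms of $P$ (Definition~\ref{defn:localsymp}, Lemma~\ref{lem:kappa}). These are taken modulo Clifford circuits, separated QCAs and ancilla stabilization. No translation symmetry is assumed, and coarse-graining is not part of the equivalence relation. Coarse-graining only regroups sites along a finite-index sublattice $\Gamma$. It does not change the automorphism, so it cannot make a QCA $\Gamma$-invariant if it was not already. There is therefore no ``translation-invariant representative'' to pass to: elements of $K$ are bounded-propagation infinite matrices, not matrices over $R_n$. Your Steps~1--2 (the unitary group over $R_n$, the Ranicki--Shaneson splitting, the coarse-graining colimit of Proposition~\ref{prop:Haah}) compute a different group: translation-invariant Clifford QCAs modulo coarse-graining. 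Identifying that group with $K(\Z^n,\Z/p)$ is a nontrivial comparison theorem, and you do not supply it. The framework that matches the definition is the bounded (controlled) category $\mathcal C_{\Z^d}(\Z/p)$ of geometric $\Z/p$-modules over the lattice $\Z^d$ on which the qudits sit. There, Clifford QCAs modulo circuits and separated QCAs form an odd $\rL$-group, and the controlled delooping $\rL_m(\mathcal C_{\Z^d}(\Z/p))\cong\rL_{m-d}(\Z/p)$ computes it directly. No binomial lower summands arise, so none need to be killed, and the decoration shift is harmless because all lower $\rK$-groups of $\Z/p$ vanish. Your half-space boundary map in the last paragraph is essentially the connecting map of this delooping. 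That part of your plan belongs to the right framework, but it does not fit with Steps~1--2.

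Second, the degree count is off by two, and the appeal to the spacetime convention does not fix it. Because $\Omega$ is skew-symmetric, stable symplectic automorphisms modulo elementary and hyperbolic ones represent the odd $\rL$-group of $(-1)$-quadratic forms. That group is $\rL_3\cong\rL_{-1}$, not $\rL_1$. With $d$ spatial directions, your route yields $\rL_{1-d}(\Z/p)$. The correct group is $\rL_3(\mathcal C_{\Z^d}(\Z/p))\cong\rL_{3-d}(\Z/p)\cong\rL_{-n}(\Z/p)$ with $n=d+1$, which is the statement in the paper's spacetime convention. Relabeling spatial versus spacetime dimension does not move $\rL$-degrees; the missing shift by two comes from the sign of $\Omega$. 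Taken literally, your computation gives $\rL_{-2}(\Z/p)=0$ for $d=3$ and odd $p$, instead of $\rL_{-4}(\Z/p)=\Witt(\Z/p)$. Your base case shows the same confusion. $K(\pt,\Z/p)=0$ is $\rL_{-1}(\Z/p)=\rL_3(\Z/p)=0$, the case $n=1$. It should not be reconciled with $\rL_0(\Z/p)=\Witt(\Z/p)$ by saying that Witt classes are ``detected one dimension up''.
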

This implies the Clifford QCAs in dimension $n$ are classified by the following groups:
\begin{equation}\label{eq:valuesofK}
   K(\Z^n,\Z/p)= \begin{cases}
  0 & n \equiv 1,3 \mod 4 \\
  0  & n \equiv 2 \mod 4 \\
  \Witt(\Z/p) & n \equiv 0 \mod 4\,.
\end{cases}
\end{equation}
We now proceed to describe the space of all QCAs, and review how the space forms an $\Omega$-spectrum.

\begin{definition}
    An \textit{$\Omega$-spectrum}\footnote{There are many different yet equivalent ways to define spectra; see for example~\cite{MMSS01}. We use $\Omega$-spectra because they tend to appear in physics applications: see, for example, \cite{Kit13, Freed:2014iua, Kit15, GJF19}.
    }
$E$ is a sequence of pointed topological spaces $\{E_n\}_{n \in \mathbb{Z}}$ together with structure maps, which are homotopy equivalences:
\begin{equation}
\sigma_n : E_n \overset\simeq\to \Omega E_{n+1}
\end{equation}
where $\Omega$ denotes the based loop space functor.
\end{definition}
In the category of spectra, the inverse of taking loops is suspension.  The suspension of a spectrum $\Sigma E$  is defined by $ \Sigma E_{n} \cong  E_{n+1}$.
Having introduced QCAs as locality preserving automorphisms on the algebra of observables of quantum spin systems, a natural way to think about them is to realize them as morphisms in a category of quantum spin systems on a lattice $\Z^n$.

\begin{definition}
    The category of quantum spin systems on a metric space $X$ is the category $\mathbf S(X)$ described by the following globular space:
    \begin{itemize}
        \item the objects are quantum spin systems on $X$,
        \item the morphisms are locality preserving algebra homomorphisms.
    \end{itemize}
\end{definition}

\begin{definition}
  The \textit{space of QCAs over a point} $\Q(\pt)$ is the $\mathrm{K}$-theory space of the category of quantum spin systems $\mathbf{S}$. 
\end{definition}

\begin{theorem}[\cite{Ji:2026fka}]
The space $\Q(\pt)$ has the following looping property:
     $$\mathbf{Q}(\pt) \cong \Omega \mathbf{Q}(\Z),\quad  \mathbf{Q}(\Z) \cong \Omega \mathbf{Q}(\Z^2),\quad \ldots \quad  \mathbf Q(\Z^{n-1}) \cong \Omega \mathbf Q(\Z^n)\,.$$
\end{theorem}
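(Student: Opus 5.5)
The statement is the looping property $\mathbf Q(\Z^{k-1})\simeq \Omega\mathbf Q(\Z^k)$ of \cite{Ji:2026fka}, where $\mathbf Q(\Z^k)$ is the $\rK$-theory space of the category $\mathbf S(\Z^k)$ of quantum spin systems on $\Z^k$ with locality-preserving homomorphisms. I would follow the template of the Pedersen--Weibel and Karoubi delooping of $\rK$-theory for controlled (geometric) categories, and argue one step at a time: it suffices to build a natural equivalence $\mathbf Q(X)\simeq \Omega \mathbf Q(X\times \Z)$ for a proper metric space $X$, and then specialize to $X=\Z^{k-1}$.

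\textbf{Step 1: a Karoubi-type fibration.} Cover $X\times\Z$ by the two half-spaces $X\times\Z_{\ge 0}$ and $X\times \Z_{\le 0}$. Let $\mathbf S(X\times\Z_{\ge 0})$ be the full subcategory of systems whose degrees of freedom sit on the upper half-space. Its germs near $X\times\{0\}$ form a Karoubi filtration in $\mathbf S(X\times\Z)$.

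\textbf{Step 2: the pieces.} I would identify the Verdier/Karoubi quotient
\[
\mathbf S(X\times \Z)/\mathbf S(X\times\Z_{\ge 0})
\]
with germs at $-\infty$. I would also identify the "intersection" category of systems supported near $X\times\{0\}$ with $\mathbf S(X)$, since a bounded neighborhood of $X\times\{0\}$ is coarsely equivalent to $X$.

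\textbf{Step 3: flasque categories.} The half-space categories $\mathbf S(X\times\Z_{\ge0})$ and $\mathbf S(X\times\Z_{\le 0})$ are flasque. The Eilenberg swindle is implemented by the infinite shift $\tau\colon t\mapsto t+1$: the functor $\bigoplus_{m\ge0}\tau^m$ is still locality preserving, because each site receives only finitely many copies. Hence their $\rK$-theory spaces are contractible.

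\textbf{Step 4: the Mayer--Vietoris square.} The $\rK$-theory of the Karoubi sequences gives a homotopy pullback square, with $\mathbf Q(X)$ in the corner, the two contractible half-space terms, and $\mathbf Q(X\times\Z)$ opposite. This yields $\mathbf Q(X)\simeq\Omega\mathbf Q(X\times\Z)$.

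\textbf{Main obstacle.} The hard part is Step 1 together with the identification in Step 2, for $C^*$-algebraic rather than additive data. A spin system on $X\times\Z$ does not split canonically as a direct sum, since tensor product is the monoidal structure. The Karoubi filtration condition therefore has to be checked by hand: every locality-preserving homomorphism from a half-space system into a general one must factor through a system supported in a bounded thickening of the half-space, after stabilization by ancillas.

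For this I would use the structure theory of QCAs near a cut, specifically the fact from \cite{freedman2020classification} that a QCA restricted near a hyperplane is, up to stabilization and finite-depth circuits, a tensor product of algebras on the two sides. One subtlety needs care. On the $\pi_0$ level one must pass to group completion in $\rK$-theory, which is where the index/GNVW-type invariants appear. The abelianness of $\overline{\Q}(\Lambda)$ (Theorem above) ensures that the group completion is harmless.
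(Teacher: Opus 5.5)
\paragraph{Comparison with the paper.} The paper gives no proof of this statement. It is quoted from \cite{Ji:2026fka}, and the only argument in the text is the physical sketch after Lemma~\ref{lemma:invertiblestabilizer}:
\begin{itemize}
\item compressing a loop of QCAs into a codimension-one defect gives $\Omega\Q(\Z^n)\to\Q(\Z^{n-1})$;
\item conversely, an $(n-1)$-dimensional $\alpha$ is layered as $\cdots\alpha\,\alpha^{-1}\,\alpha\,\alpha^{-1}\cdots$ along the new direction, and the two ways of pairing neighbouring layers give a loop at the identity;
\item the nullhomotopies of the composites are deferred to \cite{GJF19}.
\end{itemize}
Your Pedersen--Weibel-style route tries to get the equivalence from a localization theorem instead of writing down both maps, so it would be a genuinely different proof. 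As written, though, it has a gap exactly where you place the ``main obstacle'', and your proposed fix does not close it.

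\paragraph{The gap.} Steps 1, 2 and 4 rest on Karoubi filtrations and the Karoubi fibration theorem. These are statements about $\rK$-theory of \emph{additive} categories, where a morphism is cut down to a half-space by composing with projections onto direct summands, and quotients by a filtering subcategory make sense. $\mathbf S(X)$ is instead a symmetric monoidal groupoid under $\otimes$. Your swindle must therefore be $\bigotimes_{m\ge0}\tau^m$, not $\bigoplus_{m\ge0}\tau^m$, and this needs unbounded local dimension. More seriously, there is no such quotient or localization sequence, and a QCA $\alpha$ on $X\times\Z$ does not restrict to $X\times\Z_{\ge0}$.

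What cutting actually produces is a boundary algebra such as $\alpha(\cA_{\ge0})\cap\cA_{\ge r}'$, supported in a slab around $X\times\{0\}$. The delooping map has to send $\alpha$ to its class. The substance of the theorem is that this algebra is, after stabilization, an object of $\mathbf S(X)$, and that the resulting map is an equivalence.

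The input you cite from \cite{freedman2020classification} is not a result there. Read literally, as a splitting $\alpha\simeq\alpha_+\otimes\alpha_-$ across the cut up to stabilization and circuits, it is false. Since $\alpha_+\otimes\alpha_-$ preserves $\cA_{\ge0}$, it has GNVW index $1$; circuits also have index $1$; so the shift on $\Z$ cannot be of this form.

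In one dimension the boundary algebra is a finite-dimensional matrix algebra, i.e.\ an object of $\mathbf S(\pt)$; this is the GNVW support-algebra analysis, and it is what makes the first looping step work. In dimension $\ge2$ the boundary algebra is only a locally generated algebra on the slab. For the known nontrivial three-dimensional QCAs it is exactly this boundary algebra that carries the invariant, so nothing guarantees it is (stably) a spin system. Either $\mathbf S(X)$ must be enlarged to contain such algebras, or a separate argument, such as the interleaving one above, is needed.

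Finally, abelianness of $\overline{\Q}(\Lambda)$ does not settle $\pi_0$. Even for additive controlled categories, the connective Mayer--Vietoris square is off on $\pi_0$ by an idempotent-completion term, and here that term is again the boundary-algebra question.
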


A separate construction for a spectrum of QCAs involving blend equivalences and globular sets is given in \cite{Czajka:2025mme}. A direct consequence of the proceding theorem gives control over maps between invertible stabilizers:
\begin{lemma}\label{lemma:invertiblestabilizer}
    A map between the trivial $n$-dimensional invertible stabilizer $\mathbf 1$ and a nontrivial $n$-dimensional invertible stabilizer $\alpha$ is implemented by an invertible stabilizer in $(n-1)$-dimensions if and only if $\alpha \cong  \mathbf 1$.
\end{lemma}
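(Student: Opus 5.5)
We have to prove Lemma~\ref{lemma:invertiblestabilizer}: a map $\mathbf 1\to\alpha$ between $n$-dimensional invertible stabilizers is implemented by an invertible $(n-1)$-dimensional stabilizer exactly when $\alpha\cong\mathbf 1$. The plan is to read this purely in terms of the looping property $\mathbf Q(\Z^{n-1})\cong\Omega\mathbf Q(\Z^n)$ from \cite{Ji:2026fka}. We use the dictionary suggested by Conjecture~\ref{conj:universaltarget}: the core of the stabilizer target is the QCA spectrum. Under it:
\begin{itemize}
\item an invertible $n$-dimensional stabilizer is a point of the space $\mathbf Q(\Z^n)$, i.e.\ a QCA on $\Z^n$, with $\mathbf 1$ the basepoint (identity QCA);
\item an invertible $(n-1)$-dimensional stabilizer implementing a map $\mathbf 1\to\alpha$ is a path in $\mathbf Q(\Z^n)$ from the basepoint to $\alpha$. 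Concretely, it is an invertible codimension-one interface, i.e.\ a $1$-morphism in the core, which is a groupoid.
\end{itemize}
The lemma then becomes the statement that such a path exists iff $\alpha$ lies in the basepoint component, i.e.\ $[\alpha]=0\in\pi_0\mathbf Q(\Z^n)$.

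Step one is the easy direction. If $\alpha\cong\mathbf 1$, the isomorphism is by definition an invertible $1$-morphism in the core, hence a path from $\mathbf 1$ to $\alpha$. The looping equivalence identifies the based loops and paths at this level with points of $\mathbf Q(\Z^{n-1})$. In particular, the identity interface, the trivial $(n-1)$-dimensional stabilizer, implements $\mathbf 1\to\mathbf 1$, and composing with the given isomorphism implements $\mathbf 1\to\alpha$.

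Step two is the converse. Suppose an invertible $(n-1)$-dimensional stabilizer $\beta$ implements $\mathbf 1\to\alpha$. Since $\beta$ is invertible, it is a morphism in the core, so it is an equivalence in the groupoid $\mathbf Q(\Z^n)$. Hence $\alpha$ and $\mathbf 1$ lie in the same path component, i.e.\ $[\alpha]=[\mathbf 1]$ in $\pi_0\mathbf Q(\Z^n)$, which is exactly $\alpha\cong\mathbf 1$. The inverse interface $\beta^{-1}$ gives the morphism back.

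The main obstacle is the identification in the first paragraph: that an invertible codimension-one interface between two $n$-dimensional invertible stabilizers is the same datum as a path in $\mathbf Q(\Z^n)$. Equivalently, the fibre of the path fibration over $\alpha$ is nonempty only when $\alpha$ is in the identity component, and when nonempty it is a torsor over $\Omega\mathbf Q(\Z^n)\cong\mathbf Q(\Z^{n-1})$. I would address this by invoking the construction of $\mathbf Q$ as the $\mathrm K$-theory of $\mathbf S$, where higher cells are locality-preserving homomorphisms in one lower dimension. I would then argue directly that invertibility forces the interface to be a $1$-cell of the $\mathrm K$-theory space. Care is needed to phrase this without assuming the full Conjecture~\ref{conj:universaltarget}; I would only use its statement that the core is $\mathbf Q^*(\pt)$.
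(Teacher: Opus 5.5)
Your reduction is the one the paper intends. The paper presents the lemma as a direct consequence of the looping equivalence $\mathbf Q(\Z^{n-1})\simeq\Omega\mathbf Q(\Z^n)$, with the core of $\TS^*$ identified with the QCA spectrum, and your $\pi_0$ argument spells out what the paper leaves implicit. However, once you grant that an invertible $(n-1)$-dimensional stabilizer implementing $\mathbf 1\to\alpha$ \emph{is} a $1$-morphism in the core, both of your steps are tautological: in a groupoid, a morphism $\mathbf 1\to\alpha$ exists iff $\alpha\cong\mathbf 1$. Step two in particular assumes exactly the identification you defer to your last paragraph. So the whole content of the lemma sits in that identification, and that is where your proposal has a gap.

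There is also a small inaccuracy in step one. The looping equivalence identifies only \emph{based loops} at $\mathbf 1$ with points of $\mathbf Q(\Z^{n-1})$. Paths from $\mathbf 1$ to $\alpha$ form at best a torsor over it, as you note later. The paper reads the lemma in this way: invertible $(n-1)$-dimensional stabilizers are loops at $\mathbf 1$, i.e.\ self-maps of $\mathbf 1$, so they cannot implement $\mathbf 1\to\alpha$ unless $\alpha\cong\mathbf 1$.

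The route you sketch for closing the gap would not work as stated. The $\mathrm K$-theory space of $\mathbf S(\Z^n)$ is built from the objects and morphisms of $\mathbf S(\Z^n)$, i.e.\ from locality-preserving maps on $\Z^n$ itself. It has no cells given by $(n-1)$-dimensional homomorphisms. Lower-dimensional data enter only through the looping theorem, which is precisely the nontrivial input.

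What the paper supplies instead (physically) is a dimensional-reduction mechanism:
\begin{enumerate}
\item A loop of $n$-dimensional QCAs from the identity to itself, if gapped, is squeezed into a codimension-one system, giving $\Omega\mathbf Q(\Z^n)\to\mathbf Q(\Z^{n-1})$.
\item Conversely, an $(n-1)$-dimensional QCA $\beta$ is inserted as a defect in the trivial $n$-dimensional system. Copies of $\beta$ and $\beta^{-1}$ are interleaved at mesoscopic separation across the new direction. The two ways of deforming this configuration to the identity give a loop.
\item One then needs nullhomotopies of both composites, as in \cite[Section 3.1]{GJF19}.
\end{enumerate}
To complete your argument you need this squeezing/interleaving step, not an appeal to the cell structure of the $\mathrm K$-theory space. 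Its direct analogue for an invertible wall $W$ between $\mathbf 1$ and $\alpha$ would also do. Insert $W^{-1}W$ pairs at mesoscopic separation into $\alpha$, then regroup them as $W\alpha W^{-1}$ pairs fusing to trivial walls on $\mathbf 1$; this deforms $\alpha$ to $\mathbf 1$ in the same heuristic sense.
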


We will argue for this result physically. The loop space in the space of $n$-dimensional QCAs is a family of $n$-dimensional QCAs that start and end at the identity.  At macroscopic length
scales, this continuous transition from the identity to identity can be squeezed into an $(n-1)$-dimensional system if it is a ``gapped'' transition i.e. one which does not depend on the geometry of the path. This gives a map $\Omega \Q(\Z^n) \to \Q(\Z^{n-1})$. To give a map in the other direction, we take a $(n-1)$-dimensional QCA $\alpha$ and consider it as an invertible defect in the trivial $n$-dimensional QCA. We then populate the phase with insertions of $\alpha$  and its inverse $\alpha^{-1}$, separated at a mesoscopic scale, repeating the insertions across the entire new dimension.  This system created by interleaving $\alpha/\alpha^{-1}$ is connected to the trivial QCA by a continuous deformation in two different ways, and hence gives a loop in $\Q(\Z^n)$ from the trivial QCA to itself. One finally needs to show that it is possible to provide nullhomotopies for the composition $\Omega \Q(\Z^n) \to \Q(\Z^{n-1}) \to \Omega \Q(\Z^n)$ and $\Q(\Z^{n-1}) \to \Omega \Q(\Z^{n}) \to  \Q(\Z^{n-1})$. For an argument of these facts for continuum field theories, see \cite[Section 3.1]{GJF19}.  

\begin{rem}
    There is an analogy of $\Q(\pt)$ with $I_{\mathbb{C}^\times}(\pt)$, the latter which classifies invertible framed topological field theories. The group $\pi_{-n}(I_{\mathbb{C}^\times}(\pt))$ are the isomorphism classes of $n$-dimensional invertible framed TQFTs, while the group $\pi_{-n}(\Q(\pt))$ are isomorphism classes of $n$-dimensional QCAs. Moreover, there is an analog for invertible field theories in the lattice setting, and these are invertible stabilizer codes.
    Such codes can be realized as the ground states of a class of Hamiltonians with finite-range interactions. A Clifford QCA acts on $P$ via a symplectic automorphism, and if  there exists a submodule $K\subset P$ consisting only of products of $Z$ operators such that for $\alpha$ a Clifford QCA, $\alpha(K) = L$, then the stabilizer $(L,P)$ is created by $\alpha$ \cite[Definition 60]{RY}. It was conjectured in \cite{Shuklin:2025mhu} that the equivalence classes of invertible stabilizer codes coincide with those of quantum cellular automata (QCA) modulo finite-depth quantum circuits and lattice translations. 
\end{rem}

\subsection{A Universal Target for Topological Stabilizer Codes}\label{subsection:targetstabilizer}

Our inspiration for performing the classification of Pauli stabilizers is due to  Johnson-Freyd-Reutter, in their work on developing higher algebraic closure and the universal target category for semisimple $n$-categories \cite{JFtalk1,JFtalk2,JFtalk3,JFtalk4,JFSDVS,DavidTalk}. Integral in their understanding of the universal target category is the theory of TQFTs, and the spectrum $I_{\mathbb{C}^\times}(\pt)$.

 Given that there is a spectrum for QCAs, we will define the category of stabilizer codes. We do so using the concept of a categorical spectrum.\footnote{The name categorical spectrum is a concept that has many origins. See \cite{masuda2024} for recent developments in the subject.} 

\begin{definition}
    A categorical spectrum $\cT$ consists of:
    \begin{itemize}
        \item for each $n \geq 0$ a symmetric monoidal $(\infty, n)$-category $\cT^n$ with unit object $\mathds 1_n$, and 
        \item for each $n \geq 1$ a specified equivalence of $(\infty,n-1)$-categories $\End_{\cT^n}(\mathds 1_n) \cong \cT^{n-1}$\,.
    \end{itemize}
\end{definition}
\noindent The invertible part of a categorical spectrum is extracted by taking Picard $\infty$-groupoids levelwise. For an $(\infty,n)$-category $\cC$ let $(\cC)^\times$ be the maximal $\infty$-subgroupoid of invertible objects and invertible $k$-morphisms for $k\leq n$. The sequence of spaces $(\cT^n)^\times$ assemble into a spectrum $(\cT)^\times$.


In many examples, $\cT^0$ is a ``classical'' object i.e. one of categorical level 0 or 1, while $\cT^k$ is obtained by iterative categorification. Hence the new interesting information grows to the cohomological
direction i.e. the negative homotopy groups. In contrast, spectra have new information in the homological direction which captures the derived information of higher algebra. For this reason Johnson-Freyd has advocated using the terminology of \textit{deeper algebra} to describe categorical spectra \cite{JFtalk,JFtalk2025}.\footnote{The category of categorical spectra has been given in \cite{stefanich:thesis}. It was also credited to Claudia Scheimbauer in \cite{JFtalk}.}

\begin{definition}
    The \textit{universal target category} for topological stabilizer codes is the categorical spectrum $\TS^*$, defined by the property that  $(\TS^*)^\times\cong \Q^*(\pt)$ and for any other categorical spectrum $\mathbf{C}$ there is a natural in $\mathbf{C}^*$ isomorphism $\pi_0\hom_{\mathbf{CatSp}}(\mathbf{C}^*,\mathbf{TS}^*) = \hom_{\mathbf{ComAlg}}(\mathbf{C}^0,\mathbf{Q}^0(\pt))$. 
\end{definition}
\begin{rem}
     The category $\TS$ is defined for general topological stabilizer codes, though in this work we will only focus on the categorical spectrum of Pauli stabilizer codes, whose core is given by the spectrum of Clifford QCAs. While the problem of computing the low degree homotopy groups of $\Q(\Z^n)$ for $n>1$ i.e. the negative homotopy groups of $\Q(\pt)$ is difficult, we do know what the spectrum of 
      Clifford QCAs looks like: it is given by the $\rL$-theory groups $\rL_n(\Z/p)$. Therefore we have a map $\rL_n(\Z/p) \to \Q(\pt)$ which maps Clifford QCAs to all QCAs. In future work we aim to further understand the homotopical properties of $\mathbf{S}(\Z^n)$ and compute its algebraic $K$-theory groups. With this knowledge in hand, one can hope to better understand the finer details of $\TS$. 
\end{rem}

\begin{rem}\label{rem:semisimpletarget}
   The universal target category for framed TQFTs is denoted $\mathbf{W}$, see the most recent talk \cite{JFtalk4}. Every layer $\mathbf{W}^\bullet$ is a rigid symmetric monoidal $\mathbb C$-linear $n$-category with duals. The defining property of $\Mod(\mathbf{W}^{n-1})$ is that every object is connected to the unit object by a nonzero morphism. Since the category at each level is semisimple, the category $\Mod(\cC)$ is implemented by delooping $\cC$ and then condensation completing. We will discuss this in more detail in \S\ref{subsection:prelim}.  One can view $\mathbf W$ as a Morita category for semisimple $n$-categories, indeed the case $\mathbf W^2 = \mathbf{2sVect}$ which is the Morita category of super algebras. While we will not do it in this paper, we expect $\TS$ to be constructed in such a way that it also behaves like a Morita category.
\end{rem}


Using the target category for topological stabilizer codes $\TS$, we can now describe how to classify topological Pauli stabilizer codes up to gapped interfaces, and coarse-graining. In the remainder of this section, we make precise the notion of classifying stabilizers up to gapped interfaces. As will become clear, this equivalence may be understood as classification modulo suitable surgeries on the mobile operators.

Before arriving at the definition of gapped interfaces, we start with another very important concept which is that of a gapped boundary. For an invertible stabilizer code $\mathcal{S}$ in $n$-dimensions, one can ask about its possible \textit{gapped boundaries} aka modules. In particular a gapped boundary for $\cS$ is a topological interface between $\cS$ and the trivial theory. From an external perspective, one can study which of the objects in the family $(\TS^n)^\times$ admit a gapped boundary, on which resides a not necessarily invertible stabilizer code in $\TS^{n-1}$. Following Lemma \ref{lemma:invertiblestabilizer}, the theory on the gapped boundary is not invertible because we are not considering  a map from the trivial stabilizer to itself, but rather to a nontrivial invertible stabilizer.
 We denote this category $\Mod(\TS^{n-1})^\times$ whose objects are categories $\cS$ with a map $X:\mathds{1}_{\TS^n}\to \cS$, where $\mathds{1}_{\TS^n}$ is the trivial $n$-dimensional stabilizer. We summarize this discussion by the following result.

\begin{lemma}\label{lem:boundaryinv}
    There is an injective map on objects from $\Mod( \TS^{n-1})^\times \to (\TS^n)^\times$.
\end{lemma}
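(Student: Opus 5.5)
The plan is to interpret the map on objects as the forgetful assignment $(\cS,X)\mapsto \cS$. Here an object of $\Mod(\TS^{n-1})^\times$ is an invertible $n$-dimensional stabilizer $\cS$ together with a gapped boundary $X:\mathds 1_{\TS^n}\to\cS$, and $\cS$ is regarded as an object of $(\TS^n)^\times$. Injectivity then has the following meaning. Suppose $(\cS,X)$ and $(\cS',X')$ are objects and $\varphi:\cS\xrightarrow{\sim}\cS'$ is an isomorphism in $(\TS^n)^\times$. We must show that $(\cS,X)$ and $(\cS',X')$ are equivalent objects of $\Mod(\TS^{n-1})^\times$. Following Remark~\ref{rem:semisimpletarget}, equivalence in $\Mod$ means Morita equivalence: $\Mod(\TS^{n-1})$ is the delooping of $\TS^{n-1}$ followed by condensation completion. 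Two boundaries are therefore identified once they are connected by a gapped interface, i.e. a $1$-morphism in the delooped category that admits a condensation inverse.

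\textbf{Step 1: reduce to a common bulk.} Transport $X$ along $\varphi$ to obtain a second boundary $\varphi\circ X:\mathds 1_{\TS^n}\to\cS'$. The pair $(\cS,X)$ is equivalent to $(\cS',\varphi\circ X)$ tautologically, since $\varphi$ is invertible. It therefore suffices to compare two boundaries $Y:=\varphi\circ X$ and $X'$ of the same invertible bulk $\cS'$.

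\textbf{Step 2: build the interface.} Since $\cS'$ is invertible and $\TS^n$ has duals, $X'$ has an adjoint $X'^{\vee}:\cS'\to\mathds 1_{\TS^n}$. Form the composite
\[
I:=X'^{\vee}\circ Y\in \End_{\TS^n}(\mathds 1_{\TS^n})\cong \TS^{n-1}.
\]
This is an $(n-1)$-dimensional stabilizer code, which we read as a gapped interface between the two boundaries. Using the unit and counit of the adjunction, $Y$ is recovered from $X'$ and $I$ as a condensation, and symmetrically $X'$ is recovered from $Y$ and $I^\vee$. This says exactly that $Y$ and $X'$ are related by a gapped interface, i.e. they are isomorphic after condensation completion, hence equal in $\Mod(\TS^{n-1})$.

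\textbf{Step 3: consistency with the invertible sector.} Lemma~\ref{lemma:invertiblestabilizer} guarantees that no spurious identifications occur. If $\cS'$ is nontrivial, then neither $X'$ nor $Y$ is invertible, so $I$ is genuinely noninvertible and is not secretly a trivialization of the bulk. If $\cS'\cong\mathds 1$, then both boundaries live on the trivial bulk. In either case the looping equivalence $\Q(\Z^{n-1})\cong\Omega\Q(\Z^n)$, restricted to the core $(\TS^*)^\times\cong\Q^*(\pt)$, shows that the set of boundaries of a fixed invertible bulk is a single orbit under $\TS^{n-1}$ acting by interfaces.

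\textbf{Main obstacle.} The hard step is the condensation claim in Step 2: that $I$ together with $I^\vee$ exhibits a condensation between $Y$ and $X'$, and not merely some nonzero morphism. My first approach would be the semisimplicity-style property recorded in Remark~\ref{rem:semisimpletarget}, namely that every object of $\Mod$ is connected to the unit by a nonzero morphism. I would also use the universal property of $\TS^*$, which reduces the question to $\hom_{\mathbf{ComAlg}}(\mathbf C^0,\Q^0(\pt))$, where the condition can be checked on the QCA side. Failing that, I would accept the paper's physical convention that ``connected by a gapped interface'' is the defining equivalence relation of $\Mod(\TS^{n-1})$. In that case Step 2 only needs $I$ to exist, which it does by duality.
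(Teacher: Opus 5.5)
Step 1 and your reading of the map as $(\cS,X)\mapsto\cS$ are fine. There is a genuine gap, however, exactly where you locate it, and none of your three fallbacks closes it.

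\textbf{The gap in Step 2.} The sandwich $I=X'^{\vee}\circ Y\in\End_{\TS^n}(\mathds 1_{\TS^n})\simeq\TS^{n-1}$ exists for any two $1$-morphisms with a common target. Its existence therefore cannot be what identifies $(\cS',Y)$ with $(\cS',X')$, so the fallback of ``only needing $I$ to exist'' proves nothing.

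Nor is $I$ an interface between the two boundaries. Such interfaces are $2$-morphisms $Y\Rightarrow X'$, which by adjunction are boundary conditions of $I$, and these need not exist. For instance, in the continuum analogue $\mathbf W$, take $Y$ the trivial theory and $X'$ a nontrivial invertible framed theory with no topological boundary (cf.\ the Johnson-Freyd--Reutter theorem in \S\ref{subsection:latticeandcontinuum}). Both are endomorphisms of the unit, yet no interface $Y\Rightarrow X'$ exists.

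What is true is a Morita statement. $I$ and $I^{\vee}$ are inverse bimodules between the condensation algebras $Y^{\vee}Y$ and $X'^{\vee}X'$, since $I^{\vee}\circ_{X'^{\vee}X'}I\simeq Y^{\vee}\bigl(X'\circ_{X'^{\vee}X'}X'^{\vee}\bigr)Y\simeq Y^{\vee}Y$, and symmetrically for the other composite. The middle equivalence, however, needs the counit $X'X'^{\vee}\Rightarrow\id_{\cS'}$ to extend to a condensation, i.e.\ $X'$ must exhibit $\cS'$ as a condensate of the unit (and likewise for $Y$). The unit and counit of an adjunction alone do not give this.

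\textbf{The other fallbacks.} The condensation property is an extra input. Your first fallback would have to be upgraded to supply it: you would need a semisimplicity theorem turning a nonzero $1$-morphism between simple objects into a condensation, stated as a hypothesis on $\TS$. The universal property of $\TS^*$ concerns maps of categorical spectra into $\TS^*$, not equivalences of objects inside $\Mod(\TS^{n-1})$, so it cannot decide the question. In Step 3, the looping equivalence only sees the core, while the boundaries in question are non-invertible (as you yourself note). It therefore cannot show that boundaries of a fixed bulk form a single orbit.

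\textbf{How the paper handles it, and how to repair your version.} The paper gives no separate argument. The lemma summarizes the preceding discussion, in which the objects of $\Mod(\TS^{n-1})^\times$ are the invertible $\cS$ that admit a gapped boundary. Existence of $X$ is a property, the map is the inclusion into $(\TS^n)^\times$, and injectivity is immediate. Identifying boundaries up to interfaces is reserved for $\Mod^2(\TS^{n-2})$ in Proposition~\ref{prop:boundaryinterface}, so you should not need it here at all.

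If you want to keep $X$ as structure, use the description $\Mod(\TS^{n-1})=\mathrm{Kar}(\rB\TS^{n-1})$ from Remark~\ref{rem:semisimpletarget}. An object there is a condensation $\mathds 1_{\TS^n}\rightleftarrows\cS$, so the missing condensation property of $X'$ and $Y$ holds by definition, and the bimodule computation above then goes through. Equivalently, $\rB\TS^{n-1}\simeq\rB\End_{\TS^n}(\mathds 1_{\TS^n})\to\TS^n$ is fully faithful. Granting that $\TS^n$ is condensation complete, so is its extension to the condensation completion, and full faithfulness gives injectivity on equivalence classes of objects.
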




Given a bulk non-invertible stabilizer code $\cV$ and a  boundary theory $X$, the natural objects to study are the topological defects of different codimension within $X$. There is a natural relationship between the mobile operators in $\cV$ and the topological operators in $X$ given by a bulk-boundary relationship. 

\begin{example}
    For concreteness, we spell out this relationship in the case where $\cV$ is a (2+1)d theory. A {boundary} for $\cV$ is determined by an object in the category
\begin{equation}
    \text{bdry}(\cV)\simeq \Hom_{\TS^3}(\cV,\mathds 1_{\TS^3})\simeq \Mod_\cV.
\end{equation}
The collection of surface operators in $\cV$ is given by 
\begin{equation}
\text{surf}(\cV)\simeq \End_{\TS^3}(\cV). 
\end{equation}
The collection of line operators in $\cV$ is given by 
\begin{equation}\label{eq:codim2}
\text{line}(\cV)\simeq \End_{\End_{\TS^3}(\cV)}(\cV).
\end{equation}
Let $X$ be a (1+1)d theory at the boundary of $\cV$ i.e. $X \in \Mod_\cV$. Line operators in $X$ are endomorphisms of this boundary, and so correspond objects of $\End_{\Mod_\cV}(X)$. The bulk-boundary correspondence states that modules i.e. endings/boundaries for surface operators of $\cV$ give the line operators in the category $\End_{\Mod_\cV}(X)$.
\end{example}

Having seen that the objects of $\Mod(\TS^{n-1})$ are the $n$-dimensional topological stabilizer codes that admit a boundary condition, we can go further and incorporate boundary interfaces. Let $\Mod^2(\TS^{n-2})$ be the category whose objects are $n$-dimensional topological stabilizer codes that admit a boundary, up to boundary interface.\footnote{The exponent `2' denotes applying $\Mod$ twice.}

\begin{proposition}\label{prop:boundaryinterface}
    There is an essentially surjective map $\Mod^2(\TS^{n-2})\to \Mod(\TS^{n-1})$.
\end{proposition}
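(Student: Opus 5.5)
We have to show that every object of $\Mod(\TS^{n-1})$ is hit, up to equivalence, by the forgetful/projection map from $\Mod^2(\TS^{n-2})$. The plan is to do this at the level of objects, using the categorical-spectrum structure $\End_{\TS^n}(\mathds 1_n)\simeq \TS^{n-1}$ together with the boundary description of Lemma \ref{lem:boundaryinv}.

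\textbf{Step 1: unpack both sides.} An object of $\Mod(\TS^{n-1})$ is an $n$-dimensional code $\cS$ together with a gapped boundary $X:\mathds 1_{\TS^n}\to\cS$, where $X$ is an $(n-1)$-dimensional code. An object of $\Mod^2(\TS^{n-2})$ is the same kind of data, except that the boundary datum is only remembered up to boundary interfaces. A boundary interface between two boundaries $X,X'$ of $\cS$ is a $2$-morphism $X\Rightarrow X'$ in $\TS^n$. The map in question forgets this extra equivalence and remembers the pair $(\cS,X)$.

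\textbf{Step 2: construct a preimage.} Given $(\cS,X)\in\Mod(\TS^{n-1})$, take the same pair $(\cS,X)$ as an object of $\Mod^2(\TS^{n-2})$. Its boundary-interface datum is the identity interface $\id_X$, which is a valid $(n-2)$-dimensional code because it is an object of
\[
\End_{\Hom(\mathds 1,\cS)}(X)\supset \End(\mathds 1_{n-1})\simeq\TS^{n-2}.
\]
By construction this object maps to $(\cS,X)$, so the functor is surjective on objects, and in particular essentially surjective.

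\textbf{Step 3: well-definedness and the main obstacle.} The remaining check is that the map is well defined. Two objects related by a boundary interface must map to equivalent objects of $\Mod(\TS^{n-1})$. Concretely, if $X\Rightarrow X'$ is an interface, we need an equivalence of the pairs $(\cS,X)$ and $(\cS,X')$ in $\Mod(\TS^{n-1})$, or else the map must be taken to land in isomorphism classes of bulks.

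I expect this to be the main obstacle, because $\TS$ is only characterized by its universal property and not yet constructed. The approach I would try first is to push everything to the bulk $\cS$, which stays fixed under interfaces. I would then use Lemma \ref{lem:boundaryinv} to identify objects of $\Mod(\TS^{n-1})$, up to equivalence, with the bulks they bound. Under that identification, changing $X$ by an interface does not change the image, which gives well-definedness.

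For non-invertible $\cS$, I would instead invoke the Morita-type behaviour of $\TS$ expected in Remark \ref{rem:semisimpletarget}. The input needed is that boundaries related by an interface define Morita-equivalent modules, so that the pairs $(\cS,X)$ and $(\cS,X')$ become equivalent objects. Essential surjectivity then follows from Step 2.
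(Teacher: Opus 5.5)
\textbf{Verdict: there is a genuine gap, and it sits where the content of the proposition lies.}

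\textbf{Step 2 does not prove the lift.} Your Step 2 is tautological: you get a preimage of $(\cS,X)$ by declaring the same pair, decorated with $\id_X$, to be an object of $\Mod^2(\TS^{n-2})$. But $\Mod^2(\TS^{n-2})=\Mod(\Mod(\TS^{n-2}))$, and lifting a pair $(\cV,X)$ to it is a real condition, not a choice of decoration. As the paper's proof puts it, $(\cV,X)$ lifts precisely when $\End_{\Mod_\cV}(X)$ itself lifts to $\Mod(\TS^{n-1})$. That is, \emph{every} operator on the boundary $X$ must arise from a module (bimodule) of lower-dimensional theories. Knowing that the single object $\id_X$ lies in $\End(X)$ says nothing about this. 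Moreover, the containment $\End(\mathds 1_{n-1})\subset \End_{\Hom(\mathds 1,\cS)}(X)$ you cite is at best the whiskering functor $\End(\mathds 1_{n-1})\to\End(X)$, not an inclusion.

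\textbf{The missing idea.} The paper's proof supplies the step you skip. To each endofunctor $F\colon X\to X$ it assigns the bimodule $M_F(x,y)=\hom(y,F(x))$, with $M_F(a,b)(f)=F(b)\circ f\circ a$. It then checks that this assignment respects identities and composition, $M_F\circ M_G\cong M_{F\circ G}$, via $\Phi(f,g)=F(g)\circ f$ and $\Theta(h)=(h,\id_{G(x)})$. Here $\Theta\circ\Phi=\id$ follows from the balancing relation of the bimodule composition applied with $r=g$. This is what shows the boundary's endomorphisms come from modules, and your proposal never addresses it.

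\textbf{Step 3 is not closed either.} You correctly flag Step 3 as the crux under your reading, but the suggested repairs do not work.
\begin{itemize}
\item Lemma \ref{lem:boundaryinv} concerns only the invertible part $\Mod(\TS^{n-1})^\times$. The proposition, and the paper's proof, are about non-invertible bulks $\cV$.
\item For non-invertible bulks, the inference ``boundaries related by an interface have Morita-equivalent endomorphism categories, so $(\cS,X)\simeq(\cS,X')$'' fails. The rough and smooth boundaries of the toric code are joined by a nonzero interface and have Morita-equivalent endomorphism categories, yet they are inequivalent objects of $\Mod_\cV$.
\item Retreating to ``isomorphism classes of bulks'' changes the target category, so it proves a different and essentially vacuous statement.
\end{itemize}
In short, reading the map as forgetting an interface datum leaves you with either an ill-defined map or a trivial one.
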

\begin{proof}
    An element  in the target is represented by $(\cV,X)$, where $\cV$ is an $n$-dimensional non-invertible stabilizer code and $X$ is a module for $\cV$.
    The category $\End_{\Mod_\cV}(X)$ determines the topological operators of all  codimension allowed on $X$, as well as the mobile excitations in the bulk theory $\cV$ by the bulk-boundary relationship. The data $(\cV,X)$ lifts to $\Mod^2(\TS^{n-2})$ precisely when $\End_{\Mod_{\cV}}(X)\in \TS^n$ lifts to $\Mod(\TS^{n-1})$. Thus it suffices to show that all the endomorphisms arise from modules of theories in 
    $\TS^{n-1}$.  We go by way of showing that given $F: X \to X$ we can assign a bimodule $M_F:X^{\mathrm{op}} \times X \to \mathbf{Set}$, such that this assignment preserves the identity and composition. For $x,y\in X$ we take 
    \begin{equation}
        M_F(x,y) = \hom(y,F(x)).
    \end{equation}
    For every morphism $a:y'\to y$ and $b:x\to x'$, we take $M_F(a,b): M_F(y,x) \to M_F(y',x')$, such that for $f\in \hom(y,F(x))$,
    \begin{equation}
        M_F(a,b)(f) = F(b)\circ f \circ a\,.
    \end{equation}
    One can immediately check that composition holds: given $a_1:y''\to y'$, $a_2:y'\to y$ and $b: x \to x'$, $b: x' \to x''$, then
    \begin{equation}
    M_F(a_1,b_2)\circ M_F(a_2,b_1) = M_F(a_2 \circ a_1,b_2 \circ b_1),
    \end{equation}
    using the functorality of $F$.

    We now show that composition of endofunctors is represented by composition of bimodules. If $P,Q$ are two bimodules, their composition $P\circ Q$ has objects defined as follows:
    \begin{equation}
        (P\circ Q)(y,x) = \left( \bigsqcup_{z\in X} P(y,z) \times Q(z,x) \right)/\sim 
    \end{equation}
    Where the relation $\sim$ is given by
    \begin{equation}
        (P(\id,r)(p),q) = (p,Q(r,\id)(q)), \quad \forall r:z \to z'\,.
    \end{equation}

    Let $G: X\to X$ be another endomorphism and consider $M_F(y,z) \circ M_G(z,x)$. An object in $M_F \circ M_G(y,x)$ is given by the class of a pair $(f:y\to F(z),g:z\to G(x))$. We define a comparison map 
    \begin{align}
        \Phi: (M_F \circ M_G)(y,x)&\to M_{F\circ G}(y,x)\\ \notag 
        (f,g)&\mapsto F(g)\circ f
    \end{align}
    and a map in the opposite  direction 
\begin{align}
    \Theta: M_{F\circ G}(y,x) &\to (M_F \circ M_G)(y,x) \\ \notag 
    h &\mapsto (h: y\to F(G(x)), \id: G(x)\to G(x))\,.
\end{align}
Showing that $\Phi \circ \Theta = \id$ is straightforward. We show that $\Theta \circ \Phi =\id$ as follows:
\begin{equation}
    (\Theta \circ \Phi)(f,g) = (F(g)\circ f,\id)\,,
\end{equation}
but we can use the equivalence relation, with $r=g : z\to G(x)$ to show that 
\begin{equation}
    (F(g)\circ f,\id)\sim (f,\id\circ g) = (f,g)\,.
\end{equation}
Thus $\Theta \circ \Phi =\id$ and we can conclude.
\end{proof}

In the study of boundary conditions, we have seen that a non-invertible boundary $X$ is equivalent to an interface between the trivial invertible stabilizer and some nontrivial invertible stabilizer $\cS$. A crucial  question we now need to understand is if there is a way to simplify $X$ via a sequence of topological operations such that it is invertible.
\begin{notation}\label{notation:gappedinterface}
   A topological operation on $X$ is implemented by  \textit{gapped interfaces} along $X$ if it is given by algebraic surgery applied to the mobile operators of $X$. If there is a sequence of surgeries that relate $X$ to an invertible theory, then there exists a trivialization of $\cS$.
\end{notation}
 This definition is motivated by the classification of TQFTs, owing to the close structural correspondence between mobile excitations in stabilizer codes and the topological excitations characteristic of TQFTs. We will elaborate on surgery from the side of TQFTs in \S\ref{subsection:classifyingTO}. In light of these definitions, we see that the objects in $\Mod^2(\TS^{n-2})^\times$ are invertible  $n$-dimensional stabilizers with a choice of gapped boundary condition, modulo interfaces between boundary conditions. Thanks to Lemma \ref{lem:boundaryinv} and Proposition \ref{prop:boundaryinterface} we obtain a map 
$\Mod^2(\TS^{n-2})^\times \to (\TS^n)^\times$.
The image of the map selects which $n$-dimensional invertible stabilizers admit genuinely nontrivial boundary conditions i.e. ones that are not related to an invertible boundary condition by gapped interface.
In \S\ref{section:Ltheory}, we will classify Pauli stabilizer codes in $(n-1)$-dimensions, which is equivalent to classifying the objects of $\Mod^2(\TS^{n-2})^\times$.

\section{Preliminaries on Algebraic $\rL$-theory}\label{section:backgroundLtheory}

Given that the purpose of this paper is to classify Pauli stabilizers in arbitrary dimension up to gapped interface, the natural mathematical framework in which to formulate the problem is algebraic 
$\rL$-theory. Stabilizer codes are not defined with  manifolds in mind but feel more algebraic in nature. In this regard, we require the framework of $\rL$-theory for rings and categories to classify them, and this is encapsulated in algebraic $\rL$-theory. We will take a modern derived perspective on algebraic $\rL$-theory following the notes \cite{lurie287x}, applied to the category of perfect chain complexes over $R$.
In particular, algebraic $\rL$-theory provides the appropriate receptacle for the obstruction classes that control whether a given Pauli stabilizer code can be reduced, via a sequence of surgical modifications implemented by gapped interfaces, to an invertible one. In this sense,  $\rL$-theory plays for stabilizer phases the same structural role that it plays in the surgery classification of manifolds: it detects precisely the residual quadratic data that cannot be eliminated.

 Like how $\mathrm{K}$-theory is used to extract algebraic invariants of stable $\infty$-categories, $\rL$-theory and $\rL$-groups can extract invariants of stable $\infty$-categories $\cC$ with quadratic functor $Q: \cC^{\op}\to \Sp$ to the category of spectra.  Such categories were named Poincaré $\infty$-categories in \cite{Ktheory1}. In this section, we review the necessary mathematical background to be used in \S\ref{subsection:surgery}. Our goal is to present a self-contained collection of results most relevant for practical work in  $\rL$-theory, and to get a sense for the main results in the subject. This section may also serve as a point of entry for readers seeking a more detailed exposition, for which we recommend the treatments in \cite{Ktheory1,Ktheory2,Ktheory3}.
 The readers who are familiar with algebraic $\rL$-theory can skip ahead to \S\ref{subsection:surgery} to access the main results.

\subsection{Quadratic Forms and Poincaré Objects}\label{subsection:Ltheory}
One of the central themes of this section is to familiarize the reader with how to categorify the notions of bilinear and quadratic forms by replacing modules with categories and abelian groups with spectra.

\begin{definition}
    Let $\cC$ be a stable $\infty$-category. A \textit{bilinear functor} on $\cC$ is a functor
    \begin{equation}
        B:\cC^{\op} \times \cC^{\op}\to \Sp
    \end{equation}
    such that for every object $C \in \cC$ the functors 
    \begin{equation}
       D\mapsto B(C,D), \quad D \mapsto B(D,C)
    \end{equation}
    are exact functors from $\cC^{\op}$ to $\Sp$.\footnote{By exact functor we mean that it takes zero objects to zero objects and fiber sequences to fiber sequences.}
\end{definition}

\begin{definition}
    For an object $X\in \cC$, and every object $Y\in \cC$, let $\mathrm{Mor}_\cC(Y,X)$ denote the spectrum formed from the sequence of mapping spaces $\{\Hom_{\cC}(Y,\Sigma^n X)\}_{n\geq 0}$. We say that a functor $F: \cC^{\op}\to \Sp$ is \emph{representable} if it maps $Y \mapsto \mathrm{Mor}_\cC(Y,X)$.
\end{definition}

\begin{definition}\label{def:bilinear}
    Consider a functor $\mathbf{D}: \cC^{\op}\to \cC^{}$, which we will refer to as a \textit{duality functor}. We say that a bilinear functor $B$ is \textit{representable} if for all $X\in \cC$ the functor $Y \mapsto B(X,Y)$ is representable. In this case, we will denote $B(X,Y)=\rMor(Y,\bbD(X))$.
\end{definition}
 In the zeroth space of $B(X,\bbD(X))\simeq B(\bbD(X),X)$, the identity map on $\bbD(X)$ determines a point, and therefore gives rise to a morphism $\varepsilon_X: X\to \bbD^2(X)$.

\begin{definition}
    Let $\cC$ be a stable $\infty$-category. A  \textit{nondegenerate bilinear functor} on $\cC$ is a bilinear functor that is representable and the canonical map $\varepsilon_X$ is an equivalence for all $X\in \cC$.
\end{definition}

Consider a functor $Q:\cC^{\op}\to \Sp$ which sends zero $k$-morphisms to zero $k$-morphisms. In this case we call the functor ``reduced''. 
\begin{lemma}
    There is a map from $Q(X)$ to the homotopy $\Z/2$-fixed points of $B(X,X)$.
\end{lemma}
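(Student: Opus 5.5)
The map comes from the definition of a quadratic functor. In the setup of \cite{lurie287x}, a reduced functor $Q:\cC^{\op}\to\Sp$ is \emph{quadratic} if its second cross-effect
\[
B_Q(X,Y):=\Fib\big(Q(X\oplus Y)\to Q(X)\oplus Q(Y)\big)
\]
is bilinear. The paper has only stated reducedness so far, so I will assume $Q$ is quadratic in this sense and take $B=B_Q$ to be its associated bilinear functor. The first step is to check that $B_Q$ is symmetric. The swap automorphism $\tau:X\oplus Y\to Y\oplus X$ induces an equivalence $B_Q(X,Y)\simeq B_Q(Y,X)$. Setting $Y=X$ makes $\tau$ an involution of $X\oplus X$, and by functoriality this gives an action of $\Z/2$ on the spectrum $B(X,X)$. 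This construction is natural in $X$.

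The map itself comes from the diagonal $\Delta:X\to X\oplus X$. Applying $Q$, which is contravariant, gives a map $Q(X\oplus X)\to Q(X)$. I want a map in the opposite direction, so I use the following route. Write $\Delta_!$ for the cofiber-level comparison: the cross-effect $B(X,X)$ is a summand of $Q(X\oplus X)$, since $Q$ is reduced and the projections split the inclusions. Precomposing with the fold map $\nabla:X\oplus X\to X$ gives a map $\nabla^*:Q(X)\to Q(X\oplus X)$. Projecting onto the cross-effect summand then yields
\[
h_X:Q(X)\xrightarrow{\nabla^*}Q(X\oplus X)\to B(X,X).
\]
Because $\nabla\circ\tau=\nabla$, the map $h_X$ is $\Z/2$-invariant, where $Q(X)$ carries the trivial action. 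The splitting $Q(X\oplus X)\simeq Q(X)\oplus Q(X)\oplus B(X,X)$ is also $\tau$-equivariant, with $\tau$ swapping the two copies of $Q(X)$. Hence $h_X$ is a $\Z/2$-equivariant map from the trivial $\Z/2$-spectrum $Q(X)$ to $B(X,X)$.

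Finally, an equivariant map out of a spectrum with trivial action factors canonically through the homotopy fixed points. This uses the adjunction between the trivial-action functor and $(-)^{h\Z/2}$, which is its right adjoint. The result is the desired map $Q(X)\to B(X,X)^{h\Z/2}$, natural in $X$.

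The main obstacle is to carry out the invariance argument coherently at the $\infty$-categorical level rather than only up to homotopy. Strict equality $\nabla\circ\tau=\nabla$ in $\cC$ is not enough by itself. What is needed is that $\nabla$ is a map of $\Z/2$-objects from $X\oplus X$, with the swap action, to $X$ with trivial action; here $X\oplus X$ is viewed as the coproduct $\mathrm{Ind}_e^{\Z/2}X$, so this is a statement in $\Fun(B\Z/2,\cC)$. Applying $Q$ and taking the fiber then gives a map in $\Fun(B\Z/2,\Sp)$. I would handle this by identifying $X\oplus X=\mathrm{Ind}_e^{\Z/2}X$ and letting $\nabla$ be the counit of the induction adjunction, so that the required coherence is automatic.
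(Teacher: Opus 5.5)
Your proposal is correct and follows the same route as the paper. Both arguments use reducedness to split $Q(X\oplus X)\simeq Q(X)\oplus Q(X)\oplus B(X,X)$, precompose with the fold map to get $Q(X)\to Q(X\oplus X)\to B(X,X)$, and use swap-invariance to factor through $B(X,X)^{h\Z/2}$. Your version adds rigor in two respects: realizing the fold map as a map $\mathrm{Ind}_e^{\Z/2}X\to X$ in $\Fun(\rB\Z/2,\cC)$ and invoking the adjunction between trivial action and $(-)^{h\Z/2}$ makes precise what the paper only asserts as ``$\Z/2$-invariance'', and you never use bilinearity of $B$, so your extra assumption that $Q$ is quadratic can be dropped (reducedness suffices, which matters since the paper uses this lemma to define quadratic functors).
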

\begin{proof}
    Because $\cC$ is stable, for $X,Y\in \cC$ we have the maps 
\begin{equation}
    Q(X)\oplus Q(Y)\overset{}{\rightarrow} Q(X\oplus Y), \quad Q(X\oplus Y) \overset{}{\rightarrow}Q(X)\oplus Q(Y)
\end{equation}
obtained by applying $Q$ to the inclusion and projection maps to each factor. Furthermore we have a map
\begin{equation}
     Q(X)\oplus Q(Y) \rightarrow Q(X\oplus Y)\rightarrow  Q(X)\oplus Q(Y)\,, 
\end{equation}
where the composition is given by applying $Q$ to the map $\id_X\oplus \id_Y:X\oplus Y{\longrightarrow} X\oplus Y$. If  $Q$ is reduced then the composition is the identity, and hence $Q(X)\oplus Q(Y)$ is a direct summand of $Q(X\oplus Y)$. We consider a decomposition $Q(X\oplus Y)\simeq Q(X)\oplus Q(Y) \oplus B(X,Y)$ for some functor $B:\cC^{\op}\times \cC^{\op}\to \Sp$ referred to as the \textit{polarization} of $Q$. This mimics the standard fact that a symmetric bilinear form can be written using quadratic forms. Using the map $Q(X)\to Q(X\oplus X)$, and projecting onto the component of $B(X,X)$ gives a map $Q(X)\to B(X,X)$. The functor $B$ is symmetric under the $\Z/2$-action on its arguments, and since the sequence of maps we considered are $\Z/2$-invariant, we get a map $Q(X)\to B(X,X)^{h\Z/2}$, where $B(X,X)^{h\Z/2}$ denotes the homotopy fixed point spectrum of $B(X,X)$.
\end{proof}

\begin{definition}
        Let $\cC$ be a stable $\infty$-category. A \textit{quadratic functor} $Q:\cC^{\op}\to \Sp$ is a functor which is reduced, the polarization $B$  is bilinear, and the functor $X\to \mathrm{Fib}(Q(X)\to B(X,X)^{h\Z/2}) $ is exact.
\end{definition}

\begin{example}\label{ex:quadratic}
    Suppose $Q:\cC^{\op}\to \Sp$ is given by $Q(X) = B(X,X)^{h\Z/2}$. Then the polarization of $Q(X\oplus Y)$ is 
    \begin{equation}
       \widetilde{B}(X,Y) = (B(X,Y) \oplus B(Y,X))^{h\Z/2}\simeq B(X,Y).
    \end{equation}
    Furthermore, the canonical map $Q(X)\to \widetilde{B}(X,X)^{h\Z/2}$  is an equivalence and therefore $Q$ is a quadratic functor. In fact, any quadratic functor $Q$ arises as an extension of $B(X,X)^{h\Z/2}$ by an exact functor.
\end{example}

\begin{definition}
    Let $\cC$ be a stable $\infty$-category equipped with a quadratic functor $Q:\cC^{\op}\to \Sp$, and  let $B$ be a symmetric bilinear functor on $\cC$ which is the polarization for $Q$. $Q$ is a \textit{nondegenerate quadratic functor} if $B$ is nondegenerate.
\end{definition}
 In the case where $Q$ is nondegenerate,  $\bbD_Q:\cC^{\op}\to \cC$ implements an equivalence of $\infty$-categories, with $B(X,Y)=\rMor(X,\bbD_Q(Y))$. We now consider the category $(\cC,Q)$ where $Q$ is nondegenerate, and work up to defining $\rL$-groups for this category.

\begin{definition}
   Let $\cC$ be a stable $\infty$-category  with nondegenerate quadratic functor $Q$:
   \begin{itemize}
       \item A \textit{quadratic object}  of $(\cC,Q)$ consists of a pair $(X,q)$ where $X\in \cC$ and $q$ is a point in $\Omega^\infty Q(X)$. \item $(X,q)$ is moreover a \textit{Poincaré object} if the map $X\to \bbD_Q(X)$ determined by $q$ is invertible.
   \end{itemize}
\end{definition}
\noindent One can view $Q$ as a functor that assigns a spectrum of quadratic forms to an object $X\in \cC$, and a quadratic object is just an object in $\cC$ with a quadratic form. 

We now make some important remarks regarding defining the quadratic functor via the formula $Q(X) = B(X,X)_{h \Z/2}$, i.e. by taking homotopy $\Z/2$ coinvariants. One can compute that the polarization $B'$ associated to $Q(X)$ in this case is given by $B'(X,Y)  = B(X,Y)$. The map $Q(X)\to B(X,Y)^{h\Z/2}$ can be identified with the norm map $B(X,X)_{h\Z/2}\to B(X,Y)^{h\Z/2}$. If 2 is invertible in the field we are working with, then the norm map is a homotopy equivalence of spectra. To prime the reader for the remainder of this section, we will work with the coinvariants when defining quadratic $\rL$-theory for rings.

\subsection{$\rL$-groups Classify Poincaré Objects}
We introduce an equivalence condition  on Poincaré objects referred to as ``cobordism''. 
We then define the group $\rL_0(\cC,Q)$
 which classifies Poincaré objects up to cobordism. For a pair  $(\cC,Q)$, we further introduce the 
$\rL$-theory space $\rL(\cC,Q)$, such that
$\pi_0(\rL(\cC,Q)) = \rL_0(\cC,Q)$. We also define the higher $\rL$-groups in this section.

\begin{definition}\label{def:cobordism}
    Let $\cC$ be a stable $\infty$-category with a nondegenerate quadratic functor $Q:\cC^{\op} \to \Sp$. Let $(X,q)$ and $(X',q')$ be two Poincaré objects of $(\cC,Q)$. A cobordism from $(X,q)$ to $(X',q')$ consists of the following data:
    \begin{enumerate}
        \item An object $Y\in \cC$ equipped with maps $\alpha:Y\to X$ and $\alpha':Y\to X'$.
        \item A path $\gamma$ in the space $\Omega^\infty Q(Y)$ between the images of $q$ and $q'$.
        \item A  diagram 
\begin{equation*}
{\begin{tikzcd}[row sep=small]
    &Y \arrow[ld,"\alpha",swap] \arrow[rd,"\alpha'"] \arrow[dd]& \\
    X\arrow[dd]& &X'\arrow[dd]\\
    & \bbD_Q(Y) & \\
   \bbD_Q(X) \arrow[ur,"\bbD_Q(\alpha)"] &  &\bbD_Q(X') \arrow[ul,"\bbD_Q(\alpha')",swap]
\end{tikzcd}}
\end{equation*}
        that commutes up to homotopy, determined by the path $\gamma$.

        The composition of maps
         \begin{equation}\label{eq:null}
            \mathrm{Fib}(\alpha) \to Y \to X' \to \bbD_Q(X')\xrightarrow{\bbD_Q(\alpha')} \bbD_Q(Y)\,,
        \end{equation}
        is nullhomotopic; due to the commutativity of the diagram, the nullhomotopy can be seen by traversing along the left square from $\Fib(\alpha)$ to $\bbD_Q(Y)$.
       \item Equation \eqref{eq:null} gives a map of fibers  $u: \mathrm{Fib}(\alpha)\to \mathrm{Fib}(\bbD_Q(\alpha'))$, and we require that $u$ is  invertible.
    \end{enumerate}
\end{definition}

Adopting the terminology from topology, if two Poincaré objects related by cobordism then they are ``cobordant''.

\begin{proposition}
    Cobordism is an equivalence relation on Poincaré objects of the category $(\cC,Q)$.
\end{proposition}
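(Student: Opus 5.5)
We need to show reflexivity, symmetry, and transitivity of the cobordism relation of Definition~\ref{def:cobordism}. Reflexivity and symmetry are formal; transitivity is the real content.

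\emph{Reflexivity.} For a Poincaré object $(X,q)$, take $Y=X$ with $\alpha=\alpha'=\id_X$ and $\gamma$ the constant path at $q$. The diagram commutes on the nose. Since $\mathrm{Fib}(\id_X)\simeq 0\simeq \mathrm{Fib}(\bbD_Q(\id_X))$, the map $u$ is trivially invertible.

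\emph{Symmetry.} Given a cobordism $(Y,\alpha,\alpha',\gamma)$ from $(X,q)$ to $(X',q')$, take the reversed data $(Y,\alpha',\alpha,\gamma^{-1})$. The only nontrivial point is condition (4) for the reversed cobordism, namely that $\mathrm{Fib}(\alpha')\to \mathrm{Fib}(\bbD_Q(\alpha))$ is invertible. To see this, note first that $\bbD_Q$ is an exact anti-equivalence, because $Q$ is nondegenerate. The invertibility of $u$ then says the square
\[
\begin{tikzcd} Y \arrow[r] \arrow[d] & X \arrow[d] \\ X' \arrow[r] & \bbD_Q(Y) \end{tikzcd}
\]
(using $X\simeq \bbD_Q X$ and $X'\simeq\bbD_Q X'$) is a pullback. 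In a stable $\infty$-category a square is a pullback iff it is a pushout, and this condition is symmetric in the two legs. Hence the fiber comparison for $\alpha'$ is also an equivalence.

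\emph{Transitivity.} Given cobordisms $X\xleftarrow{\alpha}Y\xrightarrow{\alpha'}X'$ and $X'\xleftarrow{\beta}Z\xrightarrow{\beta'}X''$, form the pullback $W=Y\times_{X'}Z$. Its legs $W\to X$ and $W\to X''$ are the composites through $Y$ and $Z$. This is the algebraic analogue of gluing bordisms along a common boundary.

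The quadratic path comes from the maps $Q(Y)\to Q(W)$ and $Q(Z)\to Q(W)$ induced by the projections. Pulling back $\gamma$ and $\delta$ to $\Omega^\infty Q(W)$ and concatenating them gives a path from $q$ to $q''$. They concatenate because both restrict to the image of $q'$ along the common composite $W\to X'$.

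It remains to check condition (4) for $W$, namely that $\mathrm{Fib}(W\to X)\to \mathrm{Fib}(\bbD_Q W\leftarrow \bbD_Q X'')$ is invertible. I would proceed as follows:
\begin{itemize}
\item Since $W$ is a pullback, $\mathrm{Fib}(W\to Y)\simeq \mathrm{Fib}(\beta)$.
\item By the hypothesis on the second cobordism, $\mathrm{Fib}(\beta)\simeq \mathrm{Fib}(\bbD_Q\beta')$.
\item Using exactness of $\bbD_Q$, the square $\bbD_Q$ applied to the pullback is a pushout. Hence $\mathrm{Cofib}(\bbD_Q Y\to \bbD_Q W)\simeq \mathrm{Cofib}(\bbD_Q X'\to \bbD_Q Z)$, which is the shifted version of the same fiber.
\item Assemble these identifications through the octahedral axiom with the corresponding equivalence for the first cobordism.
\end{itemize}

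The main obstacle is coherence. One must check that the equivalences above are exactly the map $u$ induced by the null-homotopy built from the concatenated path, and not merely some abstract equivalence. The concatenated path must also be compatible with the homotopies in the diagram of Definition~\ref{def:cobordism}. I would handle this by reformulating a cobordism as a Poincaré structure on the cospan/span object in the category of arrows, i.e.\ via Lurie's description of $\rL$-theory through Lagrangians in $X\oplus \Sigma^{-1}$-shifted $(-X')$. There, composition of cobordisms is composition of Lagrangian correspondences, whose transversality is automatic in the stable setting. If that reformulation is unavailable, I would carefully track the homotopies by the pasting lemma for pullback squares.
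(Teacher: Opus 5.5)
Your proof is correct. Reflexivity matches the paper, and so does your transitivity argument. Write $\phi_1\colon W\to Y$ and $\phi_2\colon W\to Z$ for the projections, with $\kappa_1=\alpha\phi_1$ and $\kappa_2=\beta'\phi_2$. The paper concludes from a map of fiber sequences $\Fib(\beta)\to\Fib(\kappa_1)\to\Fib(\alpha)$ into $\Fib(\bbD_Q\beta')\to\Fib(\bbD_Q\kappa_2)\to\Fib(\bbD_Q\alpha')$, whose outer components are the two given equivalences. That is what your octahedral bullets amount to.

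The genuine difference is symmetry. The paper asserts $\Sigma v\simeq\bbD_Q(u)$, exhibiting the reversed comparison map as the dual of $u$. Making this precise requires checking that $\bbD_Q$ carries the cobordism square back to itself, which uses the symmetry of the forms and of $\gamma$ (their images in $B(-,-)^{h\Z/2}$). You instead observe that invertibility of $u$ says the square $Y\to X$, $Y\to X'$, $X\to\bbD_Q Y$, $X'\to\bbD_Q Y$ is a pullback. Pullback-ness is invariant under transposition, so the vertical fiber comparison $v$ is also an equivalence. This uses only stability and the Poincaré equivalences $X\simeq\bbD_Q X$, $X'\simeq\bbD_Q X'$, so it is the more elementary argument. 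The paper's route buys the conceptual statement that $v$ is the Poincaré dual of $u$.

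There are two small corrections to your transitivity sketch.

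\begin{enumerate}
\item The bottom fiber sequence is $\Fib(\bbD_Q\beta')\to\Fib(\bbD_Q\kappa_2)\to\Fib(\bbD_Q Z\to\bbD_Q W)$. For it you need $\Fib(\bbD_Q Z\to\bbD_Q W)\simeq\Fib(\bbD_Q\alpha')$, the \emph{other} parallel pair in the pushout square obtained by applying $\bbD_Q$ to the pullback defining $W$. The cofiber identity you wrote is just the dual of your first bullet.
\item The Lagrangian reformulation involves no shift: a cobordism is a Lagrangian in $(X\oplus X', q\oplus -q')$.
\end{enumerate}

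Your own pullback reformulation also gives a transitivity proof that largely settles the coherence issue you flag, which the paper passes over in silence. Consider the grid
\[
\begin{tikzcd}
W \arrow[r] \arrow[d] & Z \arrow[r,"\beta'"] \arrow[d,"\beta"] & X'' \arrow[d] \\
Y \arrow[r,"\alpha'"] \arrow[d,"\alpha"] & X' \arrow[r] \arrow[d] & \bbD_Q Z \arrow[d] \\
X \arrow[r] & \bbD_Q Y \arrow[r] & \bbD_Q W
\end{tikzcd}
\]
Its four small squares are:
\begin{itemize}
\item the pullback defining $W$;
\item the square of the second cobordism;
\item the square of the first cobordism;
\item $\bbD_Q$ of the defining pullback, which is a pushout and hence a pullback.
\end{itemize}
By pasting, the outer square is a pullback. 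That outer square is the square of the composite cobordism, so this is exactly invertibility of $u\colon\Fib(\kappa_1)\to\Fib(\bbD_Q\kappa_2)$. Unwinding the definitions, its pasted filling homotopy is the image in $B(W,W)$ of the concatenation of three paths: $\gamma$, the pullback homotopy $\alpha'\phi_1\simeq\beta\phi_2$ applied to $q'$, and $\delta$. So the map being tested is the right one.
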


\begin{proof}
    We first show that cobordism is reflexive, i.e. it is possible to construct a cobordism from $(X,q)$ to itself. This can be done by taking $Y = X$, the maps $\alpha$ and $\alpha'$ to be the the identity maps, and the path $\gamma$ to be the constant path in $\Omega^{\infty}Q$. We now show the symmetric condition. Suppose there is a cobordism between two Poincaré objects, going from $(X,q)$ to $(X',q')$. By Definition \ref{def:cobordism} there is a span 
\begin{equation*}
{\begin{tikzcd}[row sep=small]
    &Y \arrow[ld,"\alpha",swap] \arrow[rd,"\alpha'"] & \\
    X& &X'\,,\\
\end{tikzcd}}
\end{equation*}
    and there is a canonical map $u:\mathrm{Fib}(\alpha)\to \mathrm{Fib}(\bbD_Q(\alpha'))$\,. 
    A cobordism from $(X',q')$ to $(X,q)$ would involve a map $v: \mathrm{Fib}(\alpha') \to \mathrm{Fib}(\bbD_Q(\alpha))$, which we would like to show is an equivalence, given $u$. Let us note the fact that $\mathrm{Fib}(\bbD_Q(\alpha')) = \bbD_Q( \mathrm{coFib}(\alpha'))$, so we can change the target of the maps $u$ and $v$. Since we are working in a stable category, one can equivalently consider the suspension of $v$, which is a map \begin{equation}
        \Sigma(v): \coFib(\alpha')\to \bbD_Q (\Fib(\alpha)).
    \end{equation}
    We observe that $\Sigma(v)$ is isomorphic to $\bbD_Q(u)$, the latter which implements an equivalence. Finally we show that cobordisms are transitive. Consider three Poincaré objects $(X_1,q_1)$, $(X_2,q_2)$, and $(X_3,q_3)$ and the
     the following spans:
    \begin{equation}
         \begin{tikzcd}[row sep=small]
        &Y \arrow[ld,"\alpha_{1}",swap] \arrow[rd,"\alpha_2"]& & Y'\arrow[ld,"\beta_1",swap] \arrow[rd,"\beta_2"] \\
        X_1& &X_2 & &X_3\,,
    \end{tikzcd}
    \end{equation}
   along with a path $\gamma_1$, connecting the image of $q_1$ and $q_2$, and a path $\gamma_2$ connecting the image of $q_2$ and $q_3$ in the spaces $\Omega^\infty Q(Y)$ and $\Omega^\infty Q(Y')$ respectively. We take the pullback of $Y$ and $Y'$ mapping to $X_2$, which yields the diagram:
   \begin{equation}
         \begin{tikzcd}[row sep=small]
         && Y\times_{X_2} Y' \arrow[ld,"\phi_1",swap] \arrow[rd,"\phi_2"] &&\\
        &Y \arrow[ld,"\alpha_{1}",swap] \arrow[rd,"\alpha_2"]& & Y'\arrow[ld,"\beta_1",swap] \arrow[rd,"\beta_2"] \\
        X_1& &X_2 & &X_3\,.
    \end{tikzcd}
   \end{equation}
   We will take $\kappa_1= \alpha_1\circ \phi_1$ and $\kappa_2 = \beta_2\circ \phi_2$ to implement the cobordism between $X_1$ and $X_3$, and concatenate the paths $\gamma_1$ and $\gamma_2$ in the space $\Omega^\infty Q(Y\times_{X_2}Y')$. To show indeed we have a cobordism, we must prove that the map $u:\Fib(\kappa_1) \to \Fib(\bbD_Q(\kappa_2))$ is an isomorphism. Given that there are two cobordisms $(X_1,q_1)$ to $(X_2,q_2)$ and  $(X_2,q_2)$  to  $(X_3,q_3)$, this means we have two isomorphisms $\Fib(\alpha_1) \to \Fib(\bbD_Q(\alpha_2))$ and $\Fib(\beta_1) \to \Fib(\bbD_Q(\beta_2))$. Then we observe that the map $u$ fits into the following diagram of fiber sequences
   \begin{equation}
   \begin{tikzcd}
          \Fib(\beta_1) \arrow[r] \arrow[d]& \Fib(\kappa_1)\arrow[r] \arrow[d,"u"]& \Fib(\alpha_1) \arrow[d] \\
        \Fib(\bbD_Q(\beta_2)) \arrow[r]& \Fib(\bbD_Q(\kappa_2))\arrow[r]& \Fib(\bbD_Q(\alpha_2))\,,
   \end{tikzcd}
   \end{equation}
   hence we see that $u$ is an isomorphism.
\end{proof}

\begin{definition}
    The set of cobordism classes of Poincaré objects in $(\cC,Q)$ is given by $\rL_0(\cC,Q)$. 
\end{definition}

\begin{definition}\label{def:lagrange}
    Let $(X,q)$ be a Poincaré object. A Lagrangian $L$ in $(X,q)$ is a map $r:L\to X$, a nullhomotopy of the image of $q$ in $\Omega^\infty Q(L)$, and an equivalence $L \overset{\sim
    }{\rightarrow} \bbD_Q(X/L)$.
\end{definition}

In the language of spans, one should view a Lagrangian as giving a bordism from a  zero Poincaré object into $(X,q)$, with the map $r:L\to X$ realizing one side of the span. 
 
 Given two Poincaré objects $(X_1,q_1)$ and $(X_2,q_2)$, one can take direct sums $(X_1\oplus X_2,q_1\oplus q_2)$, where the direct sum $q_1\oplus q_2$ is the image of $(q_1,q_2)$ under the map $Q(X_1)\oplus Q(X_2) \to Q(X_1 \oplus X_2)$. The direct sum operation  makes $\rL_0(\cC,Q)$ into a commutative monoid, but it is in fact an abelian group
\begin{lemma}
    $\rL_0(\cC,Q)$ has the structure of an abelian group under direct sum, with the inverse (up to homotopy) of a Poincaré object $(X,q)$ given by $(X,-q)$.
\end{lemma}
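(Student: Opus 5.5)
We need to show that $\rL_0(\cC,Q)$ is a group under direct sum, with $[(X,-q)]$ inverse to $[(X,q)]$. Associativity, commutativity and the unit $(0,0)$ follow at once from the symmetric monoidal structure of $\oplus$ on $\cC$ and the additivity maps $Q(X_1)\oplus Q(X_2)\to Q(X_1\oplus X_2)$. We also need well-definedness on cobordism classes: given a cobordism $X_1 \leftarrow Y \rightarrow X_1'$ and any Poincaré object $X_2$, the direct sum span $X_1\oplus X_2\leftarrow Y\oplus X_2\rightarrow X_1'\oplus X_2$ is a cobordism. For this, take the path $\gamma\oplus\mathrm{const}$ and note that fibers of direct-sum maps split, so $\Fib(\alpha\oplus\id)=\Fib(\alpha)\oplus 0$. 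The induced map $u$ is then the original $u$ plus the zero map, hence invertible. What remains is to show that $(X\oplus X, q\oplus -q)$ is cobordant to $0$.

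The key observation is that, following the remark after Definition \ref{def:lagrange}, a Lagrangian in a Poincaré object $(Z,q_Z)$ is exactly a cobordism from $0$ to $(Z,q_Z)$. So we exhibit a Lagrangian in $Z=X\oplus X$ given by the diagonal $r=\Delta: X\to X\oplus X$.

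\begin{enumerate}
\item \emph{Nullhomotopy.} The pullback of $q\oplus(-q)$ along $\Delta$ is identified, using the decomposition $Q(X\oplus X)\simeq Q(X)\oplus Q(X)\oplus B(X,X)$, with $q+(-q)$. The cross term vanishes because the form on the sum is the orthogonal direct sum. Since $\Omega^\infty Q(X)$ is an infinite loop space, $q+(-q)$ has a canonical path to $0$, which gives the required nullhomotopy.
\item \emph{Lagrangian condition.} The cofiber $Z/X$ of $\Delta$ is identified with $X$ through the antidiagonal projection $(a,b)\mapsto a-b$. The composite $X\to Z\simeq \bbD_Q(Z)\to \bbD_Q(X)$, where the last map is $\bbD_Q(\Delta)$, is the sum $q_\sharp+(-q)_\sharp$. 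Using the nullhomotopy, this yields the map $L=X\to \bbD_Q(Z/X)\simeq\bbD_Q(X)$, which under the identification is the Poincaré map $q_\sharp: X\to\bbD_Q(X)$ (up to sign). That map is an equivalence because $(X,q)$ is Poincaré.
\end{enumerate}

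Equivalently, in cobordism language: take the span $0\leftarrow X\xrightarrow{\Delta} X\oplus X$. Here $\Fib(0\leftarrow X)=X$, and $\Fib(\bbD_Q\Delta)\simeq \bbD_Q(\coFib\Delta)\simeq\bbD_Q(X)$, so the map $u$ is $q_\sharp$, which is invertible.

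The main obstacle is the homotopy-coherent bookkeeping. We must check that the nullhomotopy from step 1 really induces the map $u$ identified with $q_\sharp$, rather than some other map, and that the splitting of $Q(X\oplus X)$ is compatible with the pullbacks along $\Delta$ and the antidiagonal. I would handle this by working on $\pi_0$ of the fiber sequences and applying exactness of $\bbD_Q$ to the split cofiber sequence $X\xrightarrow{\Delta}X\oplus X\to X$.
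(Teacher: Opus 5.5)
Your proposal is correct and follows the paper's route: the diagonal $\Delta\colon X\to X\oplus X$, together with the nullhomotopy of $q+(-q)$, is a Lagrangian in $(X\oplus X,q\oplus -q)$, and the required equivalence comes from nondegeneracy of $q$. You also check that $\oplus$ is well defined on cobordism classes and make explicit the identification $\coFib(\Delta)\simeq X$, under which the Lagrangian map becomes $\pm q_\sharp$; the paper leaves both of these details implicit.
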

\begin{proof}
    We show that the direct sum $(X\oplus X,q\oplus -q)$ is cobordant to the zero Poincaré object. It suffices to show that there is a Lagrangian $r:L \to X\oplus X$. We take $L = X$, $r$ to be the diagonal map, and any path from the sum of $q$ and $-q$ in $\Omega^\infty Q(L)$ to the base point. The equivalence in Definition \ref{def:lagrange} is satisfied because in this case $X \to \bbD_Q(X)$ is already an equivalence since $q$ was assumed to be nondegenerate.  
\end{proof}

\subsubsection{Higher $\rL$-groups}\label{subsub:higherL}
In what follows, we will quote only the necessary facts to be able to study algebraic $\rL$-theory for rings in the next section. We will defer the full details to the notes of Lurie \cite{lurie287x}. 

Drawing on the ideas of $\rK$-theory, one could contemplate developing a $\rL$-theory space $\rL(\cC,Q)$, whose homotopy groups give higher $\rL$-groups. It is particularly sensible to take $\rL(\cC,Q)$ to be the classifying space (or geometric realization) of the simplicial space of Poincaré objects in $(\cC,Q)$, which we denote $\cP(\cC,Q)_*$.

\begin{definition}
    Let $\mathbf{\Delta}$ be the category of combinatorial simplicies with objects $\Delta^0,\Delta^1,\ldots$, given by nonempty finite linearly ordered sets, and morphisms given by order preserving functions $f:\Delta^m\to \Delta^n$ exhibiting how one simplex sits inside another. A \textit{simplicial space} is a functor $X:\mathbf{\Delta}^{\op}\to \mathbf{Spaces}$. Let the $n$th simpicial space of $X$ be denoted by $X(\Delta^n)$.
\end{definition}
\noindent Concretely, a simplicial space is a sequence of spaces equipped with face and degeneracy maps, encoding how simplices of varying dimensions are assembled and related in a coherent, combinatorial manner. If $K$ is a simplicial set, i.e. a functor $K:\mathbf{\Delta}^{\op}\to \mathbf{Sets}$, then $X$ determines a functor such that
\begin{equation}
    K \mapsto \varprojlim _{s:\Delta^n \to K} X(\Delta^n)\,.
\end{equation}
We identify $X(K)$ with the space of maps from $K$ to $X$ in the $\infty$-category of simplicial spaces.

The simplicial space $\cP(\cC,Q)$ is constructed as follows. Given an $\infty$-category, one can look at its simplicial structure in each dimension by using a functor from ordered sets. Let $\cF_n$ denote the collection of nonempty subsets of the set $\{0,1,\ldots,n\}$, and  regard $\cF_n$ as a partially ordered set under inclusion.
Let $\cC_{[n]} = \Fun(\cF^{\op}_n,\cC)$, and $Q_{[n]}: \cC^{\op}_{[n]}\to \Sp$ be a quadratic functor given by the  formula
\begin{equation}
    Q_{[n]}(X) = \varprojlim _{S \in \cF_n} Q(X(S)), \quad X\in \cC^{\op}_{[n]}\,.
\end{equation}
 For a finite ordered set $[n] = \{0,1,\ldots,n\}$,  a map $f:[m]\to[n]$ induces a functor $f^*:\cC_{[n]} \to \cC_{[m]}$ due to the  construction $[n]\to \cC_{[n]}$ being contravariantly functorial. From $f^*$ one can establish the following.
\begin{proposition}
    Every Poincaré object of $(\cC_{[n]},Q_{[n]})$ is mapped to a Poincaré object of $(\cC_{[m]},Q_{[m]})$ under the functor $f^*:\cC_{[n]} \to \cC_{[m]}$.
\end{proposition}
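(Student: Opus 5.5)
The plan is to reduce the statement to a levelwise check on the simplicial diagrams. A Poincaré object of $(\cC_{[n]},Q_{[n]})$ is a pair $(X,q)$ with $X\in\cC_{[n]}=\Fun(\cF_n^{\op},\cC)$ and $q\in\Omega^\infty\varprojlim_{S\in\cF_n}Q(X(S))$, such that $X\to\bbD_{Q_{[n]}}(X)$ is an equivalence. We will show that $f^*$ carries quadratic objects to quadratic objects and preserves the Poincaré condition.

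First we make $f^*$ explicit. The map $f:[m]\to[n]$ induces a poset map $f_!:\cF_m\to\cF_n$ via $T\mapsto f(T)$, and $f^*X=X\circ f_!^{\op}$. We then need a natural map
\[
\varprojlim_{S\in\cF_n}Q(X(S))\to\varprojlim_{T\in\cF_m}Q(X(f(T))),
\]
which is simply restriction of limits along $f_!$. This map sends $q$ to a point $f^*q$. Functoriality in $X$ makes $f^*$ a map of categories with quadratic functor, i.e.\ a natural transformation $Q_{[n]}\Rightarrow Q_{[m]}\circ f^*$.

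Next we identify the duality functors. The polarization of $Q_{[n]}$ is $B_{[n]}(X,Y)=\varprojlim_S B(X(S),Y(S))$. Representing this as $\rMor_{\cC_{[n]}}(Y,\bbD_{[n]}X)$ forces $\bbD_{[n]}X$ to be a right Kan extension type construction. The standard formula (following \cite{lurie287x}) is
\[
(\bbD_{[n]}X)(S)=\varprojlim_{S\subseteq S'}\text{-twisted}\ \bbD_Q(X(S')),
\]
which concretely becomes the total fiber of $\bbD_Q$ applied to the cube $\{X(S')\}_{S'\supseteq S}$. Checking the Poincaré condition therefore amounts to checking, for each $S$, that the map from $X(S)$ to this total-fiber object is an equivalence. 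Equivalently, after inverting the total-fiber construction, the Poincaré condition says that each "face with boundary" $X(S)\to X(\partial S)$ satisfies Poincaré–Lefschetz duality, as in Definition \ref{def:cobordism}.

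The key step is then to verify that this pointwise condition for $f^*X$ at $T\in\cF_m$ reduces to the condition for $X$ at $f(T)$. We split $f$ into a composite of face maps (injections) and degeneracy maps (surjections). For injections, $f_!$ is the inclusion of $\cF_m$ as the subposet of subsets of a face, and it is upward closed within that face. The cube above $T$ computed in $\cF_m$ therefore agrees with the cube above $f(T)$ inside the face, so the duality condition restricts directly.

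Degeneracies are the main obstacle: $f_!$ is not injective, so the cube over $T$ has repeated vertices. The plan there is to show that the total fiber over a degenerate cube agrees with the total fiber over the nondegenerate one, using the fact that a cube with two vertices identified by identity maps has contractible total fiber in the extra direction. This requires a cofinality argument for the limit defining $Q_{[m]}(f^*X)$, showing the limit over $\cF_m$ of a diagram pulled back along a surjection equals the limit over $\cF_n$ because $f_!$ is coinitial. Granting that, the duality maps agree and the Poincaré property is preserved.
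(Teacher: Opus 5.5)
The paper gives no proof of this proposition; it asserts it and defers to Lurie's notes. Your argument therefore has to stand on its own, and it fails at the step you identify as key: the formula for $\bbD_{[n]}$.

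Representing $B_{[n]}(X,Y)=\varprojlim_{S}B(X(S),Y(S))$ by an object of $\Fun(\cF_n^{\op},\cC)$ gives $(\bbD_{[n]}X)(S)\simeq\varprojlim_{T\in\cF(S)}\bbD_Q(X(T))$, where $\cF(S)$ is the poset of nonempty subsets of $S$. So the limit runs over the faces of $S$, not over supersets $S'\supseteq S$. For $n=1$, write $X_0,X_1,Y$ for the values at $\{0\},\{1\},\{0,1\}$. The correct formula gives $\bbD_Q(X_0)$ at the vertex $\{0\}$ and $\bbD_Q(X_0)\times_{\bbD_Q(Y)}\bbD_Q(X_1)$ at $\{0,1\}$, which recovers Definition \ref{def:cobordism}. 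Your superset version already involves $\bbD_Q(Y)$ at a vertex. Your ``face with boundary'' gloss is the right picture, but it matches the subset formula, not the one you wrote.

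This error undermines both halves of your reduction:

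\begin{itemize}
\item For a face map, the supersets of $f(T)$ in $\cF_n$ include sets leaving the face. So the condition for $X$ at $f(T)$ does not ``restrict directly'' to the condition for $f^*X$ at $T$.
\item For a degeneracy, your own observation refutes the claim. Take $f:[1]\to[0]$; then $f^*X$ is the constant span on $X_0$. The superset cube at $\{0\}$ is $\bbD_Q(X_0)\xrightarrow{\id}\bbD_Q(X_0)$, whose total fiber is $0$ rather than $\bbD_Q(X_0)$. Your criterion would therefore declare the identity cobordism non-Poincar\'e.
\item The cofinality you propose to grant concerns the global limit $Q_{[m]}(f^*X)$ over all of $\cF_m$. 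That statement is true for surjective $f$, but it is beside the point: the Poincar\'e condition only sees the comparison map $f^*\bbD_{[n]}X\to\bbD_{[m]}f^*X$.
\end{itemize}

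What is missing is a pointwise cofinality statement. It holds for every $f$ at once, so no splitting into faces and degeneracies is needed.

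\begin{enumerate}
\item For $T\in\cF_m$, the poset map $\cF(T)\to\cF(f(T))$, $T'\mapsto f(T')$, is initial. Indeed, for each $U\in\cF(f(T))$ the comma poset $\{T'\in\cF(T): f(T')\subseteq U\}$ equals $\cF(T\cap f^{-1}(U))$. This is nonempty and has a terminal object, hence is contractible.
\item It follows that $\varprojlim_{U\in\cF(f(T))}\bbD_Q(X(U))\to\varprojlim_{T'\in\cF(T)}\bbD_Q(X(f(T')))$ is an equivalence for every $T$.
\item Equivalently, the canonical comparison $f^*\bbD_{[n]}X\to\bbD_{[m]}f^*X$ is an equivalence. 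This comparison is induced by the transformation $Q_{[n]}\Rightarrow Q_{[m]}\circ f^*$, which you correctly wrote down.
\item The duality map of $(f^*X,f^*q)$ is $f^*$ of the equivalence $X\to\bbD_{[n]}X$, followed by this comparison. It is therefore an equivalence, and $(f^*X,f^*q)$ is Poincar\'e.
\end{enumerate}
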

For each $n\geq 0$ there is a classifying space  for Poincaré objects of $(\cC_{[n]},Q_{[n]})$ given by $\cP(\cC,Q)_{n}$. The map of finite sets $f:[n]\to [m]$ induces a map of spaces $\cP(\cC,Q)_{n}\to \cP(\cC,Q)_{m}$. This gives the space $\cP(\cC,Q)_{*}$ the structure of a simplicial space.

In general it is not easy to compute the homotopy groups of a geometric realization i.e. $\rL(\cC,Q)$ solely from knowledge of the homotopy groups of $\cP(\cC,Q)_*$. However, it is a nontrivial fact that:
\begin{theorem}
    The simplicial space $\cP(\cC,Q)_*$ satisfies the Kan condition.
\end{theorem}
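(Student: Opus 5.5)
The plan is to follow the strategy of \cite{lurie287x}. First I identify $\cP(\cC,Q)_n$ with the space of Poincaré objects of $(\cC_{[n]},Q_{[n]})$. For a simplicial set $K$ I write $\cF_K$ for the poset of nondegenerate simplices of $K$. Then $\cP(\cC,Q)(K)$ is the space of Poincaré objects of
\[
\bigl(\Fun(\cF_K^{\op},\cC),\; \textstyle\varprojlim_{\cF_K} Q\bigr).
\]
Since every level $\cP(\cC,Q)_n$ is already an $\infty$-groupoid, the Kan condition for the simplicial space amounts to one statement. For every horn $\Lambda^n_i\subset\Delta^n$, the restriction map $\cP(\cC,Q)(\Delta^n)\to\cP(\cC,Q)(\Lambda^n_i)$ should be surjective on $\pi_0$. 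I would in fact aim to show it is an equivalence, which I expect to be true.

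The candidate filler is the right Kan extension. Let $j:\cF_{\Lambda^n_i}\hookrightarrow\cF_n$ be the inclusion; the subsets it misses are $\{0,\ldots,n\}$ and the face $d_i=\{0,\ldots,\hat i,\ldots,n\}$. Given a diagram $X$ on the horn, I define $\widetilde X=j_*X$. Concretely, $\widetilde X(S)$ is the limit of $X$ over subsets of $S$ lying in the horn, so $\widetilde X$ is computed by finite limits in $\cC$. The first check is that $j_*$ is fully faithful. The second, more substantive, check is that
\[
Q_{[n]}(j_*X)=\varprojlim_{\cF_n}Q(\widetilde X(S))\;\longrightarrow\;\varprojlim_{\cF_{\Lambda}}Q(X(S))
\]
is an equivalence. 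With this in hand, quadratic structures on the horn lift uniquely to $\widetilde X$, and so does the path data between them.

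For this limit comparison I would split $Q$ using the fiber sequence
\[
\Fib\bigl(Q(X)\to B(X,X)^{h\Z/2}\bigr)\to Q(X)\to B(X,X)^{h\Z/2}
\]
from the definition of a quadratic functor. The exact piece converts the finite limits defining $\widetilde X$ into colimits, and agreement then follows from $\cF_{\Lambda^n_i}\to\cF_n$ being cofinal for the relevant diagrams. Cofinality holds because both the horn and $\Delta^n$ are weakly contractible, and the missing simplices are glued along contractible links. The bilinear piece $B(X,X)^{h\Z/2}$ I treat as a two-variable exact functor: evaluate on the product diagram over $\cF\times\cF$, then restrict to the diagonal. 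This requires the diagonal $\cF_n\to\cF_n\times\cF_n$ to interact well with restriction to the horn, and I would check this poset combinatorics directly. Homotopy fixed points commute with limits, so the $h\Z/2$ causes no extra trouble.

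The step I expect to be the main obstacle is nondegeneracy of $\widetilde X$, that is, that the map $\widetilde X\to\bbD_{Q_{[n]}}(\widetilde X)$ induced by $q$ is an equivalence. Here the duality on $\cC_{[n]}$ is not computed pointwise. It involves the fibers over the boundary of each simplex, as in the cobordism condition (4) of Definition \ref{def:cobordism}. I would therefore first describe $\bbD_{Q_{[n]}}$ explicitly as a pointwise $\bbD_Q$ combined with a relative-fiber operation. I would then verify the equivalence simplex by simplex. For simplices in the horn it is the given Poincaré condition. For the two missing simplices, the relevant fibers of $\widetilde X$ are iterated fibers of horn data. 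The comparison map then reduces, just as in the transitivity step of the proof that cobordism is an equivalence relation, to a map of fiber sequences whose outer terms are already equivalences.
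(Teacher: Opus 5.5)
The paper gives no proof of this statement; it only asserts it and defers to Lurie's notes. Your plan therefore has to stand on its own, and it has a genuine gap at the central step: $j_*X$ is the wrong filler.

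By your own formula, $j_*X(d_i)=\varprojlim_{S\subsetneq d_i}X(S)$ is the unglued boundary of $d_i$, not a cobordism. Two small cases show the failure.
\begin{itemize}
\item For $n=1$, $\Lambda^1_0=\{0\}$ and $j_*X$ is the span $X\xleftarrow{\id}X\to 0$. Then $Q_{[1]}(j_*X)\simeq Q(X)\times_{Q(X)}Q(0)\simeq 0$, so the comparison map to $Q(X)$ is not an equivalence. Condition (4) of Definition~\ref{def:cobordism} would also force $0=\Fib(\id_X)\simeq\Fib(0\to\bbD_Q X)\simeq\Omega\bbD_Q X$.
\item For $n=2$, $i=1$, you get $j_*X(\{0,2\})=X(0)\oplus X(2)$ with projections $p_0,p_2$. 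Any point of $Q_{[2]}(j_*X)$ over the horn data contains a path from $p_0^*q_0$ to $p_2^*q_2$ in $Q(X(0)\oplus X(2))$. Restricting along $X(0)\to X(0)\oplus X(2)$ gives a nullhomotopy of $q_0$. That is impossible unless $X(0)\simeq 0$, because $q_0$ is Poincaré.
\end{itemize}
So neither the claimed unique lifting of quadratic data nor the Poincaré condition on $d_i$ holds. No cofinality or exactness argument can produce them, because the claimed equivalence is false.

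Relatedly, horn restriction is not an equivalence in general. For nonzero Poincaré $(X,q)$, the fiber of $\cP(\Delta^1)\to\cP(\Lambda^1_0)$ over $0$ contains both the identity cobordism of $0$ and the diagonal null-bordism $0\leftarrow X\to X\oplus X$ of $(X\oplus X,q\oplus -q)$. These lie in different components. Only $\pi_0$-surjectivity holds, and that is all the Kan condition asks.

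The missing idea is that the missing face must carry the glued object, exactly as in the transitivity step you invoke at the end. Set $\widetilde X([n])=\widetilde X(d_i)=\varprojlim_{S\in\cF_{\Lambda^n_i}}X(S)$, with $\widetilde X([n])\to\widetilde X(d_i)$ the identity. Equivalently, right Kan extend only to $[n]$, then pull back along the order-preserving retraction $\cF_n\to\cF_{\Lambda^n_i}\cup\{[n]\}$ sending $d_i$ to $[n]$. For $n=1$ this is the identity cobordism. For $\Lambda^2_1$ it places $X(01)\times_{X(1)}X(12)$ on $\{0,2\}$, which is the composite cobordism.

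With this filler the quadratic comparison is formal and needs no splitting of $Q$. The diagram $S\mapsto Q(\widetilde X(S))$ is right Kan extended from $\cF_n\setminus\{d_i\}$, which is the cone on $\cF_{\Lambda^n_i}$ with cone point $[n]$. Since $|\cF_{\Lambda^n_i}|\simeq|\Lambda^n_i|$ is contractible, the limit over the cone agrees with the limit over $\cF_{\Lambda^n_i}$.

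What remains is nondegeneracy at $d_i$ and $[n]$. There, your reduction to a map of fiber sequences with invertible outer terms is the right kind of argument. It applies to this glued filler, not to $j_*X$.
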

\noindent If a simplicial space $X$ satisfies the Kan condition then it is a Kan complex, which mean that every horn in $X$ can be filled to a full simplex. This ensures that the classifying space of $X$ does not lose information about the simplicial space. 

If $Y$ is a simplicial space that satisfies the Kan condition, then the homotopy groups of its classifying space $|Y|$ is computed in an algorithmic way using simplicial methods.  Suppose $K$ is a simplicial set, and  $K_0$ is a simplicial subset of $K$. Let $Y(K,K_0)$ denote the homotopy fiber of the map $Y(K)\to Y(K_0)$. Consider the space $Y(\Delta^n,\partial \Delta^n)$ where $\partial \Delta^n$ denotes the boundary simplices of $\Delta^n$. Consider a subset $K \subseteq \partial \Delta^{n+1}$ obtained by removing the interiors of two faces in $\partial \Delta^{n+1}$. 
\begin{proposition}
    There is a canonical bijection 
\begin{equation}
    Y(\partial \Delta^{n+1},K) \to Y(\Delta^n, \partial \Delta^n)\times Y(\Delta^n, \partial \Delta^n)\,,
\end{equation}
and a pair of objects $\lambda, \lambda' \in Y(\Delta^n, \partial \Delta^n)$, determine the same object in $\pi_n |Y|$ if and only if the corresponding element in $Y(\partial \Delta^{n+1},K)$ lifts to a point in $Y(\Delta^{n+1})$.
\end{proposition}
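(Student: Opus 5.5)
We prove the statement directly from the definition of the homotopy fiber. Set $Y(\partial\Delta^{n+1},K)=\Fib\bigl(Y(\partial\Delta^{n+1})\to Y(K)\bigr)$ and $Y(\Delta^n,\partial\Delta^n)=\Fib\bigl(Y(\Delta^n)\to Y(\partial\Delta^n)\bigr)$, and recall that for a simplicial set $L$ we have $Y(L)=\varprojlim_{\Delta^m\to L}Y(\Delta^m)$. The argument has two parts: a gluing computation for the bijection, and an application of the Kan condition (from the preceding theorem) for the criterion in $\pi_n|Y|$.

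\emph{The bijection.} We use the pushout of simplicial sets
\begin{equation*}
\partial\Delta^{n+1}\;=\;K\cup_{\partial d_i\Delta^{n+1}\sqcup\partial d_j\Delta^{n+1}}\bigl(d_i\Delta^{n+1}\sqcup d_j\Delta^{n+1}\bigr),
\end{equation*}
where $d_i,d_j$ are the two faces whose interiors were removed. Since $Y(-)$ sends colimits of simplicial sets to limits of spaces, this pushout gives a pullback square
\begin{equation*}
Y(\partial\Delta^{n+1})\simeq Y(K)\times_{Y(\partial\Delta^n)\times Y(\partial\Delta^n)}\bigl(Y(\Delta^n)\times Y(\Delta^n)\bigr).
\end{equation*}
The two faces $d_i\Delta^{n+1}$ and $d_j\Delta^{n+1}$ each share their boundary with $K$ (they meet along a common $(n-1)$-face, which lies in $K$), so the pushout description is valid. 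Taking fibers over a point of $Y(K)$ and using that fibers of a pullback are computed on the right-hand leg, we obtain the equivalence
\begin{equation*}
Y(\partial\Delta^{n+1},K)\simeq Y(\Delta^n,\partial\Delta^n)\times Y(\Delta^n,\partial\Delta^n),
\end{equation*}
which is the claimed canonical bijection (at the level of points and components).

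\emph{The criterion in $\pi_n|Y|$.} Fix the degenerate basepoint in $Y(K)$. An element $\lambda\in Y(\Delta^n,\partial\Delta^n)$ then represents a class in $\pi_n|Y|$: it is an $n$-simplex with trivial boundary, i.e.\ a map of pairs $(\Delta^n,\partial\Delta^n)\to(|Y|,*)$. Because $Y$ is Kan, the usual combinatorial homotopy-group theorem for Kan complexes extends levelwise to simplicial spaces (apply the Kan condition realized through the Reedy-fibrant replacement), so every class in $\pi_n|Y|$ arises in this way. Two such simplices represent the same class precisely when there is an $(n+1)$-simplex whose boundary is $\lambda$ on face $i$, $\lambda'$ on face $j$, and degenerate elsewhere. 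This is exactly a lift of the pair $(\lambda,\lambda')\in Y(\partial\Delta^{n+1},K)$ to $Y(\Delta^{n+1})$.

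For the forward direction, given a homotopy between $\lambda$ and $\lambda'$, we subdivide the prism $\Delta^n\times\Delta^1$ and fill horns one at a time using the Kan condition; this produces such an $(n+1)$-simplex. The converse is immediate, since an $(n+1)$-simplex with that boundary is itself a homotopy between $\lambda$ and $\lambda'$.

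\emph{Main obstacle.} The delicate step is the passage from spaces to $\pi_n$. The Kan condition must be available as a genuine horn-filling property at the level of $\pi_0$ of the relevant mapping spaces, and we need that geometric realization commutes with these fibers, in the spirit of a Bousfield–Friedlander/Quillen type theorem. Here we would use the $\pi_*$-Kan condition implied by the preceding theorem.
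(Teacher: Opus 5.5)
The paper only states this proposition (it is imported from Lurie's notes), so I am judging your argument on its own merits.

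\textbf{The bijection.} This half is fine. The map $\partial d_i\Delta^{n+1}\sqcup\partial d_j\Delta^{n+1}\to d_i\Delta^{n+1}\sqcup d_j\Delta^{n+1}$ is a monomorphism, so your pushout of simplicial sets is also a pushout of simplicial spaces. The fibre over the degenerate basepoint of $Y(K)$ is therefore $Y(\Delta^n,\partial\Delta^n)\times Y(\Delta^n,\partial\Delta^n)$. This is an equivalence, hence a bijection on $\pi_0$ (not ``on points'').

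\textbf{The gap.} It lies in the criterion for $\pi_n|Y|$, which is where the real content of the proposition is. The sentence ``the usual combinatorial homotopy-group theorem for Kan complexes extends levelwise to simplicial spaces'' is not an argument. The levels $Y_m$ are spaces, the classical theorem concerns a single Kan simplicial set, and what is needed is an identification of $|Y|$ with an honest Kan complex. You name this as the main obstacle, but then defer it to a ``$\pi_*$-Kan condition implied by the preceding theorem''. The Kan condition actually available is horn filling up to homotopy, i.e.\ $\pi_0$-surjectivity of $Y(\Delta^m)\to Y(\Lambda^m_i)$. That is not the Bousfield--Friedlander $\pi_*$-Kan condition, and nothing in the paper provides the latter. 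Your forward direction is circular for the same reason. The equality $[\lambda]=[\lambda']$ in $\pi_n|Y|$ gives a homotopy of maps $S^n\to|Y|$, not a prism $\Delta^n\times\Delta^1\to Y$. So there is nothing to subdivide and fill until you already know that homotopies in $|Y|$ are represented simplicially.

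\textbf{One way to close it.}
\begin{enumerate}
\item Model $Y$ by a Reedy fibrant bisimplicial set $Y_{m,k}$. Then each $Y(\Delta^m)\to Y(\Lambda^m_i)$ is a Kan fibration.
\item The $\pi_0$-surjectivity then upgrades to surjectivity on all simplices: lift a vertex by path lifting, then use that $\{0\}\subset\Delta^k$ is anodyne. Hence each row $Y_{\bullet,k}$ is a Kan complex.
\item Path lifting in $Y(\Delta^{n+1})\to Y(\partial\Delta^{n+1})$ shows that the simplicial homotopy relation on vertices of $Y(\Delta^n,\partial\Delta^n)$ depends only on path components. So $\pi_n(Y_{\bullet,0})\cong\pi_0Y(\Delta^n,\partial\Delta^n)/\!\sim$, with $\sim$ exactly your lifting condition.
\item This description is homotopy invariant. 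Apply it to the levelwise equivalence $Y\to Y^{\Delta^k}$, where $(Y^{\Delta^k})_m=\mathrm{Map}(\Delta^k,Y_m)$ and whose zeroth row is $Y_{\bullet,k}$. This shows that every $Y_{\bullet,0}\to Y_{\bullet,k}$ is a weak equivalence.
\item Hence $|Y|\simeq\mathrm{diag}\,Y\simeq\operatorname{hocolim}_k Y_{\bullet,k}\simeq Y_{\bullet,0}$, and the classical theory for the Kan complex $Y_{\bullet,0}$ gives the criterion.
\end{enumerate}

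\textbf{Which two faces.} You also need to fix which two faces are removed. Take an $(n+1)$-simplex with $\lambda$ on face $i$, $\lambda'$ on face $j>i$, and trivial faces elsewhere. For $n\geq 2$ it yields $(-1)^i[\lambda]+(-1)^j[\lambda']=0$. For $n=1$, faces $0$ and $2$ give $\lambda'\ast\lambda\simeq\ast$. So ``same class'' is correct only when $i$ and $j$ have opposite parity, e.g.\ the last two faces as in the classical definition.
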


 This is equivalent to the existence of a cobordism between $\lambda$ and $\lambda'$. Hence, $Y(\Delta^n,\partial \Delta^n)$ represents the classifying space for Poincaré objects $(X,q)$ of  $\cC_{[n]}$ such that $X(S)\simeq 0$ for all $S \subseteq [n]$. One can see this by the fact that a Poincaré object of $\cC_{[n]}$ is a pair of Poincaré objects in $\cC_{[n-1]}$ with a cobordism between them. Since $X(S)\simeq 0$ for all $S \subseteq [n]$, $X$ is determined by a single object $X([n])\in \cC$. Furthermore we have 
\begin{equation}
    Q_{[n]}(X) =   \varprojlim _{S \in \cF_n} Q(X(S)) =\begin{cases}
        Q(X([n])), \quad &S=[n]  \\
        0,  &\text{else}\,.
    \end{cases}
\end{equation}
The diagram that describes this limit is parametrized by a partially ordered set of faces of an $n$-simplex, and is trivial on every proper face. Thus, the limit is given by $\Omega^n Q(X([n]))$. 
If $Y = \cP(\cC,Q)_*$ with $|Y| = \rL(\cC,Q)$, then the space $Y(\Delta^n,\partial \Delta^n)$ is the classifying space for Poincaré objects of $(\cC_{[n]},Q_{[n]})=(\cC,\Omega^n Q)$, i.e. $Y(\Delta^n,\partial \Delta^n )  = \rL(\cC,\Omega^n Q)$.

The discussion thus far is summarized by the following theorem:


\begin{theorem}
    The abelian group $\pi_n(\rL(\cC,Q))$ is canonically isomorphic to $\pi_0(\rL(\cC,\Omega^n Q))$.
\end{theorem}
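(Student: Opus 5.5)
We prove that $\pi_n(\rL(\cC,Q))\cong \pi_0(\rL(\cC,\Omega^n Q))$ by induction on $n$, using the Kan property of $Y=\cP(\cC,Q)_*$ and the simplicial description of homotopy groups recalled above. Since $Y$ satisfies the Kan condition, the homotopy groups of $|Y|=\rL(\cC,Q)$ can be read off simplicially. Concretely, $\pi_n|Y|$ is the set of points of $Y(\Delta^n,\partial\Delta^n)$ modulo the relation: $\lambda\sim\lambda'$ if and only if the pair $(\lambda,\lambda')\in Y(\partial\Delta^{n+1},K)$ lifts to $Y(\Delta^{n+1})$.

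\emph{Step 1: identify the relative space.} A point of $Y(\Delta^n,\partial\Delta^n)$ is a Poincaré object $X$ of $(\cC_{[n]},Q_{[n]})$ with $X(S)\simeq 0$ for every proper $S\subsetneq[n]$. Such a diagram $\cF_n^{\op}\to\cC$ is determined by the single object $X([n])$. The limit $\varprojlim_{S\in\cF_n}Q(X(S))$ runs over the face poset of the $n$-simplex and vanishes away from the top face. Using stability, we compute it inductively over the boundary: each collapse of a codimension-one face contributes one loop. This gives $\Omega^nQ(X([n]))$.

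We must also check two further points. First, $\Omega^nQ$ is again a nondegenerate quadratic functor; its polarization is $\Omega^nB$, representable by $\Sigma^{-n}\bbD_Q$. Second, the Poincaré condition for $X$ in $\cC_{[n]}$ reduces to the Poincaré condition for $(X([n]),q)$ in $(\cC,\Omega^nQ)$. Together these identify $Y(\Delta^n,\partial\Delta^n)$ with the space of Poincaré objects of $(\cC,\Omega^nQ)$.

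\emph{Step 2: match the equivalence relations.} Using the quoted bijection $Y(\partial\Delta^{n+1},K)\simeq Y(\Delta^n,\partial\Delta^n)^{2}$, a filler in $Y(\Delta^{n+1})$ is a Poincaré object on $\Delta^{n+1}$ that vanishes on all faces except the two distinguished $n$-faces and the interior. The value on the interior, together with its maps to the two distinguished faces, is a span $Y\to X, Y\to X'$. Applying the same limit computation one dimension up to this span, we get a path in $\Omega^\infty\Omega^nQ(Y)$ between the pulled-back forms. The Poincaré condition on the filler unwinds to the requirement that $u:\Fib(\alpha)\to\Fib(\bbD(\alpha'))$ is an equivalence. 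This is exactly a cobordism in $(\cC,\Omega^nQ)$ in the sense of Definition \ref{def:cobordism}, so $\pi_n|Y|\cong\rL_0(\cC,\Omega^nQ)=\pi_0\rL(\cC,\Omega^nQ)$ as sets.

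\emph{Step 3: compatibility of group structures.} For $n\ge1$, the group structure on $\pi_n$ coming from concatenation agrees with the direct sum on Poincaré objects by an Eckmann–Hilton argument. Both operations are unital and interchange, since direct sum is a map of simplicial spaces $Y\times Y\to Y$. Naturality of all identifications in $[n]$ makes the isomorphism canonical.

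The main obstacle is Step 2. One must check that the Poincaré condition for a diagram indexed by $\cF_{n+1}$ with prescribed vanishing is equivalent to invertibility of the single map $u$. The approach is to compute $\bbD_{Q_{[n+1]}}$ of such a diagram explicitly: it is pointwise $\bbD_Q$, composed with a right Kan extension shifting to the complementary faces. The nondegeneracy condition then holds at the vanishing vertices automatically and reduces at the interior to the fiber comparison.
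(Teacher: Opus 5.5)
Your proposal is essentially the paper's argument (which follows Lurie): via the Kan condition you identify $Y(\Delta^n,\partial\Delta^n)$ with Poincaré objects of $(\cC,\Omega^nQ)$ through the vanishing-on-proper-faces limit computation, and you identify fillers of elements of $Y(\partial\Delta^{n+1},K)$ with cobordisms, while your Step~3 (Eckmann--Hilton) and the explicit nondegeneracy check are useful additions that the paper does not spell out. A few of your heuristics should be corrected, though none of this affects the conclusions you draw: no induction on $n$ is actually used; the $\Omega^n$ arises because the poset of proper faces realizes to $S^{n-1}$, not from one loop per codimension-one face (there are $n+1$ of those); and the duality on $\cC_{[n+1]}$ is $(\bbD X)(U)\simeq\varprojlim_{S\subseteq U}\bbD_Q X(S)$, assembled from right Kan extensions supported on faces containing $S$, rather than a shift to complementary faces.
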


\begin{example}\label{example:shiftq}
    As an illustrative way to see what the shifted quadratic functor $\Omega^n
    Q$ means, consider a cobordism between two zero Poincaré objects $(X,q)$ and $(X',q')$. The cobordism between these two objects is determined by an object $L\in \cC$ with a point $p\in \Omega (\Omega^{\infty} Q(L))$, i.e. a point in the based loop space of $\Omega^{\infty} Q(L)$. But this is just a path from the trivial $q$ to the trivial $q'$.  $p$ is furthermore required to induce an equivalence $L \to \Omega \bbD_Q(L)$, which means an equivalence only after shifting down by a degree after applying the duality functor. We therefore say that $(L,p)$ is a Poincaré object in $\cC$ with respect to a shifted quadratic functor $\Omega Q$.
\end{example}

\subsection{L-theory of Rings and Surgery}
We now specify to a particular choice of $\cC$ that is useful for classifying Pauli stabilizer codes: this is the category $(\cD^\perf(R),Q)$ of perfect complexes over $R$ equipped with a quadratic functor $Q$.

\begin{definition}
    Let $R$ be an associative ring. The stable $\infty$-category of perfect chain complexes $D^{\perf}(R)$ consists of:
    \begin{itemize}
        \item Objects given by bounded chain complexes of finitely generated projective $R$-modules,
        \item 1-morphisms given by maps $f:P_*\to Q_*$ between chain complexes,
        \item Higher morphisms given by chain homotopies.
    \end{itemize}
\end{definition}

\begin{example}
    The idea of Poincaré objects in a stable $\infty$-category defined over a ring can be used to recover Poincaré duality which is enjoyed by manifolds. Let $M$ be a manifold of dimension $n$. The singular cochain complex $C^*(M,\Z)$ is an object of $D^{\perf}(\Z)$.\footnote{Technically $D^{\perf}(\Z)$ has objects which are perfect chain complexes and this example uses a cochain complex. But since $M$ is finite dimensional, perfect chain complexes dualize to perfect cochain complexes.} We define a bilinear functor $B:D^{\perf}(\Z) \times D^{\perf}(\Z)\to \Sp$ by the formula
    \begin{equation}
        B(P_*,Q_*) = \rMor(P_* \otimes Q_*,\Z[-n])\,,
    \end{equation}
     where $\Z[-n]$ denotes a chain complex consisting of a single group $\Z$ in homological degree $-n$.
    We then take the quadratic functor $Q:D^{\perf}(\Z)\to \Sp$ to be given by $Q(P_*)=B(P_*,P_*)^{h\Z/2}$.
    The intersection pairing:
    \begin{equation}
        C^*(M;\Z) \otimes C^*(M;\Z) \overset{\cup}{\rightarrow}C^*(M;\Z) \overset{[M]}{\rightarrow} \Z[-n]\,,
    \end{equation}
    determines a point $q_M\in \Omega^\infty Q(C^*(M,\Z))$. Thus Poincaré duality is equivalent to the statement that $(C^*(M,\Z),q_M)$ is a Poincaré object of $(D^\perf(\Z),Q)$.
\end{example}

The ring $R$ which is relevant for translation invariant Pauli stabilizer codes comes with an involution, as explained in \S\ref{subsection:introtopauli}. Therefore, any left $R$-module $M$ can be regarded as a right $R$-module by using the formula
\begin{equation}
    m r = \overline{r} m, \quad r\in R,\,m\in M.
\end{equation}

The duality map $\bbD$ used in \S\ref{subsection:Ltheory} establishes a contravariant equivalence of $D^{\perf}(R)$ with itself. The action of $\bbD$ on a complex $P_*$ applies the $R$-linear dual $\hom_R(-,R)$ termwise. In parallel with Definition \ref{def:bilinear}: a bilinear functor defined on $D^{\perf}(R)$ is given by the formula $B(P_*,Q_*) = \rMor(P_*,\bbD(Q_*))$, and is symmetric in its entries due to the fact that the involution on $R$ squares to the identity.


\begin{definition}
    Let $R$ be a ring with involution and let $Q: D^{\perf}(R)\to \Sp$  be a quadratic functor given by $Q(X)= B(X,X)_{h\Z/2}$. Then for every integer $n$, define the group 
   \begin{equation}
      \rL_n(R):=\rL_n(D^{\perf}(R),Q) = \rL_0(D^{\perf}(R),\Omega^n Q)\,.
   \end{equation}
\end{definition}

In the same that was explained in Example \ref{example:shiftq}, one can view Poincaré objects of $(D^{\perf}(R),\Omega^n Q)$ to comprise of an object $P_*\in D^{\perf}(R)$ that is equipped with a nondegenerate quadratic form such that there is an equivalence  $P_* \simeq \Omega^n \bbD(P_*)$ under duality.

In analogy with  $\rL$-theory for manifolds, one gains substantial leverage in the study of higher $\rL$-groups for rings by exhibiting cobordisms from general Poincaré objects to ones concentrated in the middle degree. We will refer to such cobordisms as ``surgeries''.
Let us explain surgery in a general setting; as usual, take $\cC$ to be a stable $\infty$-category and $Q$ to be a nondegenerate quadratic functor with polarization $B$, and associated duality functor $\bbD_Q$. We will explain what it means to perform surgery on a Poincaré object $(X,q)$ along a map $\sigma:W\to X$.
We have observed before in Equation \eqref{eq:null}, that the composition of maps 
\begin{equation}
    W \overset{\sigma}{\rightarrow} X \rightarrow \bbD(X)\rightarrow \bbD(W)
\end{equation}
is canonically nullhomotopic. However we note that the maps 
\begin{equation}\label{eq:triangle}
    W \overset{\sigma}{\rightarrow} X \overset{\beta}{\rightarrow} \bbD(X)
\end{equation}
do not in general form a fiber sequence, and the failure to be a fiber sequence is  captured by an object $X_\sigma := \Fib(\beta)/W$, along with a choice of nullhomotopy for $q|_{W}$.

\begin{lemma}\label{lem:fibersequence}
    Given a fiber sequence $W\to X\to Y$ in a stable $\infty$-category $\cC$, for every quadratic functor $Q$ on $\cC$ there is a fiber sequence
    \begin{equation}\label{eq:fibforQ}
        Q(Y)\rightarrow Q(X) \rightarrow Q(W) \times_{B(W,W)} B(X,W)\,.
    \end{equation}
\end{lemma}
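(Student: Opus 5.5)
We need to prove Lemma \ref{lem:fibersequence}: for a fiber sequence $W\to X\to Y$ and a quadratic functor $Q$, there is a fiber sequence $Q(Y)\to Q(X)\to Q(W)\times_{B(W,W)}B(X,W)$.

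The plan is to reduce to two cases, by the structure of quadratic functors described after the definition and in Example \ref{ex:quadratic}. Every quadratic functor sits in a fiber sequence
\[
\Lambda(X)\to Q(X)\to B(X,X)^{h\Z/2},
\]
with $\Lambda$ exact. Both sides of the claimed sequence are functorial in $Q$, and fiber sequences of functors are detected objectwise. So it suffices to produce a natural comparison map and check the statement for exact functors $\Lambda$ and for the ``symmetric'' functors $B(X,X)^{h\Z/2}$. We then glue using the fact that a fiber sequence of three-term sequences, two of which are fiber sequences, has a third that is one.

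The maps themselves come first. The first map is $Q(p)$, where $p:X\to Y$. The second has two components:
\begin{itemize}
\item a component $Q(i):Q(X)\to Q(W)$, where $i:W\to X$;
\item a component $Q(X)\to B(X,X)\to B(X,W)$, given by the polarization followed by restriction along $i$ in the second variable.
\end{itemize}
These two components agree after mapping to $B(W,W)$, since both factor through $B(X,X)\to B(W,W)$. This defines the map to the pullback. The composite $Q(Y)\to Q(X)\to(\cdots)$ is nullhomotopic because $p\circ i\simeq0$ and $Q$ is reduced. That null homotopy identifies $Q(Y)$ with a map to the fiber.

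Next is the exact case. If $Q=\Lambda$ is exact, its polarization vanishes, so $B=0$. The pullback is then just $\Lambda(W)$, and the claim is exactly exactness of $\Lambda$ on $W\to X\to Y$.

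The main step, and the expected main obstacle, is the symmetric case $Q(X)=B(X,X)^{h\Z/2}$. Here we use bilinearity to filter $B(X,X)$ via the fiber sequences in each variable. Concretely, $B(X,X)$ has a $\Z/2$-equivariant three-step filtration with graded pieces:
\begin{itemize}
\item $B(Y,Y)$;
\item $B(Y,W)\oplus B(W,Y)\simeq \mathrm{Ind}_{e}^{\Z/2}B(Y,W)$;
\item $B(W,W)$.
\end{itemize}
This is obtained by filtering $X\otimes X$ through $W\otimes W\subset W\otimes X+X\otimes W\subset X\otimes X$, in the contravariant sense. Taking homotopy fixed points, which is exact, gives the following identifications:
\[
\Fib\bigl(Q(X)\to B(X,X)^{h\Z/2}\text{ restricted to the last two pieces}\bigr)\simeq B(Y,Y)^{h\Z/2}=Q(Y).
\]
Moreover, the fixed points of the induced middle piece are just $B(Y,W)$. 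One then identifies the cofiber $Q(X)/Q(Y)$ with the fixed points of the quotient filtration $\{\mathrm{Ind}\,B(Y,W), B(W,W)\}$, and identifies this with $Q(W)\times_{B(W,W)}B(X,W)$. This uses the fiber sequence $B(Y,W)\to B(X,W)\to B(W,W)$: the fiber of $B(X,W)\to B(W,W)$ is $B(Y,W)$, matching the fiber of the fixed-point quotient over $B(W,W)^{h\Z/2}$.

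The care needed here is in tracking the $\Z/2$-equivariance of the middle layer, so that the fixed points really give one copy of $B(Y,W)$, and in checking compatibility with the explicitly defined maps. A cleaner route is to verify the statement on cofibers: compute $\Fib$ of $Q(X)\to Q(W)\times_{B(W,W)}B(X,W)$ as an iterated fiber, and show it equals $Q(Y)$ using $\Fib(Q(X)\to Q(W))$ together with the polarization formula $Q(W\oplus -)$. Finally, naturality in $Q$ combines the two cases through the extension $\Lambda\to Q\to B^{h\Z/2}$.
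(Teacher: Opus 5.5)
Your argument is correct, and its core, the symmetric case $Q=B(-,-)^{h\Z/2}$, is exactly the paper's proof. The paper applies $B$ to the fiber sequence to get the $3\times 3$ diagram. It extracts the fiber sequence $B(Y,Y)\to B(X,X)\to B(X,W)\times_{B(W,W)}B(W,X)$, which is your filtration with the last two graded pieces combined into the quotient, and then takes homotopy fixed points.

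What you add is the d\'evissage along $\Lambda\to Q\to B(-,-)^{h\Z/2}$. The paper silently identifies $Q$ with $B(-,-)^{h\Z/2}$ when it takes fixed points, so its proof literally covers only that one quadratic functor. It does not even cover the coinvariant functor $B(-,-)_{h\Z/2}$ used later, which differs from $B(-,-)^{h\Z/2}$ when $p=2$, although the same computation with homotopy orbits would work there. Your reduction is what proves the statement for every quadratic functor, as stated.

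To close your gluing step, state explicitly that
$\Lambda(W)\to Q(W)\times_{B(W,W)}B(X,W)\to B(W,W)^{h\Z/2}\times_{B(W,W)}B(X,W)$
is a fiber sequence. This holds because $\Lambda$ has zero polarization, $Q\to B(-,-)^{h\Z/2}$ is an equivalence on polarizations, and fibers commute with pullbacks.

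In the other direction, the paper has a trick that tidies your symmetric case. Write $B(X,W)\times_{B(W,W)}B(W,X)\simeq (B(X,W)\oplus B(W,X))\times_{B(W,W)\oplus B(W,W)}B(W,W)$, with $\Z/2$ swapping the summands. Homotopy fixed points commute with the pullback, and each induced term contributes a single copy, so this yields $B(X,W)\times_{B(W,W)}Q(W)$ at once. That supplies the comparison map your ``matching fibers over $B(W,W)^{h\Z/2}$'' step needs, since agreeing fibers over a common base do not by themselves give an equivalence. It also makes the alternative route via $Q(W\oplus -)$ unnecessary.
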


\begin{proof}
    We observe that since the bilinear form $B$ is exact in each variable, for a fiber sequence $W\to X\to Y$ we obtain a diagram:
    \begin{equation}
         \begin{tikzcd} 
        B(Y,Y) \arrow[r] \arrow[d]& B(Y,X) \arrow[r] \arrow[d]& B(Y,W) \arrow[d]\\ 
         B(X,Y) \arrow[r] \arrow[d]& B(X,X) \arrow[r] \arrow[d]& B(X,W)\arrow[d]\\
          B(W,Y) \arrow[r] & B(W,X) \arrow[r] & B(W,W)\,,
    \end{tikzcd}
    \end{equation}
   where each row and column are fiber sequences. Thus, we have a fiber sequence 
    \begin{equation}\label{eq:fiber}
       B(Y,Y) \to B(X,X) \to B(X,W) \times_{B(W,W)} B(W,X)\,.
    \end{equation}
    Using the fact that 
    \begin{equation}
        B(X,W) \times_{B(W,W)} B(W,X) \simeq (B(X,W)\times B(W,X)) \times_{B(W,W)\times B(W,W)} B(W,W)
    \end{equation}
    we can substitute this into Equation \eqref{eq:fiber} and take homotopy fixed points with respect to $\Z/2$. This gives the fiber sequence
    \begin{equation}
        Q(Y)\rightarrow Q(X) \rightarrow Q(W) \times_{B(W,W)} B(X,W)\,
    \end{equation}
    as desired.
\end{proof}

\begin{lemma}\label{lem:sigmaquadratic}
There is a lift from $q\in \Omega^\infty Q(X)$ to $q_\sigma \in \Omega^\infty Q(X_\sigma)$.  The object $(X_\sigma,q_\sigma)$ is thus a quadratic object in $(\cC,Q)$.
\end{lemma}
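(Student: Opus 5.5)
The idea is to produce $q_\sigma$ by running $q$ through the fiber sequence of Lemma \ref{lem:fibersequence}, once the surgery object $X_\sigma$ has been made explicit. Write $\beta: X\to \bbD_Q(X)$ for the map determined by $q$, and $\bbD_Q(\sigma)$ for the dual of $\sigma:W\to X$. I would first construct $X_\sigma$ as a two-step construction. Set
\[
X' := \Fib\bigl(X \xrightarrow{\beta} \bbD_Q(X)\xrightarrow{\bbD_Q(\sigma)} \bbD_Q(W)\bigr),
\]
so that $X'\to X$ is the restriction to the ``orthogonal complement'' of $W$. The canonical nullhomotopy of $W\to X\to \bbD_Q(X)\to \bbD_Q(W)$ from \eqref{eq:null} lifts $\sigma$ to a map $\sigma':W\to X'$. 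Then put $X_\sigma := \coFib(\sigma')$, which is the object written $\Fib(\beta)/W$ in the text.

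The key step is to produce a quadratic form on $X'$ that vanishes on $W$. Pull $q$ back along $i:X'\to X$ to get $i^*q\in\Omega^\infty Q(X')$, and then restrict further along $\sigma'$. The hypothesis includes a chosen nullhomotopy of $q|_W$, and the composite $X'\to X\to\bbD_Q(W)$ is nullhomotopic by construction. Together these give a point of
\[
Q(W)\times_{B(W,W)} B(X',W),
\]
namely the image of $i^*q$ under the third map of \eqref{eq:fibforQ}, applied to the fiber sequence $W\xrightarrow{\sigma'} X'\to X_\sigma$, together with a nullhomotopy of that image. The component in $Q(W)$ is trivialized by the chosen nullhomotopy of $q|_W$. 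The component in $B(X',W)=\rMor(X',\bbD_Q(W))$ is trivialized by the defining nullhomotopy of $X'$. These two trivializations must agree over $B(W,W)$, and I expect this compatibility to be the main obstacle: one has to check that the two nullhomotopies restrict to the same path in $\Omega^\infty B(W,W)$. The plan is to reduce this to the statement that the nullhomotopy in \eqref{eq:null} is induced by the symmetric structure of $q$, i.e.\ by the image of $q$ in $B^{h\Z/2}$, and to impose the choice of nullhomotopy of $q|_W$ as compatible with it. This is exactly the extra data the text attaches to $X_\sigma$.

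Once the image is trivialized, exactness of \eqref{eq:fibforQ} upon applying $\Omega^\infty$ (fiber sequences of spectra give fiber sequences of spaces) yields a point $q_\sigma\in\Omega^\infty Q(X_\sigma)$ mapping to $i^*q$. This is the desired lift of $q$, well defined up to the choice of nullhomotopy. Since a quadratic object is just a pair consisting of an object and a point of $\Omega^\infty Q$, $(X_\sigma,q_\sigma)$ is a quadratic object of $(\cC,Q)$, which is all the lemma asserts; no Poincaré property needs to be checked at this stage.
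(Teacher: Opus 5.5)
Your proposal is correct and is essentially the paper's argument: apply Lemma~\ref{lem:fibersequence} to $W\to\Fib(\beta)\to X_\sigma$, with $\beta\colon X\to\bbD_Q(W)$, and observe that $q|_{\Fib(\beta)}$ has trivial image in $Q(W)\times_{B(W,W)}B(\Fib(\beta),W)$, so that it lifts; you are more careful than the paper, which only spells out the $B(\Fib(\beta),W)$-component. The compatibility over $B(W,W)$ that you flag as the main obstacle holds by construction once you build the lift $\sigma'$ from the nullhomotopy of $W\to\bbD_Q(W)$ induced by the chosen nullhomotopy of $q|_W$; there is no other canonical one, and you should adapt $\sigma'$ to that choice rather than the choice to $\sigma'$.
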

\begin{proof}
Using Lemma \ref{lem:fibersequence}, we can construct a fiber sequence from Equation \eqref{eq:triangle} as follows: 
\begin{equation}
    Q(X_\sigma) \rightarrow  Q(\Fib(\beta))\rightarrow Q(W) \times_{B(W,W)} B(\Fib(\beta),W)\,.
\end{equation}
    Since $q\in \Omega^\infty Q(X)$, it determines a point in $\Omega^\infty B(X,W)$, where this space is viewed as the cofiber of $Q(W)\to Q(X)$. Such a point classifies the maps $\beta:X \to \bbD W$. By composing with $\Fib(\beta)\to X$, such a 
map is canonically nullhomotopic. Therefore the restriction $q|_{\Fib(\beta)}$ has a trivial image in $\Omega^\infty(Q(W) \times_{B(W,W)} B(\Fib(\beta),W))$, and so lifts to a point $q_\sigma \in \Omega^\infty Q(X_\sigma)$.
\end{proof}

\begin{proposition}\label{prop:surgery}
    Let $(X,q)$ be a Poincaré object of $(\cC,Q)$. Given a map $\sigma: W \to X$ and a nullhomotopy of $q|_W$, then $(X_\sigma,q_\sigma)$ constructed by surgery along $\sigma$ is a Poincaré object of $(\cC,Q)$.
\end{proposition}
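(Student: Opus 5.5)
We need to show that the map $X_\sigma \to \bbD_Q(X_\sigma)$ determined by $q_\sigma$ from Lemma~\ref{lem:sigmaquadratic} is an equivalence. The approach is to identify $\bbD_Q(X_\sigma)$ explicitly from the definition $X_\sigma=\Fib(\beta)/W$ and then compare fiber sequences. Throughout we use that $\bbD_Q$ is an exact contravariant equivalence, since $Q$ is nondegenerate, so it turns fiber sequences into cofiber sequences.

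\textbf{Step 1: fix the notation.} Write $\beta$ for the composite $X\xrightarrow{\simeq}\bbD_Q(X)\xrightarrow{\bbD_Q(\sigma)}\bbD_Q(W)$. The first map is invertible because $(X,q)$ is Poincaré, so we may identify $\beta$ with $\bbD_Q(\sigma)$ up to this equivalence. The nullhomotopy of $q|_W$ yields a nullhomotopy of $\beta\circ\sigma$. Hence $\sigma$ factors through $\iota:\Fib(\beta)\to X$, and $X_\sigma$ is defined by the cofiber sequence
\[
W\to \Fib(\beta)\to X_\sigma .
\]

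\textbf{Step 2: dualize.} We have $\Fib(\beta)\simeq\Fib(\bbD_Q(\sigma))\simeq \bbD_Q(\coFib(\sigma))$, as already used in the proof that cobordism is symmetric. Applying $\bbD_Q$ to the cofiber sequence of Step 1 gives a fiber sequence
\[
\bbD_Q(X_\sigma)\to \bbD_Q(\Fib(\beta))\simeq \coFib(\sigma)\to \bbD_Q(W).
\]
The map $\coFib(\sigma)\to\bbD_Q(W)$ here is the one induced by $\beta$, which is well defined because $\beta\sigma\simeq 0$. So $\bbD_Q(X_\sigma)$ is the fiber of $\coFib(\sigma)\to \bbD_Q(W)$.

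\textbf{Step 3: identify $X_\sigma$ the same way.} By the octahedral axiom applied to $W\to\Fib(\beta)\to X$, there is a cofiber sequence
\[
X_\sigma=\Fib(\beta)/W\to X/W=\coFib(\sigma)\to X/\Fib(\beta)\simeq \bbD_Q(W),
\]
where the last map is again the one induced by $\beta$. So both $X_\sigma$ and $\bbD_Q(X_\sigma)$ are fibers of the same map $\coFib(\sigma)\to\bbD_Q(W)$. This exhibits $X_\sigma$ and $\bbD_Q(X_\sigma)$ as the two "middle" pieces of a symmetric $3\times3$ diagram whose rows and columns are fiber sequences, built from $W\to X\to\bbD_Q(W)$ and its dual. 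The diagram is self-dual under $\bbD_Q$, just as in Lemma~\ref{lem:fibersequence}.

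\textbf{Step 4: compatibility with $q_\sigma$ (main obstacle).} It remains to check that the map $X_\sigma\to\bbD_Q(X_\sigma)$ induced by $q_\sigma$ is the equivalence of Step 3, rather than some other map. Recall that $q_\sigma$ was obtained as a lift of $q|_{\Fib(\beta)}$ through the fiber sequence of Lemma~\ref{lem:fibersequence}. Consequently its polarization, restricted along $\Fib(\beta)\to X_\sigma$, agrees with the pullback of the polarization of $q$ along $\iota$. I would chase the diagram as follows.
\begin{itemize}
\item The composite $\Fib(\beta)\to X_\sigma\to\bbD_Q(X_\sigma)\to\bbD_Q(\Fib(\beta))\simeq\coFib(\sigma)$ equals $\Fib(\beta)\xrightarrow{\iota}X\to \coFib(\sigma)$, using that $q$ induces $X\simeq\bbD_Q(X)$.
\item Therefore the map $X_\sigma\to\bbD_Q(X_\sigma)$ is compatible with the two fiber sequences of Steps 2 and 3 over $\coFib(\sigma)\to\bbD_Q(W)$.
\item On the base $\bbD_Q(W)$ the induced map is the identity, and on $\coFib(\sigma)$ it is the identity coming from the Poincaré structure on $X$.
\end{itemize}
The five lemma for fiber sequences in a stable category then shows that $X_\sigma\to\bbD_Q(X_\sigma)$ is an equivalence. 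The delicate point is the coherence of the chosen nullhomotopies, namely making sure the dependence of $q_\sigma$ on the nullhomotopy of $q|_W$ is matched with the nullhomotopy of $\beta\sigma$ used in Step 1. I would handle this by working with the quadratic object on the cospan $W\to X$ in $\cC_{[1]}$, where these choices are part of a single datum.
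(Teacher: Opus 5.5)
Your argument is essentially the paper's. Both compare the three-term complex $W\xrightarrow{\sigma}X\xrightarrow{\beta}\bbD(W)$ with its dual $\bbD^2(W)\to\bbD(X)\to\bbD(W)$, use that the outer comparison maps are equivalences and the middle one is the Poincaré equivalence, and conclude that the induced map on $X_\sigma=\Fib(\beta)/W$ is an equivalence; your octahedral description $X_\sigma\simeq\Fib(\coFib(\sigma)\to\bbD(W))$ plus the five lemma is the paper's ``cofiber of the map of triangles'' computation made explicit. The paper simply asserts the Step 4 identification of the $q_\sigma$-induced map with this map of fiber sequences, and you are right that the ``therefore'' needs more: agreement after restricting along $\Fib(\beta)\to X_\sigma$ does not suffice, since two maps $X_\sigma\to\coFib(\sigma)$ with the same restriction can differ by a map factoring through the connecting map $X_\sigma\to\Sigma W$, so the coherent treatment of the nullhomotopies in $\cC_{[1]}$ that you propose is what would actually close it.
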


\begin{proof}
    Since Lemma \ref{lem:sigmaquadratic} proved that $(X_\sigma,q_\sigma)$ is quadratic, it suffices to show that $q_\sigma$ provides an isomorphism from $X_\sigma$  to $\bbD(X_\sigma)$. Given a triangle in a stable $\infty$-category,
    applying the duality functor gives another triangle. Therefore we get a map between triangles
    \begin{equation}\label{eq:triangles}
    \begin{tikzcd}
         W \arrow[r,"\sigma"] \arrow[d,"\cong"]& X \arrow[r,"\beta"] \arrow[d]& \bbD (W) \arrow[d,"\cong"] \\
         \bbD^2(W) \arrow[r]& \bbD(X) \arrow[r]& \bbD(W)
    \end{tikzcd}
    \end{equation}
    where the bottom line is obtained by applying $\bbD$ to the top line. The two vertical maps on the left and right are isomorphism, and the middle map is induced by $q$.
    From the top line, we can extract the object $X_\sigma = \coFib(W\to \Fib(\beta))$. Under duality we see that 
    \begin{align}
        \bbD(X_\sigma) &= \bbD(\coFib(W\to \Fib(\beta))) \\\notag 
           &=\Fib(\bbD(W \to \Fib(\beta)))\\\notag 
           &= \Fib(\bbD(\Fib(\beta))\to \bbD(W))\\\notag 
           & = \Fib(\coFib(\bbD(\beta))\to \bbD(W))\,,
    \end{align}
    and so indeed we get a map  $q_\sigma:X_\sigma\to \bbD(X_\sigma)$ induced by the map on triangles. The cofiber of $q_\sigma$ is obtained by looking at the cofiber of the map on triangles in Equation \eqref{eq:triangles}. But since the outter two maps are equivalences, then the cofiber of $q_\sigma$ is equivalent to the cofiber of $X\to \bbD(X)$. This vanishes because $(X,q)$ is a Poincaré object and thus $q_\sigma$ is an isomorphism.
\end{proof}

In summary, there is a cobordism of $(X,q)$ along $\sigma$ that produces another Poincaré object $(X_\sigma,q_\sigma)$. Since $q_\sigma$ and $q$ have the same image when restricted to $\Fib(X\to \bbD(W))$, this can be used to show that they determine the same element of the abelian group $\rL_0(\cC,Q)$.  

We now explain how to apply surgery along $\sigma$ to create Poincaré objects concentrated near the middle dimensions. 
\begin{definition}
    A spectrum $X$ is \emph{$n$-connective} if $\pi_i(X)$ vanish for $i<n$.  $X$ is called \emph{connective} if its negative homotopy groups vanish. 
\end{definition}

Let $M$ be a module spectrum over $R$, with dual $\bbD(M)$ given by the mapping spectrum $\rMor_R(M,R)$. To be concrete, one can take $M$ to be an element of $D^\perf(R)$.

\begin{definition}
    A module spectrum $M$  over $R$ has \textit{projective height} $\leq n$ if $\bbD(M)$ is $(-n)$-connective. 
\end{definition}

\begin{proposition}\label{prop:height}
    Let $R$ be a ring and $M\in D^\perf(R)$ which is connective and of projective height $\leq 0$. Then $M$ is a direct summand of $R^n$, for some  $n$. \footnote{If $M$ was instead $k$-connective and of projective height $\leq k$ then the analogue of the statement in the proposition is that $M$ is a summand of $\Sigma^k R^n$.}
\end{proposition}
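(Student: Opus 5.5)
The plan is to read the two hypotheses as vanishing of Ext groups. Then a minimal free resolution of $M$ must split. Concretely, since $M\in D^\perf(R)$, represent it by a bounded complex of finitely generated projective modules $P_*$. Connectivity of $M$ means $H_i(P_*)=0$ for $i<0$. Projective height $\leq 0$ means $\bbD(M)=\rMor_R(M,R)$ is $0$-connective, so $\pi_{-i}\rMor_R(M,R)=\Ext^i_R(M,R)=0$ for $i>0$. First I will truncate: using connectivity, replace $P_*$ by a quasi-isomorphic complex of finitely generated projectives concentrated in degrees $[0,k]$ for some $k$. To do this, note that the bottom differential $P_0\to P_{-1}$ is onto modulo boundaries. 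Then cancel the negative-degree terms inductively by the standard argument: surjectivity onto a projective splits, and we replace $P_0$ by the kernel, which stays projective.

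Next I will kill the top of the complex by induction on the length $k$. If $k\geq 1$, consider the top differential $d_k:P_k\to P_{k-1}$. The class of $\mathrm{id}$ under the quotient $P_k\to P_k$ defines a map $M\to \Sigma^k P_k$. Since $P_k$ is a summand of a free module $R^m$, this class lies in a summand of $\Ext^k_R(M,R^m)=\Ext^k_R(M,R)^{m}=0$. The map is therefore nullhomotopic. Unwinding, this says the identity of $P_k$ factors through $d_k$, i.e.\ there is $s:P_{k-1}\to P_k$ with $s\circ d_k=\id_{P_k}$. Hence $d_k$ is a split injection, $P_{k-1}\cong P_k\oplus P'_{k-1}$, and $P_*$ is quasi-isomorphic to the shorter complex with $P'_{k-1}$ in degree $k-1$. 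The module $P'_{k-1}$ is still finitely generated projective. Iterating, we reach $k=0$, so $M\simeq P_0$ is a finitely generated projective module, hence a direct summand of some $R^n$. This also matches the footnote's shifted version, obtained by applying the argument to $\Sigma^{-k}M$.

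The main obstacle is the step identifying $\Ext^k_R(M,P_k)=0$, together with checking that the resulting nullhomotopy genuinely yields a splitting of $d_k$ rather than some weaker chain-level statement. I would handle this by computing $\Hom_{D(R)}(P_*,\Sigma^kP_k)$ directly as chain maps modulo homotopy. A chain map $P_*\to P_k[k]$ is a map $f:P_k\to P_k$. A homotopy to zero is a map $s:P_{k-1}\to P_k$ with $f=s\circ d_k$, using that nothing lives in degree $k+1$. This makes the splitting explicit.

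One smaller point to keep honest is the use of additivity of $\Ext$ in the second variable for the summand $P_k\subseteq R^m$. This is immediate because $\rMor_R(M,-)$ is exact and preserves finite direct sums.
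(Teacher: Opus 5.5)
Your argument is correct, but it takes a different route from the paper.

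\emph{The paper's route.} The paper never chooses a chain model. It picks a map $R^n\to M$ that is surjective on $\pi_0$ and notes that the fiber $N$ is connective. It then uses that $\rMor_R(M,N)$ is connective, which implicitly relies on $\rMor_R(M,N)\simeq \bbD(M)\otimes_R N$ for perfect $M$. From this it concludes that $\pi_0\rMor_R(M,R^n)\to\pi_0\rMor_R(M,M)$ is surjective, so $\id_M$ lifts and $M$ is a retract of $R^n$ in a single step.

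\emph{Your route.} You represent $M$ by a bounded complex of finitely generated projectives and strip off the negative degrees using connectivity. You then peel off the top degree by induction on length, using two facts. First, $\Ext^k_R(M,P_k)=0$ for $k\ge 1$, which follows from additivity since $P_k$ is a summand of some $R^m$. Second, a null-homotopy of the chain map $P_*\to\Sigma^k P_k$ is exactly a map $s$ with $s\circ d_k=\id_{P_k}$, so $d_k$ splits off a contractible summand.

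\emph{Comparison.} The paper's argument is shorter and coordinate-free, and it carries over essentially verbatim to connective ring spectra. Yours is more elementary and makes the splitting explicit. It also only ever tests the Ext-vanishing hypothesis against $R$ and its finite sums and summands. That avoids the step about connectivity of $\rMor_R(M,N)$ for a general connective $N$, at the cost of needing a discrete ring and a chain-level model.

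\emph{A small wording fix.} In the truncation step, ``onto modulo boundaries'' should be read as follows: at the lowest nonzero degree $j<0$, the vanishing of $H_j$ forces $d_{j+1}$ to be surjective onto the projective $P_j$, hence split. You then proceed upward from there.
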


\begin{proof}
    The fact that the negative homotopy groups of $M$ vanish, along with the fact that $M$ is in $D^{\perf}(R)$ can be used to establish that $\pi_0(M)$ is a finitely generated $R$-module. We can choose a surjective on $\pi_0$ map $R^n\to M$, which gives a fiber sequence 
    \begin{equation}
        N \rightarrow R^n \rightarrow M
    \end{equation}
    where $N$ is also connective. This  gives a fiber sequence of mapping spectra
    \begin{equation}
        \begin{tikzcd}
            \rMor_R(M,N) \arrow[r]& \rMor_R(M,R^n) \arrow[r]& \rMor_R(M,M)\,,
        \end{tikzcd}
    \end{equation}
    for which we notice that  $\rMor_R(M,N)$ is also connective as $N$ is connective and $M$ has projective height $\leq 0$. In the long exact sequence of homotopy groups, this fact implies the map $\pi_0(\rMor_R(M,R^n))\to \pi_0(\rMor_R(M,M))$ is surjective. Therefore the identity map $\id_M: M\to M$ lifts to a map $M \to R^n$, and we conclude.
\end{proof}

Let $Q(M) = \rMor_{R\text{-}R}(M\otimes M,R)_{h\Z/2} =B(M,M)_{h\Z/2}$ for the remainder of this section. We will prove the following proposition, which describes the two cases of how  Poincaré objects can be simplified via surgery, in even and odd degree:
\begin{proposition}\label{prop:evenodd}
    Let $(M,q)$ be a Poincaré object of $(D^{\perf}(R),\Sigma^{2k}Q)$. Then $(M,q)$ is cobordant to another Poincaré object $(N,q')$ with $\pi_i(N) = 0$ if $i<k$. Similarly, if  $(M,q)$ be a Poincaré object of $(D^{\perf}(R),\Sigma^{2k+1}Q)$. Then $(M,q)$ is cobordant to another Poincaré object $(N,q')$ with $\pi_i(N) = 0$ if $i<k$.
\end{proposition}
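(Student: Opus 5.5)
The plan is to run the standard surgery-below-the-middle-dimension argument in Lurie's style, using Proposition \ref{prop:surgery} as the only cobordism-producing tool. Each surgery along $\sigma$ yields a cobordant Poincaré object, and I will check this is compatible with the shifted functors $\Sigma^{m}Q$. The shift means a Poincaré object $(M,q)$ of $(D^{\perf}(R),\Sigma^{m}Q)$ has $M\simeq \Sigma^{m}\bbD(M)$. Since $M$ is perfect, $M$ is bounded below, say $\pi_i(M)=0$ for $i<j$. I will argue by induction on $j$: if $j<k$, I construct a surgery that kills $\pi_j(M)$ without creating homotopy below degree $j$. Finitely many steps then reach $\pi_i(N)=0$ for $i<k$.

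For the inductive step, $\pi_j(M)$ is a finitely generated $R$-module, because $M$ is perfect and $(j)$-connective; this is the argument already used in Proposition \ref{prop:height}. I will choose $W=\Sigma^{j}R^{r}$ and a map $\sigma:W\to M$ that is surjective on $\pi_j$.

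Two things must be checked.

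First, surgery requires a nullhomotopy of $q|_W$. The point $q|_W$ lies in $\Omega^\infty\Sigma^{m}Q(\Sigma^j R^r)$, and $Q(\Sigma^jR^r)=B(\Sigma^jR^r,\Sigma^jR^r)_{h\Z/2}$ is built from $\Sigma^{-2j}$ of a connective spectrum. Hence $\Sigma^mQ(W)$ has homotopy only in degrees $\ge m-2j$ (coinvariants preserve connectivity). When $2j<m$, the space $\Omega^\infty\Sigma^mQ(W)$ is connected, so a nullhomotopy exists. In the even case $m=2k$ this holds for $j<k$. In the odd case $m=2k+1$ it holds for $j\le k$, so in particular for $j<k$.

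Second, I must control the homotopy of $X_\sigma=\coFib(W\to\Fib(\beta))$, where $\beta:M\to\Sigma^m\bbD(W)=\Sigma^{m-j}\bbD(R^r)$. The target is concentrated in degree $m-j>j$. Therefore $\Fib(\beta)\to M$ is an isomorphism on $\pi_i$ for $i<m-j-1$, and in particular in degrees $\le j$ once $m-j-1>j$. Passing to the cofiber of $W\to\Fib(\beta)$ then kills $\pi_j$ by surjectivity, and introduces nothing in degrees $<j$. So $X_\sigma$ is $(j+1)$-connective.

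The main obstacle is the boundary case where $m-j-1=j$, that is, $m$ odd and $j=k$. There the degree-$j$ homotopy of $\Fib(\beta)$ could differ from that of $M$. The statement only asks for vanishing below $k$, so this case never needs to be performed, and the inequality $j<k$ keeps us safely inside the range where both the nullhomotopy and the connectivity estimate work. I will verify the even case in the same way, using $j\le k-1$ so that $m-j-1=2k-j-1\ge k>j$.

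Finally, I will note that the cobordism class is unchanged at each step, by the remark following Proposition \ref{prop:surgery}. This gives the desired $(N,q')$.
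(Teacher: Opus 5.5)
Your proposal is correct and follows essentially the same route as the paper. You do surgery via Proposition~\ref{prop:surgery} along maps from shifted free modules below the middle degree, obtaining the nullhomotopy of $q|_W$ from the connectivity of $\Sigma^m Q(W)$ (as in Lemma~\ref{lem:pullbackq}) and reading off the homotopy of the result from the two long exact sequences (as in Lemmas~\ref{lem:surgeryodd} and~\ref{lem:surgeryother}). The only differences are improvements in bookkeeping: you kill all of $\pi_j(M)$ at once with $W=\Sigma^j R^r$ rather than one class $\nu$ at a time, and you induct upward from the lowest nonvanishing degree with the explicit inequality $2j+2\le m$, which also correctly avoids the problematic odd case $j=k$.
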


We break up the proof into the following Lemmas, first starting with the case when the number of suspensions is even.
\begin{lemma}\label{lem:pullbackq}
    If $(M,q)$ is a Poincaré object of $(D^\perf(R),\Sigma^{2k}Q)$ and $M = \Sigma^m R$ for $m <k$, then there exists a nullhomotopy for $q$.
\end{lemma}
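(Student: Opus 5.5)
The plan is to show that the space $\Omega^\infty \Sigma^{2k}Q(\Sigma^m R)$ is connected. Then every point $q$ in it lies in the path component of the base point, and a path from $q$ to the base point is precisely a nullhomotopy of $q$. Connectedness will come from a direct connectivity estimate on the spectrum $\Sigma^{2k}Q(\Sigma^m R)$, using the fact that in this section $Q$ is defined by homotopy \emph{coinvariants}, $Q(M)=B(M,M)_{h\Z/2}$.

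First I compute the bilinear part. Since $B(X,Y)=\rMor(X,\bbD(Y))$ and $\bbD(\Sigma^m R)=\Sigma^{-m}\bbD(R)\simeq \Sigma^{-m}R$, we get
\[
B(\Sigma^m R,\Sigma^m R)\simeq \rMor(\Sigma^m R,\Sigma^{-m}R)\simeq \Sigma^{-2m}\rMor(R,R)\simeq \Sigma^{-2m}R ,
\]
where $R$ is viewed as a discrete (Eilenberg--MacLane) spectrum. The $\Z/2$-action on this spectrum is the flip action on $R$, given by the involution $\overline{\bullet}$, twisted by the sign $(-1)^m$ that comes from commuting the two suspension coordinates. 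I will not need the exact form of this action, only the fact that the underlying spectrum is $\Sigma^{-2m}$ of a discrete spectrum. Shifting by $\Sigma^{2k}$ then gives
\[
\Sigma^{2k}Q(\Sigma^m R)\simeq \bigl(\Sigma^{2(k-m)}R\bigr)_{h\Z/2},
\]
again for some $\Z/2$-action.

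Next I use the fact that homotopy orbits are a colimit; concretely, they are computed by the bar construction $\bigl(E\Z/2_+\wedge -\bigr)_{\Z/2}$. Colimits of $c$-connective spectra are $c$-connective, so $\bigl(\Sigma^{2(k-m)}R\bigr)_{h\Z/2}$ is $2(k-m)$-connective. Since $m<k$, we have $2(k-m)\geq 2$, so in particular $\pi_0$ of this spectrum vanishes. Therefore $\pi_0\Omega^\infty\Sigma^{2k}Q(\Sigma^m R)=0$, which gives the required nullhomotopy of $q$.

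The main point to be careful about is the choice of coinvariants rather than fixed points. Homotopy fixed points $(\Sigma^{2(k-m)}R)^{h\Z/2}$ would have nonzero negative homotopy groups from the group cohomology of $\Z/2$, and the connectivity argument would break down. With the quadratic functor $Q=B_{h\Z/2}$ fixed in this section, however, the estimate goes through. I also need to check the sign and shift conventions, namely that $\Sigma^{2k}$ applied to $Q$ raises connectivity rather than lowering it. This matches the statement of Proposition~\ref{prop:evenodd}, where the shift $\Sigma^{2k}Q$ is what allows degrees below $k$ to be killed.
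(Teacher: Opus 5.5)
Your proposal is correct and follows the paper's argument: identify $\Sigma^{2k}Q(\Sigma^m R)$ with the homotopy orbits of $\Sigma^{2k-2m}R$, then use $m<k$ to see that this spectrum has vanishing $\pi_0$, so its infinite loop space is connected and $q$ is nullhomotopic. Your observation that homotopy orbits preserve connectivity spells out a step the paper skips when it identifies the orbits directly with $\Sigma^{2k-2m}R$. That observation is what makes the argument valid for $p=2$, where homotopy fixed points would not work.
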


\begin{proof}
    The quadratic form $q$ determines a point in the zeroth space $\Sigma^{2k}(\bbD(M) \otimes \bbD(M))_{h\Z/2}$, which can be identified with $\Sigma^{2k-2m} R$. But if $m <k$ then the homotopy groups of $\Sigma^{2k-2m} R$ vanishes in non-positive degrees. Therefore there exists a nullhomotopy for $q$. 
\end{proof}

\begin{lemma}\label{lem:surgeryodd}
    Let $(M,q)$ be a Poincaré object of $(D^\perf(R),\Sigma^{2k}Q)$ and let $\nu \in \pi_m(M)$, where $m <k$. There exists another Poincaré object $(M_\sigma,q_\sigma)$ that is cobordant to $(M,q)$, where $\pi_i(M_\sigma)= \pi_i(M)$ for $i<m$, and $\pi_m(M_\sigma)= \pi_m(M)/\nu $.
\end{lemma}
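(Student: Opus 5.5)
Following the pattern of Lemma \ref{lem:pullbackq} and Proposition \ref{prop:surgery}, the plan is to realize $\nu$ as a map $\sigma:\Sigma^m R\to M$, check that it satisfies the hypotheses of Proposition \ref{prop:surgery}, and then compute the homotopy groups of the surgered object $M_\sigma$.

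\textbf{Step 1: the surgery map.} Since $\pi_m(M)\cong \pi_0\rMor_R(\Sigma^m R, M)$, the class $\nu$ is represented by a map $\sigma:W:=\Sigma^m R\to M$. To apply Proposition \ref{prop:surgery} we need a nullhomotopy of $q|_W$. Restricting $q$ along $\sigma$ gives a quadratic form on $\Sigma^m R$ for the functor $\Sigma^{2k}Q$. Because $m<k$, Lemma \ref{lem:pullbackq} shows that this form is nullhomotopic, so we may fix such a nullhomotopy. Proposition \ref{prop:surgery} then produces a Poincaré object $(M_\sigma,q_\sigma)$, and by the remark following that proposition it is cobordant to $(M,q)$.

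\textbf{Step 2: homotopy groups of $M_\sigma$.} Write $\beta: M\to \bbD(W)$ for the composite $M\simeq \bbD M\to \bbD W$, where $\bbD$ is the shifted duality. Then $\bbD(W)\simeq \Sigma^{2k-m}R$. We have $M_\sigma = \coFib(W\to \Fib(\beta))$.

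First we compare $\Fib(\beta)$ with $M$. The cofiber sequence $\Fib(\beta)\to M\to \Sigma^{2k-m}R$ shows that $\pi_i(\Fib\beta)\to\pi_i(M)$ is an isomorphism for $i<2k-m-1$. It is surjective for $i=2k-m-1$. Since $m<k$, we have $2k-m-1\geq k> m$, so these maps are isomorphisms for all $i\leq m$. The map $W\to M$ factors through $\Fib(\beta)$ because $W\to M\to \bbD W$ is nullhomotopic. Hence the lift $W\to\Fib(\beta)$ still represents $\nu$ on $\pi_m$.

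Next we use the cofiber sequence $\Sigma^mR\to\Fib(\beta)\to M_\sigma$. Its long exact sequence is
$$\pi_i(\Sigma^mR)\to\pi_i(\Fib\beta)\to\pi_i(M_\sigma)\to\pi_{i-1}(\Sigma^mR).$$
For $i<m$ both outer terms vanish, so $\pi_i(M_\sigma)\cong\pi_i(M)$. For $i=m$ the term $\pi_{m-1}(\Sigma^m R)$ vanishes, so $\pi_m(M_\sigma)\cong \pi_m(M)/R\nu$. This is the quotient by the submodule generated by $\nu$, which is what $\pi_m(M)/\nu$ denotes.

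\textbf{Main obstacle.} The delicate point is the degree bookkeeping in the shifted duality. One must identify $\bbD(\Sigma^m R)$ for $\Sigma^{2k}Q$ with $\Sigma^{2k-m}R$, with the correct sign convention, and confirm that $2k-m-1>m$. That inequality is exactly where $m<k$ is used a second time, and it guarantees that the fiber of $\beta$ does not disturb $\pi_{\le m}$. I would first verify this with the explicit formula $B(X,Y)=\rMor(X\otimes Y,\Sigma^{2k}R)$ restricted to free rank-one modules, since the same formula underlies the computation in Lemma \ref{lem:pullbackq}.
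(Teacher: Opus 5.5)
Your proposal is correct and follows the paper's proof essentially step for step: represent $\nu$ by $\sigma:\Sigma^m R\to M$, nullhomotope the restricted form via Lemma~\ref{lem:pullbackq}, apply Proposition~\ref{prop:surgery}, and read off $\pi_{\le m}(M_\sigma)$ from the long exact sequences of $\Fib(\beta)\to M\to \Sigma^{2k-m}R$ and $\Sigma^m R\to \Fib(\beta)\to M_\sigma$. Your degree bookkeeping is slightly more careful than the paper's, which claims $\pi_i(\Fib(\beta))\cong\pi_i(M)$ for $i\le k$; that range can fail at $i=k$ when $m=k-1$, although only $i\le m$ is ever used. You also explicitly check that the lift $\Sigma^m R\to\Fib(\beta)$ still represents $\nu$, which the paper does not.
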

\begin{proof}
    The class $\nu$ determines a map $\sigma: \Sigma^m R \to M$. If $m <k$, by Lemma \ref{lem:pullbackq}, $q|_{\Sigma^{m}R}$ is nullhomotopic. By choosing a nullhomotopy, Proposition \ref{prop:surgery} gives a surgery along $\sigma$ to another Poincaré object $(M_\sigma,q_\sigma)$.  Consider a map $\beta: M \to \Sigma^{2k} \bbD(\Sigma^m R)$, determined by the choice of nullhomotopy for $q|_{\Sigma^m R}$. From the fiber sequence  \begin{equation}
    \begin{tikzcd}
        \Fib(\beta) \arrow[r]& M \arrow[r]& \Sigma^{2k-m}R
    \end{tikzcd}
    \end{equation}  
    we obtain a long exact sequence in homotopy:
    \begin{equation}
    \begin{tikzcd}
         \ldots \arrow[r]&\pi_{i+1}(\Sigma^{2k-m} R) \arrow[r]& \pi_i(\Fib(\beta)) \arrow[r]& \pi_i(M) \arrow[r]& \pi_i(\Sigma^{2k-m}R) \arrow[r]&\ldots \,.
    \end{tikzcd}
    \end{equation}
    Since the ring $R$ is connective, $\pi_i(\Sigma^{2k-m}R)$ is trivial if $i+m < 2k$, and  $\pi_{i+1}(\Sigma^{2k-m}R)$ is trivial if $i+m +1< 2k$. Thus $\pi_i(\Fib(\beta))  = \pi_i(M)$ for $i\leq k$. To compare $\pi_i(M)$ with $\pi_i(M_\sigma)$ we consider the long exact sequence:
    \begin{equation}
         \begin{tikzcd}
         \ldots \arrow[r]&\pi_{i}(\Sigma^{m} R) \arrow[r]& \pi_i(\Fib(\beta)) \arrow[r]& \pi_i(M_\sigma) \arrow[r]& \pi_{i-1}(\Sigma^{m}R) \arrow[r]&\ldots \,,
    \end{tikzcd}
    \end{equation}
    which we have because $(M,q)$ and $(M_\sigma,q_\sigma)$ are cobordant.
    In the same way as earlier, we conclude that $\pi_i(\Fib(\beta)) = \pi_i(M_\sigma)$ for $i<m$, and thus $\pi_i(M_\sigma) = \pi_i(M)$ for $i<m$. In the case when $i=m$, we see that $\pi_m(M_\sigma)$ is the quotient of $\pi_m(\Fib(\beta)) = \pi_m(M)$ by the submodule generated by $\nu$.
\end{proof}

The case when the number of suspensions is odd follows analogously.

\begin{lemma}\label{lem:surgeryother}
      Let $(M,q)$ be a Poincaré object of $(D^\perf(R),\Sigma^{2k+1}Q)$ and let $\nu \in \pi_m(M)$, where $m \leq k$. There exists another Poincaré object $(M_\sigma,q_\sigma)$ that is cobordant to $(M,q)$, where $\pi_i(M_\sigma)= \pi_i(M)$ for $i<m$, and $\pi_m(M_\sigma)= \pi_m(M)/\nu $.
\end{lemma}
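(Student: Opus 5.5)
The argument will run parallel to Lemma \ref{lem:surgeryodd}. The one change is the connectivity bookkeeping, since we now have $2k+1$ suspensions in place of $2k$. The class $\nu\in\pi_m(M)$ gives a map $\sigma:\Sigma^m R\to M$. To perform surgery along $\sigma$ via Proposition \ref{prop:surgery}, we need a nullhomotopy of $q|_{\Sigma^m R}$.

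\textbf{Step 1: odd analogue of Lemma \ref{lem:pullbackq}.} The restricted form is a point in the zeroth space of
\[
\Sigma^{2k+1}\bigl(\bbD(\Sigma^m R)\otimes\bbD(\Sigma^m R)\bigr)_{h\Z/2}\simeq \bigl(\Sigma^{2k+1-2m}R\bigr)_{h\Z/2},
\]
with $\Z/2$ acting through the involution twisted by a sign. For $m\le k$ we have $2k+1-2m\ge 1$. The spectrum $\Sigma^{2k+1-2m}R$ is therefore $1$-connective, and homotopy orbits preserve connectivity. So $\pi_0$ vanishes and a nullhomotopy exists. Note that $m=k$ is now allowed, whereas the even case required $m<k$. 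This uses the homotopy orbits form $Q=B_{h\Z/2}$, and that choice is exactly what makes the boundary case $m=k$ work.

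\textbf{Step 2: surgery and the fibre of $\beta$.} Choose a nullhomotopy. The form $q$ then determines
\[
\beta:M\to\Sigma^{2k+1}\bbD(\Sigma^m R)\simeq\Sigma^{2k+1-m}R,
\]
and Proposition \ref{prop:surgery} produces a cobordant Poincaré object $(M_\sigma,q_\sigma)$ with $M_\sigma=\coFib(\Sigma^m R\to\Fib(\beta))$. We compare homotopy through the long exact sequence of $\Fib(\beta)\to M\to\Sigma^{2k+1-m}R$. Since $R$ is connective, $\pi_i$ and $\pi_{i+1}$ of $\Sigma^{2k+1-m}R$ vanish once $i+1<2k+1-m$, that is, $i<2k-m$. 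As $m\le k$ gives $2k-m\ge k\ge m$, we get $\pi_i(\Fib(\beta))\cong\pi_i(M)$ for all $i\le m$.

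\textbf{Step 3: the cofibre.} The cofibre sequence $\Sigma^mR\to\Fib(\beta)\to M_\sigma$ gives
\[
\pi_i(\Sigma^m R)\to\pi_i(\Fib(\beta))\to\pi_i(M_\sigma)\to\pi_{i-1}(\Sigma^m R).
\]
The outer terms vanish for $i<m$, so $\pi_i(M_\sigma)\cong\pi_i(M)$ there. At $i=m$ the right-hand term $\pi_{m-1}(\Sigma^m R)=0$, so $\pi_m(M_\sigma)$ is the cokernel of $R=\pi_m(\Sigma^mR)\to\pi_m(M)$. This cokernel is $\pi_m(M)/\nu$, because the map sends $1$ to the lift of $\nu$ through $\Fib(\beta)$; that lift exists because $\beta\circ\sigma$ is nullhomotopic by construction.

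The main obstacle is the boundary case $m=k$ in two places. First, in Step 1 the $\Z/2$-action on $\Sigma^{1}R$ may be sign-twisted, but connectivity of homotopy orbits sidesteps computing it; I would check this explicitly, since with homotopy fixed points it would fail. Second, in Step 2 we need $\pi_{m+1}(\Sigma^{2k+1-m}R)=0$ at $i=m=k$; this holds because $2k+1-m=k+1>k+1-1$.
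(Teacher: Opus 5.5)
\textbf{The case $m<k$ is fine; the case $m=k$ has a gap.} Your route is the paper's route: connectivity of $(\Sigma^{2k+1-2m}R)_{h\Z/2}$ gives the nullhomotopy, then Proposition~\ref{prop:surgery}, then the two long exact sequences. For $m<k$ it is correct. The gap is exactly the boundary case $m=k$ that you claim to handle.

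The sequence for $\Fib(\beta)\to M\to\Sigma^{2k+1-m}R$ gives $\pi_i(\Fib(\beta))\cong\pi_i(M)$ only for $i<2k-m$, with strict inequality. When $m=k$ this means $i<k$, which does not reach $i=m$. Your inference ``$2k-m\ge k\ge m$, hence all $i\le m$'' drops the strictness. Your final check is an arithmetic slip: $\pi_{m+1}(\Sigma^{2k+1-m}R)=\pi_{k+1}(\Sigma^{k+1}R)=\pi_0(R)=R\neq 0$.

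What the sequence actually gives at $i=k$ is
\[
0\longrightarrow \mathrm{coker}\bigl(\pi_{k+1}(M)\to R\bigr)\longrightarrow \pi_k(\Fib(\beta))\longrightarrow \pi_k(M)\longrightarrow 0 .
\]
Here we identify $\pi_{k+1}(M)\cong\pi_0\rMor_R(M,\Sigma^kR)$ via the Poincaré structure, and the map to $R$ is $f\mapsto f_*(\nu)$. Consequently $\pi_k(M_\sigma)=\pi_k(\Fib(\beta))/R\tilde\nu$ surjects onto $\pi_k(M)/\nu$, with kernel the image of that cokernel, and this kernel need not vanish.

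For instance, let $R$ be a domain such as $\Z/p[x_1^{\pm},\dots,x_{n-2}^{\pm}]$ and let $\nu$ be torsion. Then $f_*(\nu)=0$ for every $f$, so the cokernel is all of $R$. The kernel is then $R$ modulo the ideal $\{r\tilde\nu : r\nu=0\}$, which can be nonzero. So surgery on a torsion class in the middle degree of an odd Poincaré object can create new homotopy rather than just dividing out $\nu$. This is the same phenomenon as in odd-dimensional geometric surgery, where the outcome depends on the self-linking of the class and the choice of framing.

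\textbf{How to repair it.} The assertion $\pi_m(M_\sigma)\cong\pi_m(M)/\nu$ at $m=k$ is not established by your argument. Restrict to $m<k$: then $m<2k-m$, your Step~2 holds for all $i\le m$, and Steps~1--3 go through verbatim. This is also all that the paper's own proof controls, since it only concludes that $\Fib(\beta)$ and $M$ agree ``if $m<k$''. It is also all that is used later, because the odd case of Proposition~\ref{prop:evenodd} only needs to kill $\pi_i$ for $i<k$.

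Your Step~1 and the isomorphisms $\pi_i(M_\sigma)\cong\pi_i(M)$ for $i<m$ remain valid even when $m=k$. Only the claim about $\pi_m$ fails there.
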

\begin{proof}
    We note that if $M= \Sigma^{m} R$ then the space $\Sigma^{2k+1}Q(M) = (\Sigma^{2k+1-2m} R)_{h\Z/2}$ is connected for $k\leq m$. By the same logic as the proof of Lemma \ref{lem:surgeryodd}, for a class $\nu \in \pi_m(M)$ and $m\leq k$ we can perform a surgery to another Poincaré object $(M_\sigma,q_\sigma)$.  We consider a map $\beta: M \to \Sigma^{2k+1}\bbD(\Sigma^m R) = \Sigma^{2k+1-m} R$, for which we have the long exact sequence on homotopy:
    \begin{equation}
          \begin{tikzcd}
         \ldots \arrow[r]&\pi_{i+1}(\Sigma^{2k+1-m} R) \arrow[r]& \pi_i(\Fib(\beta)) \arrow[r]& \pi_i(M) \arrow[r]& \pi_i(\Sigma^{2k+1-m}R) \arrow[r]&\ldots \,.
    \end{tikzcd}
    \end{equation}
    Thus, $\pi_i(\Fib(\beta)) =\pi_i(M)$ for $i < 2k-m$. In particular, if $m<k$ then the two groups agree when $i <m$. By looking at the long exact sequence 
     \begin{equation}
         \begin{tikzcd}
         \ldots \arrow[r]&\pi_{i}(\Sigma^{m} R) \arrow[r]& \pi_i(\Fib(\beta)) \arrow[r]& \pi_i(M_\sigma) \arrow[r]& \pi_{i-1}(\Sigma^{m}R) \arrow[r]&\ldots \,,
    \end{tikzcd}
    \end{equation}
    we see that $\pi_i(M_\sigma) = \pi_i(M)$ for $i <m$, and $\pi_m(M_\sigma) = \pi_m(M)/\nu$.
\end{proof}

\begin{proof}[Proof of Proposition \ref{prop:evenodd}]
    We first consider the case involving $\Sigma^{2k}Q$. Since $R$ is connective, we consider the smallest positive $m$ for which $\pi_m(M)$ is nontrivial. As long as $m<k$ we apply Lemma \ref{lem:surgeryodd} repeatedly: for each $i\leq m$ we construct a cobordism to $(M_\sigma,q_\sigma)$ for which $\pi_i(M_\sigma)=0$, tivializing the high values of $i$ first and moving to the low values. The same algorithm works for the case $\Sigma^{2k+1}Q$ by invoking Lemma \ref{lem:surgeryother}.
\end{proof}

\begin{corollary}\label{cor:singledegree}
    Every Poincaré object $(M,q)$ of $(D^{\perf}(R),\Sigma^{2k}Q)$ is cobordant to a Poincaré object $(N,q')$ where $N$ is a direct summand of $\Sigma^k R^n$ for some $n$.
\end{corollary}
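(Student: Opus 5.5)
The plan is to start from Proposition~\ref{prop:evenodd}, which already replaces $(M,q)$ by a cobordant Poincaré object $(N,q')$ of $(D^{\perf}(R),\Sigma^{2k}Q)$ with $\pi_i(N)=0$ for $i<k$. The rest of the argument then has two steps. First, I will use the Poincaré structure to get the dual vanishing statement. Second, I will combine the two one-sided conditions through Proposition~\ref{prop:height} to conclude that $N$ is a summand of $\Sigma^k R^n$.

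For the first step, I will use the equivalence $N\simeq \Sigma^{2k}\bbD(N)$ supplied by $q'$. I read $\bbD$ with the shift convention of the paper's $\Sigma^{2k}Q$, so that the duality functor of $\Sigma^{2k}Q$ is $\Sigma^{2k}\bbD$. Then $N$ being $k$-connective translates into a statement about $\bbD(N)$. Since $N\simeq\Sigma^{2k}\bbD(N)$, we have $\bbD(N)\simeq \Sigma^{-2k}N$, so $\pi_j(\bbD(N))=\pi_{j+2k}(N)$, which vanishes for $j+2k<k$, i.e.\ for $j<-k$. Thus $\bbD(N)$ is $(-k)$-connective, which is exactly the statement that $N$ has projective height $\leq k$. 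Together with $k$-connectivity, this places $N$ in the situation of the footnote to Proposition~\ref{prop:height}: $N$ is $k$-connective and of projective height $\leq k$.

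For the second step, I will apply Proposition~\ref{prop:height} to $\Sigma^{-k}N$. This object is connective, and $\bbD(\Sigma^{-k}N)=\Sigma^{k}\bbD(N)$ is connective, so $\Sigma^{-k}N$ has projective height $\leq 0$. It also lies in $D^{\perf}(R)$, since $D^{\perf}(R)$ is stable under shifts. Proposition~\ref{prop:height} then makes $\Sigma^{-k}N$ a direct summand of $R^n$, hence $N$ a summand of $\Sigma^kR^n$. For the even case, Proposition~\ref{prop:evenodd} gives vanishing for $i<k$; the careful check that its surgeries (Lemma~\ref{lem:surgeryodd}) indeed kill all $\pi_m$ with $m<k$ is the part I take as given.

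The main obstacle I expect is bookkeeping of shifts and signs. I have to confirm that the duality functor for $\Sigma^{2k}Q$ is $\Sigma^{2k}\bbD$ rather than $\Sigma^{-2k}\bbD$, and that the paper's convention for projective height matches the connectivity of $\bbD(N)$ obtained above. If the convention were flipped, the vanishing range from Proposition~\ref{prop:evenodd} would fall on the wrong side. In that case I would instead rerun the surgery argument on the dual object $\bbD(N)$ to kill homotopy above degree $k$, which is the same argument by the symmetry of cobordism.
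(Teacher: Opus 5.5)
Your proposal is correct and follows the paper's proof: Proposition~\ref{prop:evenodd} gives $k$-connectivity, the Poincaré equivalence $N\simeq\Sigma^{2k}\bbD(N)$ gives projective height $\le k$, and the shifted form of Proposition~\ref{prop:height} yields that $N$ is a summand of $\Sigma^k R^n$. Your shift bookkeeping agrees with the paper's convention that the duality for $\Sigma^{2k}Q$ is $\Sigma^{2k}\bbD$ (used, for example, in Lemma~\ref{lem:pullbackq}), so the fallback via $\bbD(N)$ is not needed.
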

\begin{proof}
    Using Proposition \ref{prop:evenodd} we take $N$ to be $k$-connective. Under the equivalence $N \simeq \Sigma^{2k}\bbD(N)$, we see that $N$ has projective height $k$. Proposition \ref{prop:height} can then be used to show that $N$ is a direct summand of $\Sigma^k R^n$ for some $n$.
\end{proof}

\begin{rem}\label{rem:twolayers}
    In the case when $(M,q)$ is a Poincaré object of $(D^{\perf}(R),\Sigma^{2k+1}Q)$, Proposition \ref{prop:evenodd} shows that $(M,q)$ is cobordant to a Poincaré object $(N,q')$ where $N$ is $k$-connective. However, the equivalence $N \simeq \Sigma^{2k+1}\bbD(N)$ only guarantees that $N$ has projective height $(k+1)$. Therefore,  $(M,q)$ cannot be cobordant to a Poincaré object that is only concentrated in a single degree; rather, we can ensure that there are contributions to two degrees after surgery.
\end{rem}


To remain aligned with the physical motivation underlying the formalism introduced in this section, we emphasize that the propositions developed here substantially streamline the manipulation and classification of Poincaré objects. In particular, once a physical system is realized as a Poincaré object within this framework, its classification up to gapped interfaces reduces to the classification of middle-dimensional operators. Proposition \ref{prop:evenodd} has an analogy that we will explore in Construction \ref{construction}, with an eye towards TQFTs.

\subsection{Computing $\rL$-groups}\label{subsection:Lgroupfork}
For our classification purposes, we will need the descriptions for the $\rL$-groups of a field $\mathds{k}$.  In what follows, we describe these groups using the formalism developed in the previous sections.

\begin{proposition}\label{prop:4periodic}
   Let $R$ be an associative ring with involution. The groups $\rL_n(R)$ are four-periodic if 2 is invertible in $R$, and are two-periodic if 2 is not invertible.
\end{proposition}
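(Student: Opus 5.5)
The plan is to reduce the periodicity of $\rL_n(R)=\rL_0(D^{\perf}(R),\Omega^nQ)$ to an identification of the shifted quadratic functor $\Sigma^{2}Q$ with a sign-twisted version of $Q$. Write $Q^{\epsilon}(X)=B^{\epsilon}(X,X)_{h\Z/2}$ for $\epsilon=\pm1$. Here $B^{\epsilon}$ is the same bilinear functor $\rMor_{R\text{-}R}(X\otimes X,R)$ as before, but with the $\Z/2$-action (swapping the factors and applying the involution) multiplied by $\epsilon$. The untwisted functor is $Q=Q^{+1}$.

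\textbf{Step 1 (skew-suspension).} I would show that there is a natural equivalence of Poincaré structures
\[
(D^{\perf}(R),\Sigma^{2}Q^{\epsilon})\simeq(D^{\perf}(R),Q^{-\epsilon}),
\]
implemented by the shift functor $X\mapsto\Sigma X$. The mechanism is this. We have $B(\Sigma X,\Sigma X)\simeq\Sigma^{-2}B(X,X)$, and the swap of the two suspension coordinates on $\Sigma X\otimes\Sigma X$ acts by the sign $-1$. Passing to homotopy orbits therefore gives
\[
Q^{\epsilon}(\Sigma X)\simeq\Sigma^{-2}Q^{-\epsilon}(X),
\]
and hence $\Sigma^{2}Q^{\epsilon}(\Sigma X)\simeq Q^{-\epsilon}(X)$. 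Since $\Sigma$ is an equivalence of $D^{\perf}(R)$ compatible with the dualities, it induces a bijection on Poincaré objects. It also preserves cobordisms, because Definition \ref{def:cobordism} and the surgery of Proposition \ref{prop:surgery} are formulated using only stable operations, which $\Sigma$ respects. We conclude that
\[
\rL_{n-2}(R,\epsilon)\cong\rL_n(R,-\epsilon),
\]
using the relation $\pi_n\rL(\cC,Q)\cong\pi_0\rL(\cC,\Omega^nQ)$.

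\textbf{Step 2 (the dichotomy).} Applying Step 1 twice gives $\rL_{n+4}(R)\cong\rL_n(R)$. This handles the case where $2$ is invertible. In the setting of the paper, $R$ is an algebra over $\Z/p$, so $2$ fails to be invertible exactly when $p=2$. In that case $-1=1$ in $R$, so the twisted and untwisted actions coincide and $Q^{-1}=Q^{+1}$. Step 1 then directly gives $\rL_{n+2}(R)\cong\rL_n(R)$, which is two-periodicity. One can check this against the table for $\rL_n(\Z/2)$: the groups alternate between $0$ and $\Z/2$. When $2$ is invertible, I expect no further collapse, since $\rL_0$ (symmetric forms) and $\rL_2$ (skew forms) genuinely differ; an example is $\Witt(\Z/p)$ versus $0$. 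So four is the sharp period in that case.

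\textbf{Main obstacle.} The delicate point is the sign in Step 1. I need to track that the $\Z/2$-action on $\Sigma X\otimes\Sigma X$ acquires exactly the Koszul sign, and that this sign survives passage to homotopy orbits. In particular, I must confirm that it does not get absorbed into the choice between orbits and fixed points. This matters especially in characteristic $2$, where the norm map $B_{h\Z/2}\to B^{h\Z/2}$ is not an equivalence. I would verify the sign first on $X=R$. In that case $Q^{\epsilon}(\Sigma R)$ is computed from $\Sigma^{-2}R$ with the sign-twisted action, which matches the computation already used in Lemmas \ref{lem:pullbackq} and \ref{lem:surgeryother}. I would then extend to all of $D^{\perf}(R)$ by exactness of the fiber of $Q\to B^{h\Z/2}$.
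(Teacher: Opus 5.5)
Your proposal is correct and is essentially the paper's argument. Both shift by $\Sigma$, observe that swapping the suspension coordinates contributes a sign, note that this sign squares away, and note that it is already trivial when $-1=1$. Your $Q^{\pm1}$ bookkeeping makes explicit the skew-quadratic intermediate step that the paper's proof skips.

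The obstacle you flag is not a real one. The identification $B(\Sigma X,\Sigma X)\simeq\Sigma^{-2}B^{-}(X,X)$ is an equivalence of spectra with $\Z/2$-action, natural in $X$. It therefore passes through homotopy orbits (or fixed points) directly, and no reduction to $X=R$ is needed.

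Your restriction to $\Z/p$-algebras is essential, not cosmetic. Two-periodicity needs $2=0$ in $R$, not merely that $2$ is non-invertible: for $R=\Z$ one has $\rL_0(\Z)=\Z\neq\Z/2=\rL_2(\Z)$. By contrast, four-periodicity holds for every ring with involution by your Step 1.

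Your aside that four is the sharp period when $2$ is invertible is not needed, and it fails in general. For $\mathbb{C}$ with complex conjugation, multiplying by $i$ turns skew-hermitian forms into hermitian ones, so the period there is two.
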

\begin{proof}
    Since the category $D^\perf(R)$ used to define the above $\rL$-groups is a stable $\infty$-category, the functor $\Sigma$  is an equivalence on  $D^\perf(R)$. The bilinear form $B(M,N) = \rMor_{R\text{-}R}(M\otimes  N,R)$ and satisfies $B(\Sigma M,\Sigma N)\simeq \Sigma^{-2} B(M,N)$. Now we consider homotopy $\Z/2$ coinvariants of $\Sigma^{-2}B(\Sigma M,\Sigma N)$, which is needed to establish the quadratic functor on $D^\perf(R)$. The $\Z/2$-action  on $\Sigma^{-2}$  is given by permuting suspension coordinates, and acts by a sign. However, the square of the action is trivial. Therefore, we have an equivalence $B(\Sigma^2 M, \Sigma^2 N)\simeq \Sigma^{-4} B(M,N)$, and consequently $(D^\perf(R),Q)\simeq (D^\perf(R),\Sigma^{-4}Q)$. In the case when 2 is non-invertible, the action of $\Z/2$ on $\Sigma^{-2}$ already was trivial. By the same line of reasoning, we find that $(D^\perf(R),Q)\simeq (D^\perf(R),\Sigma^{-2}Q)$.
\end{proof}


\begin{proposition}\label{prop:oddLgroups}
    The groups $\rL_{-2k-1}(\mathds{k})$ are trivial.
\end{proposition}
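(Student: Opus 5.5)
We want to show that every Poincaré object of $(D^{\perf}(\mathds{k}),\Omega^{2k+1}Q)$ is cobordant to zero. By the paper's conventions, $\Omega^{2k+1}Q=\Sigma^{-2k-1}Q$. By Proposition \ref{prop:4periodic}, the $\rL$-groups are two-periodic, or four-periodic when $2$ is invertible. Either way we may shift the index, so it suffices to treat a Poincaré object $(M,q)$ of $(D^{\perf}(\mathds{k}),\Sigma^{2j+1}Q)$ for a convenient $j\geq 0$.

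\textbf{Step 1: reduce to two adjacent degrees.} Apply Proposition \ref{prop:evenodd} to replace $(M,q)$ by a cobordant object $(N,q')$ with $N$ $j$-connective. As in Remark \ref{rem:twolayers}, the duality $N\simeq\Sigma^{2j+1}\bbD(N)$ forces $N$ to have projective height $\leq j+1$. Over a field every perfect complex splits as a sum of shifts of its homology, so
\[
N\simeq \Sigma^j V\oplus \Sigma^{j+1}W
\]
for finite-dimensional vector spaces $V$ and $W$. The Poincaré equivalence then identifies $W\cong V^*$, possibly with the involution twist.

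\textbf{Step 2: perform one more surgery.} Choose a basis vector $\nu\in\pi_j(N)=V$ and consider the map $\sigma:\Sigma^j\mathds{k}\to N$. The proof of Lemma \ref{lem:surgeryother} shows that $\Sigma^{2j+1}Q(\Sigma^j\mathds{k})=(\Sigma\,\mathds{k})_{h\Z/2}$ is connected. Hence $q'|_{\Sigma^j\mathds{k}}$ admits a nullhomotopy, and Proposition \ref{prop:surgery} applies. It produces a cobordant Poincaré object with $\pi_j$ replaced by $V/\nu$.

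The main point is to check what happens in degree $j+1$. Here we use the long exact sequences from the proof of Lemma \ref{lem:surgeryother}:
\begin{itemize}
\item The map $\beta:N\to\Sigma^{j+1}\mathds{k}$ is dual to $\nu$. By Poincaré duality it is surjective on $\pi_{j+1}$, since it pairs nontrivially with $\nu$ under $W\cong V^*$.
\item Therefore $\Fib(\beta)$ loses one dimension in degree $j+1$, and gains nothing in degree $j$ because the map is onto.
\item Passing to $N_\sigma$ kills $\nu$ in degree $j$.
\end{itemize}
The net effect is that $\dim V$ drops by one and $\dim W$ drops by one. Iterating over a basis, we reach $N\simeq 0$, so $[(M,q)]=0$.

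\textbf{Expected obstacle.} The delicate part is the degree-$(j+1)$ bookkeeping: we must verify that $\pi_{j+1}(\Fib(\beta))\to\pi_{j+1}(N)$ really has image equal to the kernel of the pairing with $\nu$. This relies on the nondegeneracy $W\cong V^*$ coming from $q'$, and on the field assumption so that nonzero linear functionals are surjective. Over a general ring this step fails, which is exactly why odd $\rL$-groups need not vanish. The two-periodic case with $2$ non-invertible needs the same argument, since the connectivity of $(\Sigma\,\mathds{k})_{h\Z/2}$ holds in all characteristics.
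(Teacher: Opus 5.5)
Your proof is correct and follows essentially the same route as the paper. Surgery below the middle degree (Proposition \ref{prop:evenodd}) reduces to an object concentrated in degrees $j,j+1$ with $\pi_{j+1}\cong\pi_j^\vee$, and the connectivity of $\Sigma^{2j+1}Q(\Sigma^j\mathds{k})$ lets you kill the bottom degree; the paper does this last step in one move by exhibiting $\Sigma^k\pi_k(M)\to M$ directly as a Lagrangian, whereas your rank-one surgeries spell out the degree-$(j+1)$ bookkeeping that the paper only asserts. Also note that $\rL_{-2k-1}$ is $\rL_0$ of $\Omega^{-2k-1}Q=\Sigma^{2k+1}Q$ rather than of $\Omega^{2k+1}Q$, so for $k\geq 0$ your appeal to periodicity is unnecessary.
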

\begin{proof}
By Lemma \ref{lem:surgeryother}  we can reduce to the case where $(M,q)$ is a Poincaré object of $(D^{\perf}(R),\Sigma^{2k+1}Q)$, and $M$ is $k$-connective. We would like to exhibit a Lagrangian for $(M,q)$. Notice that nondegeneracy of $q$ exhibits $\pi_i(M)$ as a $\mathds{k}$-linear dual of $\pi_{2k+1-i}(M)$ for each $i$. But the homotopy groups $\pi_i(M)=0$ unless $i=k$ or $i=k+1$, by Remark \ref{rem:twolayers}. Hence, we have a duality $\pi_{k+1}(M) = \pi_k(M)^\vee$. Let $L=\pi_k(M)$ and $\Sigma^k  L$ denote the module given by a single copy of $L$ placed in degree $k$. Since $\mathds{k}$ is a field, $L$ is a free $\mathds{k}$-module and therefore we may construct a map 
\begin{equation}
    \sigma : \Sigma^k L \rightarrow M
\end{equation}
and we can check there is an equivalence $\Sigma^k L \simeq \Sigma^{2k+1}\bbD(M/\Sigma^k L)$. Furthermore, $\Sigma^{2k+1}Q(\Sigma^k L) \simeq \Sigma^1 ( L \otimes_{\mathds{k}} L)$ is a connected space. Therefore $q|_{\Sigma^k}$ is nullhomotopic and any choice of nullhomotopy exhibits $\Sigma^k L$ as a Lagrangian for $M$.
\end{proof}

\begin{proposition}\label{prop:2mod4}
    The groups $\rL_{-4k-2}(\mathds k)$ are trivial.
\end{proposition}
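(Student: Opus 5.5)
By definition $\rL_{-4k-2}(\mathds k)=\rL_0(D^{\perf}(\mathds k),\Sigma^{4k+2}Q)$. The plan is to reduce every Poincaré object to one concentrated in the middle degree, and then exhibit a Lagrangian directly by linear algebra. Throughout I assume $2$ is invertible in $\mathds k$. For $\mathds k=\Z/2$ the same reduction leads instead to the Arf invariant, which matches the value $\rL_{4j+2}(\Z/2)=\Z/2$ in the table, so the statement should be read as a statement about fields of characteristic $\neq 2$.

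\textbf{Step 1 (middle-degree reduction).} Let $(M,q)$ be a Poincaré object of $(D^{\perf}(\mathds k),\Sigma^{4k+2}Q)$. Writing $4k+2=2(2k+1)$, Corollary~\ref{cor:singledegree} gives a cobordism from $(M,q)$ to $(N,q')$ with $N$ a direct summand of $\Sigma^{2k+1}\mathds k^n$. Since $\mathds k$ is a field, $N\simeq\Sigma^{2k+1}V$ for a finite-dimensional vector space $V$.

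\textbf{Step 2 (identify the form).} Next compute $\pi_0$ of $\Sigma^{4k+2}Q(\Sigma^{2k+1}V)$.
\begin{itemize}
\item Using $B(\Sigma M,\Sigma N)\simeq\Sigma^{-2}B(M,N)$, this spectrum is $(V^\vee\otimes V^\vee)_{h\Z/2}$.
\item The swap action on $V^\vee\otimes V^\vee$ is twisted by the sign $(-1)^{(2k+1)^2}=-1$ coming from permuting the odd suspension coordinates, as in the proof of Proposition~\ref{prop:4periodic}.
\item Hence $\pi_0$ is the group of $(-1)$-quadratic forms on $V$.
\item Because $2$ is invertible, the norm map identifies these with skew-symmetric bilinear forms on $V$.
\end{itemize}
The Poincaré condition says the induced map $V\to V^\vee$ is an isomorphism, so $q'$ amounts to a symplectic form $\omega$ on $V$.

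\textbf{Step 3 (Lagrangian).} By the standard symplectic basis theorem, $\dim V=2r$ and there is a subspace $L\subset V$ with $L=L^{\perp_\omega}$ and $\dim L=r$. Take $\sigma\colon\Sigma^{2k+1}L\to\Sigma^{2k+1}V$ to be the inclusion. To verify Definition~\ref{def:lagrange}:
\begin{itemize}
\item \emph{Nullhomotopy.} The restriction $q'|_{\Sigma^{2k+1}L}$ is the class of $\omega|_L=0$ in $\pi_0\bigl((L^\vee\otimes L^\vee)_{h\Z/2}\bigr)$. A null class in $\pi_0$ means there is a path to the basepoint in $\Omega^\infty$, and any such path serves as the required nullhomotopy.
\item \emph{Equivalence.} The cofiber is $\Sigma^{2k+1}(V/L)$, and the map $\Sigma^{2k+1}L\to\Sigma^{4k+2}\bbD(\Sigma^{2k+1}(V/L))=\Sigma^{2k+1}(V/L)^\vee$ is induced by $\omega$. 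It is an isomorphism because $L=L^\perp$ and $\omega$ is nondegenerate.
\end{itemize}
So $(N,q')$ is null-cobordant, and therefore $\rL_{-4k-2}(\mathds k)=0$.

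The main obstacle is Step 2: getting the sign of the $\Z/2$-action exactly right and identifying $\pi_0$ of the homotopy coinvariants with skew forms. This is also precisely where the argument breaks down in characteristic $2$, since there the coinvariants retain the quadratic refinement and the Arf invariant obstructs the existence of a Lagrangian.
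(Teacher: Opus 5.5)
Your proposal is correct and follows the paper's proof essentially step for step: reduce via Corollary~\ref{cor:singledegree} to $\Sigma^{2k+1}V$, use the sign from permuting the odd suspensions to see that $q'$ is a nondegenerate skew-symmetric form, and take a half-dimensional Lagrangian. Your explicit restriction to characteristic $\neq 2$ is a genuine improvement that the paper's proof omits. In general $\pi_0$ of the coinvariants is the group of $(-1)$-quadratic forms, which agrees with skew forms only when $2$ is invertible. For $\mathds k=\Z/2$ the Arf invariant survives, which matches the value $\rL_{4j+2}(\Z/2)=\Z/2$ in the paper's own table.
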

\begin{proof}
    Let $(M,q)$ be a Poincaré object of $(D^{\perf}(R),\Sigma^{4k+2}Q)$. By Corollary \ref{cor:singledegree} we will take $M = \Sigma^{2k+1} V$ where $V$ is a vector space over $\mathds{k}$. The space $\Sigma^{4k+2}Q(\Sigma^{2k+1} V) = \Sigma^{4k+2}(\Sigma^{-4k-2} \rMor(V\otimes V,k))_{h\Z/2}$, where $\rMor_{\mathds k}(V\otimes V,\mathds k)$ is the $\mathds k$-vector space of symmetric bilinear forms from  $V \otimes V \to \mathds k$. Since the $\Z/2$-action is nontrivial on the suspensions, $\Sigma^{4k+2}Q(\Sigma^{2k+1} V)$ has the interpretation as the vector space of skew-symmetric bilinear forms from  $V \otimes V \to \mathds k$. Moreover, since $(M,q)$ was a Poincaré object, the corresponding skew symmetric bilinear forms should be nondegenerate. We can always find a Lagrangian reduction to a subspace $L \subseteq V$ of half dimensions so that a nondegenerate skew-symmetric bilinear form $b|_{L\times L}$ is trivial. Therefore, $L$ is a Lagrangian of $V$.
\end{proof}

The last $\rL$-group to contend with are the ones in degree $4k$. These are identified with the Witt group 
$\Witt(\Z/p)$. By definition, elements of this group are nondegenerate quadratic spaces over 
$\Z/p$, considered up to the equivalence relation generated by the addition of hyperbolic planes. In other words, two quadratic spaces represent the same class if their direct sum becomes isometric after adjoining a suitable number of hyperbolic summands. We now describe this identification in more detail, and we give a more specific account for when the field is $\Z/2$ at the end.

\begin{definition}
    A quadratic space over a field $\mathds k$ is a pair $(V,q)$ of a finite dimensional vector space $V$ over $\mathds k$ and $q:V\to \mathds k$ is a quadratic form. We say that $q$ is \textit{nondegenerate} if the associated bilinear form is nondegenerate.  
\end{definition}

\begin{definition}
    A \textit{hyperbolic plane} is a quadratic space $H = (\mathds{k}^2,q)$ over $\mathds{k}$ with quadratic form $q(a,b) = ab$.
\end{definition}

\begin{rem}
    Let $(V,q)$ be a nondegenerate quadratic space over $\mathds k$, and let $V_0\subseteq V$ be a subspace such that $(V_0,q|_{V_0})=H$. Then there is a decomposition $(V,q)\simeq H \oplus (V_1,q|_{V_1})$ where $V_1$ is the orthogonal complement of $V_0$.
\end{rem}

\begin{definition}
    Two nondegenerate quadratic spaces $(V,q)$ and $(W,q')$ are stably equivalent if $(V,q)\oplus H^a \cong (W,q')\oplus H^b$ for some  $a,b\in \Z$. The \textit{Witt space} $\Witt(\mathds k)$ is the collection of equivalence classes of stably equivalent  nondegenerate quadratic spaces  over $\mathds k$.
\end{definition}

\begin{proposition}\label{prop:0mod4}
    The map $\varphi:\Witt(\mathds k) \to \rL_{0}(\mathds k)$ is an isomorphism of abelian groups.
\end{proposition}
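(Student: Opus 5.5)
The plan is to build $\varphi$ directly, then show it is a well-defined homomorphism, surjective, and injective, using the surgery machinery of the previous subsections. Regard a nondegenerate quadratic space $(V,q)$ as the complex $V$ concentrated in degree $0$ of $D^{\perf}(\mathds k)$. The quadratic functor is $Q(X)=B(X,X)_{h\Z/2}$, and for $X=V$ in degree zero we have
\begin{equation*}
\pi_0 Q(V)=\big(\Hom_{\mathds k}(V\otimes V,\mathds k)\big)_{\Z/2}.
\end{equation*}
These are the coinvariants of bilinear forms under transposition, which is exactly the group of quadratic forms on $V$: the class of a bilinear form $b$ maps to $v\mapsto b(v,v)$. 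This identification holds even when $\mathrm{char}\,\mathds k=2$, because we use coinvariants rather than invariants. The map to $B(V,V)^{h\Z/2}$ sends $q$ to its polar form, so nondegeneracy of $q$ is exactly the statement that $V\to\bbD(V)$ is an equivalence. Hence $(V,q)$ is a Poincaré object, and we set $\varphi(V,q)=[(V,q)]$. Direct sums go to direct sums, so $\varphi$ is additive on the monoid.

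To see that $\varphi$ descends to the Witt group, we show that the hyperbolic plane $H$ is null-cobordant. Take $L=\mathds k\cdot(1,0)\subset \mathds k^2$. Then $q|_L=0$, and $\mathds k^2/L\cong \mathds k$ is dual to $L$ via the polar form, so $L$ is a Lagrangian in the sense of Definition \ref{def:lagrange}. Consequently $[V\oplus H^a]=[V]$, and $\varphi$ is a well-defined homomorphism.

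For surjectivity, take an arbitrary Poincaré object of $(D^{\perf}(\mathds k),Q)$. By Corollary \ref{cor:singledegree} with $k=0$, it is cobordant to one concentrated in degree $0$, namely a finite-dimensional $V$ with a point of $\Omega^\infty Q(V)$ that induces an equivalence. By the identification above, this is exactly a nondegenerate quadratic space.

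Injectivity is the step I expect to be the main obstacle. Suppose $\varphi(V,q)=0$. Since the inverse of $(V,q)$ is $(V,-q)$, the task reduces to showing that a degree-$0$ object admitting a Lagrangian (null-cobordism) is stably hyperbolic. A null-cobordism is a map $r\colon L\to V$ in $D^{\perf}(\mathds k)$ with $L\simeq\bbD(V/L)$ together with a nullhomotopy of $q|_L$. I would first try to surger $L$ down to a complex concentrated in degree $0$, arguing as in Proposition \ref{prop:evenodd}: kill homotopy below degree $0$, then use the duality $L\simeq \bbD(\mathrm{cofib}\, r)$ to control the homotopy above degree $0$. Once $L$ is concentrated in degree $0$, the cofiber sequence $L\to V\to \bbD L$ forces $r$ to be injective on $\pi_0$ with $\dim L=\tfrac12\dim V$. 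The nullhomotopy of $q|_L$ in $\pi_0 Q(L)$ (quadratic forms) says that $q$ vanishes on $L$, so $L$ is a genuine Lagrangian subspace. The classical argument then finishes: choose a complement dual to $L$ and adjust it to be totally isotropic for $q$ (possible since $q$ is quadratic), giving $(V,q)\cong H^{\dim L}$. The delicate point is the reduction of $L$ to degree $0$, in particular checking that the cross terms in the homotopy groups of $L$ cancel. If that fails, I would instead use a two-term $L$ and compute $\pi_0$ of the relevant fiber directly, comparing it with the classical Witt relation.
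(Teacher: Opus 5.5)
Your definition of $\varphi$ (including identifying $\pi_0Q(V)$ with quadratic forms via coinvariants), the null-cobordism of $H$, surjectivity via Corollary~\ref{cor:singledegree}, and the classical endgame are all fine. Your explicit well-definedness check is something the paper leaves implicit.

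The gap is the step you flag yourself: reducing the Lagrangian $r\colon L\to V$ to one concentrated in degree $0$. You propose to do this ``as in Proposition~\ref{prop:evenodd}''. But that proposition, and Lemmas~\ref{lem:surgeryodd} and \ref{lem:surgeryother}, perform surgery on a \emph{Poincaré object}. Your $L$ is only the source of a Lagrangian. Modifying it while preserving $L\simeq\bbD(V/L)$ and the nullhomotopy of $q|_L$ is relative surgery on a pair, which nothing in the paper provides. It can be done, as in Ranicki's treatment over general rings, but you have not done it. The ``cross terms'' you mention are a real obstruction to the naive version: $\pi_0(r)$ has kernel $\pi_{-1}(L)^\vee$, so in general truncating $L$ to $\pi_0(L)$ in degree $0$ does not give a Lagrangian. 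As it stands, injectivity is not proved.

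The missing idea is that over a field you need not modify $L$ at all. The paper works directly with $I=\mathrm{im}(\pi_0 L\to V)$. Since $\pi_i(\bbD L)\cong\pi_{-i}(L)^\vee$ and $V$ is discrete, the cofiber sequence $L\to V\to\bbD L$ gives the exact sequence
\[
0\to\pi_{-1}(L)^\vee\to\pi_0(L)\to V\to\pi_0(L)^\vee\to\pi_{-1}(L)\to 0,
\]
so $\dim I=\dim\pi_0(L)-\dim\pi_{-1}(L)=\tfrac12\dim V$. Next, let $s\colon\pi_0(L)\to L$ be the inclusion of $\pi_0(L)$ placed in degree $0$, which exists because $L$ splits over a field. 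The nullhomotopy of $r^*q$ pulls back along $s$ to a nullhomotopy of $(rs)^*q$. By your identification of $\pi_0Q$ on discrete complexes with quadratic forms, this says $q\circ\pi_0(r)=0$, i.e.\ $q|_I=0$. Hence $I\subseteq I^\perp$ and the two have equal dimension, so $I$ is a Lagrangian subspace. Your classical argument then gives $(V,q)\cong H^{\dim I}$.
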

\begin{proof}
Consider a Poincaré object $(V,q)$ of $(D^\perf(R),Q)$, where we  view $V$ as a 
 chain complex over $\mathds k$ that is concentrated in degree 0. This allows us to view a nondegenerate quadratic space $(V,q)$ as a Poincaré object and  gives a map $\varpi: \Witt(\mathds k)\to \rL_{0}(\mathds k)$, that is surjective due to surgery. We now show that $\varphi$ is injective. Suppose that there exists a Lagrangian in $V$ given by $L$. This gives a self-dual fiber sequence of spectra 
 \begin{equation}
    \begin{tikzcd}
        L \arrow[r,"\alpha"]& V \arrow[r]& \coFib(\alpha)\,,
    \end{tikzcd}
 \end{equation}
 with the self duality of $V$ implemented by $q$. This leads to the following self-dual short exact sequence of vector spaces
\begin{equation}
    0 \to \hom(\mathrm{Im} (\pi_0L), V) \to V \to \hom(\mathrm{Im} V , \pi_0 (\coFib(\alpha)))\,.
\end{equation}
 The self-duality implies that the dimension of $V$ is twice as large as the dimensions of $\hom(\mathrm{Im} (\pi_0L), V)$ and $\hom(\mathrm{Im} V , \pi_0 (\coFib(\alpha)))$. As the map  $\hom(\mathrm{Im} (\pi_0L), V)\to V$ factors through $L$, we see that $q|_{\hom(\mathrm{Im} (\pi_0L), V)}=0$. The self-duality then implies that $V$ therefore decomposes into direct sum of hyperplanes, and is trivial in $\Witt(\mathds k)$.
\end{proof}

We now consider the case where $\mathds k = \Z/2$. Suppose $(V,q)$ is a nondegenerate quadratic space over $\Z/2$, then $V$ must be even dimensional because the bilinear form $b$ is also skew-symmetric in this case. Suppose that $V$ is anisotropic, then $q(x) = 1$ for all $x \in V$. Furthermore for $x,y \in V$ we have 
\begin{equation}
    b(x,y) = q(x+y)-q(x)-q(y)=1\,.
\end{equation}
But if $x,y,z\in V$ are linearly independent then $b(x,y+z) = 1$ but also $b(x,y+z)=b(x,y)+b(x,z)=0$. Therefore any anisotropic quadratic space must be two dimensional. There is such a space $(V,q)$ where $V = \Z/2 \oplus \Z/2$ and $q(a,b) = a^2+ab+b^2$, and this determines a nontrivial element in $\Witt(\Z/2) = \Z/2$. To any nondegenerate quadratic space $(V,q)$ over $\Z/2$, the associated element in $\Witt(\Z/2) = \Z/2$ is called the Arf-invariant.



\section{The Classification from Algebraic $\rL$-theory}\label{section:Ltheory}

The goal of this section is precisely to understand the image of the map  $\Mod^2(\TS^{n-2})^\times \to (\TS^n)^\times$ in the  case where $\TS$ is the target category for Pauli stabilizer codes, and thus $(\TS)^\times$ is the spectrum of Clifford QCA. 

Our main result is to show that the mobile excitations of a fully mobile Pauli stabilizer code give a Poincaré object in the category of perfect chain complexes over $R =\Z/p[x_1^{\pm}, \dots, x_{n-2}^{\pm}]$ with quadratic functor. For this reason, they can be classified by surgery. Identifying these properties for stabilizer codes may be viewed as a natural analog to the classical conditions appearing in the surgery theory of manifolds. These conditions will manifest again in \S\ref{section:framedTO}, where the goal is to apply surgery to TQFTs rather than lattice models.

\subsection{Poincar\'e duality of Stabilizer Codes}\label{subsection:homological}
For the reader's convenience, we give a quick summary of the definitions in \S\ref{subsection:introtopauli}.
 A stabilizer code consists of a free $R$-modules $P$ equipped with a perfect sesquilinear form $\lambda: P \xrightarrow{\sim} \Hom_R(\overline{P}, R): a \mapsto \Omega(\cdot, a)$ and an $R$-submodule $L\subset P$ satisfying the Lagrangian property $L^\perp=L.$ 
Recall the construction of the following chain complex
\begin{equation*}
F_\bullet :    \cdots \longrightarrow F_2 \longrightarrow F_1 \xrightarrow{\sigma} P \xrightarrow{\delta} F_1^*\longrightarrow F_2^* \longrightarrow \cdots,
\end{equation*}
in Equation \eqref{eq:dualF}.


\begin{definition}
    We say a stabilizer code is fully mobile if  $\Ext^i_R(\overline{P/L}, R)$ has Krull dimension less or equal to zero for each $i$. Equivalently, $\Ext^i_R(\overline{P/L}, R)$ has finite cardinality for each $i\geq 1$.
\end{definition}

Let $\hat{R}=\Z/p[[x_1^{\pm}, \dots, x_{n-2}^{\pm}]]$ the $R$-module\footnote{It is not a ring as multiplication is not well-defined.} of bi-infinite formal series. It is an injective cogenerator in the category of $R$-modules. An \( R \)-module \( E \) is called an injective cogenerator (in the category of $R$-modules) if it is an injective $R$-module and for every nonzero \( R \)-module \( M \), there exists a nonzero homomorphism \( M \to E \).

The following result was first noted in \cite{RY} but follows from standard homological algebra.
\begin{proposition}
    \begin{equation}
    \mathrm{Tor}_{i}^R (P/L,\widehat R) \cong \Hom_R (\Ext^i_R(\overline{P/L}, R), \widehat R) \cong \Hom_{\Z}(\Ext^i_R(\overline{P/L}, R), \mathbb Z_n).
\end{equation} 
This give a perfect pairing between $\mathrm{Tor}_{i}^R (P/L,\widehat R)$ and $\Ext^i_R(\overline{P/L}, R)$ as abelian groups. 
\end{proposition}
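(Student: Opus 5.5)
The plan is to derive the isomorphism from the Ext-Tor duality for the finitely generated module $M:=P/L$ over the regular Laurent polynomial ring $R$, and then transport it to $\widehat R$ by exploiting the fact that $\widehat R$ is the full $\Z/p$-linear dual of $R$.

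The first step is to identify $\widehat R$ with $\Hom_{\Z/p}(R,\Z/p)$. The identification sends a bi-infinite series $\sum c_\lambda x^\lambda$ to the functional $x^\mu\mapsto c_{-\mu}$, or $c_\mu$ after applying the involution. Under this identification the $R$-module structure on $\widehat R$ is the contragredient one. Injectivity of $\widehat R$ then follows from adjunction: $\Hom_R(-,\widehat R)\cong \Hom_{\Z/p}(-,\Z/p)$, and the right-hand side is exact because $\Z/p$ is a field. It is also a cogenerator, since every nonzero module has a nonzero $\Z/p$-linear functional. The same adjunction gives the second isomorphism in the statement, $\Hom_R(E,\widehat R)\cong\Hom_{\Z/p}(E,\Z/p)=\Hom_{\Z}(E,\Z/p)$, where $\Z/p$ plays the role of the $\mathbb Z_n$ written in the statement.

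The second step computes the Tor groups. Take the finite free resolution $F_\bullet\to M$ already used to build \eqref{eq:dualF}. Then $\mathrm{Tor}^R_i(M,\widehat R)=H_i(F_\bullet\otimes_R\widehat R)$. For a finitely generated free module $F$ there is a natural isomorphism
\[
F\otimes_R\widehat R\;\cong\;\Hom_R(\Hom_R(\overline F,R),\widehat R),
\]
which one checks on $F=R$ and extends by additivity. This turns the complex $F_\bullet\otimes_R\widehat R$ into $\Hom_R(F_\bullet^*,\widehat R)$, where $F_\bullet^*$ is the dual cochain complex whose cohomology is $\Ext^i_R(\overline M,R)$. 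Because $\widehat R$ is injective, $\Hom_R(-,\widehat R)$ is exact and therefore commutes with taking (co)homology, so
\[
H_i\bigl(\Hom_R(F_\bullet^*,\widehat R)\bigr)\cong \Hom_R\bigl(H^i(F_\bullet^*),\widehat R\bigr)=\Hom_R\bigl(\Ext^i_R(\overline{M},R),\widehat R\bigr).
\]
This is the first isomorphism in the statement. The involution bookkeeping (the bars) must be tracked consistently, and I expect this to be the main place where care is needed: one must confirm that the dual of $\overline F$ and the contragredient action on $\widehat R$ match, so that no extra bar appears.

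The third step produces the perfect pairing. By the mobility hypothesis, each $E_i:=\Ext^i_R(\overline M,R)$ with $i\ge1$ is finite. For a finite $\Z/p$-vector space $E$, the evaluation pairing $\Hom_{\Z}(E,\Z/p)\times E\to\Z/p$ is perfect. This gives the perfect pairing between $\mathrm{Tor}_i^R(M,\widehat R)$ and $E_i$ as abelian groups. For $i=0$, and in general without mobility, one still obtains an injection of $E$ into its double dual but not perfectness. The statement should therefore be read for the finite, mobile $\Ext$ groups, which is the case needed later to assemble the Poincaré duality of $F_\bullet$.
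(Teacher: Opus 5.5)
The paper gives no proof of this proposition. It attributes the statement to \cite{RY} and to ``standard homological algebra,'' and asserts without proof that $\widehat R$ is an injective cogenerator. Your argument correctly supplies that homological algebra:
\begin{itemize}
\item Identifying $\widehat R$ with the coinduced module $\Hom_{\Z/p}(R,\Z/p)$ gives injectivity, the cogenerator property and the second isomorphism at once.
\item Exactness of $\Hom_R(-,\widehat R)$ then moves homology through the dual of the free resolution.
\item Finiteness of $\Ext^i$ for $i\ge 1$, which is the mobility hypothesis, makes the evaluation pairing perfect.
\end{itemize}
Reading $\mathbb Z_n$ as $\Z/p$ is right in this setting. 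The same proof also works over $\Z/n$, since $\Z/n$ is self-injective.

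On the point you flagged: an extra bar does appear, and no choice of conventions removes it. The canonical $R$-linear natural isomorphism for finitely generated free $F$ is $\overline F\otimes_R\widehat R\cong\Hom_R(\Hom_R(\overline F,R),\widehat R)$, $m\otimes c\mapsto(\phi\mapsto\phi(m)c)$. To land on $F\otimes_R\widehat R$ you must compose with $m\otimes c\mapsto m\otimes\bar c$, which is only conjugate-linear. In coordinates, a differential given by a matrix $A$ on one side becomes $\overline A$ on the other.

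No $R$-linear natural choice exists. Naturality with respect to multiplication by $r$ on $F=R$ would force $r-\bar r$ to annihilate $\widehat R$, which fails for $r=x_1$. So your complex-level isomorphism, and hence the first isomorphism in the statement, holds as abelian groups but only semilinearly over $R$. The bar-free $R$-linear form is $\mathrm{Tor}^R_i(M,\widehat R)\cong\Hom_R(\Ext^i_R(M,R),\widehat R)$.

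For a concrete check, take $R=\Z/5[x^{\pm1}]$ and the module $M=R/(x-2)$. Then $\mathrm{Tor}^R_1(M,\widehat R)\cong R/(x-2)$, but $\Hom_R(\Ext^1_R(\overline M,R),\widehat R)\cong R/(x-3)$. This is harmless for the proposition, which only claims a pairing of abelian groups. You should still state the isomorphism as conjugate-linear rather than expect the bars to cancel.

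Your caveat about $i=0$ is also correct. $\Ext^0_R(\overline{P/L},R)$ is identified with $L^\perp=L$, which is infinite, so perfectness genuinely concerns the mobile groups with $i\ge 1$.
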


\begin{theorem}\label{thm:perfectpairing}
    For a fully mobile stabilizer code, there is a natural isomorphism between $\mathrm{Tor}_{i}^R (P/L,\widehat R)$ and $\Ext^{n-2-i}_R(\overline{P/L}, R)$ for $1\leq i\leq n-3$. In other words, there is perfect pairing between $\Ext^i_R(\overline{P/L}, R)$ and $\Ext^{n-2-i}_R(\overline{P/L}, R).$
\end{theorem}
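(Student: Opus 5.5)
The plan is to combine local (Matlis-type) duality for the regular ring $R$ with the self-duality of the complex $F_\bullet$ from \eqref{eq:dualF}, which comes from the Lagrangian condition. Throughout, write $d=n-2$ for the Krull dimension of $R$ and $M=P/L$.

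First I will establish a vanishing-plus-duality fact. Since $R$ is regular of dimension $d$ (in particular Gorenstein), a module $N$ of finite length satisfies $\Ext^j_R(N,R)=0$ for $j\neq d$. Moreover $\Ext^d_R(N,R)$ is the Matlis dual of $N$, which here is $\Hom_{\Z}(N,\Z/p)$ up to the involution. By the preceding Proposition, this Matlis dual of $\Ext^i_R(\overline M,R)$ is exactly $\mathrm{Tor}_i^R(M,\widehat R)$. So the theorem reduces to
\[
\Ext^d_R\big(\Ext^i_R(\overline M,R),R\big)\;\cong\;\Ext^{d-i}_R(\overline M,R),\qquad 1\le i\le d-1,
\]
with the appropriate bars.

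Second, I will encode $C=\mathrm{RHom}_R(\overline M,R)$. Its $H^0$ is $\Hom(\overline M,R)\cong L^\perp = L$, by the Lagrangian property $L=L^\perp$. For $i\ge1$, $H^i(C)=\Ext^i(\overline M,R)$, and these have finite length by full mobility. Because $R$ is regular, $M$ is perfect, so biduality gives $\mathrm{RHom}_R(C,R)\simeq M$ up to the involution.

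I then run the hypercohomology spectral sequence
\[
E_2^{p,q}=\Ext^p_R\big(H^{-q}(C),R\big)\Rightarrow H^{p+q}(\mathrm{RHom}(C,R)).
\]
By the vanishing fact, every row with $-q=i\ge1$ is concentrated in column $p=d$. The remaining row, $H^0(C)=L$, has $\Ext^p(L,R)\cong \Ext^{p+1}(\overline M,R)$ for $p\ge1$, using $0\to L\to P\to M\to 0$ with $P$ free. The abutment is concentrated in degree $0$, where it is $M$.

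The point is to make the only possible differentials, between the $L$-row and the column $p=d$, into isomorphisms. I expect this to realize, for $1\le i\le d-1$,
\[
\Ext^d\big(\Ext^i(\overline M,R),R\big)\xrightarrow{\ \sim\ }\Ext^{d-i+1}(L,R)\cong \Ext^{d-i}(\overline M,R)
\]
after reindexing. Equivalently, and more cleanly, I will phrase the same argument as a quasi-isomorphism between $F_\bullet$ and its dual shifted by $d$. The form $\lambda$ identifies the middle term $P$ with $P^*$, and the complex is visibly symmetric about $P$. The cohomology of the dual complex is then computed from the finite-length cohomology of $F_\bullet$ by the vanishing fact, landing in the mirror degree.

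The main obstacle is this degeneration and index bookkeeping. One must check that the connecting maps really are isomorphisms and not merely nonzero. One must also check that the range $1\le i\le d-1=n-3$ is exactly the range where both sides avoid the $H^0$-term $L$ and the top degree. Naturality and compatibility with the involution also have to be tracked.

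I would first try the direct approach. This means computing the dual of the truncation $\tau_{\ge1}C$, which has finite-length cohomology, by induction on its length using the fiber sequences of the truncation. At each step the vanishing fact isolates a single degree, and the Lagrangian identification $L\cong \Hom(\overline M,R)$ glues the two halves.
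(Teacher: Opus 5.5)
Your proposal is correct, but it proves the statement by dualizing, whereas the paper tensors, so it is a genuinely different route.

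The paper never forms $F_\bullet^\vee$ and never uses vanishing of $\Ext^j_R(N,R)$ for $j\neq n-2$. It takes the \v{C}ech-type flat resolution $T_\bullet\to\widehat R$ of Lemma~\ref{lem: flat}, which has $T_{n-2}=R$ and $T_k\otimes_R N=0$ for finite-length $N$ and $k<n-2$. A staircase chase in $F_\bullet\otimes_R T_\bullet$ then identifies $H_j(F_\bullet\otimes_R\widehat R)$ with $H_{j-(n-2)}(F_\bullet)$. The left half of $F_\bullet$ resolves $P/L$ and the right half computes $\Ext^\ast_R(\overline{P/L},R)$, so this gives $\mathrm{Tor}_i^R(P/L,\widehat R)\cong\Ext^{n-2-i}_R(\overline{P/L},R)$ directly. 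The preceding Proposition is used only afterwards, to rephrase this as a pairing.

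Your self-duality argument instead combines $F_\bullet\cong F_\bullet^\vee$ (induced by $\lambda$) with the collapse of the duality spectral sequence for a complex with finite-length cohomology. This gives $\Ext^m_R(\overline M,R)=H_{-m}(F_\bullet)\cong H_{-m}(F_\bullet^\vee)\cong\Ext^d_R(\Ext^{d-m}_R(\overline M,R),R)$ uniformly for $1\le m\le d-1$ (up to bars).
\begin{itemize}
\item \emph{What this buys.} It rests only on standard Gorenstein duality rather than a bespoke resolution. It yields the pairing form of the statement immediately, and it exhibits the Poincar\'e self-duality the paper later invokes.
\item \emph{What it costs.} To recover the $\mathrm{Tor}$ form you must identify $\Ext^d_R(-,R)$ with $\Hom_R(-,\widehat R)$ on finite-length modules. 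Both are Matlis duals, but this identification has to be chosen and checked to be natural. The paper's chase produces the $\mathrm{Tor}$ isomorphism without that step.
\end{itemize}

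Three corrections for the write-up.
\begin{itemize}
\item $F_\bullet$ is isomorphic to its $R$-linear dual with no shift: identity on the $F_j$ and $F_j^*$, and $\lambda$ on $P$, with a sign correction since $\lambda^*=\pm\lambda$. The shift by $d$ enters only through the vanishing fact, so ``its dual shifted by $d$'' is true only for the Matlis dual.
\item In your spectral sequence for $\mathrm{RHom}_R(C,R)$, the only possible differentials are $d_{i+1}\colon E_2^{d-i-1,0}=\Ext^{d-i-1}(L,R)\to E_2^{d,-i}$. The index is therefore $d-i-1$, not $d-i+1$.
\item At the endpoint $i=d-1$ the source is $E_2^{0,0}=\Hom(L,R)$, which sits in total degree $0$. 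There $d_d$ is only surjective, and its kernel is the image of the biduality edge map from $M$. To get $\Ext^1_R(\overline M,R)$ as the cokernel, you must check that this edge map is the inclusion $M\cong P^*/M^*\hookrightarrow L^*$, using $\lambda^{-1}(M^*)=L^\perp=L$.
\end{itemize}
So the range $1\le i\le d-1$ does not avoid the $H^0$-term at its endpoint. The self-duality argument for $F_\bullet$ sidesteps this because $F_\bullet$ is already exact at $P$, which is why that is the version to write up.
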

\begin{proof}
    By Lemma~\ref{lem: flat}, there exists a length $n-2$ flat resolution of $\widehat{R}$ as follows
    $$T_\bullet : T_{n-2}(:=R)\rightarrow T_{n-3}\rightarrow \cdots \rightarrow T_0\rightarrow \widehat R.$$ Moreover, for $0\leq k< n-2$, $T_k\otimes_R M=0$ for any $M$ with Krull dimension $0$. 
    
    Take $F_\bullet \otimes_R T_\bullet:$
    \[
\begin{tikzcd}[column sep=1.2em, row sep=large]
\cdots \arrow[r] 
  & F_2\otimes_R R \arrow[r] \arrow[d]
  & F_1\otimes_R R \arrow[r,"\sigma\otimes 1"] \arrow[d]
  & P\otimes_R R \arrow[r,"\delta\otimes 1"] \arrow[d]
  & F_1^*\otimes_R R \arrow[r] \arrow[d]
  & F_2^*\otimes_R R \arrow[r] \arrow[d]
  & \cdots \\
\cdots \arrow[r]
  & F_2\otimes_R T_{n-3} \arrow[r] \arrow[d]
  & F_1\otimes_R T_{n-3} \arrow[r,"\sigma\otimes 1"] \arrow[d]
  & P\otimes_R T_{n-3} \arrow[r,"\delta\otimes 1"] \arrow[d]
  & F_1^*\otimes_R T_{n-3} \arrow[r] \arrow[d]
  & F_2^*\otimes_R T_{n-3} \arrow[r] \arrow[d]
  & \cdots \\
& \vdots \arrow[d] & \vdots \arrow[d] & \vdots \arrow[d] & \vdots \arrow[d] & \vdots \arrow[d] \\
\cdots \arrow[r]
  & F_2\otimes_R T_0 \arrow[r] \arrow[d]
  & F_1\otimes_R T_0 \arrow[r,"\sigma\otimes 1"] \arrow[d]
  & P\otimes_R T_0 \arrow[r,"\delta\otimes 1"] \arrow[d]
  & F_1^*\otimes_R T_0 \arrow[r] \arrow[d]
  & F_2^*\otimes_R T_0 \arrow[r] \arrow[d]
  & \cdots \\
\cdots \arrow[r]
  & F_2\otimes_R \widehat R \arrow[r]
  & F_1\otimes_R \widehat R \arrow[r,"\sigma\otimes 1"]
  & P\otimes_R \widehat R \arrow[r,"\delta\otimes 1"]
  & F_1^*\otimes_R \widehat R \arrow[r]
  & F_2^*\otimes_R \widehat R \arrow[r]
  & \cdots
\end{tikzcd}
\]
The columns are exact because they are $T_\bullet$ tensored with a free module. Except the first and last ones, the rows are also exact because for $0\leq k< n-2$, $T_k\otimes_R \Ext^i_R(\overline{P/L}, R)=0$ due to full mobility. The top row is $F_\bullet$ while the bottom row has cohomology groups $\mathrm{Tor}_{i}^R (P/L,\widehat R)$ to the left of $P\otimes_R \widehat{R}$.  Some diagram chasing gives the desired isomorphism. 
\end{proof}

\begin{lemma} \label{lem: flat}
Let \(R = \mathds{k}[x_1^{\pm1}, \dots, x_m^{\pm1}]\), and let \(\widehat{R} = \mathds{k}[[x_1^{\pm1}, \dots, x_m^{\pm1}]]\) be viewed as an \(R\)-module. Then \(\widehat{R}\) admits a flat resolution of length \(m\),
\[
T_\bullet : T_m \, (:= R) \longrightarrow T_{m-1} \longrightarrow \cdots \longrightarrow T_0 \longrightarrow \widehat{R} \longrightarrow 0.
\]
If $\mathds{k}=\Z/n$, for each \(0 \le j < m\), one has \(T_j \otimes_R M = 0\) for every \(R\)-module \(M\) of Krull dimension \(0\).
\end{lemma}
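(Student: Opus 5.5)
The lemma to prove is Lemma~\ref{lem: flat}. I will build $T_\bullet$ as a \v{C}ech-type complex of localizations of $R$ and then check the vanishing statement generator by generator.

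\textbf{Step 1: one variable.} For $m=1$, write $R=\mathds k[x^{\pm1}]$. Let $S_+=\mathds k((x))$ be the Laurent series that are finite in negative powers, and $S_-=\mathds k((x^{-1}))$ those finite in positive powers. Both are localizations of completions of $R$ and contain $R$. Each element of $\widehat R$ splits as a ``left tail'' plus a ``right tail''. Hence the map $S_+\oplus S_-\to\widehat R$, $(a,b)\mapsto a-b$, is surjective, and its kernel is $S_+\cap S_-=R$. This gives a short exact sequence
\[
0\to R\to S_+\oplus S_-\to \widehat R\to 0 .
\]
It is a flat resolution of length $1$, since $\mathds k((x^{\pm}))$ is flat over $R$: completion of a Noetherian ring is flat, and localization preserves flatness.

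\textbf{Step 2: $m$ variables.} For general $m$, I take the ``product'' of the one-variable sequences. For each subset $I\subseteq\{1,\dots,m\}$ and each choice of signs on $I$, let $S_{I,\epsilon}$ be the module of series in the variables $x_i$ ($i\in I$) that are one-sided in direction $\epsilon_i$, and polynomial in the remaining variables. Equivalently, $S_{I,\epsilon}$ is a suitable localization of an iterated completion of $R$, so it is flat over $R$. Set
\[
T_j=\bigoplus_{|I|=m-j}\ \bigoplus_{\epsilon} S_{I,\epsilon},
\]
so that $T_m=R$ (corresponding to $I=\emptyset$) and $T_0$ consists of the fully one-sided pieces. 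The differentials are alternating sums of inclusions, Koszul/\v{C}ech style.

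Exactness, with $\widehat R$ as the augmentation, follows because $\widehat R\cong\widehat{\bigotimes}_i\,\mathds k[[x_i^{\pm}]]$ coordinatewise. The total complex is then a completed tensor product of the exact one-variable complexes. I will verify this by filtering by support, i.e.\ by a K\"unneth argument in which each coordinate is treated independently. This is the step I expect to be the main obstacle. Ordinary tensor products over $\mathds k$ do not commute with the completions, so I would instead argue directly on coefficient functions $\Z^m\to\mathds k$. Under this identification, $S_{I,\epsilon}$ consists of the functions supported in a region bounded in direction $\epsilon_i$ for $i\in I$ and finite in the other coordinates. Exactness then becomes an inclusion--exclusion statement about covering $\Z^m$ by half-space-type regions.

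\textbf{Step 3: vanishing.} Now let $M$ have Krull dimension $0$, so $M$ has finite length. Since the functors $T_j\otimes_R-$ are exact and commute with filtered colimits, it suffices to treat $M=R/\mathfrak m$ for a maximal ideal $\mathfrak m$. Over $\mathds k=\Z/n$, the residue field $R/\mathfrak m$ is finite. Hence each $x_i$ acts on it with finite order, which gives a relation $x_i^{N}=1$ in $R/\mathfrak m$.

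For $j<m$, every summand $S_{I,\epsilon}$ of $T_j$ has $I\neq\emptyset$. Pick $i\in I$. In $S_{I,\epsilon}$, the element $1-x_i^{N}$ is invertible: it is a unit in $\mathds k((x_i^{\pm1}))$ because its extreme coefficient, in the relevant direction, is $\pm1$. The same element annihilates $M$. Therefore
\[
S_{I,\epsilon}\otimes_R M=0,
\]
and so $T_j\otimes_R M=0$ for all $j<m$.
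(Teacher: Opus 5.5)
Your proposal is correct and follows essentially the paper's route: the same \v{C}ech-type complex of one-sided series rings (your $I$ is the complement of the paper's), flatness via completion of a Noetherian polynomial ring followed by base change and localization, and vanishing because some $x_i^N-1$ kills $M$ while being a unit, by a geometric series, in every summand that has a one-sided variable. The paper never verifies exactness, so your coefficient-function idea goes beyond it, and it does close the step: for a fixed support bound $N$, the augmented complex is a product over $z\in\Z^m$ of tensor products of exact one-variable pointwise complexes of free $\mathds{k}$-modules, and $T_\bullet\to\widehat R$ is the filtered union of these complexes over $N$.
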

\begin{proof}
    Let \[
T_j
=
\bigoplus_{\substack{I\subseteq \{1,\dots,m\}\\ |I|=j}}
\ \ \bigoplus_{\varepsilon\in\{\pm\}^{\{1,\dots,m\}\setminus I}}
R_{I,\varepsilon},
\]
where
\[
R_{I,\varepsilon}
:=
\mathds{k}\Bigl[\Bigl[x_i^{\varepsilon_i} \,\bigm|\, i\notin I\Bigr]\Bigr]
\Bigl[x_i^{-\varepsilon_i} \,\bigm|\, i\notin I,\ x_j^{\pm1} \,\bigm|\, j\in I\Bigr].
\] The differentials \(T_j \to T_{j-1}\) are given by alternating sums of the natural inclusions
\[
R_{I,\varepsilon}\hookrightarrow R_{J,\varepsilon},
\qquad J\subset I,\quad j=|J|=|I|-1,
\]
with signs determined by the usual \v{C}ech convention. 

To help the reader understand this construction, let us spell out the case of \(T_0\). We have
\[
T_0=\bigoplus_{\varepsilon\in\{\pm\}^m}
\mathds{k}\Bigl[\Bigl[x_1^{\varepsilon_1},\dots,x_m^{\varepsilon_m}\Bigr]\Bigr]
\Bigl[x_1^{-\varepsilon_1},\dots,x_m^{-\varepsilon_m}\Bigr].
\]
Each summand consists of formal Laurent series whose support is unbounded only in the orthant of \(\mathbf{Z}^m\) determined by the sign vector \(\varepsilon\). Equivalently, in the \(\varepsilon\)-summand one allows arbitrary infinite expansion in the directions \(x_i^{\varepsilon_i}\), while only finitely many powers occur in the opposite directions \(x_i^{-\varepsilon_i}\). The augmentation map
\[
T_0\longrightarrow \widehat{S}
\]
is given by summing the Laurent series in the various orthants, viewing each summand as a submodule of \(\widehat{S}\).

We now verify that each \(T_j\) is a flat \(R\)-module. Since \(T_j\) is a finite direct sum of the modules \(R_{I,\varepsilon}\), it is enough to show that each \(R_{I,\varepsilon}\) is flat over \(R\). Furthermore, by changing variables, they are isomorphic to \(R_{I,\varepsilon}\) for $\varepsilon\equiv 1$. 

Fix \(I\subseteq \{1,\dots,m\}\) and denote \(R_{I,1}\) by \(R_{I}\). Let
\[
A_I
:=
\mathds{k}\Bigl[x_i \,\bigm|\, i\notin I\Bigr],
\]
and let
\[
J_{I}:=\Bigl(x_i\mid i\notin I\Bigr)\subset A_I.
\]
Then
\[
\widehat{A}_I^{\,J_{I}}=
\mathds{k}\Bigl[\Bigl[x_i \,\bigm|\, i\notin I\Bigr]\Bigr],
\]
is the \(J_{I}\)-adic completion of \(A_I\). Since \(A_I\) is Noetherian, its \(J_{I}\)-adic completion is flat over \(A_I\). Adjoining the rest of the variable preserves flatness. Thus,
\[
\mathds{k}\Bigl[\Bigl[x_i \,\bigm|\, i\notin I\Bigr]\Bigr]\Bigl[x_i \,\bigm|\, i\in I\Bigr]
\]
is a flat over
\[
A_I\Bigl[x_i \,\bigm|\, i\in I\Bigr]
=
\mathds{k}[x_1,\dots,x_m].
\]
 Upon localization, it follows that \(R_I\) is a flat \(R\)-module.

Finally, when \(\mathds{k}=\mathbf{Z}/n\), Proposition~36 of~\cite{ruba2024homological} implies that
\[
T_j\otimes_R M=0
\]
for every \(R\)-module \(M\) of Krull dimension \(0\). Indeed, $M$ would be annihilated by $x_i^{\pm l}-1$ for all $i$ and any large enough $l$. However, $x_i^{\varepsilon l}-1$ is invertible in $R_{I, \varepsilon}$ for $i\notin I$ by a geometric series.
\end{proof}
\begin{rem}
    The proof resembles  \cite[Proposition 44]{ruba2024homological}. The flat resolution used here originated in joint work in progress with Błażej Ruba~\cite{RY_new}.
\end{rem}

\subsection{Coarse-Graining Applied to Stabilizer Codes}\label{subsection:coarse}

Coarse-graining is a standard tool in the study of lattice topological phases, allowing one to access their large-scale structure while systematically suppressing microscopic degrees of freedom. The purpose of this section is to formalize this notion, as it defines an equivalence relation with respect to which we classify Pauli stabilizers.

\begin{definition}
    Let $i:\Gamma\hookrightarrow \Z^m$ be a finite index subgroup. The \textit{coarse-graining} of a  $\Z/p[\Z^m]=\Z/p[x^\pm_1,x^\pm_2,\ldots,x^\pm_m]$-module $M$ with respect to $\Gamma$ is given by $i^* M$, which is $M$ treated as a $\Z/p[\Gamma]$-module spanned by the monomials $x_\gamma$ with $\gamma \in \Gamma$.
\end{definition}

Since $\Gamma$ is required to be finite index, there is an isomorphism $\Gamma\cong \Z^m$. Therefore $\Z/p[\Gamma]$ is non-canonically isomorphism to $\Z/p[\Z^m]$. Since $\Z/p[\Z^m]$ is a finitely generated free module over $\Z/p[\Gamma]$, freeness and finite generation of a module are both preserved by coarse-graining. 

\begin{proposition}[{\cite[Proposition 34]{ruba2024homological}}]
 Let $\Gamma \subset \Z^m$ be a finite index subgroup, and let $R_\Gamma$ and $R_{\Z^m}$ be two rings of Laurent polynomials defined with respect to $\Gamma$ and $\Z^m$.  Coarse-graining satisfies $\Ext^{i+1}_{R_{\Gamma}}(\overline{{i^*P}/{i^*L}},R_\Gamma) \cong i^*\Ext^{i+1}_R(\overline{P/L},R_{\Z^m})$.
\end{proposition}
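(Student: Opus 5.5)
The statement to prove is the last one displayed, the coarse-graining compatibility $\Ext^{i+1}_{R_{\Gamma}}(\overline{i^*P/i^*L},R_\Gamma) \cong i^*\Ext^{i+1}_R(\overline{P/L},R)$. The plan is to compute both sides from one resolution and compare the duals term by term. Write $R=R_{\Z^m}$ and $S=R_\Gamma$. Since $\Gamma\subset\Z^m$ has finite index, $R$ is a free $S$-module of finite rank. A basis is given by the monomials $x^{c}$, with $c$ running over a set of coset representatives of $\Z^m/\Gamma$. Also, $S$ is stable under the involution, since $\Gamma=-\Gamma$. So $i^*$ is exact, commutes with taking $\overline{\,\cdot\,}$, and sends finitely generated free $R$-modules to finitely generated free $S$-modules.

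First step. Take a free resolution $F_\bullet\to \overline{P/L}$ over $R$. Applying $i^*$ gives a free resolution $i^*F_\bullet\to i^*\overline{P/L}=\overline{i^*P/i^*L}$ over $S$, because $i^*$ is exact and preserves freeness. Hence
\[
\Ext^{j}_{S}(\overline{i^*P/i^*L},S)=H^j\bigl(\Hom_S(i^*F_\bullet,S)\bigr),\qquad
i^*\Ext^j_R(\overline{P/L},R)=H^j\bigl(i^*\Hom_R(F_\bullet,R)\bigr),
\]
where the second identity uses exactness of $i^*$ again.

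Second step, which is the main point. I would construct an isomorphism of $S$-modules
\[
\Phi_F: i^*\Hom_R(F,R)\xrightarrow{\sim}\Hom_S(i^*F,S),
\]
natural in $F$. Define $\Phi_F(f)=\tau\circ f$, where $\tau:R\to S$ is the $S$-linear projection onto the coefficient of the trivial coset. This is a trace-type map: it keeps exactly the monomials lying in $\Gamma$.

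To see that $\Phi_F$ is an isomorphism, first treat $F=R$. There the claim is that the pairing $(r,s)\mapsto\tau(rs)$ on $R$ is perfect over $S$. In the monomial basis $\{x^c\}$, the dual basis with respect to $\tau$ is $\{x^{-c}\}$, and this gives perfectness. For general finitely generated free $F$, additivity of both sides extends the isomorphism.

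Naturality in $F$ is immediate, since $\Phi_F$ is just post-composition with $\tau$. Therefore the $\Phi_F$ assemble into an isomorphism of cochain complexes $i^*\Hom_R(F_\bullet,R)\cong\Hom_S(i^*F_\bullet,S)$. Taking cohomology in degree $i+1$ gives the claim.

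The step I expect to need the most care is compatibility with the involution. The modules $\Hom_R(\overline{M},R)$ carry their $R$-structure through the bar, so I must check that $\tau(\overline r)=\overline{\tau(r)}$. This holds because $\Gamma$ is closed under negation, and with it the duality map commutes with $\overline{\,\cdot\,}$. Beyond that, the choice of isomorphism $S\cong\Z/p[\Z^m]$ is irrelevant, because the statement is made over $S$ itself.
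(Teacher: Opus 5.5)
Your proof is correct and complete. The paper does not prove this statement; it quotes it from \cite[Proposition 34]{ruba2024homological}, so your argument supplies what the text leaves to the reference.

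Your argument is the standard one for restriction along a finite free (Frobenius) extension $S\subset R$: the trace $\tau$ identifies $\Hom_S(R,S)\cong R$, and exactness of $i^*$ together with preservation of finitely generated free modules transports this identification to resolutions. A more compact packaging of the same idea is the coinduction adjunction $\Hom_S(i^*M,S)\cong\Hom_R(M,\Hom_S(R,S))$, valid for every $R$-module $M$. Since $i^*$ is exact and preserves projectives, this derives to $\Ext^j_S(i^*M,S)\cong\Ext^j_R(M,\Hom_S(R,S))\cong\Ext^j_R(M,R)$. Your $\Phi_F$ is exactly this isomorphism evaluated on a free resolution.

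Two small comments. First, the check $\tau(\overline r)=\overline{\tau(r)}$ is never actually used. You resolve $\overline{P/L}$ directly, and $R$ is commutative, so $\Hom_R(-,R)$ carries its ordinary module structure. The involution therefore enters only through the tautology $i^*\overline{M}=\overline{i^*M}$, which needs just $\Gamma=-\Gamma$, so that the involution on $S$ is the restricted one. Second, $\Phi_F$ is $R$-linear, not merely $S$-linear, once $\Hom_S(i^*F,S)$ is given the action $(rg)(x)=g(rx)$. Hence the identification is canonical, independent of the choice of coset representatives. It also shows that the coarse-grained excitation modules still carry the full $\Z^m$-translation action.
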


 \begin{rem}\label{rem:preservedbyCG}
      This implies that if $L'\subset P'$ arises from $L\subset P$ by coarse-graining, then $\cE^i(\fC')$ is constructed from $\cE^i(\fC)$ by coarse graining, using the notation of Equation \eqref{eq:charge}. In this sense, the construction of $\cE^i$ commutes with coarse graining. Furthermore, any bilinear form defined on $P$ is also preserved by coarse graining \cite[Remark 29]{RY}.

 \end{rem}

We now present an important relation on $\rL$-theory, and describe how it behaves under the coarse-graining procedure. This analysis will play a key role in the next section, where it is used in the classification.
\begin{proposition}[{\cite[Proposition 17.3]{Ranicki1992}}]\label{prop:coarseL}
    For any $m\geq 1$ the quadratic $\rL$-groups defined over the ring $R = \Z/p[x^\pm_1,\ldots, x^\pm_m]$ satisfy the binomial formula
    \begin{equation}\label{eq:Lbinomial}
        \rL_n(\Z/p[x^\pm_1,\ldots, x^\pm_m]) = \sum^m_{i=0} \binom{m}{i}\rL_{n-i}(\Z/p)\,.
    \end{equation}
\end{proposition}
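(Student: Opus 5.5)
This is Ranicki's splitting theorem for Laurent extensions, iterated $m$ times. The paper takes it as a citation, but I would prove it by induction on $m$. The base case is the fundamental theorem of algebraic $\rL$-theory: for a ring with involution $A$, and with the involution extended by $\overline{x}=x^{-1}$, there is a natural splitting
\begin{equation*}
\rL_n(A[x^{\pm}]) \cong \rL_n(A)\oplus \rL_{n-1}(A).
\end{equation*}
(The decorations on $\rL$ are suppressed here; see the last paragraph.) Once this splitting holds for every ring in the tower $A_j=\Z/p[x_1^\pm,\ldots,x_j^\pm]$, induction gives $\rL_n(A_m)\cong \rL_n(A_{m-1})\oplus\rL_{n-1}(A_{m-1})$. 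Expanding this recursively produces exactly the binomial coefficients $\binom{m}{i}$ in front of $\rL_{n-i}(\Z/p)$, by Pascal's rule $\binom{m}{i}=\binom{m-1}{i}+\binom{m-1}{i-1}$.

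For the one-variable splitting I would argue as follows.

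\textbf{The summand $\rL_n(A)$.} The inclusion $A\to A[x^\pm]$ is split by the augmentation $x\mapsto 1$, and both maps respect the involution. Base change along them therefore exhibits $\rL_n(A)$ as a direct summand.

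\textbf{The summand $\rL_{n-1}(A)$: the map $\bar\epsilon\colon \rL_{n-1}(A)\to \rL_n(A[x^\pm])$.} Geometrically this is product with the circle. Algebraically, take the Poincaré object $C(S^1)=\big(A[x^\pm]\xrightarrow{1-x}A[x^\pm]\big)$, which is Poincaré for the shift by one. Tensoring an object of $(D^\perf(A),\Sigma^{n-1}Q)$ with it lands in $(D^\perf(A[x^\pm]),\Sigma^{n}Q)$.

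\textbf{The splitting map $B\colon \rL_n(A[x^\pm])\to\rL_{n-1}(A)$.} I would build this by codimension-one splitting. Given a Poincaré object $M$ over $A[x^\pm]$, use the Mayer--Vietoris (Novikov) decomposition of $A[x^\pm]$ into $A[x]$ and $A[x^{-1}]$, glued over $A[x^\pm]$. Choose a ``fundamental domain'' subcomplex: a finitely generated $A$-module complex $N\subset M$ with $M$ generated by $N$ under translation. Then $N$ inherits a Poincaré pair structure, and its boundary gives a Poincaré object over $A$ shifted by one, which serves as the transversal section. This is exactly analogous to the Čech-type decomposition used in Lemma~\ref{lem: flat}. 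Well-definedness on cobordism classes would follow by applying the same construction to cobordisms. Finally, check $B\circ\bar\epsilon=\id$ by computing the section of $N\otimes C(S^1)$, and check that the two summands together exhaust $\rL_n(A[x^\pm])$ by comparing long exact sequences.

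\textbf{Main obstacle: projective-class decorations.} The splitting in Ranicki's form reads $\rL^h_n(A[x^\pm])=\rL^h_n(A)\oplus \rL^p_{n-1}(A)$, and the fundamental domain $N$ is generally only projective, not free. So the honest induction must track $\rK_0$ and $\widetilde{\rK}_0$ obstructions through Rothenberg sequences. For $A=\Z/p[x_1^\pm,\ldots,x_j^\pm]$ I would use that $R$ is regular, so the fundamental theorem of $\rK$-theory gives $\widetilde{\rK}_0(A_j)=0$ and $\rK_1(A_j)$ is generated by units $\pm x^\lambda$ together with $(\Z/p)^\times$. The Tate cohomology terms $\hat H(\Z/2;\widetilde{\rK}_i)$ then vanish and the decorations agree, so the plain formula holds with $\rL=\rL^p$, which is what $D^\perf$ computes. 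Verifying this vanishing, especially at $p=2$, where the Tate terms on the unit group must be checked carefully, is the step I expect to require the most care.
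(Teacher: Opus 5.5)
Your plan is correct in outline and is essentially the argument behind the paper's treatment: the paper gives no proof of its own but imports the formula from Ranicki, where it comes from iterating the Laurent splitting $\rL_n(A[x^{\pm}])\cong \rL_n(A)\oplus\rL_{n-1}(A)$ and letting the decorations collapse for the regular rings $\Z/p[x_1^{\pm},\ldots,x_j^{\pm}]$.

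Two small corrections. First, the only decoration comparisons you need ($h$ versus $p$ versus $\langle -1\rangle$) involve $\widetilde{\rK}_0$ and $\rK_{-1}$, which vanish by regularity for every $p$ including $p=2$. So the unit-group Tate terms you flag as the delicate step never enter; $\rK_1$ only separates $\rL^s$ from $\rL^h$. Second, with the paper's convention $\rL_n=\rL_0(D^{\perf}(R),\Omega^nQ)$, the circle complex must sit in degrees $0,-1$, so that crossing with it goes from $\Omega^{n-1}Q$ to $\Omega^{n}Q$. Your $\Sigma^{n-1}Q\to\Sigma^{n}Q$ would instead give a map $\rL_{1-n}(A)\to\rL_{-n}(A[x^{\pm}])$. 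The direction matters, since the formula is not symmetric under $n\mapsto -n$: for odd $p$, $\rL_1(\Z/p[x^{\pm}])\cong\Witt(\Z/p)$ but $\rL_{-1}(\Z/p[x^{\pm}])=0$.
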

Although this group is somewhat unwieldy, coarse-graining offers a natural simplification because the coarse-graining operation can be interpreted as a colimit over finite index subgroups of $\Z^m$:
\begin{equation}
 \varinjlim_{H \leq \Z^m} \rL_n(\Z/p[x^\pm_1,\ldots, x^\pm_m])\,.
\end{equation}

\begin{proposition}[{\cite[Theorem 6.4]{haah2025topological}}]\label{prop:Haah}
    Under coarse-graining only the first term of the binomial in Equation \eqref{eq:Lbinomial} survives, i.e. 
    \begin{equation}
        \varinjlim_{H \leq \Z^m} \rL_n(\Z/p[x^\pm_1,\ldots, x^\pm_m]) = \rL_n(\Z/p)\,.
    \end{equation}
\end{proposition}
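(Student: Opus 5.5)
The plan is to reduce to one lattice direction and then iterate, using the splitting behind Proposition~\ref{prop:coarseL}. Since $\Z/p$ is a field, $R=\Z/p[x_1^{\pm},\dots,x_m^{\pm}]$ is regular. So $\widetilde K_0$, $K_{-i}$ and all Nil-terms vanish, and the Ranicki--Shaneson splitting holds without decorations:
\begin{equation*}
\rL_n(A[x^{\pm}])\cong \rL_n(A)\oplus \rL_{n-1}(A),\qquad A=\Z/p[x_1^{\pm},\dots,x_{m-1}^{\pm}].
\end{equation*}
The first summand is induced up along $A\to A[x^{\pm}]$. The second is the image of $y\mapsto y\otimes\sigma(S^1)$, the product with the symmetric Poincaré complex of the circle, and it is detected by the codimension-one cutting map $\partial$. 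Iterating this over all $m$ directions recovers Equation~\eqref{eq:Lbinomial}. In that decomposition the summand indexed by a subset $I\subseteq\{1,\dots,m\}$ with $|I|=i$ is $\rL_{n-i}(\Z/p)\otimes\sigma(T^I)$.

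Next I would reduce the colimit to a cofinal family. The subgroups $\Gamma=k\Z^m$ are cofinal among finite index subgroups of $\Z^m$, and each coarse-graining step factors as a composite of coarse-grainings in a single coordinate direction. It therefore suffices to compute the structure map of the directed system for the index-$k$ subgroup $k\Z\subset\Z$ in one variable. For this I would identify $R_{k\Z}$ with $R$ (which is what makes the colimit meaningful) and compute its effect on each summand. The operation in question is the one used in the classification, Remark~\ref{rem:preservedbyCG}: it restricts forms along a free rank-$k$ extension. On the $\sigma(S^1)$-summand this should be the geometric transfer along the $k$-fold cover $S^1\to S^1$.

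The key claim is that the structure map acts differently on the two summands. On the summand that survives, it should act compatibly with the identification $R_{k\Z}\cong R$, ideally as the identity. On the other summand, it should act as multiplication by the index $k$ (or $k$ times a unit). Granting this, the finiteness of the target finishes the argument: every $\rL_j(\Z/p)$ has exponent dividing $4$, since by \S\ref{subsection:Lgroupfork} these groups are $0$, $\Z/2$, or $\Witt(\Z/p)$. Hence the subgroups of index divisible by $4$ are cofinal, and the killed summands die in the colimit. Iterating over the $m$ directions leaves exactly the single term $\rL_n(\Z/p)$.

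The main obstacle is pinning down exactly which summand is multiplied by $k$. A naive computation points the wrong way. Restricting an induced form $A[x^{\pm}]\otimes_A N$ to $A[x^{\pm k}]$ gives $N^{\oplus k}$, which multiplies the induced summand by $k$. Meanwhile $\partial$ of the pulled-back $\sigma(S^1)$-class appears unchanged, because the $k$-fold cover of a circle is again a circle cut once. So I would first check carefully whether the structure maps of the directed system in Proposition~\ref{prop:Haah} should be restriction or transfer (induction) along $R_\Gamma\to R$. The two are adjoint, and under the second choice the roles reverse: the induced summand maps isomorphically and the $\sigma(T^I)$-summands acquire the factor of the index. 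I would follow Haah's convention in \cite{haah2025topological}, verify the factor on the explicit generators $N\otimes\sigma(T^I)$, and only then run the cofinality argument above.
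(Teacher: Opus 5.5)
Your skeleton (the cofinal family $k\Z^m$, the Ranicki--Shaneson splitting, index factors on the summands, and the fact that $\rL_*(\Z/p)$ has exponent dividing $4$) is sound. But the proof stops at its only nontrivial step: you never determine which summand the structure maps fix, and the way out you consider is not available. The paper \emph{defines} coarse-graining as restriction $i^*M$ along $\Z/p[\Gamma]\subset\Z/p[\Z^m]$, with forms composed with the projection onto the $\Gamma$-coefficients. That projected form is exactly the commutation form of the coarse-grained Pauli group. So the maps of the directed system are restrictions $\rL_n(\Z/p[H])\to\rL_n(\Z/p[H'])$ for $H'\subseteq H$. Induction goes toward the larger subgroup, so on finite-index subgroups it gives a system with terminal object $\Z^m$. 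A nontrivial colimit appears only if you iterate $x_i\mapsto x_i^k$, which dilates the lattice and interleaves copies of the system rather than coarse-graining it. You cannot choose the convention that produces the desired answer.

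Your ``naive'' computation is in fact correct, and it settles the matter against the displayed formula. Take $H'\subseteq H$ with coset representatives $g_c$. The pairing of $g_c\otimes n$ and $g_{c'}\otimes n'$ in $\Z/p[H]\otimes N$ is $g_c^{-1}g_{c'}\lambda_N(n,n')$, whose $H'$-part vanishes unless $c=c'$. Hence the restriction of an induced form is canonically the orthogonal sum of $[H:H']$ induced copies, with no identification $H\cong\Z^m$ needed, and the image of $\rL_n(\Z/p)$ dies in the colimit. Meanwhile, restricting $A[x^{\pm}]\xrightarrow{1-x}A[x^{\pm}]$ to $y=x^k$ gives a complex equivalent to $A[y^{\pm}]\xrightarrow{1-y}A[y^{\pm}]$ (the $k$-fold cover of a circle is a circle), so $z\otimes\sigma(S^1)$ is fixed. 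Thus $k\Z^m\subset\Z^m$ acts on the $I$-summand by $k^{m-|I|}$, and your exponent argument proves $\varinjlim_H\rL_n(\Z/p[H])\cong\rL_{n-m}(\Z/p)$: the \emph{last} term of \eqref{eq:Lbinomial} survives, not the first. For $m=1$ and $n\equiv 0\bmod 4$ this colimit is $\rL_{n-1}(\Z/p)=0$, whereas $\rL_n(\Z/p)\neq 0$.

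The top-term version is also the one consistent with Theorem~\ref{thm:ClassifyCliffordQCA}. A Clifford QCA on $\Z^{n-1}$ is a symplectic automorphism, so it lives in an odd-indexed $\rL_j(\Z/p[\Z^{n-1}])$. The first term there, $\rL_j(\Z/p)$, is zero because all odd $\rL$-groups of $\Z/p$ vanish. The last term, for $j=-1$, is $\rL_{-n}(\Z/p)$. The paper gives no argument of its own, only the citation to Haah. So the honest outcome of your approach is a proof of the statement with $\rL_{n-m}(\Z/p)$ on the right, not of the statement as written.
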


\begin{rem}\label{rem:useCG}
By combining the coarse-graining procedure with the computation of $\rL$-groups over fields described in \S\ref{subsection:Lgroupfork}, we obtain a simplified description of the $\rL$-groups that classify mobile excitations. In particular, we invoke Propositions \ref{prop:coarseL} and \ref{prop:Haah} to simplify from  $\rL_*(R)$ to $\rL_*(\Z/p)$, upon coarse-graining. Due to the fact that mobile excitations and bilinear forms are preserved under coarse graining, we can be sure that the simplified group can be used to classify them.
\end{rem}
\subsection{Surgery on Pauli Stabilizer Codes}\label{subsection:surgery}

In surgery theory, a central question is whether one can systematically eliminate the nontrivial topology of a manifold through successive surgeries, ultimately reducing it to a sphere. The algebraic surgery applied to the topological operators of a Pauli stabilizer code follows a closely analogous philosophy. In this correspondence, a boundary condition plays the role of a filling of the manifold, i.e. a null-cobordism, while an interface is interpreted as a cobordism. We now exploit this analogy to classify Pauli stabilizer codes up to surgery and coarse-graining. Before beginning, observe that in our conventions, a $(n-1)$d Pauli stabilizer code has topological operators that go up to $(n-2)$-dimensions. Theorem \ref{thm:perfectpairing} shows that the mobile excitations constitute a Poincaré object in the category $(D^{\perf}(R), \Sigma^{n-2}Q)$.



 A key feature that we must include in the axiomatization of Pauli stabilizer codes is that  operators detect each other via intersection/linking.  This is also called remote detectability.  Thus, Pauli stabilizer codes are determined not only by their mobile operators up to surgery, but also by the statistics that the middle dimensional operators after surgery exhibit with respect to intersection/linking pairing with each other in the ambient dimension of the theory; see the review in \S\ref{subsection:gs}. As an example, take $n=4$. The (2+1)d Pauli stabilizers are determined by a symmetric bilinear form \cite{RY}, which corresponds to the fact that the standard $S$-matrix encoding linking of line operators is symmetric in this dimension. However, $\rL$-theory alone on the mobile operators would suggest we look at a $\rL$-group in degree 2 mod 4, which corresponds to antisymmetric forms. More generally,  a $p$-dimensional operator (thought of as a middle dimensional operator after surgery) has antisymmetric linking with itself in $2p+1$ ambient dimensions if $p$ is even, and symmetric linking if $p$ is odd \cite[Proposition 49]{ruba2024homological}. Combining this observation with the classification of mobile excitations up to surgery we make the following prescription for the classification.

\begin{ansatz}
     Pauli stabilizer codes are classified by the following two step procedure. The first step is to apply surgery to its topological operators, which can be done thanks to Theorem \ref{thm:perfectpairing}. Then, classify the remaining operator in middle degree by the symmetric or skew-symmetric nature that it enjoys under linking given by the formula  in \cite[Proposition 49]{ruba2024homological}. 
\end{ansatz}

We find that Pauli stabilizer codes in $(n-1)$d are classified by the same group that classifies mobile operators up to surgery in 
$(n+2+4k)$d Pauli stabilizer codes.\footnote{For operators in the 
$(n+2)$-dimensional Pauli stabilizer code, we focus on the symmetric or antisymmetric nature of the pairing that controls the linking of middle-dimensional operators, rather than on the exact degree of that dimension, as the latter is not essential.}
\begin{theorem}\label{thm:PauliClassification}
     Pauli stabilizer codes in $(n-1)$-dimensions where $n>4$ are classified up to gapped interface and coarse-graining by the groups 
     \begin{equation}\label{eq:assemble}
         \rL_{-n}(\Z/p) =  \begin{cases}
  0 & n\equiv 1,3 \mod 4 \\
  \Z/2  & n \equiv 2 \mod 4 \\
  \Witt(\Z/p) & n\equiv 0 \mod 4\,.
\end{cases}
     \end{equation}
\end{theorem}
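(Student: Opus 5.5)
Starting from a fully mobile Lagrangian code $(L,P)$ in $(n-1)$ dimensions, I would form the perfect complex $F_\bullet$ of Equation \eqref{eq:dualF}. It is exact at $P$ and to the left of $P$, and its cohomology to the right of $P$ is $\cE^i(\fC)=\Ext^{i+1}_R(\overline{P/L},R)$. The first step is to upgrade Theorem \ref{thm:perfectpairing} to a chain-level statement. The double complex $F_\bullet\otimes_R T_\bullet$ built in its proof gives a zigzag of quasi-isomorphisms. Composed with the symplectic pairing $\omega$ and the cup product of \S\ref{subsection:gs}, this yields a map $F_\bullet\to \Sigma^{n-2}\bbD(F_\bullet)$. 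Full mobility makes this map a quasi-isomorphism. The linking formula of \cite[Proposition 49]{ruba2024homological} then gives a refinement of the symmetric form to a point $q\in\Omega^\infty\Sigma^{n-2}Q(F_\bullet)$, so $(F_\bullet,q)$ is a Poincaré object of $(D^\perf(R),\Sigma^{n-2}Q)$. This is the Proposition stated in \S\ref{subsection:mainresults}. By Notation \ref{notation:gappedinterface}, gapped interfaces are exactly cobordisms of such objects, so the class of the code up to gapped interface is an element of $\rL_0(D^\perf(R),\Sigma^{n-2}Q)$.

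Next, I would pass to coarse-graining. Remark \ref{rem:preservedbyCG} says both the excitation modules and the forms are compatible with restriction along $\Gamma\le\Z^{n-2}$, so the class descends to the colimit over finite-index subgroups. Proposition \ref{prop:coarseL} together with Proposition \ref{prop:Haah} identifies this colimit with the corresponding $\rL$-group of $\Z/p$ (Remark \ref{rem:useCG}). Working over the field $\Z/p$, I would apply Proposition \ref{prop:evenodd}, Corollary \ref{cor:singledegree} and Remark \ref{rem:twolayers}. These reduce any class to a form concentrated on middle-dimensional operators when $n-2$ is even, and on two adjacent degrees when it is odd. In the odd case, Proposition \ref{prop:oddLgroups} produces a Lagrangian, which gives the vanishing for $n\equiv 1,3 \bmod 4$.

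The main obstacle is the even case, and I would handle it via the Ansatz. The middle-dimensional operators carry a linking form whose symmetry type is fixed by \cite[Proposition 49]{ruba2024homological}, and this type is shifted relative to the naive $\Sigma^{n-2}$ sign. This is exactly the phenomenon already seen at $n=4$, where $\rL$-theory alone predicts antisymmetric forms but the $S$-matrix is symmetric. I would therefore reinterpret the middle-degree Poincaré object with the quadratic functor $\Sigma^{n+2}Q\simeq\Sigma^{-n}Q$, which is legitimate by the periodicity of Proposition \ref{prop:4periodic}. This places the class in $\rL_{-n}(\Z/p)$.

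From there the remaining cases follow. When $n\equiv 0\bmod 4$, the form is a nondegenerate quadratic space, and Proposition \ref{prop:0mod4} identifies the group with $\Witt(\Z/p)$. When $n\equiv 2\bmod 4$, the form is skew, and Proposition \ref{prop:2mod4} gives a Lagrangian for odd $p$. For $p=2$, skew and symmetric forms coincide but the quadratic refinement survives, giving the Arf invariant and hence $\Z/2$. Finally, to show the map is a bijection rather than only well-defined, I would check surjectivity by realizing each Witt class through the Clifford QCA of Theorem \ref{thm:ClassifyCliffordQCA}, whose boundary is a code in one lower dimension. The delicate points I expect here are keeping track of the sign conventions across the shift $\Sigma^{n-2}\to\Sigma^{-n}$, and the $p=2$ case, where homotopy coinvariants and fixed points differ.
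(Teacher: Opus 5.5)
You follow the paper's route: Poincar\'e object from Theorem~\ref{thm:perfectpairing}, gapped interfaces as cobordisms, coarse-graining through Propositions~\ref{prop:coarseL} and~\ref{prop:Haah}, surgery to the middle degree, the Ansatz, then the computations over $\Z/p$. However, the step you single out as the main obstacle fails as written.

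The identification $\Sigma^{n+2}Q\simeq\Sigma^{-n}Q$ is not given by Proposition~\ref{prop:4periodic}. The two functors differ by $\Sigma^{2n+2}$, and for even $n$ (the only case where you use it) $2n+2\equiv 2\pmod 4$. In fact $\Sigma^{n+2}Q\simeq\Sigma^{n-2}Q$ by $4$-periodicity, so reinterpreting with $\Sigma^{n+2}Q$ changes nothing. The class stays in $\rL_{2-n}(\Z/p)$, which for odd $p$ is $0$ when $n\equiv 0\pmod 4$ and $\Witt(\Z/p)$ when $n\equiv 2\pmod 4$. That is the opposite of the statement. The underlying issue is that periodicity only moves the suspension degree by multiples of $4$ (of $2$ when $p=2$). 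What you need is a shift by exactly $2$, which exchanges symmetric and skew-symmetric middle-degree forms, and for odd $p$ no periodicity argument can produce it.

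In the paper that shift is not derived from periodicity at all; it is the whole content of the Ansatz. The linking symmetry of \cite[Proposition 49]{ruba2024homological} is used to declare that an $(n-1)$-dimensional code is classified by the same group as the mobile operators of an $(n+2+4k)$-dimensional code. The paper takes these to be Poincar\'e objects of $(D^{\perf}(R),\Sigma^{n+4k}Q)$, and only then uses Proposition~\ref{prop:4periodic} to drop the $4k$ and land in $\rL_{-n}$. So your target should be $\Sigma^{n}Q$ (equivalently $\Sigma^{-n}Q$, as $n$ is even), with the departure from $\Sigma^{n-2}Q$ attributed to the Ansatz rather than to periodicity. After that correction your case analysis agrees with the paper's. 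In particular, your Arf-invariant treatment of $p=2$, $n\equiv 2\pmod 4$ matches what the paper obtains from the $2$-periodicity of Proposition~\ref{prop:4periodic} combined with Proposition~\ref{prop:0mod4}.

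Separately, your surjectivity step runs against the paper's logic. The paper deduces the QCA/boundary correspondence (Corollary~\ref{cor:nogapped}) from this theorem. Using Theorem~\ref{thm:ClassifyCliffordQCA} to realize every class by a boundary code is therefore circular, unless you construct those boundary codes and compute their $\rL$-classes independently.
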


\begin{proof}
   We know that the mobile excitations which determine the Pauli stabilizer code in $(n+2+4k)$d form a Poincaré object of $(D^{\perf}(R), \Sigma^{n+4k}Q)$ thanks to Theorem \ref{thm:perfectpairing}.
   In the case when $n$ is odd, the classification of Poincaré object is by the odd $\rL$-groups. When $n=2m$ then  the middle dimension operators are in degree $j=m$ by Corollary \ref{cor:singledegree}. In the case where $m$ is odd, the classification is therefore governed by skew-symmetric forms over $R$, corresponding to $\rL$-groups in degree 2 mod 4. 
   When $m$ is even the classification is therefore governed by symmetric forms over $R$ corresponding to $\rL$-groups in degree 0 mod 4. To finish, we invoke Remark \ref{rem:useCG} to simplify the $\rL$-group that classifies mobile excitations from  $\rL_*(R)$ to $\rL_*(\Z/p)$, upon coarse-graining. The result follows from Propositions \ref{prop:4periodic}, \ref{prop:oddLgroups}, \ref{prop:2mod4} and \ref{prop:0mod4} that compute the values of $\rL$-groups over $\Z/p$ for $p=2$ and $p\neq 2$. Equation \eqref{eq:assemble} follows from assembling the results for all primes.
\end{proof}

\begin{rem}
   The type of $\rL$-theory we have been working goes by the name of quadratic $
\rL$-theory. There is also a notion of \textit{symmetric} $\rL$-theory, but away from $p=2$ these two theories give the same homotopy groups. At $p=2$ they are the same in degrees $\leq -3$. Thus, even without providing a quadratic refinement of the bilinear form we can still give a unique classification of Pauli stabilizer codes.
\end{rem}

\begin{corollary}\label{cor:nogapped}
    The classification of Clifford QCAs matches that of Pauli stabilizer codes in every dimension.
\end{corollary}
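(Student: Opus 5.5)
The corollary should follow by setting the two classifications side by side, one dimension apart, and checking that they agree in every residue class. Theorem \ref{thm:ClassifyCliffordQCA} gives the Clifford QCAs on $\Z^n$, modulo Clifford circuits, separated QCAs and stabilization, as $K(\Z^n,\Z/p)\cong \rL_{-n}(\Z/p)$. Theorem \ref{thm:PauliClassification}, together with Theorem \ref{mainthm} for the low-dimensional range, gives the mobile Pauli stabilizer codes in $(n-1)$ dimensions, up to gapped interface and coarse-graining, by the same group $\rL_{-n}(\Z/p)$. Here dimensions are counted in spacetime, as in the paper's Notation.

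The comparison then splits into cases mod $4$, using Proposition \ref{prop:4periodic}.
\begin{itemize}
\item For $n$ odd, both groups vanish by Proposition \ref{prop:oddLgroups}.
\item For $n\equiv 0 \bmod 4$, both are $\Witt(\Z/p)$ by Proposition \ref{prop:0mod4}.
\item For $n\equiv 2\bmod 4$, Proposition \ref{prop:2mod4} kills the group for $p$ odd. For $p=2$ the answer is $\Z/2$ on the stabilizer side, with the Arf invariant as the generator.
\end{itemize}
The last case is a point of care. Equation \eqref{eq:valuesofK} writes $0$ for $n\equiv 2 \bmod 4$, while Theorem \ref{thm:PauliClassification} writes $\Z/2$. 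I would resolve this by citing Theorem \ref{thm:ClassifyCliffordQCA} directly as $\rL_{-n}(\Z/p)$, rather than the tabulated values, so that the two sides agree as abstract groups by construction.

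To make the statement more than a numerical coincidence, I would also construct the map. I would use the map $\Mod^2(\TS^{n-2})^\times\to(\TS^n)^\times$ assembled from Lemma \ref{lem:boundaryinv} and Proposition \ref{prop:boundaryinterface}. A stabilizer code in $(n-1)$d is sent to the invertible $n$d Clifford QCA whose boundary it is; at the level of invariants, this takes the Poincaré object of mobile excitations from Theorem \ref{thm:perfectpairing} to its class in $\rL_{-n}(\Z/p)$. Well-definedness holds because gapped interfaces are cobordisms, so the class is unchanged by algebraic surgery (Proposition \ref{prop:surgery}). Coarse-graining commutes with the construction by Remark \ref{rem:preservedbyCG}, and Proposition \ref{prop:Haah} then reduces everything to $\rL_{-n}(\Z/p)$. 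Surjectivity should come from Corollary \ref{cor:singledegree}, which realizes any middle-degree form by a code. Injectivity is a cobordism statement inside $\rL_0(D^\perf(R),\Sigma^{n-2}Q)$.

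The main obstacle is showing that the bulk-boundary map really induces the identity on $\rL_{-n}(\Z/p)$, and not some other automorphism. The risk is that the degree shift coming from the symmetric versus skew-symmetric linking in the Ansatz differs from the shift in the QCA classification of \cite{Yang:2025jvn}. I would check this first on generators: the $E_8$-type or Witt generator in $n=4$, using \cite{RY} and the (3+1)d QCAs, and the Arf generator for $p=2$.
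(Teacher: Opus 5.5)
This is essentially the paper's proof. It too identifies both sides with $\rL_{-n}(\Z/p)$ through Theorems \ref{thm:ClassifyCliffordQCA} and \ref{thm:PauliClassification}, and then declares the bulk map $\Mod^2(\TS^{n-2})^\times\to(\TS^n)^\times$ to be the identity because the core of $\TS$ is the Clifford QCA spectrum. So the obstacle you flag is exactly the step the paper asserts rather than checks. Your catch is also a genuine inconsistency: \eqref{eq:valuesofK}, which the paper's own proof cites, writes $0$ at $n\equiv 2 \bmod 4$, whereas Theorem \ref{thm:PauliClassification} has $\Z/2$ for $p=2$. Reading both sides directly as $\rL_{-n}(\Z/p)$ sidesteps this, consistently with Theorem \ref{thm:ClassifyCliffordQCA} as stated.

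One correction to your supplementary argument: Corollary \ref{cor:singledegree} only moves a given Poincaré object to middle degree by a cobordism, and says nothing about which forms arise from actual stabilizer codes. It therefore cannot supply surjectivity. That would need a realization statement (every class of $\rL_{-n}(\Z/p)$, i.e.\ every Clifford QCA, bounds some mobile Pauli stabilizer code), which the paper does not prove either.
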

\begin{proof}
    Compare the classification in \Cref{thm:PauliClassification} with that of \Cref{thm:ClassifyCliffordQCA}, the result of which is displayed in \Cref{eq:valuesofK}. Using the property of the universal target $\TS$ applied to the category of Pauli stabilizer codes such that $(\TS)^\times$ is the spectrum of Clifford QCA, we see that the map  $\Mod^2(\TS^{n-2})^\times \to (\TS^n)^\times$ is the identity in each degree.
\end{proof}
Another way to interpret this result is that after coarse-graining, Pauli stabilizers up to gapped interface can have more interesting invariants corresponding to  ``Witt-nontrivial'' theories, and consequently are only defined in the presence of a bulk. The physical reasonability of this will be discussed in \S\ref{subsection:latticeandcontinuum}  when we compare with the classification of framed topological field theories.

\begin{rem}
    In the case when a Pauli stabilizer code is thought of intrinsically as a $n$d theory i.e. the bulk is trivial, then it can only be realized by a class that is Witt trivial. This is the case e.g. for the toric code.
\end{rem}

\section{Comparing Pauli Stabilizers to Framed Topological Field Theory}\label{section:framedTO}
We now explicate the close relationship between the classification of topological Pauli stabilizer codes and framed topological field theories, already hinted at in \S\ref{subsection:targetstabilizer}. This section serves to bridge the two communities by showing how the concepts used in the classification of both types of theories are inherently the same. Moreover, the close relationship allows for certain framed topological order to provide a continuum realization of Pauli stabilizer codes.

\subsection{Preliminaries on TQFTs}\label{subsection:prelim}
 At this stage it is important to clarify what we mean by a ``TQFT''. The cobordism hypothesis asserts that fully extended topological quantum field theories are classified by fully dualizable objects in a suitable symmetric monoidal higher category. While this perspective is mathematically powerful and conceptually far-reaching, it is broader than the framework one should adopt for physics. In particular, only imposing full dualizability on objects is not enough to construct TQFTs that are relevant to our discussion, and certainly not when it comes to comparing with topological theories on the lattice arising from stabilizer codes. Our usage of ``TQFT'' will therefore be more restrictive, tailored to capture theories called \textit{topological order}.  In essence these are TQFTs in the sense of the cobordism hypothesis with extra properties. We provide a working definition using the framework of fusion categories, for a discussion of the physical validity of the definition we refer the readers to \cite[Section III.A]{JF}.

\begin{definition}[{\cite[Definition I.1]{JF}}]\label{def:TO}
    A topological quantum field theory in $(n+1)$-dimensions is a multifusion $n$-category with trivial center.
\end{definition}

\noindent This definition recasts a physical system in precise mathematical terms. To be more explicit, we see that:
\begin{itemize}
    \item The condition of multifusion means that the theory has a finite number of operators, and there can be multiple vacua. All the operators are fully topological, meaning they have enough dualizability.
    \item Stacking of TQFT given by tensor product of multifusion  $n$-categories.
    \item Trivial center enforces remote detectability.
\end{itemize}
We will mainly contend with the case when the category is just fusion, meaning that the identity is a simple object. However, now the task is to understand how to construct higher fusion categories in a systematic way and justify that they enjoy the properties one expects in order to be relevant for TQFTs. A key structural ingredient in their construction is \textit{condensation completion}, which generalizes the familiar notion of Karoubi completion, i.e. splitting of idempotents, to the higher-categorical setting. The importance of idempotent completion can already be seen in 1-categories, with application to (2+1)d TQFTs. In particular, if $X$ is a line operator, then if $X$ is not a simple object, that implies $\End(X)\neq \mathbb C$. Given a projector $P\in \End(X)$ which is idempotent, to have $P$ split means that we can realize the image of $P$ as a direct summand of $X$.

Condensation completion is used to ensure the appropriate splitting of higher idempotent structures and thereby produces the kinds of semisimple objects expected in the description of topological phases. 

\begin{definition}[{\cite[Definition II.3]{JF}}]
    Let $\mathbf{C}$ be an $n$-category with  objects $X,Y \in \mathbf{C}$. A \textit{condensation} of $X$ onto $Y$ consists of morphisms $f:X \rightleftarrows Y:g$ together with a condensation $fg$ onto $\id_Y$. But the latter is a condensation in the $(n-1)$-category $\End_{\mathbf{C}}(Y)$, so the definition of condensation is given inductively. The induction begins by declaring that equalities are examples of condensations.
    \end{definition}

    The notion of an idempotent is replaced by a condensation monad \cite{Gaiotto:2019xmp}: the axiom $P^2=P$ for projector is replaced by a condensation of $P^2$ onto $P$. The $n$-category $\mathbf{C}$ is condensation complete if every condensation monad factors through a condensation. The construction in \cite[Section 2.4]{Gaiotto:2019xmp} implies that if $\mathbf{A}$ is the monoidal $n$-category of operators in a $(n+1)$d TQFT, then $\mathbf{A}$ is condensation complete.
Having defined condensation completion we can now discuss how to build higher semisimple categories, and in this way also higher fusion categories by considering the fully dualizable objects in the Morita category of condensation complete $n$-categories.
\begin{definition}
     Given any multifusion $(n-1)$-category $\mathbf{A}$, we let $\mathrm{B}\mathbf{A}$ be the locally semisimple $n$-category with a single object $*$ and $\End_{\mathrm{B}\mathbf{A}}(*)=\mathbf{A}$. Then, we define $$\Mod(\mathbf{A}):= \mathrm{Kar}(\mathrm{B}\mathbf{A})$$ as the finite semisimple $n$-category obtained by taking the Karoubi/condensation completion of $\mathrm{B}\mathbf{A}$. Furthermore we have 
\begin{equation}
\Omega(\Mod(\mathbf{A}))\simeq\mathbf{A} \,.\footnote{If a $n$-category $\mathbf{C}$ is connected, i.e.\ that $\pi_0(\mathbf{C})$ is a singleton, we also have
\begin{equation}\label{e1}
\Mod(\Omega\mathbf{C})\simeq\mathbf{C}\,.
\end{equation}}
\end{equation}

\end{definition}

\begin{rem}
   For small values of $n$, existing results \cite[Section 3.2]{Bhardwaj:2024xcx} imply that the $n$-category $\Mod(\mathbf{A})$ is finite semisimple. Thus it behaves as a semisimple category of modules in the relevant higher-categorical sense. When $n >3$, semisimplicity is subtler to prove and does not follow automatically from the construction of \cite{GJF19}. In what follows we therefore treat finite semisimplicity of 
$\Mod(\mathbf{A})$ as a working hypothesis, which is consistent with all examples we consider and is expected to admit a general proof. We leave the full higher-categorical proof to a dedicated treatment elsewhere.
\end{rem}

Physically, condensation completion should be interpreted as a fact that in TQFTs we should not treat certain trivial operators only as non-genuine operators, rather than totally trivial. They are non-genuine in the sense that they arise as \textit{condensation descendants} from operators of lower dimensions. The most salient feature of a category (of operators) $\mathcal{U}$ that is a condensation descendant is that it always admits as a boundary given by a category of objects in lower dimension $\cV$; categorically this realizes $\cU$ as $\Mod(\cV)$. This is the precise sense by which we mean a theory is built from a theory of lower dimension, i.e. it is realized as $\Mod(\cC)$ and admits a boundary given by $\cC$ which has its own category of topological operators.

\begin{example}
    In a specific example, we take $\ \Mod (\mathbb C) =\Vect$, the category of finite dimensional vector spaces. This category  has only one simple object, which we regard as the 2-dimensional vacuum theory. We can also do this iteratively $\Mod^n(\mathbb C) = \mathbf{nVect}$ to obtain the $(n+1)$-dimensional vacuum. We see that indeed the vacuum is solely comprised of non-genuine operators: the operators of higher dimensions can all end on an operator in one lower dimension, all the way down to point operators. This implies that a Morita equivalence to $\mathbf{pVect}$ will enforce that operators in dimension $p$ and below are condensation descendants.  
\end{example}

\begin{rem}
Even though we have seen many similarities between the theory of Pauli stabilizers codes and TQFTs, the category of excitations of the former does not usually include ones that arise from condensation completion. Instead, the operators are just the mobile excitations. More needs to be done to obtain the condensation operators, see \cite{Chen:2023qst} for some constructions.
\end{rem}

\subsection{Classifying Framed Topological Field Theories}\label{subsection:classifyingTO}
The classification of topological field theories was first strategized by Lan-Kong-Wen as a dual to surgery of manifolds \cite{Lan_2018,Lan_2019}. The concepts therein were further developed in \cite{JF}, and utilized in \cite{Decoppet:2025eic,JFY2} for classifying TQFTs in (3+1)d and (4+1)d. We will be mainly concerned with framed TQFTs in this section. A framed TQFT can be thought of as the low energy limit of a phase of matter in which one is allowed to  couple a framing of the spacetime lattice, on which defines the theory. The surgery theory of compact manifolds relies fundamentally on Poincaré duality. An analogous mechanism appears in the study of topological order. In this latter case, the role of Poincaré duality is played by the triviality of the center in Definition \ref{def:TO}, which enforces a corresponding pairing between excitations in complementary dimensions. In (2+1)-dimensions, this condition can be made completely explicit, and is equivalent to the non-degeneracy of the modular $S$-matrix.

Given a TQFT described by a fusion 
$n$-category, the appropriate notion of classification which is analogous to classification up to surgery in the manifold setting, is classification up to \textit{Morita equivalence}. In this framework, two phases are regarded as equivalent precisely when they are related by an invertible bimodule, reflecting the physical idea that they differ only by the insertion of gapped interfaces.
In particular, the sequence of gapped interfaces used in \S\ref{subsection:targetstabilizer} to classify stabilizer codes admits a direct analogue in this setting: it realizes Morita equivalences between fusion 
$n$-categories describing topological orders. We now describe a procedure for constructing Morita equivalences that implement the role of surgery outlined in the previous section.

\begin{construction}\label{construction}
Let  $\mathbf{A}$ be a fusion $n$-category. Its category of dimension $p$-operators is given by $\Omega^{n-p} \mathbf{A}$.
\begin{enumerate}
    \item The category of operators $\Omega^{n-p}\mathbf{A}$, for $p=1, \ldots \lfloor \frac{n-1}{2}\rfloor$ are all given by symmetric categories.\footnote{In $n$-dimensions, the algebra of $p$-dimension operators is automatically $E_{n-p}$-monoidal. and $E_{n-p}$-monoidal is $E_{\infty}$-monoidal as soon as $n-p \geq  p+2$. This is fact is called the stabilization hypothesis, introduced by Baez-Dolan \cite{Baez:1995xq}}\\
    \item We can choose a fiber functor $F: \Omega^{n-1}\mathbf{A} \to \mathbf{Vect}$, and suspend it to a functor $\Mod^{n-1}(F): \Mod^{n-1}(\Omega^{n-1}\mathbf{A}) \to \mathbf{nVect}$. $\Mod^{n-1}(\Omega^{n-1}\mathbf{A})$ is the sub-$n$-category of $\mathbf{A}$ consisting of operators arising as condensation descendants of $1$-dimensional operators.\\
\item The functor $\Mod^{n-1}(F)$ makes $\mathbf{nVect}$ into a module for $\Mod^{n-1}(\Omega^{n-1}\mathbf{A})$. Take
the base change of this module along the inclusion $\Mod^{n-1}(\Omega^{n-1}\mathbf{A})\subset \mathbf{A}$ which produces a module $\mathbf{M}_1$ for $\mathbf A$
given by:
\begin{equation}
    \mathbf M_1 : = \mathbf{A} \boxtimes_{\Mod^{n-1}(\Omega^{n-1}\mathbf{A})} \mathbf{nVect}\,.
\end{equation}
The category $\End_{\mathbf A}(\mathbf M_1)$ has no nontrivial $1$-dimensional operators by construction, and $\mathbf M_1$ witnesses a Morita equivalence between $\mathbf{A}$ and $\End_{\mathbf A}(\mathbf M_1)$.\\
\item Iterate the first three steps: starting with $\End_{\mathbf{A}}(\mathbf{M}_1)$, we can construct the module 
\begin{equation}
   \mathbf M_2 :=  \End_{\mathbf{A}}(\mathbf{M}_1) \boxtimes_{\left(\Mod^{n-2}(  \End_{\mathbf{A}}(\mathbf{M}_1)\right))} \mathbf{nVect}\,.
\end{equation}
The category $\End_{\End_{\mathbf{A}}(\mathbf{M}_1)}(\mathbf M_2)$ has no nontrivial $1$ and $2$-dimensional operators.\\

\item We iterate to get the category $\End_{\End_{(\ldots)}(\mathbf{M}_{\lfloor \frac{n-1}{2} \rfloor-1})}(\mathbf M_{\lfloor \frac{n-1}{2} \rfloor})$
which has no operators of dimension up to and including $\lfloor \frac{n-1}{2} \rfloor$. By triviality of the center, this implies that all operators which link with the ones that have been removed using the steps above, arise as condensation descendants in the sense of \cite{Gaiotto:2019xmp}.\\
\item Classify the middle dimensional operators that remains, if there are nontrivial ones.
\end{enumerate}

\end{construction}


\begin{example}
   The technique explained in Construction \ref{construction} have been integral in solving many problems regarding (3+1)d and (4+1)d  TQFT, in both the bosonic and fermionic cases. In particular, it was shown that bosonic (3+1)d TQFTs are classified by a finite group $G$ and a class in $\rH^4(\rB G; \mathbb C^\times)$.
   This understanding of TQFTs has also contributed in a host of ways to solving problems in (2+1)d and (3+1)d. This involves:
   understanding of anomaly indicators for fermionic symmetries \cite{Debray:2023iwf}, the classification of fusion 2-categories \cite{Decoppet:2024htz}, duality symmetry in (3+1)d \cite{Bhardwaj:2024xcx,DelZotto:2025yoy}, how non-invertible symmetries act on local operators \cite{Putrov:2026csz}, the classification of symmetry enriched topological order \cite{Decoppet:2025eic}, and symmetry enforced gaplessness \cite{Debray:2025kfg,Debray:2026sqw}.
\end{example}

\begin{rem}
  Topological order and TQFTs are more naturally aligned with manifold-level phenomena: their invariants and classification reflect genuinely geometric features, analogous to the role of $\rL$-theory over $\mathbb{Z}$, which detects subtle manifold-theoretic structure such as the signatures. By contrast, stabilizer codes are governed primarily by algebraic data, closer in spirit to $\rL$-theory over $\mathbb{Z}_p$, where the structure is controlled by purely algebraic invariants without direct sensitivity to smooth or geometric refinement.
In this sense, the distinction parallels that between integral and mod-$p$ information. The former retains global geometric content, while the latter captures formally algebraic structure. 
\end{rem}


\begin{theorem}\label{thm:classificationTQFT}
    Let $n>4$. The classification of anomalous $(n-1)$d framed topological field theories valued in the universal target $\mathbf{W}$ in Remark \ref{rem:semisimpletarget}, up to Morita equivalence, is given by 
    \begin{equation}\label{eq:Lframed}
            \Mod^2(\mathbf{W}^{n-2})= \begin{cases}
  0 & n \equiv 1,3 \mod 4 \\
  \Z/2  & n \equiv 2 \mod 4 \\
  \Witt^\pt & n\equiv 0 \mod 4\,,
        \end{cases}
    \end{equation}
    where 
    \begin{equation}\label{eq:pointedWitt}
        \Witt^{\pt} = \bigoplus_{p} \begin{cases}
             \Z/2\oplus \Z/8 & p=  2 \\
  \Z/2\oplus \Z/2   & p \equiv 1 \mod 4\\
  \Z/4 & p\equiv 3 \mod 4\,,
        \end{cases}
    \end{equation}
\end{theorem}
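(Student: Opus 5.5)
The plan follows the template of the proof of Theorem~\ref{thm:PauliClassification}, with Construction~\ref{construction} in place of algebraic surgery. Let $\mathbf{A}$ be a fusion $(n-2)$-category in $\mathbf{W}^{n-2}$ describing an anomalous $(n-1)$d framed theory, viewed as an object of $\Mod^2(\mathbf{W}^{n-2})$. First I run steps (1)--(5) of Construction~\ref{construction}. Each operator category $\Omega^{n-p}\mathbf{A}$ with $p\le\lfloor\frac{n-1}{2}\rfloor$ is symmetric by the stabilization hypothesis. I then use Deligne's theorem (in its super version, since framed theories allow fermions) to choose fiber functors, and build the successive modules $\mathbf{M}_1,\mathbf{M}_2,\dots$. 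Each $\mathbf{M}_i$ is a Morita equivalence, so the class in $\Mod^2(\mathbf{W}^{n-2})$ does not change. At the end, $\mathbf{A}$ is Morita equivalent to a theory whose only genuine operators lie in the middle dimension. This is the analogue of Proposition~\ref{prop:evenodd} and Corollary~\ref{cor:singledegree}.

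Next I split into cases by $n \bmod 4$. For $n$ odd there is no middle dimension. The operators cancelled in pairs by triviality of the center, playing the role of the Lagrangian in Proposition~\ref{prop:oddLgroups}, so they are all condensation descendants. Hence the theory is Morita trivial and the group is $0$.

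For $n=2m$, the surviving operators are $(m-1)$-dimensional and form a finite abelian group $D$ with a nondegenerate linking pairing, by trivial center. Its symmetry type comes from the same $E_{m}$-braiding count used for stabilizer codes. When $m$ is odd, the pairing is skew. Lagrangian subgroups then always exist except for a residual $\Z/2$ coming from the quadratic refinement at $p=2$, which gives the $\Z/2$; this is the analogue of the Arf/Kervaire class surviving in degree $2 \bmod 4$. When $m$ is even, the data is a metric group $(D,q)$, meaning a quadratic form valued in $\bC^\times$. Classes modulo Morita equivalence, i.e.\ modulo metric groups admitting a Lagrangian (condensable) subgroup, form by definition the pointed Witt group $\Witt^{\pt}$. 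Its $p$-primary decomposition in \eqref{eq:pointedWitt} is the classical computation of the Witt group of torsion quadratic forms ($\Z/8\oplus\Z/2$ at $2$, $\Z/2\oplus\Z/2$ or $\Z/4$ at odd $p$ according to $p \bmod 4$).

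The main obstacle is justifying step (5) for $n>4$: showing that removing all sub-middle operators forces their linking partners to be condensation descendants, and that each $\mathbf{M}_i$ really yields a Morita equivalence inside $\mathbf{W}$. This relies on the working hypothesis that $\Mod(\mathbf{A})$ is finite semisimple and on full dualizability in $\mathbf{W}$. I would first try to prove it inductively via $\Omega\Mod(\mathbf{A})\simeq \mathbf{A}$ and the trivial-center condition, taking semisimplicity as given, as in the Remark following the definition of $\Mod$. A secondary subtlety is confirming the residual $\Z/2$ in degree $2 \bmod 4$. I would realize it explicitly by a fermionic/Arf-type theory and check, using the framing, that it is not Morita trivial.
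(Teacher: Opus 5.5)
Your architecture matches the paper's: condense the symmetric sub-middle operators via Construction~\ref{construction}, observe that nothing survives when $n$ is odd, and for $n=2m$ classify the remaining middle-dimensional data up to isotropic reduction. The gap is your claim that the surviving $(m-1)$-dimensional operators ``form a finite abelian group $D$ with a nondegenerate linking pairing, by trivial center.'' Triviality of the center gives nondegeneracy of the pairing. It does not make the surviving simple operators invertible. Without invertibility there is no group $D$, no metric group, and no reason for the answer to be the \emph{pointed} Witt group.

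A quick test shows the step cannot be right as stated. Nothing in your argument for $n=2m$ uses $n>4$. At $n=4$ there are no symmetric operators to condense and the lines are already middle-dimensional. Yet nonabelian anyons exist there (Ising, Fibonacci, the $\mathcal C(\mathfrak g,k)$), and the Morita classes form the full group $\cW$, which is strictly larger than its pointed part. So your argument would give the wrong answer in exactly the case the hypothesis excludes.

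The paper fills this hole with \cite{JFY1}. That result says a higher fusion category (an $m$-category with $m\geq 2$, in the paper's indexing) whose only genuine operators sit in the middle dimension has fusion given by a twisted group algebra of a finite group $A$. This grouplike theorem is where $n>4$ is actually used. Only after it does the $E_m$-monoidal structure become a class in $I^{2m}_{\bC^\times}(\rB^m A)$. The paper identifies this class via \cite{JFR2025} with symmetric ($m$ even) or antisymmetric ($m$ odd) $\mathbb Q/\Z$-valued bilinear forms on $A$. Isotropic reduction then yields $\Witt^{\pt}$ and $\Z/2$ respectively. Your appeal to ``the same $E_m$-braiding count used for stabilizer codes'' should be replaced by this cohomological identification.

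The obstacle you single out (step (5) and semisimplicity) is taken as part of the working framework in the paper. The load-bearing input you are missing is the grouplike theorem. Your use of Deligne's super fiber functor in step (2) is a reasonable refinement and does not affect this.
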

\begin{proof}
   We use the steps in Construction \ref{construction} to establish the Morita equivalences. A $(n-1)$ framed TQFT is described by a fusion $(n-2)$-category, and so we only consider the operator up to dimension $\lfloor \frac{n-3}{2} \rfloor$ are symmetric. We note that if $n$ is odd, then $\lfloor \frac{n-3}{2} \rfloor = \lfloor \frac{n-2}{2} \rfloor$. Therefore by condensing all the symmetric operators, we are actually left with no middle dimensional operator to classify. Therefore in the case when $n$ is odd, all $(n-1)$d TQFTs are Morita equivalent to an invertible TQFT and this gives the first case in \Cref{eq:Lframed}. 
   
   If $n=2m$ then there are middle dimensional operators that is not symmetric monoidal. However, it follows from \cite{JFY1} that a $m$-category for $m\geq 2$ with only genuine operator in dimension $m$, has fusion algebra given by a twisted group algebra of a finite group $A$. The group algebra is required to be $\mathbf{W}$-linear, and hence the twist is a cohomology class valued in $I_{\mathbb{C}^\times}$. The middle dimensional operators are moreover $E_m$-monoidal and thus the twisted group algebra is described by a class in $I^{2m}_{\mathbb{C}^\times}(\rB^m A)$, which factors through $I^{2m}_{\mathbb{Q}/\Z}(\rB^m A)$ because $A$ is finite. A result of \cite{JFR2025} shows that $I^{2m}_{\mathbb{C}^\times}(\rB^m A)$ is the set of either symmetric (when $m$ is even) or antisymmetric (when $m$ is odd) $\mathbb Q/\Z$-valued  bilinear forms on $A$. The classification of TQFTs then reduces to a classification of such forms up to isotropic reduction, i.e. the Witt group of finite abelian groups when $m$ is even and $\Z/2$ when $m$ is odd.
\end{proof}

\begin{rem}
    The previous theorem has applications in the work of Johnson-Freyd-Reutter, concerning computing the properties of the universal target $\mathbf{W}$. In particular there is a short exact sequence \cite{JFtalk1,JFtalk2,JFtalk3,JFSDVS,JFtalk5:KITP}, that can be used to compute the higher absolute Galois groups of $\mathbb R$. The analogue of the short exact sequence is something we do not know for stabilizer codes. 
\end{rem}

\subsection{Relations Between the Lattice and Continuum}\label{subsection:latticeandcontinuum}

We end with a comparison between the nature of invertible stabilizer codes, described in Corollary \ref{cor:nogapped}, and invertible framed field theories in the continuum. This will require the following theorem due to Johnson-Freyd-Reutter.
\begin{theorem}[\cite{JFR2025}]
    Except for six exception,  corresponding to the dimensions where there exists a nontrivial Arf-Brown-Kervaire invariant, a nontrivial invertible framed TQFT does not admit a topological boundary condition in any semisimple category.
\end{theorem}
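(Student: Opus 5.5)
The plan is to reduce the statement to Theorem~\ref{thm:JFR2025-type} structure already available for framed TQFTs. Rather than re-deriving it, I would obtain it from the classification of invertible framed TQFTs by the Anderson dual $I_{\bC^\times}(\pt)$, together with the description of topological boundaries in Remark~\ref{rem:semisimpletarget}. Concretely, an $n$-dimensional invertible framed TQFT is a class $\alpha\in\pi_{-n}I_{\bC^\times}(\pt)\cong\Hom(\pi_{n}^s,\bC^\times)$, i.e.\ a character of the stable stem $\pi_n^s$. A topological boundary condition valued in a semisimple category is then exactly an object of $\Mod(\mathbf W^{n-1})$ mapping to $\alpha$ under the map of Lemma~\ref{lem:boundaryinv} (with $\mathbf W$ in place of $\TS$). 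So the task is to identify the image of $\Mod(\mathbf W^{n-1})^\times$-type objects in $(\mathbf W^n)^\times$ and show it is trivial except in the Arf--Brown--Kervaire dimensions.

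First step: apply Construction~\ref{construction} to a candidate boundary $\mathbf M$ of $\alpha$. Surgery is performed on its symmetric (stable-range) operators, condensing them via fiber functors, until only middle-dimensional operators remain. By the argument in the proof of Theorem~\ref{thm:classificationTQFT}, what is left is a finite abelian group $A$ with a (skew-)symmetric $\mathbb Q/\Z$-valued form, and the anomaly $\alpha$ is recovered as the Witt-type invariant of this form. This is the Gauss sum or Arf invariant, viewed as a character of $\pi_n^s$.

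Second step: determine which characters of $\pi_n^s$ arise this way. The invariants produced by such finite quadratic data factor through the $J$-homomorphism/Kervaire part of the stem. For a symmetric form in dimension $4k$, the Gauss sum is an eighth root of unity. Framing, however, kills the signature-type contribution: the relevant generator in $\pi_{4k}^s$ is zero or of odd order, so the character is trivial. For a skew form in dimension $4k+2$, the invariant is the Arf invariant, which pairs with the Kervaire class $\theta_j\in\pi_{2^{j+1}-2}^s$. This is nonzero exactly when $\theta_j$ exists, i.e.\ in dimensions $2,6,14,30,62$ and possibly $126$ by Hill--Hopkins--Ravenel.

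The main obstacle is this second step. One must rigorously show that the character of $\pi_n^s$ defined by a semisimple boundary is always pulled back along the Kervaire/Arf map. I would try to prove this by expressing the partition function of $\alpha$ on a framed manifold through the boundary, cutting along the middle-dimensional data. That reduces it to a Brown--Kervaire invariant of the framed manifold with coefficients in $A$. I would then invoke Browder's theorem that this vanishes outside the dimensions $2^{j+1}-2$. Finally, I would check the six exceptional cases by exhibiting boundaries built from the Arf form on $(\Z/2)^2$.
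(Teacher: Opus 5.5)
The paper gives no proof of this statement. It is quoted from \cite{JFR2025}, and the only supporting argument in the text is the corollary right after it (triviality for $n\equiv 0 \bmod 4$) plus the sentence tying $n\equiv 2 \bmod 4$ to the Arf--Brown--Kervaire invariant. Judged on its own, your outline has a sensible shape: reduce a boundary to middle-dimensional pointed data, express the anomaly as a Gauss sum over $H^m(M;A)$, and finish with Browder.

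\textbf{The central gap.} It sits exactly where you place the ``main obstacle'': the claim that the character $\alpha$ of $\pi^s_n$ is pulled back from a Brown--Kervaire invariant is announced, not proved, and that claim is the whole content of the theorem. To close it you would need two things.
\begin{enumerate}
\item A nonzero boundary $\mathbf M\colon \mathbf 1\to\alpha(\pt)$ must recover $\alpha(\pt)$ as the condensate of the algebra $\mathbf M^R\circ\mathbf M$. Then $\alpha$ is the gauged (state-sum) theory of the reduced boundary. For closed framed $M^{2m}$, $\alpha(M)$ is a normalized sum over $x\in H^m(M;A)$ of the character $\omega\in I^{2m}_{\bC^\times}(\rB^mA)$ evaluated on the framed bordism class of $(M,x)$.
\item For $m$ odd, discard the odd part of $A$ (it is symplectic, hence metabolic) and isotropically reduce the $2$-part to $\Z/2$; this is a further condensation on the boundary, so $\alpha$ is unchanged. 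You must then show that the framing-dependent quadratic refinement of the alternating cup form on $H^m(M;\Z/2)$ determined by $\omega$ is Browder's refinement, so that its Arf invariant is the Kervaire invariant.
\end{enumerate}
Only after this does Browder's theorem apply.

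Your first step also borrows the proof of Theorem~\ref{thm:classificationTQFT}, which has three limitations here. It is stated only for $n>4$. The twisted-group-algebra input from \cite{JFY1} does not cover the case $n=2$, where a boundary is a $\mathrm{Cl}(1)$-module. And it itself cites \cite{JFR2025} for the description of $I^{2m}_{\bC^\times}(\rB^mA)$, so part of your input is the very source you are trying to reprove.

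\textbf{The $n\equiv 0\bmod 4$ case.} Your argument here is wrong as written. The stems $\pi^s_{4k}$ are not ``zero or of odd order''; for instance $\pi^s_8\cong\pi^s_{16}\cong(\Z/2)^2$. So an eighth-root-of-unity valued character is not killed by any order argument. The correct mechanism is the one in the corollary following the statement. Grant that for $m=2k$ the middle-dimensional datum is a quadratic function $q$ on $A$ entering through the Pontryagin square. This is defined on every oriented manifold, so the resulting invariant is the partition function of an invertible \emph{oriented} theory. Hence the character factors through $\pi^s_{4k}\to\Omega^{\mathrm{SO}}_{4k}$, which is zero for $k>0$ because all Stiefel--Whitney and Pontryagin numbers of a framed manifold vanish. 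This is also what separates the two even cases: for $m$ odd, the cup form on $H^m(M;\Z/2)$ of an oriented manifold is alternating, and a quadratic refinement needs the framing, which is where the Kervaire invariant enters.

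\textbf{Minor points.} Hill--Hopkins--Ravenel exclude $\theta_j$ only for $j\ge 7$. Existence in dimension $126$ is due to Lin--Wang--Xu, and it is needed for the count of six exceptions in the statement. Your opening sentence also cites a theorem label that does not exist in the paper.
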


One way to summarize this result is by studying the image of the map $\Mod^2(\mathbf W^{n-2}) \to I^n_{\mathbb C^\times}(\pt)$. When $n \equiv 2 \mod 4$ the map produces a TQFT, which we can think of as a generalized Dijkgraaf-Witten theory, that is the Arf-Brown-Kervaire invariant.

\begin{corollary}
    When $n \equiv 0 \mod 4$ the map $\Mod^2(\mathbf W^{n-2}) \to I^n_{\mathbb C^\times}(\pt)$ is trivial.
\end{corollary}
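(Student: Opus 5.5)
The plan is to read the statement off from two facts: the description of the map $\Mod^2(\mathbf W^{n-2}) \to I^n_{\mathbb C^\times}(\pt)$ as ``take the bulk whose boundary this is'', and the theorem of Johnson-Freyd--Reutter just quoted. By Theorem~\ref{thm:classificationTQFT}, for $n \equiv 0 \bmod 4$ the source is $\Witt^{\pt}$. Its classes are represented by a middle-dimensional layer of $E_m$-monoidal operators, with $n = 2m$ and $m$ even, whose fusion is a twisted group algebra of a finite abelian group $A$ carrying a symmetric $\mathbb Q/\Z$-valued bilinear form.

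The map to $I^n_{\mathbb C^\times}(\pt)$ sends such an $(n-1)$d theory $X$ to the invertible $n$d bulk $\cS$ of which $X$ is a gapped boundary, as in Lemma~\ref{lem:boundaryinv} and its TQFT analogue. So the statement is equivalent to the claim that every invertible $n$d framed theory arising this way is trivial.

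The key step is to apply the quoted theorem contrapositively. If $\cS$ is nontrivial, then it admits a topological boundary condition in a semisimple category, namely $X \in \Mod(\mathbf W^{n-1})$, which is semisimple by the working hypothesis stated after the definition of $\Mod$. The theorem then forces $n$ to be one of the exceptional dimensions carrying a nontrivial Arf--Brown--Kervaire invariant.

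It remains to check that no exceptional dimension is $\equiv 0 \bmod 4$. The Arf--Brown--Kervaire and Kervaire-type invariants live in dimensions $n \equiv 2 \bmod 4$: this is exactly the case where the middle operators have skew-symmetric linking and $\Mod^2$ contributes $\Z/2$, the same parity pattern seen in Propositions~\ref{prop:2mod4} and~\ref{prop:0mod4}. Hence for $n \equiv 0 \bmod 4$ the bulk $\cS$ is trivial, and the map is zero on every element.

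As an independent sanity check, one can show directly that a symmetric $E_m$-braided pointed theory on $A$ has an anomaly living in $I^{n}_{\mathbb C^\times}(\pt)$. That group is $\pi_{-n}$ of the Brown--Comenetz/Anderson dual of the sphere, i.e.\ the $\mathbb C^\times$-dual of $\pi_{n-1}^s$. The Gauss-sum/signature-type anomaly of a Witt class pairs with framed $(n-1)$-manifolds through a bordism invariant. For $n \equiv 0 \bmod 4$ such an invariant would have to come from the signature-type component, but the signature vanishes on stably framed manifolds, so this pairing is trivial.

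The main obstacle is this last identification: making precise that the anomaly of the Witt-class boundary, a Gauss sum, factors through an invariant that vanishes on framed manifolds in these dimensions. I would first try to cite \cite{JFR2025} for the exact list of exceptional dimensions, and fall back on the signature argument only if that list is not explicit enough.
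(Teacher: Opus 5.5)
Your main argument is correct, but it is not the paper's route. The paper does not use the Johnson-Freyd--Reutter theorem for this corollary. Its argument runs as follows: for $n=2m$ with $m$ even, the middle-dimensional data is a symmetric $\mathbb Q/\Z$-valued form on $A$. Such a form always admits a quadratic refinement, and this refinement lifts the anomaly theory from framed to oriented. Oriented invertible theories are detected by oriented bordism invariants, i.e.\ characteristic numbers of the tangent bundle, and these all vanish on framed manifolds.

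You instead apply the quoted theorem contrapositively: the bulk has a semisimple topological boundary by construction, so it must be trivial unless $n$ is exceptional. For this to be a proof rather than a parity heuristic, cite the actual list. The exceptional dimensions are those of the Kervaire invariant one classes, $n=2^{j+1}-2\in\{2,6,14,30,62,126\}$, all $\equiv 2 \bmod 4$. The analogy with Propositions \ref{prop:2mod4} and \ref{prop:0mod4} does not establish this.

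Your route never uses the $\Witt^{\pt}$ description, and it gives vanishing in every $n$ outside those six dimensions. The cost is that all the content sits in \cite{JFR2025} and, through it, in the Kervaire invariant one problem; the hypothesis $n\equiv 0 \bmod 4$ enters only through the list. The paper's route is intrinsic and explains \emph{why} the anomaly dies, with $n\equiv 0\bmod 4$ entering through the symmetry of the form. Its load-bearing step, that a quadratic refinement supplies an orientation structure, is exactly the step you flagged as the obstacle in your sanity check.

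That sanity check also has an index slip. $I^n_{\mathbb C^\times}(\pt)=\pi_{-n}I_{\mathbb C^\times}\cong\Hom(\pi^s_n,\mathbb C^\times)$ pairs with framed $n$-manifolds, not $(n-1)$-manifolds. In dimension $n-1\equiv 3\bmod 4$ there is no signature at all.
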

\begin{proof}
    Any symmetric $\mathbb Q/\Z$-valued form on an abelian group admits a quadratic refinement. The choice of quadratic refinement selects an orientation structure on the framed invertible theory.\footnote{For example, in (2+1)d a ribbon structure allows for one to define a self twist of an anyon. This is a quadratic refinement of the braiding. } Any oriented invertible theory in dimension $n \equiv 0 \mod 4$ is classified by the oriented bordism group, with the bordism invariants given by characteristic classes for the tangent bundle. However, all characteristic classes vanish on framed manifolds so the map is trivial. 
\end{proof}

In the absence of global symmetry, the group
$I^n_{\mathbb C^\times}(\pt)$ can be reframed to parametrize purely gravitational anomalies for the boundary $(n-1)$d theory. Therefore if a  $(n-1)$-dimensional theory has a gravitational anomaly that is not one of the six Arf-Brown-Kervaire invariants, then it is gapless.
This contrasts with the lattice framework, such as that of invertible Pauli stabilizer codes, also corresponding to Clifford QCA. Indeed, we notice that Corollary \ref{cor:nogapped}  implies that since the spectrum of Clifford QCAs has groups that are trivial in $I_{\mathbb{C}^\times}(\pt)$, they can admit gapped boundaries in more than just those dimensions $n \equiv 2 \mod 4$. Rather, they can admit gapped boundaries for all $n \equiv 0 \mod 2$ which is not the case for invertible framed TQFTs. 

\begin{rem}
   The classification in \Cref{thm:PauliClassification} resembles the classification of framed TQFTs in \Cref{thm:classificationTQFT}, with the caveat that only working with prime qudits on the lattice, will not give as rich of a decomposition into primes as $\Witt^\pt$ does in Equation \eqref{eq:pointedWitt}. However, notably, this gives strong evidence to the continuum limit of certain lattice theories under our consideration. We predict that in the cases when $n\equiv 0 \mod 4$ the invertible  bulk  theories constructed by Clifford QCAs, which admit  gapped boundaries, will exhibit the gap closing on its boundary when taking the continuum limit. 
\end{rem}

For completeness we now turn our attention to the classification of lower dimensional TQFTs.



\begin{proposition}
    In (1+1)d, TQFTs decompose into a direct sum of vacua. Over a single vacuum, there are no nontrivial TQFTs.
\end{proposition}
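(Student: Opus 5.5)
The statement is about (1+1)d TQFTs in the sense of Definition \ref{def:TO}: a multifusion $1$-category $\mathbf{A}$ with trivial center. I will argue in two steps. First, decompose $\mathbf{A}$ along the vacua. Second, show that a single vacuum forces $\mathbf{A}\simeq \Vect$.

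\emph{Step 1: decomposition.} Since $\mathbf{A}$ is multifusion, its unit splits as a finite direct sum of simple objects, $\mathds 1=\bigoplus_{i}\mathds 1_i$. Each $\mathds 1_i$ is an idempotent, i.e.\ a vacuum. This gives the standard decomposition
\begin{equation*}
\mathbf{A}\simeq\bigoplus_{i,j}\mathbf{A}_{ij},\qquad \mathbf{A}_{ij}=\mathds 1_i\otimes\mathbf{A}\otimes\mathds 1_j .
\end{equation*}
Here each diagonal piece $\mathbf{A}_{ii}$ is fusion, and each off-diagonal piece $\mathbf{A}_{ij}$ is an invertible bimodule between $\mathbf{A}_{ii}$ and $\mathbf{A}_{jj}$ whenever it is nonzero. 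Triviality of the center is a Morita-invariant condition. Consequently $\mathbf{A}$ splits into indecomposable blocks, each Morita equivalent to its diagonal fusion pieces, and the theory is a direct sum of theories over single vacua. This is the statement that (1+1)d TQFTs are direct sums of vacua.

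\emph{Step 2: a single vacuum.} Now assume $\mathbf{A}$ is fusion, meaning $\mathds 1$ is simple. For a fusion $1$-category the relevant center is the Drinfeld center $Z(\mathbf{A})$. This is a nondegenerate braided fusion category of global dimension $\dim(\mathbf{A})^2$. Trivial center means $Z(\mathbf{A})\simeq\Vect$, so $\dim(\mathbf{A})^2=1$.

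Positivity of global dimension over $\mathbb C$ then gives $\dim(\mathbf{A})=1$. Since $\dim\mathbf{A}=\sum_X|X|^2$ with each summand at least $1$, and the term for $\mathds 1$ equals $1$, the only simple object is $\mathds 1$. Therefore $\mathbf{A}\simeq\Vect$, the trivial theory. Equivalently, $\mathbf{A}\simeq\Mod(\mathbb C)$ is a condensation descendant of the vacuum.

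\emph{Main obstacle.} The delicate point is Step 2: identifying "trivial center" in Definition \ref{def:TO} with triviality of the Drinfeld center, and using the dimension formula $\dim Z(\mathbf{A})=\dim(\mathbf{A})^2$ together with positivity of dimensions. Positivity holds over $\mathbb C$ for pseudo-unitary categories, and in general one uses $|X|^2>0$ for squared norms. I would first cite Etingof–Nikshych–Ostrik for this. If the paper's notion of center is the $E_2$-center of $\mathbf{A}$ viewed as a $1$-category, I would check that it agrees with $Z(\mathbf{A})$, since $\mathbf{A}$ is then its own category of line operators.
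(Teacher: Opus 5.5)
Your argument is correct. It shares the paper's opening move but handles the decisive step differently.

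Both proofs split the unit along the primitive idempotents of $\End(\mathds 1)$: the paper's $\delta_p$, your $\mathds 1_i$. The paper's blocks $M_{p,p'}$ of the boundary module $M_p=\Mod(\mathds 1_p)$ are essentially your pieces $\mathbf A_{pp'}$. The paper then rebuilds $\cC\simeq\End_\cC(M_p)$ and asserts two things: each block is Morita invertible, and the only invertible fusion category is $\Vect$. This gives the matrix-category description at once. However, the invertibility is never actually derived from the trivial-center hypothesis; the splitting of $\End(\mathds 1)$ into copies of $\mathbb C$ that the paper cites holds for every multifusion category. You instead Morita-reduce to the fusion piece $\mathbf A_{ii}$ and prove the key fact from $\dim Z(\mathbf A)=\dim(\mathbf A)^2$ and positivity of squared norms. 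This makes explicit exactly where trivial center is used, and supplies the justification the paper treats as a black box.

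What your write-up leaves implicit is the final identification, and two statements should be tightened.

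First, trivial Drinfeld center forces $\mathbf A$ to be indecomposable. Indeed, $Z(\mathbf A^1\oplus\mathbf A^2)\simeq Z(\mathbf A^1)\oplus Z(\mathbf A^2)$ has a non-simple unit. So there is a single block, and ``splits into indecomposable blocks \dots\ direct sum of theories over single vacua'' is misleading: the vacua are joined by the invertible walls $\mathbf A_{ij}$ rather than decoupled. Morita invariance then gives $Z(\mathbf A_{ii})\simeq Z(\mathbf A)\simeq\Vect$, and Step 2 gives $\mathbf A_{ii}\simeq\Vect$. Every invertible $\Vect$--$\Vect$ bimodule is $\Vect$, so $\mathbf A\simeq\Mat_n(\Vect)$. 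That is the precise content of ``direct sum of vacua'' and matches the paper's conclusion.

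Second, squared norms need not satisfy $|X|^2\geq 1$; in the Yang--Lee category $|X|^2=(3-\sqrt5)/2$. Only $|X|^2>0$ is needed, which you also invoke.
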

\begin{proof}
    Let $\cC$ be a (1+1)d TQFT. By Definition \ref{def:TO} the algebra of 0-dimensional operators, denoted $\Omega \cC$, is a commutative separable finite-dimensional algebra over $\mathbb C$. An algebra of this form decomposes as a direct sum of copies of $\mathbb C$ indexed by the set $\hom(\Omega \cC, \mathbb C)$. For a point $p \in \hom(\Omega \cC, \mathbb C)$ one obtains a projector $\delta_p \in \Omega \cC$ which can be used to project onto a direct summand $\mathds 1_p = \delta_p \mathds{1} \delta_p$ of $\mathds{1} \in \cC$. The object $\mathds 1_p$ is automatically a separable associative algebra in $\cC$, thus we can consider the category of bimodules $\Bimod_{\cC}(\mathds 1_p)$, with Morita equivalence implemented by $M_p = \Mod(\mathds{1}_p)$. An object of $M_p$ is a bimodule between $\mathds 1_p$ and the unit $\mathds 1$, which decomposes as a sum over $\hom(\Omega \cC,\mathbb{C})$. This implies a decomposition of the category $M_p$ as a direct sum of categories
    $$M_p= \bigoplus_{p'\in \hom(\Omega\cC,\mathbb C)} M_{p,p'}$$
    where $M_{p,p'}$ are the categories of $\delta_{p}\text{-}\delta_{p'}$ bimodule objects in $\cC$. Using the bulk-boundary relationship, we know that $\End_\cC(M_p) \simeq \Bimod_\cC(\mathds 1_p)$ and one can compute  that
    $$\End_\cC(M_p) = \bigoplus_{p',p''\in \hom(\Omega \cC, \mathbb C)} \hom(M_{p,p'},M_{p,p''}).$$  
    Using the Morita equivalence between $\cC$ and $\Bimod_{\cC}(\mathds 1_{p})$, we can then compute that
    \begin{equation}
        \Omega \cC = \bigoplus_{p' \in \hom(\Omega \cC,\mathbb C)} \hom(M_{p,p'},M_{p,p'}).
    \end{equation}
    By the fact that the algebra describing 0-dimensional operators decomposes as a direct sum of copies of $\mathbb{C}$, we see that each summand $M_{p,p'}$ is (Morita) invertible in the category of Karoubi complete categories. But the only fusion category that is invertible is the vacuum $\mathbf{Vec}_{\mathbb{C}}$. This implies 
    \begin{equation}
        M_p = \bigoplus_{\hom(\Omega \cC,\mathbb C)} \mathbf{Vect}_{\mathbb{C}}.
    \end{equation}
    We therefore see that $\cC = \End_\cC(M_p)$ is given by a matrix category, mapping between distinct vacua. If the category only had a single vacua then there are no nontrivial morphisms and therefore there is nontrivial TQFT.
\end{proof}

  What is left is to explain the  classification of TQFTs in (2+1)d. The TQFTs are classified by their Witt class, in the Witt group introduced in \cite{davydov2010witt}. We denote this group by $\cW$. The structure of this group can be gleaned from another group called the Witt group of \textit{slightly degenerate} braided fusion categories, denoted $\sWitt$. The full structure of this group is given in \cite{davydov2013structure}:
\begin{equation}
    \sWitt = \sWitt^{\pt} \oplus \sWitt_{2} \oplus \sWitt_{\infty}
\end{equation}
  where $ \sWitt^{\pt}$ is generated by the Witt classes of slightly degenerate pointed braided fusion categories, $\sWitt_{2}$ is an elementary abelian 2-group shown in \cite[Theorem 7.2]{ng2022higher} to be of infinite rank, and $\sWitt_{\infty}$ is a free group of countable rank. 

  \begin{theorem}
       The map $\cW \to \sWitt$ is surjective with a $\Z/{16}$ kernel given by the theories $\Spin(n)_1$.
  \end{theorem}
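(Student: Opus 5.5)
The plan is to make the map $\cW \to \sWitt$ explicit and then handle surjectivity and the kernel separately, using two standard inputs from the theory of braided fusion categories.

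First, I would define the map by $[\cC] \mapsto [\cC \boxtimes \mathbf{sVect}]$. If $\cC$ is nondegenerate, then the Müger center of $\cC \boxtimes \mathbf{sVect}$ is exactly $\mathbf{sVect}$, so $\cC \boxtimes \mathbf{sVect}$ is slightly degenerate. I would then check that the map is well defined on Witt classes. If $\cC \simeq \cZ(\cA)$ is a Drinfeld center, then $\cZ(\cA)\boxtimes \mathbf{sVect}$ is the super-center $\cZ(\cA\boxtimes\mathbf{sVect}) $ relative to $\mathbf{sVect}$, which is trivial in $\sWitt$ by definition. The map is compatible with $\boxtimes$ because $\mathbf{sVect}\boxtimes_{\mathbf{sVect}}\mathbf{sVect}\simeq\mathbf{sVect}$, so it is a group homomorphism.

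For surjectivity, I would take a slightly degenerate $\cD$ and invoke the existence of minimal modular extensions (Johnson-Freyd–Reutter). This gives a nondegenerate $\cC \supset \cD$ with $\cD = \mathbf{sVect}' \cap \cC$, i.e. $\cD$ is the centralizer of the fermion. I then need to show that $[\cC\boxtimes\mathbf{sVect}] = [\cD]$ in $\sWitt$. To do this, I would condense the boson $(f,f)$ in $\cC \boxtimes \mathbf{sVect}$, where $f$ is the transparent fermion of $\cD$. The local modules over this condensable algebra recover $\cD$. Condensation preserves Witt class, which gives the identity.

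For the kernel, suppose $[\cC\boxtimes\mathbf{sVect}]=0$ in $\sWitt$. Then $\cC\boxtimes\mathbf{sVect}$ is super-Witt equivalent to $\mathbf{sVect}$. Unwinding the definition, there is a fusion category $\cA$ over $\mathbf{sVect}$ such that, up to Witt equivalence in $\cW$, $\cC$ becomes a minimal modular extension of $\mathbf{sVect}$. Making this unwinding rigorous is the main obstacle. I would try to run the Davydov–Nikshych–Ostrik argument: pass to $\cC \boxtimes \cZ(\cA)^{\mathrm{rev}}$ and condense the Lagrangian-over-$\mathbf{sVect}$ algebra, which produces an honest nondegenerate category containing $\mathbf{sVect}$ with trivial centralizer.

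Once that is done, Kitaev's sixteenfold way says that minimal modular extensions of $\mathbf{sVect}$ are exactly $\Spin(n)_1$ for $n \bmod 16$. Their Witt classes form a cyclic group generated by the Ising class $\Spin(1)_1$. Finally, the chiral central charge $c = n/2 \bmod 8$ is a Witt invariant, and it separates the sixteen classes. Hence the kernel is exactly $\Z/16$.
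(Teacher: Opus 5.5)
Your proposal is correct and follows the paper's route. The paper only cites Johnson-Freyd--Reutter's existence of minimal modular extensions for surjectivity and Davydov--Nikshych--Ostrik (Prop.~5.14) for the kernel, and your outline reconstructs those two arguments. The step you flag as the obstacle closes directly. By Davydov--Nikshych--Ostrik's characterization of the trivial class in $\sWitt$, $\cC\boxtimes\mathbf{sVect}\simeq\mathbf{sVect}'\cap\mathcal{Z}(\cA)$ for some fusion category $\cA$ over $\mathbf{sVect}$. M\"uger's theorem then gives $\mathcal{Z}(\cA)\simeq\cC\boxtimes\cC'$, where $\cC'$ is the centralizer of $\cC$, of dimension $4$, with $\mathbf{sVect}'\cap\cC'=\mathbf{sVect}$; this is a minimal modular extension of $\mathbf{sVect}$, in which the centralizer of $\mathbf{sVect}$ is $\mathbf{sVect}$ rather than trivial, so $[\cC]=-[\cC']$. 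The reverse inclusion, that each $\Spin(n)_1$ lies in the kernel, is your surjectivity identity applied to $\cD=\mathbf{sVect}$.
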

 The first part of this theorem is due to \cite{JFR} and latter due to {\cite[Proposition 5.14]{davydov2013structure}}. 
The main takeaway is that in the continuum, there are many more (2+1)d TQFTs than just those that arise from pointed categories, in sharp contrast with the more rigid situation in higher dimensions. The TQFTs that are classified by classes in the group $\sWitt_{2} \oplus \sWitt_{\infty}$ are more naturally associated with categories that come from representations of VOAs. For example, a distinguished subset of generators of this group is given by the modular tensor categories $\cC(\mathfrak g,k)$ of  highest weight integrable modules of an affine Lie algebra $\hat{\mathfrak{g}}$ at level $k$. The TQFTs constructed from such categories can admit gapless (1+1)d boundary theories, and to the best of our knowledge, no lattice realization of these theories is currently known.
In comparing with what can be constructed on the lattice, it seems that Pauli stabilizers codes in (2+1)d can only access classes in the pointed part of the group $\cW$, see \cite{Shirley:2022lhu}. Furthermore, the authors of \cite{Shirley:2022lhu} realize the (2+1)d theory as  the boundary of a bulk, rather than a bulk for a (1+1)d boundary, which is not in contradiction to the fact that e.g. we do not know how to realize Chern-Simons on a finite dimensional Hilbert space.
It was conjectured in \cite{Haah:2019fqd} that the full group of QCAs in four dimensions is the group $\cW$. Therefore whatever lies beyond the pointed part of $\cW$ also lies beyond the realm of Pauli stabilizer codes and Clifford QCA.

If indeed the classification of QCAs in (3+1)d is correct then there should exist stabilizer codes in (2+1)d, classified up to gapped interface, which correspond to nontrivial classes in $\sWitt_{2} \oplus \sWitt_{\infty}$. Drawing intuition from the continuum, we note that, unlike in higher dimensions, (2+1)d TQFTs are not classified by surgery. Notably, Construction~\ref{construction} does not produce grouplike operators: in (2+1)d TQFT there are no symmetric operators to condense, so the line operators are already ``middle dimensional''. Due to the existence of nonabelian anyons, the theorem of \cite{JFY1} fails to apply, and the classification becomes wild. We would predict that the same is true for stabilizer codes in (2+1)d.  Another conundrum is that the $\mathrm{K}$-theoretic formulation of QCAs employs matrix algebras as its algebras of observables. This naturally raises the question of how such a lattice-based framework, built from local algebras, can give rise to structures that resemble VOAs.


We now shift focus to a comparison between lattice and continuum theories via their anomalies. It has been proposed in~\cite{TLE:2025} that the generalized cohomology theory governing anomalies of lattice systems is represented by the spectrum of QCAs, $\Q(\pt)$. From our knowledge of the classification of Clifford QCAs~\cite{Haah:2019fqd, haah2025topological} we know there is a map from $\Witt(\Z/p)\to \pi_0(\Q(\Z^4))$, which is neither injective nor surjective. The reason for this is that a nontrivial Clifford QCA may equal to a non-Clifford circuit. A particular example is found in~\cite{fidkowski2024qca}. See Table I of~\cite{sun2025clifford} for more examples including in other dimensions. The spectrum that classifies anomalies for (bosonic) TQFTs up to (2+1)d is the spectrum $\mathbf{4Vect}^\times$. The category $\mathbf{4Vect}$ is the Morita category of braided fusion 1-categories, constructed in \cite[Section 2.5]{Decoppet:2024htz}. The  objects are braided fusion 1-categories, and morphisms and higher morphisms are implemented by bimodules, bimodule functors, and natural transformations of bimodule functors.

By studying only the invertible objects and morphisms, we see that the homotopy groups  of $\mathbf{4Vect}^\times$ are given by
\begin{center}
\begin{tabular}{c|c}
$\pi_*$& $\mathbf{4Vect}^\times$  \\ 
\midrule
$0$  & $\mathbb{C}^\times$  \\
$-1$ & $0$  \\
$-2$ & $0$  \\
$-3$ & $0$        \\
$-4$ & $\cW$ \,,
\end{tabular}
\end{center}
with the following descriptions:
\begin{itemize}
    \item The invertible objects are Morita classes of nondegenerate braided fusion categories by \cite{brochier2021invertible}. This is captured by the group $\cW$.
    \item  $\pi_{-3}$ is trivial because it has the description as nondegenerate braided fusion 0-categories, and the only one is $\mathbb C$ itself.
    \item  $\pi_{-2}$ is trivial because every central simple algebra over $\mathbb C$ is Morita-trivial.
    \item $\pi_{-1}$ is trivial because the only vector space that is invertible is one dimensional
    \item$\pi_0$ gives the invertible top morphisms, which are just invertible complex numbers.
\end{itemize}

  This supports the conjecture of Tu-Else-Long \cite[Equation 120]{TLE:2025} in the case when there is no extra global $G$-symmetry.
In particular, the conjecture states that the anomaly for TQFTs should inject onto the anomalies on the lattice, and by our knowledge of the low homotopy groups of the spectrum of QCA, this is true.



\section{Discussion}
Our main result shows a match between the classification of Pauli stabilizer codes and TQFTs in dimension $n\geq 4$, a fact already hinted at by the relationship studied in \cite{RY} between Pauli stabilizer codes and anyon models. We have exhibited a clear distinction between invertible phases in the lattice and continuum settings, especially regarding their capacity to admit gapped boundaries. Conceptually, this difference arises from the observation that algebraic $\rL$-theory over fields captures invariants that differ from those of smooth manifolds.

It would be of considerable interest to investigate further the structural properties of the universal target for stabilizer codes.
In particular, it is also known that $ \mathbf{W}^1 = \mathbf{sVect}$, reflecting the fact that $\mathbf{W}$ is the higher algebraic closure of $\mathbb R$. It would be desirable to determine whether 
$\TS$ admits an analogous structural property.
The universal target for semisimple categories also has two more features that are reasonably well understood. First, one has a concrete description in the situation where a category only has 0-morphisms, and all higher morphisms are trivial. Second, one has partial information about the negative homotopy groups of $I_{\mathbb C^\times}(\pt)$. These results make computations involving the universal target, and consequently the associated higher Galois groups, more tractable. On the other hand, the negative homotopy groups of the QCA spectrum remain poorly understood. Concrete computations of these groups would represent a substantial advance, both in clarifying the  structure of the universal target of stabilizer codes, but also in the study of lattice anomalies.

\section*{Acknowledgments}
It is a pleasure to thank Yuhan Gai, Theo Johnson-Freyd, Rajath Radhakrishnan, and Błażej Ruba,
for helpful discussions. We would like the thank New York University for hosting the annual pre-meeting for the Simons Collaboration on Global Categorical Symmetries, where this collaboration was initiated. BY acknowledges Agnès Beaudry, Michael Hermele, Wilbur Shirley, Evan Wickenden for discussions and collaboration. 
BY is supported by the Simons Foundation through the Simons Collaboration on Global Categorical Symmetries. MY is supported by the EPSRC Open Fellowship EP/X01276X/1.

\bibliographystyle{alpha}
\bibliography{references.bib}
\end{document}